%
%
%

\documentclass{book}



\usepackage{mathptmx}
\usepackage{helvet}
\usepackage{courier}
\usepackage{type1cm}

\usepackage{makeidx}         
\usepackage{graphicx}        
\usepackage{multicol}        
\usepackage[bottom]{footmisc}

\hyphenation{smoo-th}


\makeindex             

\usepackage{amsmath,amssymb,amsfonts,amsthm,latexsym, amssymb,mathrsfs}
\usepackage{hyperref}

\newenvironment{sproof}{%
  \proof}{\endproof}

\newtheorem{theorem}{Theorem}[section]

\newtheorem{proposition}{Proposition}[section]
\newtheorem{lemma}{Lemma}[section]
\theoremstyle{definition}
\newtheorem{definition}{Definition}[section]
\newtheorem{remark}{Remark}[section]

\begin{document}

\author{Claudio Dappiaggi, Valter Moretti and Nicola Pinamonti}
\title{Hadamard States From Light-like Hypersurfaces}
\maketitle

\frontmatter

%
%

\chapter*{Acknowledgements}

The authors are grateful to the department of mathematics of the University of Genoa, to that of the University of Trento as well as to the department of physics of the University of Pavia for the kind hospitality during the various phases of the realization of this project. Thanks are also due to Aldo Rampioni and to Kirsten Theunissen at Springer SBM NL for the fruitful and patient collaboration in the realization of this book.

\tableofcontents

\mainmatter

%
%
%
\chapter{Introduction}
\label{intro} 
%
%
%

\hyphenation{Min-kow-ski}

In the last century, our theoretical knowledge of key physical processes has experienced an impressive large and fast growth 
thanks to the birth and to the development of new theories. 
Prime examples are General Relativity, used to describe the gravitational interaction and Quantum Mechanics/Quantum Field Theory, which account for physical phenomena from the atomic to subnuclear scales. Recently, both theories have seen new spectacular experimental confirmations with the detection of gravitational waves and with the discovery of the Higgs Boson at the LHC.

The merge of the two viewpoints in a single unified theoretical body is still problematic. In spite of this hurdle, there are regimes where phenomena arising from the combination of these theories are expected to arise, most notably cosmology.  As a matter of fact, it is nowadays widely believed that the anisotropies, measured in the cosmic microwave background radiation, can be ascribed to the quantum fluctuations of the metric which originated at the time of their emission, that is when the universe was very hot and much smaller compared to its present status.
A thorough, complete theoretical description of such processes is possible only employing a theory of quantum gravity, which is not available yet. In spite of this lack, one can resort to using methods proper of quantum field theory on curved backgrounds to construct perturbatively interacting models which can describe these effects to a high degree of accuracy. 

Yet, it is not difficult to be convinced that quantizing fields over curved, classical spacetimes cannot be done following the same procedure as in Minkowski spacetime. In this case Poincar\'e invariance allows for exploiting several special properties, first and foremost the existence of a space of momenta accessible via Fourier transform. All these tools are no longer available once the underlying background has a non vanishing curvature. Hence one has to resort to considering different approaches and we will be focusing especially on the so-called Algebraic Quantum Field Theory. 
Based on the formulation first proposed by Haag and Kastler in the seminal paper \cite{HaagKastler}, this approach is divided in two steps. First of all one assigns a set of observables to a given physical system, fixing in particular their algebraic properties. These observables are associated to localized regions of the spacetime and such assignment follows a set of mild, physically motivated constraints, {\it e.g.}, inclusion of regions corresponds to inclusion of sets of corresponding observables, to causally separated regions are associated commuting observables, and so on and so forth. Secondly one identifies thereon a so-called (algebraic) state of the system, that is a positive, linear functional whose image can be interpreted as the expectation value of a given observable. While going through the first step is relatively easy, at least for field theories whose dynamics is ruled by a linear partial differential equation, the identification of a state with good physical properties is difficult even in the simplest scenarios. 

For those familiar with the standard approach to quantum field theory on Minkowski spacetime, this statement might look surprising at first glance. Yet one has to bear in mind the so-called Poincar\'e vacuum, which we are used to consider, enjoys a uniqueness property which is strongly tied to the underlying background being maximally symmetric. On a generic curved spacetime, obtaining a similar result is no longer conceivable and hence, a notion of preferred vacuum state ought to be selected by different physical principles, such as, for example, requiring the minimization either of the energy density or of other physically significant quantities. A typical scenario, where this procedure works, is that of a free field on a stationary spacetime, namely a background whose metric possesses a timelike Killing field. In this case, by defining a notion of frequency via the Fourier transform along a coordinate subordinated to such vector field, one can select a distinguished ground state whose modes contain only positive frequencies. 

The main goal of this book is to show that there exists a large class of spacetimes of physical interest and possessing suitable asymptotic geometric structures on top of which one can consider free field theories identifying explicitly physically sensible, distinguished states, enjoying at least asymptotically a uniqueness property. A notable aspect of this result lies in a non trivial use of a technique, often dubbed {\em bulk-to-boundary injection}, which calls for embedding injectively the algebra of observables of a given theory into a second ancillary counterpart which is intrinsically built on a codimension $1$-submanifold of the underlying background, which is usually the (conformal) boundary of the spacetime itself. The advantage of this procedure lies on the possibility of identifying a unique algebraic state on such boundary, whose counterpart in the bulk turns out to possess all desirable physical properties.

\section[Algebraic Approach]{Algebraic approach to quantum field theory on curved spacetimes}

The standard approach used to quantizing a free field theory in a flat spacetime is tied to the choice of a Fock space which is built over a unique vacuum state and over the corresponding one particle Hilbert space. These can be chosen either by requiring maximal symmetry with respect to the action of the Poincar\'e group or by minimizing the energy density measured by any inertial observer, the final result being the same.  

Yet, if the background on which the field propagates is no longer flat, this point of view cannot be applied so easily. As a matter of facts, since, on a generic spacetime there exist neither a preferred time nor a preferred notion of energy, nor a sufficiently large group of isometries, one cannot select a distinguished reference state.
A prototypical example of this difficulty consists of considering a free field theory on a curved background and selecting two states
whose behaviour is close to that of a vacuum state though at two different spacetime points. It is not difficult to show that one can give rise to inequivalent Hilbert space representations of the same algebra and there is no a priori reason to claim that one of the two choices is better than the other. 
  
The algebraic approach
provides a way out from this conundrum. It suggests to invert the standard point of view and instead to base the theory on the choice of a collection of observables together with their mutual relations, including thus information on the dynamics and on the canonical commutation relations. This defines a unital algebra $\mathcal{A}$, which needs to be decorated in addition with a $^*$-operation (an antilinear involution), a structure which allows to select the positive or self-adjoint elements in the algebra. This first step can be accomplished without having to invoke any particular choice of a Hilbert space representation, this being recovered only at a later stage once an algebraic state is chosen. It will turn out that the $^*$-involution corresponds at a level of Hilbert space to the Hermitian conjugation.

Focusing again on the construction of an algebra of observables, this is a well-understood procedure at least for free field field theories living on any globally hyperbolic spacetime $(M,g)$ and whose dynamics is ruled by an hyperbolic partial differential equation. Although there is a vast literature on this topic in which several different scenarios are thoroughly investigated, in this work we shall mainly focus on the special case of a real scalar field theory $\varphi:M\to\mathbb{R}$ where $M$ is four-dimensional, while the dynamics is ruled by the Klein-Gordon equation
\begin{equation}\label{eq:wave-equation}
\Box \varphi - \xi R\varphi - m^2\varphi = 0,
\end{equation}
where $\Box$ is the d'Alembert operator associated to the metric $g$,  $\xi\in\mathbb{R}$ stands for the coupling to the scalar curvature $R$ while $m$ is the mass. As a consequence of $(M,g)$ being globally hyperbolic, the solutions of \eqref{eq:wave-equation} can be constructed in terms of a corresponding classical Cauchy problem, that is assigning initial conditions on a suitable spacelike codimension $1$ submanifold. An equivalent description of the space of solutions of \eqref{eq:wave-equation} can be obtained observing that, on this class of spacetimes, fundamental solutions for the operator $\Box-\xi R-m^2$ exist. By imposing causal properties of the map from initial data to solutions, we can identify a unique pair of Green operators, known as the advanced and retarded fundamental solutions. Their difference yields the so-called ``causal propagator'' $G$, a bi-distribution of $M$ which has a twofold role. On the one hand it allows for a concrete and covariant characterization of the space of solutions of \eqref{eq:wave-equation} in terms of a suitable class of functions over $M$. On the other hand it is used at a quantum level to implement the canonical commutation relations of an  abstract linear self-adjoint {\em quantum} field $\phi(f)$ smeared with real functions  $f\in C^{\infty}_0(M)$.
In other words the antisymmetric part of the product of two such fields is fixed to be
\begin{equation}\label{eq:CCRintro}
[\phi(f),\phi(f^\prime)]=\phi(f)\phi(f^\prime) - \phi(f)\phi(f^\prime) = i\hbar G(f,f^\prime)\mathbb{I}.
\end{equation}
The abstract field operators  $\phi(f)$ are the {\em generators} of the complex unital $^*$-algebra {\em of observables} $\mathcal{A}$, 
which is thus made of finite complex linear combinations of finite products of smeared field operators. For a modern review see, e.g., \cite{Brunetti:2009pn, FredenhagenRejzner,Brunetti:2015vmh}.

The second step in the algebraic framework consists of identifying a state $\omega$, that is a linear functional over $\mathcal{A}$ which is normalized and positive. Notice that the standard probabilistic interpretation proper of quantum theories is recovered interpreting $\omega(A)$ as the mean expectation value of repeated measurements of the observable $A$ on the state $\omega$. Furthermore, once a state is chosen, the standard picture based on a Hilbert space and on linear operators acting thereon can be recovered from $(\mathcal{A},\omega)$ via the celebrated {\em Gelfand-Neimark-Segal construction}, \cite{GelfandNaimark, Segal}. More precisely, once a state $\omega$ over a $^*$-algebra $\mathcal{A}$ is chosen it is possible to construct a quadruple $(\mathcal{H}_\omega,\pi_\omega, \mathcal{D}_\omega, \Psi_\omega)$ where $\mathcal{H}_\omega$ is a Hilbert space, $\Psi_\omega\in \mathcal{D}_\omega$ is a normalized vector, $\pi_\omega$ is a $^*$-homomorphisms from $\mathcal{A}$ to an algebra of linear operators defined on the  common dense invariant subspace  $\mathcal{D}_\omega\subseteq \mathcal{H}_\omega$ with $\pi_\omega(\mathcal{A})\Psi_\omega = \mathcal{D}_\omega$, such that 
\[
\omega(a) = \left\langle \Psi_\omega| \pi_\omega(a) \Psi_\omega\right\rangle_\omega\:,  \quad \forall a \in \mathcal{A}
\]
where $\langle \cdot | \cdot\rangle_\omega$ is the scalar product in $\mathcal{H}_\omega$.
Any other such set $(\mathcal{H},\pi, \mathcal{D}, \Psi)$  such that $\omega(a) = \left\langle \Psi, \pi(a) \Psi\right\rangle$ for all $a\in \mathcal{A}$, is unitarily equivalent to $(\mathcal{H}_\omega, \pi_\omega, \mathcal{D}_\omega, \Psi_\omega)$.

The construction of the unital $^*$-algebra $\mathcal{A}(M)$ described so far is quite rigid, in the sense that, once a globally hyperbolic spacetime and an hyperbolic, linear partial differential operator are chosen, nothing more is needed to obtain $\mathcal{A}(M)$. Furthermore, if a globally hyperbolic spacetime $M_1$ is isometrically embedded in a second one $M_2$ and if we constructs the associated $^*$-algebras of observables $\mathcal{A}(M_1)$ and $\mathcal{A}(M_2)$ for the same field theory, there exists a $^*$-homomorphism which embeds $\mathcal{A}(M_1)$ into $\mathcal{A}(M_2)$. 
 Moreover causally separated subspacetimes imply commuting associated algebras.
This is the heart of the construction of a generally locally covariant theory \cite{BFV} based on the language of categories, which has the net advantage of allowing to identify observables among different theories and thus to compare experiments in a spacetime independent way.  

To conclude this short digression, we make a last observation. Notice that the commutation relations fix not only uniquely the antisymmetric part of the product $\phi(f_1)\phi(f_2)$, but also the product $\phi(f_1)\dots \phi(f_n)$ among $n$ linear fields up to the totally symmetric parts. Here $f_i\in C^\infty_0(M)$ for all $i=1,...,n$. One can push this comment one step further interpreting the product in $\mathcal{A}(M)$ as a formal deformation of a classical commutative product. This leads to analysing the algebraic approach in terms of a deformation quantization, see \cite{Brunetti:2009qc} and \cite{Waldmann} for further comments in this direction.

\section{States with good physical properties}

As already said, in the algebraic picture a {\em state} is a linear normalized positive functional $\omega:\mathcal{A}(M) \to \mathbb{C}$. Furthermore, the image of any $a\in\mathcal{A}(M)$ under the action of $\omega$ can be interpreted as the mean value of repeated measurements of the observable $a$ on the state $\omega$. As previously discussed, once a state $\omega$ is selected, via the GNS construction we may represent the observables in terms of operators on a suitable Hilbert space, where $\omega$ is represented by preferred normalized vector $\Psi_\omega$
restoring the standard picture proper of quantum theories.

The key notable difference between the choice of a state and the construction of $\mathcal{A}$ is that the latter is well-understood and uniquely defined for any free field theory on a globally hyperbolic spacetime. On the contrary the former is a subtle operation since there is no mean a priori to select a preferred $\omega$. Hence we are forced to a case-by-case analysis. As we will show later, if we consider the complex  unital $^*$-algebra $\mathcal{A}(M)$ generated by linear Hermitian fields $\phi(f)$, smeared with real functions $f\in C^\infty_0(M)$, fixing a state is tantamount to assigning suitable {\em $n-$point functions} $\omega_n \in \mathscr{D}'(M^n)$:
\begin{align}
&\omega(\phi(f_1)\dots \phi(f_n)) \doteq\nonumber\\
&  \omega_n(f_1,\ldots, f_n) \doteq \int_{M^n} \omega_n(x_1,\dots, x_n)   f_1(x_1)\dots f_n(x_n) \;d\mu_g(x_1)\cdots d\mu_g(x_n).\nonumber
\end{align}
Besides the compatibility with the dynamics, one has to keep in mind that only the totally symmetric part of $\omega_n$ is not constrained by the properties of the product of $\mathcal{A}(M)$, while the canonical commutation relations (\ref{eq:CCRintro}) fix the remaining ambiguities.

Among the plethora of all possible states, most of them have pathological physical behaviour and we need to find a selecting criterion to distinguish those which are acceptable. Already at a preliminary, heuristic level, we know that the Poincar\'e vacuum, which is used in the standard quantization picture of free field theories on Minkowski spacetime, possesses all wanted properties. It is thus reasonable to expect that a state could be considered physically acceptable if its ultraviolet behaviour mimics that of the Poincar\'e vacuum, an idea which we will make mathematically rigorous throughout the text. 
In addition, we observe that the Poincar\'e vacuum possesses another distinguishing feature, namely its $n-$point functions $\omega_n$ are completely determined by $\omega_2$. This is a prototypical feature of an important class of states, dubbed {\em quasi-free}, or equivalently {\em Gaussian}. More precisely a state is called quasi-free/Gaussian if its $n-$point functions with odd $n$ vanish, while for even $n$ $\omega_n$ is fully determined from $\omega_2\in\mathcal{D}'(M\times M)$ as follows
\[
\omega_{2n}(x_1,\dots, x_{2n}) = \sum_{\pi}\omega_2(x_{\pi(1)},x_{\pi(2)})\dots \omega_2(x_{\pi(n-1)},x_{\pi(2n)})
\]
where the sum is over all ordered permutations $\pi$ of $\{1,\dots, 2n\}$ such that  $\pi(2i-1)\leq \pi(2i)$ and
$\pi(2i-1)\leq \pi(2j-1)$ for every $i, j\in\{1,\dots, n\}$ with $i<j$. 

In a quasi-free state the antisymmetric part of the two-point function is proportional to the casual propagator $G$ which is the building block of the canonical commutation relations. At the same time, the symmetric part $\omega_{2,s}$ is constrained by the positivity of the state, which implies that
\[
\omega(\phi(f)^*\phi(f)) =  \omega_{2}(\overline{f},f)  \geq 0, \qquad f \in C^\infty_0(M;\mathbb{C}).
\]
This is equivalent to 
\[
\omega_{2,s}(f,f)\geq 0, \qquad   \frac{1}{4}|G(f,h)|^2 \leq \omega_{2,s}(f,f)\omega_{2,s}(h,h), \qquad f,h\in C^\infty_0(M). 
\]
Observe that, despite the strong constraints imposed by these inequalities on $\omega_{2,s}$, a large freedom in its choice is left.

A rather special scenario is that of a linear field theory, {\it e.g.} a Klein-Gordon field, living on a static globally hyperbolic spacetime, hence with a timelike Killing field, such as for example the Minkowski background. In this case one can construct a distinguished state, often called the vacuum of the theory, as the quasi-free state whose two-point function is obtained considering only the positive frequencies of the causal propagator. The above inequalities are automatically implemente since the positive and negative frequencies in the causal propagator appear, up to a sign, with the same weight thanks time reversal being an isometry of the underlying metric.

While this procedure works perfectly, it cannot be extended to a generic spacetime, since it is strongly tied to the existence of a specific isometry. Nonetheless, a close investigation of the two-point functions $\omega_2(x,x^\prime)$ constructed with the method outlined above unveils that their singular behaviour closely mimics that of the Poincar\'e vacuum, which in turn is controlled by kinematic and geometric data, such as the mass of the field and the geodesic distance between $x,x^\prime$. As a matter of fact, if we consider just the massless, scalar field on Minkowski spacetime, Poincar\'e invariance dictates that the integral kernel of $\omega_2$ is nothing but $\left[\eta(x-x^\prime,x-x^\prime)\right]^{-2}$, $\eta$ being the Minkowski metric.

While the example of static spacetime strengthens the na\"ive idea that the ultraviolet behaviour of a physically acceptable state should mimic that of the Poincar\'e vacuum, making this idea mathematically precise on a generic globally hyperbolic spacetime requires different and more advanced tools. As a matter of fact the correct framework to probe the singular behaviour of a distribution on a manifold is known to be microlocal analysis \cite{Hormander} which allows to translate such kind of information in terms of 
the {\em wave front set} of a given distribution. 
Microlocal techniques were introduced in the study of physically acceptable states in the late nineties leading to the formulation of  the {\em microlocal spectrum condition} \cite{Radzikowski, BFK}, according the which the wavefront set of the two-point function of a quasi-free state should be the same of that of the Poincar\'e vacuum, namely
\[
WF(\omega_2) =  \left\{ (x,x',k,k')\in T^*(M\times M)\setminus\{0\}, (x,k)\sim(x',-k') ,k\triangleright 0   \right\}
\]
where $(x,k)\sim(x^\prime,k^\prime)$ if $x$ and $x'$ are connected by a null geodesic $\gamma$, if the co-vector $k$ is co-parallel to $\gamma$ at $x$ and if the parallel transport of $k$ along $\gamma$ to $x'$ coincides with $k^\prime$.  Finally, $k\triangleright 0$ means that $k$ ought to be future directed.

Notice that the latter condition is a reminiscence of the requirement to consider only positive frequencies when constructing states for free field theories on a spacetime with a global timelike Killing field, though with the net advantage that the wavefront set prescription is manifestly covariant. It is again a remarkable fact that the microlocal spectrum condition yields a strong constraint on the form of the two-point function. More precisely, as proven by Radzikowski \cite{Radzikowski}, on any normal neighbourhood, the two-point function of a quasi-free state satisfying the microlocal spectrum condition and KG equation in both entries locally takes the {\em Hadamard form}
\[
\omega(x,x^\prime) = \lim_{\epsilon\to 0}\frac{1}{8\pi^2} \left ( \frac{u(x,x^\prime)}{\sigma_\epsilon(x,x^\prime)} +v(x,x^\prime) \log\left(\frac{\sigma_\epsilon(x,x^\prime)}{\lambda^2}\right) + w(x,x^\prime) \right)
\]
where $\sigma_\epsilon(x,x^\prime) = \sigma(x,x^\prime) +i \epsilon (T(x)-T(x^\prime)) + \epsilon^2$, $\sigma$ being the squared geodesic distance while $T$ is a temporal function. The other unknowns $u,v,w$ are smooth function on each geodesic neighbourhood, $u$ and $v$ being completely determined in terms of the metric and of the parameters in the equation of motion. The only freedom lies in the choice of $\lambda$, a reference squared length and of $w$, which is symmetric and constrained by the dynamics and by the requirement of positivity of the state. The Hadamard form was actually known much earlier then the microlocal characterization of states, see e.g.  \cite{DeWittBrehme, KW}. However, although it might look surprising at first glance, a concrete use of the Hadamard form is rather problematic on a generic spacetime since it would involve a control of the form of the two-point function in each geodesically convex neighbourhood. On the contrary the microlocal characterization is in many concrete scenarios a very effective and practical tool. To conclude this short excursus, we mention that the existence of Hadamard states, namely those satisfying the microlocal spectrum condition, is not questionable since it can be proven using a deformation argument, adapting to the case in hand an analysis of Fulling, Narcowich and Wald \cite{FNW}.

\section[Bulk to Boundary Correspondence]{Bulk to boundary correspondence and construction of Hadamard states}

Although the existence criterion, based on the deformation argument \cite{FNW}, is reassuring concerning the well-posedness of the microlocal spectrum condition, it is rather moot since it does not provide any mean to choose a preferred  Hadamard state. To infer any physical consequence from a given model, one needs to bypass this hurdle and therefore several physically meaningful construction schemes for Hadamard states 
in a given spacetime were devised in the past years.

Different approaches have been followed, many tied to the existence of specific symmetries such as those of cosmological spacetimes. More general schemes have been investigated only recently, a notable case being discussed in \cite{Gerard:2012wb}. Here states are constructed in terms of data on a given, yet arbitrary Cauchy surface $\Sigma$ embedded in a globally hyperbolic spacetime. A second approach, on which we shall focus, consists of considering, in place of a Cauchy surface, an initial characteristic surface, exploiting that the Goursat problem for a normally hyperbolic operator is well-posed \cite{Gerard:2014hla}. Although this procedure can be discussed in full generality, goal of this book is to show how it can be used in a large class of concrete examples which have studied in the past ten years \cite{Dappiaggi:2005ci, Moretti, Moretti2, Dappiaggi:2007mx, Dappiaggi:2008dk, Dappiaggi:2009fx}. These include specific, physically relevant scenarios, ranging from the construction of ground states on asymptotically flat and on cosmological spacetimes to the rigorous definition of the Unruh state on a Schwarzschild black hole.

To conclude the introduction, we sketch briefly and heuristically the idea which lies at the heart of the construction of Hadamard states exploiting the existence of a characteristic, initial surface. A mathematically rigorous analysis of what follows will be the content of the rest of this book. The starting point of our investigation will be to consider globally hyperbolic spacetimes $(M,g)$ possessing a (conformal) null boundary, which we indicate here for simplicity as $\mathcal{C}$. On top of $M$ we shall consider a free field theory whose dynamics is ruled by an hyperbolic partial differential equation, so that we can associate to such system a complex unital $^*$-algebra of observables $\mathcal{A}(M)$. The second step consists of focusing on $\mathcal{C}$ building on top of it a second and ancillary complex unital $^*$-algebra $\mathcal{A}(\mathcal{C})$ which is built in such a way of being able to prove the existence of an injective $^*$-homomorphism
\[
\iota:\mathcal{A}(M)\to\mathcal{A}(\mathcal{C}).
\]
From a physical point of view, this procedure can be interpreted as the existence of an embedding of the bulk observables into a boundary counterpart. For this reason one refers to $\iota$ as implementing a {\em bulk-to-boundary correspondence}. From a structural point of view, this map can be understood as an extension to characteristic surfaces of the time-slice axiom, see {\it e.g.} \cite{ChilianFredenhagen}. The latter says that $\mathcal{A}(\mathcal{O})$, the algebra of observables defined in a globally hyperbolic subregion $\mathcal{O}\subset M$ containing a whole Cauchy surfaces $\Sigma$ of $M$ is $^*$-isomorphic to $\mathcal{A}(M)$. Here this result is extended replacing $\mathcal{A}(\mathcal{O})$ with $\mathcal{A}(\mathcal{C})$, though the price to pay is the loss of the $^*$-isomorphism, which is replaced by an injective $^*$-homomorphism. It is possible to avoid such restriction by a careful choice of the boundary algebra as shown by G\'erard and Wrochna in \cite{Gerard:2014hla}.

The advantage of considering null hypersurfaces does not lie only in the loss of one dimension, but rather in their geometric structure. As a matter of fact, in all scenarios that we consider $\mathcal{C}$ turns out to be diffeomorphic to $\mathbb{R}\times\mathbb{S}^2$ and realized as the union of complete null geodesics. Hence one can identify on top of $\mathcal{C}$ a global null coordinate with respect to which one can define a Fourier-Plancherel transform. This peculiar feature is used to define a quasi-free state $\omega_\mathcal{C}$ for $\mathcal{A}(\mathcal{C})$ whose two-point function is constructed in terms of the positive frequencies identified out of the said null coordinate. Furthermore, in all cases that we shall consider, we will be able to prove that the ensuing state satisfies a uniqueness criterion on $\mathcal{C}$. As a last step we can combine $\omega_{\mathcal{C}}$ with $\iota$ to build $\omega_M \doteq  \iota^{*}\omega_{\mathcal{C}}$, namely,
\[
\omega_M(a) = \omega_{\mathcal{C}}(\iota(a))\quad \forall a \in \mathcal{A}(M)\:.
\]
The outcome is a state on $\mathcal{A}(M)$ which, on the one hand, can be interpreted as an asymptotic ground state enjoying in addition many natural geometric properties. On the other hand, it can be proven to be of Hadamard form. To show this last statement one needs to make a clever use of microlocal techniques, especially H\"ormander propagation of singularity theorem.  

To conclude this section we stress that, although in this book, we will only focus on real scalar fields, the methods devised can be used also to discuss other scenarios, such as Dirac fields \cite{DHP} free electromagnetism \cite {DappiaggiSiemssen, Siemssen:2011gma} or linearised gravity \cite{Benini:2014rya}.

%
%
%
\chapter{General geometric setup}
\label{geometry} 



Goal of this section is to discuss the geometric background on which our investigation is based. As we have already indicated in the introduction, we will not give a full-fledged analysis of all structures, but we will focus on the data essential to us. Yet we will make sure to point an interested reader to the relevant literature. We will also assume that the reader is familiar with the basic tools proper of differential geometry. Within this respect we will follow mainly the notations and conventions of \cite{Wald}.

\section{Globally hyperbolic spacetimes}
In this subsection we will follow mainly \cite{BGP,Wald}.
\begin{definition}
	A {\bf  spacetime} $M$ is a  connected Hausdorff second-countable orientable $4$-dimensional smooth manifold endowed with a smooth Lorentzian metric $g$ of signature $(-,+,+,+)$. 
\end{definition}

\noindent The requirement on the dimensionality of $M$ is only based on our desire to make a close contact with the examples that we will be discussing in detail. Most of the general results and of the constructive aspects of the theory is valid for Lorentzian manifolds with dimension greater or equal to $2$.

The Lorentzian character of the metric $g$ plays an important distinguishing role for the pair $(M,g)$, since we can define two additional concepts: {\it time orientability} and {\it causal structures}. We briefly recall that, for any point $p\in M$ using  the scalar product  $g_p$ in $T_pM$ induced by the metric $g$, we can divide the elements of $T_pM\setminus \{0\}$ into three distinct categories.  A vector $v\in T_pM\setminus \{0\}$ is called {\bf timelike} if $g_p(v,v)<0$, {\bf spacelike} if $g_p(v,v)>0$,  {\bf lightlike} (also said {\bf null}) if $g_p(v,v)=0$.  {\bf Causal vectors} are those either timelike or lightlike. Co-vectors are classified similarly.

A smooth $3$-dimensional embedded submanifold $S \subseteq M$, often referred to as a {\bf $3$-dimensional hypersurface}, 
is {\bf spacelike}, {\bf timelike}, {\bf lightlike} ({also said {\bf null}) if, respectively, its normal co-vector is  spacelike, timelike, lightlike.

In addition, for a fixed $p \in M$  we can define a two-folded {\bf light cone} $V_p\subseteq T_pM\setminus\{0\}$
made of all causal vectors and we have the freedom to call {\bf future-directed} the non-zero vectors lying in one of the two-folds. If such choice can be made smoothly varying  $p\in M$, we say that $(M,g)$ is {\bf time orientable}.
It turns out that if a spacetime is time orientable, being connected per assumption, there are only two inequivalent such choices each called {\bf time orientation}. 
\begin{remark}
 Henceforth all our spacetimes will be supposed to be {\bf time oriented}, {\em i.e.}, a choice of time orientation has been made.
\end{remark}
The subdivision of each $T_pM\setminus \{0\}$ into the said three subsets is at the heart of the definition of causal structures for $(M,g)$. Consider a {\em continuous} and {\em piecewise-smooth} curve $\gamma: I\to M$ where, indifferently,  
$I= (a,b)$ or $I= [a,b)$ or $I=(a,b]$ or $I=[a,b]$ and
we  henceforth suppose ${\mathbb R}\cup \{-\infty\} \ni a<b \in {\mathbb R}\cup \{+\infty\} $. Assuming  that its  tangent vector $\dot{\gamma}$ vanishes nowhere, we shall call $\gamma$ {\bf timelike} ({\em resp.} {\bf spacelike}, {\bf lightlike}) {\bf curve} if at each point of the image $\dot{\gamma}$ is timelike ({\em resp.}  spacelike,  lightlike). If the tangent vector to $\gamma$ is nowhere spacelike and is everywhere future ({\em resp.}  past) directed, we say that it is a {\bf future} ({\em resp.} {\bf past}) {\bf-directed}  {\bf causal} curve.  {\bf Future}   ({\em resp.}  {\bf past}) {\bf -directed} {\bf timelike}  curves are defined analogously. Each {\bf causal curve} is either future- or past-directed.

Any such curve, say $\gamma: I\to M$ with either $I=(a,b]$ or $I=(a,b)$
 is said to be  {\bf past inextensible}
if there is no  causal curve $\gamma':  I' \to M$ with $I'\supset I$ and $\inf I' < a$ such that 
$\gamma'|_{I}=\gamma$.  {\bf Future  inextensiblility} is analogously defined. Zorn's lemma assures that every causal curve can be completed into an inextensible causal curve.

We have all ingredients to introduce the defining building blocks of the causal structure of $(M,g)$. We call 
\begin{itemize}
	\item $J^\pm(p)$ the {\bf causal future (+) / past (-)} of $p\in M$, that is the union between $\{p\}$ and all points $q\in M$ such that there exists a future- (+) / past- (-) directed, causal curve $\gamma: [a,b]\to M$ with  $\gamma(a)=p$ and $\gamma(b)=q$,
	\item $I^\pm (p)$ the {\bf chronological future (+) / past (-)} of $p\in M$ that is the collection of all points $q\in M$ such that there exists a future- (+) / past- (-) directed, timelike curve $\gamma: [a,b]\to M$ with  $\gamma(a)=p$ and $\gamma(b)=q$.
\end{itemize}
For subsets $U\subseteq M$, we similarly define $J^\pm(U)\doteq\bigcup\limits_{p\in U}J^\pm(p)$ and $I^\pm(U)\doteq\bigcup\limits_{p\in U}I^\pm(p)$. 

$N,N' \subseteq M$ are said to be {\bf causally disjoint} if $(J^+(N) \cup J^-(N)) \cap N' = \emptyset$ which is equivalent to requiring  $(J^+(N') \cup J^-(N')) \cap N = \emptyset$.

\begin{remark}\label{rem:notation} \cite{Wald}  \cite{BEE} \cite{ONeill} $\null$\\
{\bf (1)} $I^\pm(N)$ are always {\em open} sets, even if $N\subseteq M$ is not. The topology of $J^\pm(N)$ is more complicated, however 
sufficiently close to $p\in M$, $\partial J^\pm(p)\setminus \{p\}$ is a null 3-dimensional hypersurface. General results for every  $N\subseteq M$ and points in $M$ are the following:

(a) $I^\pm(N) = \mbox{Int}(J^\pm(N))$ and $N \subseteq J^\pm(N) \subset \overline{I^\pm(N)}$, where $\mbox{Int}$ indicates the collection of interior points,

(b) if $p \in I^\pm(q)$ and $q\in J^\pm(r)$, then $p \in I^\pm(r)$, 

(c) if $p \in J^\pm(q)$ and $q\in I^\pm(r)$, then $p \in I^\pm(r)$, 

(d) if $p \in J^\pm(q)$ and $q\in J^\pm(r)$, then $p \in J^\pm(r)$.\\
{\bf (2)}	The notion of causal or chronological past/future  strongly depends on the choice of the underlying background. When a disambiguation will be necessary we will employ the more precise notation $J^\pm(U;N)$ where $U\subseteq N$ and $N$ is an open subset of the whole spacetime $M$ equipped with the restriction of the metric and the relevant causal curves defining  $J^\pm(U;N)$ are those completely included in  $M$.  We shall use a similar notation regarding chronological past/future. \\
{\bf (3)} A subset $O \subset M$  of a spacetime $(M,g)$ is said to be  {\bf causally convex} if every causal curve joining $p,q \in O$ has image completely enclosed in $O$.\\
{\bf (4)} A spacetime $(M,g)$ is said to be  {\bf strongly causal} if, for every $p\in M$ and  every open subset $U\ni p$, there is an open causally convex subset  $V$ with $p \in V\subseteq U$. In strong causal spacetimes, the family of open sets $I^+(p) \cap I^-(q)$, $p,q \in M$ is a topological basis of the (pre-existent) topology of $M$.
\end{remark}
Time orientation and causal structures open several possibilities for constructing spacetimes where closed timelike or causal curves exist, hence formalizing at a geometric level the fascinating idea of time travel.  From a physical point of view, we are instead interested in identifying a suitable class of spacetimes which, on the one hand, avoids any such pathology, while, on the other hand, it allows for existence and uniqueness  theorems  for solutions of hyperbolic partial differential equations, such as the scalar D'Alembert wave equation in terms of an initial value problem. The answer to these queries goes under the name of {\it globally hyperbolic spacetimes} (see, {\em e.g.},  \cite[Ch. 8]{Wald}).

We begin by introducing an {\bf achronal} subset of a spacetime $M$, namely a subset $\Sigma$ such that $I^+(\Sigma)\cap\Sigma=\emptyset$. Subsequently we associate to $\Sigma$ its  {\bf future domain of dependence} as the collection $D^+(\Sigma)$ of points $p\in M$ such that every past-inextensible causal curve passing through $p$ intersects $\Sigma$ somewhere.  One defines analogously  the {\bf past domain of dependence}  $D^-(\Sigma)$ of $\Sigma$ and its   {\bf domain of dependence} 
$D(\Sigma) \doteq D^+(\Sigma)\cup D^-(\Sigma)$.

\begin{definition}\label{globhyp}
A {\bf Cauchy hypersurface} of a spacetime $M$ is defined as a closed achronal subset $\Sigma \subseteq M$  such that $D(\Sigma) =M$.
A spacetime $M$ is called {\bf globally hyperbolic} if it possesses a Cauchy hypersurface.
\end{definition}
Globally hyperbolic spacetimes represent  the canonical class of backgrounds on which quantum field theories are constructed and $\Sigma$ is the natural candidate to play the role of the hypersurface on which initial data can be assigned, provided $\Sigma$ is 
sufficiently smooth. Yet, according to Definition \ref{globhyp}, only the existence of a single (hopefully smooth) Cauchy hypersurface is guaranteed. This is slightly disturbing since there is no {\em a priori} reason  why an initial value hypersurface for a certain partial differential equation should be distinguished. In addition Definition \ref{globhyp} does not provide any concrete criterion to establish whether a given spacetime $M$ with an assigned metric $g$ is globally hyperbolic or not. An important step forward in this direction is represented by the work of Bernal and Sanchez, 
\cite{Bernal, Bernal:2005qf}, who, by means of deformation arguments, gave an alternative, very informative additional characterization of globally hyperbolic spacetimes also proving the existence of {\em smooth spacelike} Cauchy surfaces. We shall report it essentially following the formulation given in Section 1.3 of \cite{BGP}:

\begin{proposition}\label{BS}
	Let $(M,g)$ be any time-oriented   spacetime. The following two statements are equivalent:
	\begin{enumerate}
		\item $(M,g)$ is globally hyperbolic;
		\item $(M,g)$ is isometric to $\mathbb{R}\times\Sigma$ with metric $-\beta dt \otimes dt +h_t$, where $\beta, h \in C^\infty(\mathbb{R}\times\Sigma)$, $\beta$ is strictly positive,  each $h_t$ is a smooth Riemannian metric on the smooth  manifold $\Sigma$, and each  $\Sigma_t \doteq \{t\}\times\Sigma$
identifies  to a  (smooth embedded co-dimension $1$) submanifold of $M$ which is a spacelike Cauchy surface. 
	\end{enumerate}
\end{proposition}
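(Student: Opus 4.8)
The plan is to prove the two implications separately. The direction (2) $\Rightarrow$ (1) is immediate: if $(M,g)$ is isometric to $(\mathbb{R}\times\Sigma,\,-\beta\,dt\otimes dt + h_t)$ with each $\Sigma_t$ a Cauchy surface, then $M$ possesses a Cauchy hypersurface by hypothesis, hence is globally hyperbolic. All the content lies in (1) $\Rightarrow$ (2), which I would organise in three blocks.

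First I would reproduce Geroch's construction of a continuous Cauchy time function. Using that $M$ is second countable, fix a finite Borel measure $\mu$ on $M$, absolutely continuous with respect to the metric volume form, with $\mu(M) < \infty$, and set $t_\pm(p) \doteq \mu(J^\pm(p))$. Global hyperbolicity makes the sets $J^\pm(p)$ closed and the functions $t_\pm$ continuous, strictly monotone along future-directed causal curves, with $t_-$ (resp.\ $t_+$) tending to $0$ at the past (resp.\ future) endpoint of an inextensible causal curve. Then $\mathcal{T} \doteq \ln(t_-/t_+)$ is a continuous surjection $M \to \mathbb{R}$, strictly increasing along future-directed causal curves and running to $\pm\infty$ along inextensible ones; a standard argument then shows that every level set $\mathcal{T}^{-1}(c)$ is a topological Cauchy hypersurface and that $M$ is homeomorphic to $\mathbb{R}\times\mathcal{T}^{-1}(0)$.

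The second block is the smoothing step of Bernal and Sanchez \cite{Bernal,Bernal:2005qf}, and I expect it to be the main obstacle. One must replace $\mathcal{T}$, which is only continuous, by a \emph{smooth} function $\tau: M \to \mathbb{R}$ that is still a Cauchy time function and whose gradient $\nabla\tau$ is past-directed timelike at every point (a Cauchy temporal function). Naive mollification is useless since it destroys the timelike-gradient condition, so the strategy is instead to exploit strong causality -- which holds because $M$ is globally hyperbolic -- to cover $M$ by causally convex charts carrying explicit smooth local time functions with timelike gradient, and then to glue these via a carefully controlled partition-of-unity ``staircase'' procedure, verifying at each stage that the new level sets remain achronal and that inextensibility of causal curves still forces them to meet each slice exactly once.

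Finally, granted such a $\tau$, I would construct the orthogonal splitting. Put $\Sigma \doteq \tau^{-1}(0)$; since $0$ is a regular value of $\tau$ (the gradient $\nabla\tau$ never vanishes), $\Sigma$ is a smooth closed embedded hypersurface, spacelike because its conormal is proportional to the timelike $d\tau$, and Cauchy because $\tau$ is a Cauchy function. Define the vector field $X \doteq \nabla\tau / g(\nabla\tau,\nabla\tau)$, so that $X(\tau) = d\tau(X) = 1$ and $g(X,X) = 1/g(\nabla\tau,\nabla\tau) \doteq -\beta$ with $\beta \in C^\infty(M)$ strictly positive. Since $\tau$ is a Cauchy time function it is unbounded in both directions along every inextensible causal curve; as $\tau$ grows linearly in the parameter along any maximal integral curve of $X$, the parameter interval must be all of $\mathbb{R}$, so $X$ is complete and its flow yields a diffeomorphism $\Phi: \mathbb{R}\times\Sigma \to M$ with $\tau\circ\Phi(t,p) = t$. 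Writing $g$ in the resulting chart $(t,\cdot)$, with $\partial_t = X$: the coefficient of $dt \otimes dt$ is $g(\partial_t,\partial_t) = -\beta$; the mixed terms $g(\partial_t, v)$ vanish for $v$ tangent to $\Sigma_t \doteq \tau^{-1}(t)$, because $X$ is parallel to $\nabla\tau$, which is $g$-orthogonal to the level sets of $\tau$; and the restriction of $g$ to each $\Sigma_t$ is a smooth Riemannian metric $h_t$, positive definite since $\Sigma_t$ is spacelike. Hence $\Phi^*g = -\beta\,dt\otimes dt + h_t$ and each $\Sigma_t$ is a smooth spacelike Cauchy surface, which is precisely statement (2).
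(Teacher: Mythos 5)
The paper never proves this proposition: it is quoted as the Bernal--S\'anchez splitting theorem, with the proof deferred to \cite{Bernal, Bernal:2005qf} and to Section 1.3 of \cite{BGP}. Your outline reproduces the standard proof from that literature and its logic is sound: Geroch's volume-function argument giving a continuous Cauchy time function and the topological splitting, the Bernal--S\'anchez smoothing to a Cauchy temporal function with timelike gradient, and the orthogonal splitting along the flow of $X=\nabla\tau/g(\nabla\tau,\nabla\tau)$ are exactly the three ingredients. The computation in your last block is correct: $d\tau(X)=1$, $g(X,X)=1/g(\nabla\tau,\nabla\tau)=-\beta<0$, the cross terms vanish because $\nabla\tau$ is $g$-orthogonal to the level sets of $\tau$, and $h_t$ is Riemannian because those level sets are spacelike.

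Two places deserve a word more of care, though neither is fatal. First, strict monotonicity of $t_\pm$ along causal (not merely timelike) curves uses causality of $M$: for $p\neq q$ with $q\in J^+(p)$ one shows $I^-(p)\subsetneq I^-(q)$, since equality would put $q\in\overline{I^-(p)}\subseteq J^-(p)$ and hence produce a closed causal curve. Second, your completeness argument for $X$ tacitly uses that \emph{every} level set of the smooth $\tau$ is a Cauchy surface and that $\tau$ has range all of $\mathbb{R}$ --- i.e., that $\tau$ is a Cauchy temporal function rather than merely a temporal function --- so that $\tau$ is unbounded along every inextensible causal curve; this is part of what the smoothing step must deliver, and it is also what makes ``each $\Sigma_t$ is Cauchy'' in statement (2) an output of the construction rather than a circular assertion. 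The one piece of genuine mathematical content you do not supply is that smoothing step itself, but you identify it correctly as the crux and defer it to \cite{Bernal, Bernal:2005qf}; since the paper states the whole proposition without proof on the strength of the same references, this is an acceptable place to stop.
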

\begin{remark}\label{rem:propJ} The following properties are enjoyed by any globally hyperbolic spacetime $(M,g)$ \cite{Wald}:\\
{\bf (1)} It is strongly causal, hence item (4) in Remark \ref{rem:notation} holds true.\\
{\bf (2)} For every $p,q\in M$, $J^+(p) \cap J^-(q)$ is either empty or compact.\\
{\bf (3)} For every $p,q\in M$, $J^\pm(p) \cap J^{\mp}(\Sigma)$ and thus
also $J^\pm(p) \cap \Sigma$ a
are  compact if $\Sigma$ is a Cauchy surface in the future (+), resp., past (-) of $p$.\\
{\bf (4)}  If $S \subseteq M$ is compact (in particular $S=\{p\}$), then $J^\pm(S)= \overline{I^\pm(S)}$, $J^\pm(S)\setminus I^{\pm}(S) = \partial J^\pm(S)= \partial I^\pm(S)$.
\end{remark}
There are several known examples in the literature of globally hyperbolic spacetimes and most of the physically interesting scenarios are based on this notion. In this work we will consider some of these cases in detail and, hence, we will not discuss further this concept. A reader interested in a few concrete examples can consult either \cite{Wald} or the list given in \cite{Benini:2013fia}. It is important to stress that globally hyperbolic spacetimes do not exhaust the collection of spacetimes  used in physical applications, even within the algebraic approach of QFT, {\it e.g.}, the half Minkowski spacetime which appears in the investigation of the Casimir-Polder effect \cite{Dappiaggi:2014gea} or anti-de Sitter spacetime, the maximally symmetric solution of  Einstein's equations with negative cosmological constant \cite{Dappiaggi:2016fwc,Duetsch:2002hc,Duetsch:2002wy,Ribeiro:2007hv}. 
%
%

\section{Spacetimes with distinguished light-like $3$-surfaces}\label{Sec:asymptotically_flat0}
The next step consists of introducing the class of spacetimes which we will be considering throughout this text. As mentioned in the introduction we will focus on several specific scenarios where light-like $3$-surfaces play a central role.
\subsection{Asymptotically Flat Spacetimes}\label{Sec:asymptotically_flat}
The first scenario is at the heart of this section. Most notably we will present the spacetimes which are said to be {\em asymptotically flat at null infinity}. Heuristically speaking, they are those manifolds whose asymptotic behaviour far away along null directions mimics that of Minkowski spacetime. A rigorous mathematical characterizations of these backgrounds and a thorough analysis of the associated geometric properties has been subject of investigations starting from the early sixties. A reader who is interested in these aspects can consult \cite{Geroch}, \cite{AX},\cite{Friedrich, Friedrich2, Friedrich3},  and \cite[Ch. 8-11]{Wald}. Our presentation will follow mainly these references and we will review the structures and the related properties which play a key role in the next chapters.
\begin{definition}\label{Def:asymptotically_flat}
A ($4$-dimensional) spacetime $(M,g)$ (dubbed, {\em physical spacetime} or {\em bulk}),  is {\bf asymptotically flat at future null infinity} if the following objects  exist. 
	\begin{enumerate}
		\item[a)] A ($4$-dimensional) spacetime $(\widetilde{M},\widetilde{g})$ (dubbed, {\em unphysical spacetime}).
		\item[b)] A smooth embedding $\psi:M\to\psi[M]\subset\widetilde{M}$, $\psi[M]$ being an open subset of $\widetilde{M}$.
 \item[c)]  A smooth function $\Omega\in C^\infty(\psi[M])$ fulfilling $\Omega >0$ and
		$\widetilde{g}|_{\psi[M]}=\Omega^2 \:\psi^*g$.
	\end{enumerate}
	The following further five conditions must hold true.
	\begin{enumerate}
		\item $\psi[M]\subseteq \widetilde{M}$ is a manifold with boundary
$\Im^+ \doteq \partial \psi[M]$
 given by an embedded three-dimensional submanifold 
  of $\widetilde{M}$ which satisfies $\Im^+ \cap J^-(\psi[M], \widetilde{M})= \emptyset$.
		\item $(\widetilde{M},\widetilde{g})$ is 
		strongly causal in an open neighborhood of $\Im^+$.
		\item $\Omega$ extends (not uniquely in general) to a smooth function on the whole
		$\widetilde{M}$, still denoted by $\Omega$, such that $\Omega=0$  exactly  on $\Im^+$, whereas $d\Omega\neq 0$ at each point of $\Im^+$.
		\item Defining $n^a := \widetilde{g}^{ab} \partial_b \Omega$, there is a smooth function $\omega$ defined on $\widetilde{M}$ with $\omega >0$ on $M \cup \Im^+$ such that $\widetilde{\nabla}_a(\omega^4 n^a)=0$ on $\Im^+$ and the integral curves of $\omega^{-1}n$ are complete on $\Im^+$. This way $\Im^+$ results to be diffeomorphic to 
${\mathbb S}^2 \times {\mathbb R}$, the second factor being the range of the parameter of those integral curves.
\item Vacuum Einstein equations are assumed to be
fulfilled for $(M,g)$ in a neighbourhood of the boundary of $\psi[M]$  (or, more weakly, ``approaching'' it as discussed on p.278 of \cite{Wald}).
\end{enumerate}
We call	 {\bf future null infinity} of $(M,g)$ the set  $\Im^+$.
\end{definition}

\noindent Since the conformal embedding $\psi$ is a diffeomorphism from $M$ onto its image,  we will henceforth omit to indicate it explicitly writing  $M \subseteq \widetilde{M}$.  The  index notation has been used in the last part of the definition to make the statements less obscure. We will resort often to this policy, although all the structures that we  define and the results that we discuss could be presented all with an index-free notation, never being dependent on the choice of a local chart. 

\begin{remark}\label{Rem:alternative_asymptotically_flat}$\null$\\
{\bf (a)} Minkowski spacetime fulfils the given definition \cite{Wald} and in fact, the geometric structure of $\Im^+$ described in the definition above is just the one arising from the analysis of the geometry of null infinity of Minkowski spacetime. In this sense a spacetime is asymptotically flat at null infinity if it ``looks like Minkowski spacetime at null infinity''.\\
{\bf (b)}  More complicated 
completions of our basic definition exist and concern other types of infinity which may be added to $(M,g)$.  One could wish to add the {\em future time infinity}  of $M$.
This notion was first introduced by Friedrich in \cite{Friedrich, Friedrich2, Friedrich3}
to describe spacetimes resembling Minkowski space in the far timelike future, and used in a bulk-to-boundary context in \cite{Moretti}. Dealing with this class of backgrounds (which contains  Minkowski spacetime) we will refer to them as {\bf asymptotically flat (at future null infinity) with future time infinity}. In this setting in addition to Definition \ref{Def:asymptotically_flat}, 
\begin{enumerate}
\item[6.]$\widetilde{M}$ includes a preferred point $i^+$, {\bf the future time infinity}, and
the embedding of the bulk into the unphysical spacetime is assumed to satisfy $J^-(i^+;\widetilde M)\setminus\partial J^-(i^+;\widetilde M)=\psi[M]$ where now $J^-(i^+;\widetilde M)$ is supposed to be closed.
The future null infinity here satisfies $\Im^+= \partial J^-(i^+; \widetilde M)\setminus\{i^+\}$ so that $\partial\psi[M]=\Im^+\cup i^+$. The extended function $\Omega$
is again required to  satisfy $\Omega=0$ exactly on $\partial\psi[M]$, $d\Omega \neq 0$ on $\Im^+$, but $d\Omega(i^+)=0$ with $\widetilde{\nabla}_\mu \widetilde{\nabla}_\nu\Omega(i^+) = -2\widetilde{g}_{\mu\nu}(i^+)$.
\end{enumerate}
{\bf (c)} Analogously, one could work replacing $\Im^+$ and $i^+$ with the {\em past null infinity} $\Im^-$ and {\bf past time infinity}, $i^-$, respectively.  All constructions, properties and results being unchanged. 
Alternatively, one can define a spacetime which is asymptotically flat at null and {\em space} infinity, by adding a (further) point $i_0$ to $\widetilde{M}$, the {\bf space infinity} of $M$, such that around $i_0$ the spacetime ``looks like Minkowski spacetime at space infinity''. A detailed discussion on this sort of asymptotically flat spacetimes appears in \cite{Wald}. We only remark that the geometric structure close to $i_0$ is much more delicate than the one around $\Im^\pm$ or $i^\pm$ and non-trivial regularity issues arise for $\widetilde{g}$ at $i_0$. Assuming the existence of null and spacelike flat infinities, the embedding $\psi$ is supposed to satisfy  $\overline{J^+(i_0;\widetilde{M})}\cup\overline{J^-(i_0;\widetilde{M})}=\widetilde{M}\setminus\psi[M]$
and the null infinities are now defined as $\Im^\pm\doteq \partial J^\pm(i_0;\widetilde{M})\setminus\{i_0\}$.
\end{remark} 
\subsection{A compound example: Schwarzschild spacetime}\label{Sec:Example}
 Definition \ref{Def:asymptotically_flat} with the completion of item (b) in Remark \ref{Rem:alternative_asymptotically_flat} seems very hard to check in concrete examples of spacetimes. Yet, this is not really the case and, besides the obvious case of Minkowski spacetime, there are several other known backgrounds admitting null infinities and other types of infinities. The most famous one is the Schwarzschild solution of Einstein equations which also includes other types of light-like $3$-surfaces: {\em Killing horizons}. Since we will be discussing the construction of the Unruh state on this spacetime, it is worth highlighting it more in detail and, to this end, we will follow mainly the notations and conventions of \cite[Sec. 6.4]{Wald}. 
\begin{figure}\label{fig2}
	\centering
	\includegraphics[height=5cm]{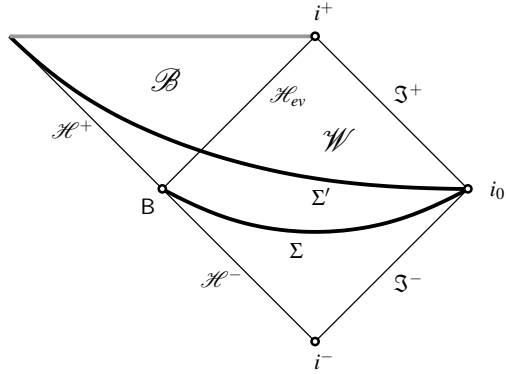}
	\caption{The overall picture represents $\mathcal{M}$.  The regions $\mathcal{W}$ and $\mathcal{B}$ correspond respectively to the
		regions $I$ and $III$ in fig 2. The thick horizontal line denotes the metric singularity at $r=0$, 
		$\Sigma$ is a smooth spacelike Cauchy surface for $\mathcal{M}$  while $\Sigma'$ is a smooth spacelike Cauchy surface for $\mathcal{W}$.}
\end{figure}

\noindent Our starting point is  Kruskal spacetime $\mathcal{K}$ and we are interested in the subspacetime $\mathcal{M}$ used to picture a black hole of mass $M>0$. Referring to Figure 2.1, $\mathcal{M}$ is made of the union of three pairwise disjoint parts, the {\bf Schwarzschild wedge} $\mathcal{W}$, the {\bf black hole} region  $\mathcal{B}$, and  their common boundary $\mathcal{H}_{ev}$ also known as the {\bf (future Killing)  event horizon}. Defining the {\bf Schwarzschild radius}  $r_S\doteq 2M$, the metric outside $\mathcal{H}_{ev}$ is best defined in terms of the {\bf Schwarzschild coordinates} 
$t,r,\theta,\phi$,
where $t\in \mathbb{R}$, $r\in (r_S,+\infty)$, $(\theta,\phi)$ are standard spherical coordinates over $\mathbb{S}^2$ in   ${\mathcal{W}}$,
whereas  $t\in \mathbb{R}$, $r\in (0,r_S)$, $(\theta,\phi)$ cover  $\mathbb{S}^2$ as before  in ${\mathcal{B}}$,
\begin{equation}\label{Schw}
g =-\left(1-\frac{2M}{r}\right) dt\otimes dt + \left(1-\frac{2M}{r}\right)^{-1} dr\otimes dr
+ r^2 d{\mathbb{S}^2}(\theta,\phi)\:,
\end{equation} 
where $d{\mathbb{S}^2} \doteq d\theta \otimes d\theta + \sin^2\theta d\phi\otimes d\phi$ is the standard metric on the unit $2$-sphere. We observe that as $r$ tends to $0$ we approach an {\em intrinsic} (curvature) singularity located at the horizontal upper boundary of $\mathcal{B}$ (Figure 2.1), whereas  $r=r_S$ defines an {\em apparent}  singularity on the event horizon $\mathcal{H}_{ev}$ which actually is just due to the bad choice of coordinates.
Another convenient chart over 
$\mathcal{W}\cup \mathcal{B}$
is that provided by  {\bf Eddington-Finkelstein coordinates} \cite{KW,Wald}: $u,v,\theta,\phi$, with $(\theta,\phi)$ standard spherical coordinates over $\mathbb{S}^2$  and
\begin{align}
&u \doteq  t-r^* \mbox{ in ${\mathcal{W}}$,}  \quad u\doteq-t-r^* \mbox{ in $\mathcal{B}$,}
 \label{one}
\\
&v \doteq t+r^* \mbox{ in ${\mathcal{W}}$,} \quad v\doteq t-r^* \mbox{ in $\mathcal{B}$,}
\label{two}
\\
&r^* \doteq r + 2M \ln \left|\frac{r}{2M}-1 \right| \in \mathbb{R}\:.\notag 
\end{align}
A third, related, chart yields the {\bf global null coordinates} $U,V,\theta,\phi$ which have the advantage of being defined on the whole 
$\mathcal{K}$ \cite{Wald} though here we restrict them to $\mathcal{M}$ only,
\begin{equation}
U  =  -e^{-u/(4M)}\:, \;
V  =  {e^{ v/(4M)}} \; \mbox{in ${\mathcal{W}}$,}\quad
U  =  e^{  u/(4M)}\:, \;
V  =  {e^{ v/(4M)}} \; \mbox{in $\mathcal{B}$}\:.\label{UV}
\end{equation}
In this frame,
\begin{gather*}
\mathcal{W}\equiv  \{ (U,V, \theta,\phi) \in \mathbb{R}^2 \times \mathbb{S}^2\:| \: U<0,V>0\}\:,\\
\mathcal{B} \equiv  \{ (U,V, \theta,\phi) \in \mathbb{R}^2 \times \mathbb{S}^2\:| \:UV<1 \:,  U, V > 0\}\:,\\
{\mathcal{M}} \doteq  \mathcal{W} \cup \mathcal{B} \cup \mathcal{H}_{ev} \equiv  \{ (U,V, \theta,\phi) \in \mathbb{R}^2 \times \mathbb{S}^2\:| \: UV<1\:,  V > 0\}\:. 
\end{gather*}

\begin{figure}\label{fig1}
	\centering
	\includegraphics[height=6.5cm]{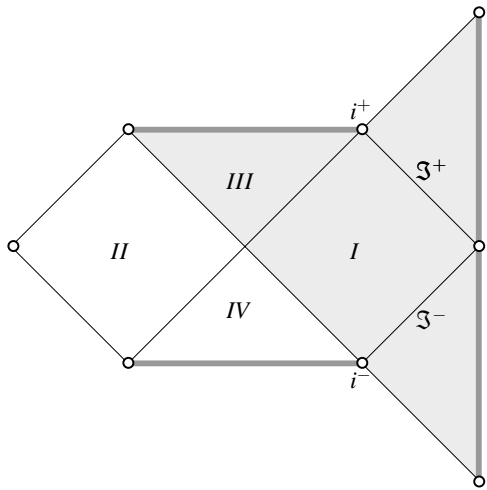}
	\caption{The Kruskal extension of Schwarzschild spacetime, where the physical region is shaded in grey. Following \cite{Ashtekar:1978zz} one can introduce {\bf spatial infinity} $i_0$, while $i^\pm$ are not part of the conformal diagram and we shall refer to them as (formal) {\em points at infinity}, often also known as {\bf future} and {\bf past} infinity respectively. 
	}
\end{figure}
\noindent Each of the three regions is a globally hyperbolic  subspacetime  of $\mathcal{K}$, and they represent the building blocks of our analysis together with  $\mathcal{H}_{ev}$ and with the complete {\bf past (Killing) horizon} $\mathcal{H}$ of $\mathcal{M}$ which is part of the boundary of $\mathcal{M}$ in the Kruskal manifold. These horizons are defined by:
\begin{eqnarray*}
\mathcal{H}_{ev} \equiv  \{ (U,V, \theta,\phi) \in \mathbb{R}^2 \times \mathbb{S}^2\:| \: U = 0, V>0\}
\:,  \\
\mathcal{H} \equiv  \{ (U,V, \theta,\phi) \in \mathbb{R}^2 \times \mathbb{S}^2\:| \: V = 0, U\in \mathbb{R}\}
\:.
\end{eqnarray*}
Following Figure \ref{fig2} and for future convenience, we decompose $\mathcal{H}$ into the disjoint union  $\mathcal{H} = \mathcal{H}^- \cup \mathsf{B} \cup \mathcal{H}^+$  where $\mathcal{H}^\pm$ are defined 
as the loci $U>0$ and $U <0$, while $\mathsf{B}$ is the {\bf bifurcation surface} at $U=0$. This is a  spacelike 
$2$-sphere with radius $2M$ where $\mathcal{H}$ meets the closure of $\mathcal{H}_{ev}$.\\
The metric of the whole Kruskal manifold and, per restriction, also of $\mathcal{M}$, is
\begin{equation}g = -\frac{32 M^3}{r} e^{-\frac{r}{2M}} dU\otimes dV + r^2 d{\mathbb{S}^2}(\theta,\phi)\:,
\label{g}
\end{equation}
where the only intrinsic (curvature) singularity of Kruskal spacetime at $r\to 0$, translates to $UV\to 1$ while the apparent singularity on $\mathcal{H}_{ev}$ has disappeared.\\
We outline succinctly the Killing vector structure since it will play a key role in the next sections. As one can infer directly from \eqref{Schw} or from \eqref{g} , besides the Killing fields associated to the spherical symmetry of the background, there is a further smooth Killing field $X$ which coincides with $\partial_t$. $X$ is timelike in ${\mathcal{W}}$ and orthogonal to $\partial_r,\partial_\theta, \partial_\phi$, which is thus a {\em static spacetime}, while $X$ is spacelike in ${\mathcal{B}}$. 
$X$ becomes lightlike and tangent to $\mathcal{H}$ as well as to both $\mathcal{H}_{ev}$ and to its completion in the Kruskal manifold, while it vanishes on $\mathsf{B}$, giving rise to the structure of a {\em bifurcate Killing horizon} \cite{KW}. In terms of the coordinates $u$ and $v$, it turns out that
\begin{equation*}
X = \mp \partial_u \mbox{ on $\mathcal{H}^\pm$,} \quad 
X = \partial_v \mbox{ on $\mathcal{H}_{ev}$.}
\end{equation*}
To conclude our short survey of Schwarzschild spacetime, we need to make contact with the analysis of the previous section, Definition \ref{Def:asymptotically_flat} in particular. One follows \cite{SW83} rescaling the metric \eqref{g} by a factor $r^{-2}$ after which one can notice that $(\mathcal{M}, r^{-2}g)$  admits a smooth 
larger extension $(\widetilde{\mathcal{M}},\widetilde{g})$ (see Figure 2.2) constructed in accordance with Definition \ref{Def:asymptotically_flat}. Here, the geometric singularity in $(\mathcal{M},g)$ is pushed at infinity in the sense that the non-null geodesics take an infinite amount of affine parameter to reach a point at $r=0$.
The extension of $(\widetilde{\mathcal{M}},\widetilde{g})$ obtained in this way does not cover the timelike and spacelike infinities  $i^\pm$ and $i_0$ in Figure 2.1 and Figure 2.2 (refer to item (b) in Remark \ref{Rem:alternative_asymptotically_flat}), though it includes the boundaries given by the {\bf future} and {\bf past null infinity} respectively ${\Im}^{\pm}$. These
null $3$-submanifolds of $\widetilde{\mathcal{M}}$ are formally localised at $r=+\infty$. (A finer extension which includes $i_0$ is described in great detail in the appendix of \cite{Ashtekar:1978zz} where it is more generally shown that all Kerr solutions of vacuum Einstein equations are asymptotically flat at null and spacelike infinity.)
The restriction of the rescaled extended metric $\widetilde{g}$
to the Killing horizon $\mathcal{H}$ as well as to the null infinities ${\Im}^\pm$ can be written explicitly. More precisely, in the first case,
$$ 
\widetilde{g}|_{\mathcal{H}} = 4M^2\left( - d\Omega \otimes 
dU - dU \otimes d\Omega +d{\mathbb{S}^2}(\theta,\phi)\right)\:,
$$
where $\Omega=2V$ vanishes at $\mathcal{H}$ since $V$ is defined as in \eqref{UV}. Barring the constant pre-factor $4M^2$ the metric is said to be in a {\bf geodetically complete Bondi form}, completeness being referred to the complete domain of the affine parameter $U \in (-\infty,+\infty)$ of the null geodesics forming $\mathcal{H}$.
The same structure occurs on $\Im^+$ and on $\Im^-$, respectively, formally defined by the limit-value of the Eddington-Finkelstein coordinates $v= +\infty$ and $u= -\infty$. The metric $\tilde{g}$ 
has still a {\bf geodetically complete Bondi form},
$$ \widetilde{g}|_{\Im^+} = - d\Omega \otimes du - du \otimes d\Omega + d\mathbb{S}^2(\theta,\phi)\:, 
$$
where $\Omega \doteq -2/v$ defines $\Im^+$ for $\Omega=0$. Similarly 
$$
\widetilde{g}|_{\Im^-} = - d\Omega \otimes dv - dv \otimes d\Omega + d\mathbb{S}^2(\theta,\phi)\:, 
$$
where $\Omega \doteq -2/u$ defines $\Im^-$ for $\Omega =0$. To conclude we observe that the $g$-Killing vector field $X$ coinciding in $\mathcal{W}$ and in $\mathcal{B}$ with $\partial_t$ is an affine Killing vector for 
$\widetilde{g}$ (as evidently $\widetilde{\nabla}_a X_b + \widetilde{\nabla}_b X_a = \mathcal{L}_X(\widetilde{g})_{ab}= X(\ln\Omega^2)\widetilde{g}_{ab}$ in the bulk) and it extends to an affine $\widetilde{g}$-Killing vector, still denoted by $X$, defined on $\widetilde{\mathcal{M}}$ with
$$ X =  \partial_u \mbox{ on $\Im^+$,} \quad 
X =   \partial_v \mbox{ on $\Im^-$}.$$

\subsection{Asymptotic Symmetries: The Bondi-Metzner-Sachs group}\label{Sec:BMS}

Our next goal is to consider an arbitrary asymptotically flat spacetime at future null infinity $(M,g)$ as in Definition \ref{Def:asymptotically_flat} further discussing the geometric properties of future null infinity $\Im^+$. For our purposes it suffices that only $\Im^+$ exists, but, whenever also past null infinity can be defined, a similar analysis for $\Im^-$ holds true.

As starting point, we remark that, in view of the definition of the unphysical spacetime $(\widetilde{M},\widetilde{g})$, the metric structure of $\Im^+$ enjoys a so-called {\em gauge freedom}. It stems from the observation that we can always rescale $\Omega \to \omega \Omega$ in a neighborhood of $\Im^+$ , where $\omega$ is a smooth and nowhere vanishing scalar function. Such freedom does not affect the topology of $\Im^+$, namely that of $ \mathbb{R} \times\mathbb{S}^2$, as well as the differentiable structure. 
Following Definition \ref{Def:asymptotically_flat}, once $\Omega$ is fixed, $\Im^+$  turns out to be
the union of  future-oriented  integral lines of the field 
$n^a \doteq\tilde{g}^{ab}\widetilde{\nabla}_b\Omega$.
This property is, in fact, invariant under gauge transformation, but the field $n$
depends on the gauge. All relevant information can be encoded in a triple of data $(\Im^+,\widetilde{h},n)$, where $\widetilde{h}$ is the degenerate metric induced by $\widetilde{g}$ on $\Im^+$. Under a gauge transformations $\Omega \to \omega \Omega$, these data transform as
\begin{equation}
\Im^+ \to \Im^+,\quad \tilde{h} \to \omega^2 \tilde{h}, \quad n \to \omega^{-1} n \label{gauge}\:.
\end{equation}
For a given asymptotically flat spacetime $(M,g)$, there is no general physical principle which may distinguish one of the above triple of data from another obtained via a gauge transformation.  Such property goes together with that of {\bf universality}: It turns out that the geometric structures at future null infinity of a pair of asymptotically flat spacetimes are always isomorphic up to gauge transformations. 
To wit, if $C_1$ and $C_2$ are two equivalence classes under gauge transformation of triples, associated respectively to two asymptotically flat spacetimes $(M_1,g_1)$ and $(M_2,g_2)$, there exists a diffeomorphism $\gamma: \Im^+_1 \to \Im^+_2$ such that, for suitable $(\Im^+_1,\tilde{h}_1, n_1)\in C_1$
and $(\Im^+_2,\tilde{h}_2, n_2)\in C_2$, 
\begin{equation}
\gamma(\Im^+_1) = \Im^+_2 \:,\:\:\:\:\: \gamma^* \tilde{h}_1=\tilde{h}_2 \:,\:\:\:\:\:\gamma^* n_1=n_2 \nonumber\:.
\end{equation}
The proof of this statement relies on the existence for every  asymptotically flat spacetime $(M,g)$ of a choice of the gauge such that the rescaled unphysical  metric (still indicated by  $\widetilde{g}$ instead of $\omega^2 \widetilde{g}$) reads
\begin{equation}\label{eq:metric_at_Scri}
\widetilde{g}|_{\Im^+}  = d\Omega \otimes du  -du \otimes d\Omega +  d{\mathbb{S}^2}(x_1,x_2)\:. 
\end{equation}
Indeed, Definition \ref{Def:asymptotically_flat} requires that for a given asymptotically flat spacetime $(M,g)$ with an initial $\Omega$, it is always possible to fix the gauge $\omega$ such that both
$\widetilde{\nabla}_a(\omega^4 n^a)=0$ and the integral curves of $\omega^{-1}n$ are complete, {\it i.e.}, their parameter ranges over the whole $\mathbb R$. The first condition  implies that these curves are ligthlike complete geodesics for the rescaled unphysical metric $ \omega^2 \widetilde{g}$. With this choice of $\omega$,
still denoting by $\Omega$ (instead of $\omega\Omega$) the gauge-transformed  conformal factor, with $n$ (instead of $\omega^{-1}n$) the gauge-transformed  tangent vector and with $\widetilde{g}$ (instead of $\omega^2 \widetilde{g}$) the gauge-transformed unphysical metric, 
 in a neighbourhood of $\Im^+$, there exists a coordinate system $(u, \Omega, x_1,x_2)$ so that $ d\mathbb{S}^2(x_1,x_2)$ is the standard metric on a unit $2$-sphere, $u \in \mathbb{R}$ is an affine parameter along the {\em complete} null geodesics forming $\Im^+$  with tangent vector $n= \partial/\partial u$. As in the special example of Schwarzschild spacetime, discussed in the previous section, these are also known as {\bf Bondi coordinates}.  $\Im^+$ is made of the points $(u,0,x_1,x_2)$ with $u \in \mathbb{R}$, $(x_1,x_2) \in \mathbb{S}^2$. Every asymptotically flat spacetime, using \eqref{gauge}, admits the triple of data
$(\Im^+,\tilde{h}_B, n_B) \doteq (\mathbb{R}\times \mathbb{S}^2, d\mathbb{S}^2, \partial/\partial u)$. For more details about the above structures see \cite{Geroch,Wald}.

We focus now on the main topic of this section, namely the so-called {\bf Bondi-Metzner-Sachs (BMS) group}, $G_{BMS}$ \cite{Penrose, Geroch, AS}. First introduced at the beginning of the sixties in \cite{Bondi}
\begin{definition}\label{defBMS}
$G_{BMS}$ is the group of diffeomorphisms $\gamma : \Im^+ \to \Im^+$ 
such that the triple $(\gamma(\Im^+),\gamma^*\tilde{h}, \gamma^*n)$
coincides with $(\Im^+,\tilde{h},n)$ up to a gauge transformation \eqref{gauge}.
\end{definition}
 In other words we are considering only those diffeomorphisms which generalize the notion of isometry 
to an asymptotic structure where the metric structures are equivalent up to gauge transformations. 
(The full group of diffeomorphisms $\gamma : \Im^+ \to \Im^+$  without restrictions is the so-called {\em Newman-Unti group} \cite{Newman:1962cia}.) The following characterization of the one-parameter subgroups of the BMS is rather useful \cite{Wald}:

\begin{proposition}\label{prop1} \em The smooth one-parameter group of diffeomorphisms which is generated by a smooth vector 
	field $\xi'$ on $\Im^+$ is a subgroup of $G_{BMS}$ if and only if 
	$\xi'$ can be smoothly extended  to a, possibly non unique, vector field $\xi$ defined in $M$ in some neighborhood of $\Im^+$ in
	such a way that $\Omega^2 \mathcal{L}_\xi g$ has a smooth extension at $\Im^+$ and vanishes thereon. 
\end{proposition}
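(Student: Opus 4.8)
The strategy is to recast membership of a one-parameter group in $G_{BMS}$ as an infinitesimal condition on its generator $\xi'$, and then to match that condition with the bulk requirement through the conformal transformation law of the Lie derivative. I work in a neighbourhood of $\Im^+$ in $\widetilde M$, with $\widetilde g=\Omega^2 g$ in the bulk and $n^a=\widetilde g^{ab}\widetilde\nabla_b\Omega$ (so $n^\flat=d\Omega$), using Bondi coordinates $(u,\Omega,x^A)$, $A=1,2$, as in \eqref{eq:metric_at_Scri}. The computational backbone is the identity, valid in $M$,
\[
\Omega^2\,\mathcal{L}_\xi g \;=\; \mathcal{L}_\xi\widetilde g \;-\; 2\,\frac{\xi(\Omega)}{\Omega}\,\widetilde g\,.
\]
Since $\xi$ extends $\xi'$, which is tangent to $\Im^+$, the function $\xi(\Omega)=d\Omega(\xi)$ vanishes on $\Im^+$; hence $\xi(\Omega)/\Omega$, and therefore the right-hand side, admit a smooth extension to $\Im^+$ as soon as $\xi$ does, so the hypothesis on $\xi$ reduces to $\mathcal{L}_\xi\widetilde g|_{\Im^+}=2\alpha\,\widetilde g|_{\Im^+}$ with $\alpha\doteq(\xi(\Omega)/\Omega)|_{\Im^+}\in C^\infty(\Im^+)$.

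Assume first that such a bulk extension $\xi$ exists. Restricting $\mathcal{L}_\xi\widetilde g|_{\Im^+}=2\alpha\widetilde g|_{\Im^+}$ to $T\Im^+$ and using that the pull-back of $\mathcal{L}_\xi\widetilde g$ to $\Im^+$ equals $\mathcal{L}_{\xi'}\widetilde h$ (because $\xi|_{\Im^+}=\xi'$ is tangent), one gets $\mathcal{L}_{\xi'}\widetilde h=2\alpha\widetilde h$. Contracting the same equality with $n$ and using $\mathcal{L}_\xi(n^\flat)=d(\xi(\Omega))$ gives $(\mathcal{L}_\xi n)^\flat|_{\Im^+}=-\alpha\,n^\flat$, i.e.\ $\mathcal{L}_{\xi'}n=-\alpha\,n$ (since $\xi$ and $n$ are tangent to $\Im^+$ there, the bracket computed in $\widetilde M$ restricts to the bracket on $\Im^+$). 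Then, with $\{\phi_t\}$ the flow of $\xi'$ and $\omega_t\doteq\exp\!\big(\int_0^t\alpha\circ\phi_s\,ds\big)>0$, differentiation shows $\tfrac{d}{dt}\phi_t^*\widetilde h=2(\alpha\circ\phi_t)\,\phi_t^*\widetilde h$ and $\tfrac{d}{dt}\phi_t^*n=-(\alpha\circ\phi_t)\,\phi_t^*n$, whence $\phi_t^*\widetilde h=\omega_t^2\widetilde h$ and $\phi_t^*n=\omega_t^{-1}n$; by Definition \ref{defBMS}, $\phi_t\in G_{BMS}$ for every $t$.

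For the converse, if $\{\phi_t\}\subseteq G_{BMS}$ then $\phi_t^*\widetilde h=\omega_t^2\widetilde h$, $\phi_t^*n=\omega_t^{-1}n$ for smooth $\omega_t>0$ depending smoothly on $t$ (uniqueness of $\omega_t$ follows from $\widetilde h$ being positive-definite transversally to $n$); differentiating at $t=0$ yields $\mathcal{L}_{\xi'}\widetilde h=2\alpha\widetilde h$ and $\mathcal{L}_{\xi'}n=-\alpha\,n$ with $\alpha\doteq\dot\omega_0$. It remains to build a bulk $\xi$ realizing these data: expanding $\mathcal{L}_\xi\widetilde g|_{\Im^+}=2\alpha\widetilde g|_{\Im^+}$ in the Bondi frame, the components with both indices tangential (along $u$ and the $x^A$) do not involve the transverse derivatives $\partial_\Omega\xi$ and coincide exactly with the two relations $\mathcal{L}_{\xi'}\widetilde h=2\alpha\widetilde h$, $\mathcal{L}_{\xi'}n=-\alpha n$ (note $\partial_u\xi'^u=\alpha$ and $\partial_u\xi'^A=0$ are part of the latter), hence hold automatically; the remaining components (with one index equal to $\Omega$) form a linear system whose coefficient matrix is built from the non-degenerate $\widetilde g|_{\Im^+}$, and it uniquely determines the transverse first derivatives $\partial_\Omega\xi^\rho|_{\Im^+}$ ($\rho=u,\Omega,x^A$), the value $\partial_\Omega\xi^\Omega|_{\Im^+}=\alpha$ coming out consistently thanks to $\partial_u\xi'^u=\alpha$. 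Any smooth $\xi$ near $\Im^+$ with $\xi|_{\Im^+}=\xi'$ and these prescribed first transverse derivatives then satisfies $\Omega^2\mathcal{L}_\xi g|_{\Im^+}=0$.

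The bookkeeping with the conformal identity and the ODE for $\omega_t$ are routine; the substantial point is the converse construction, where there is an a priori obstruction: the ten components of the symmetric tensor $\mathcal{L}_\xi\widetilde g|_{\Im^+}$ must be forced to equal $2\alpha\widetilde g|_{\Im^+}$ using only the four first transverse derivatives of $\xi$ (its value on $\Im^+$ being pinned to $\xi'$), so the six ``tangential'' equations cannot be adjusted and must hold identically. The heart of the proposition is exactly that these six equations are equivalent to $\xi'$ generating a one-parameter subgroup of $G_{BMS}$ -- i.e.\ to the pair $\mathcal{L}_{\xi'}\widetilde h=2\alpha\widetilde h$, $\mathcal{L}_{\xi'}n=-\alpha n$ -- and checking this equivalence, together with the non-degenerate solvability of the four remaining equations for the transverse derivatives, is where the real work lies.
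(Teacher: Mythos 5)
Your proof is correct. The paper does not actually prove this proposition (it cites Wald), and your argument — using $\Omega^2\mathcal{L}_\xi g=\mathcal{L}_\xi\widetilde g-2(\xi(\Omega)/\Omega)\,\widetilde g$ to reduce the bulk condition to $\mathcal{L}_{\xi'}\widetilde h=2\alpha\widetilde h$ and $\mathcal{L}_{\xi'}n=-\alpha n$, integrating these along the flow for one implication and solving the $\Omega\mu$-components of $\mathcal{L}_\xi\widetilde g|_{\Im^+}=2\alpha\widetilde g|_{\Im^+}$ for the transverse derivatives of $\xi$ in a Bondi frame for the other — is essentially the standard one found in that reference, including the correct identification of the only genuinely delicate point (that the tangential components, which cannot be adjusted, hold precisely by virtue of the infinitesimal BMS conditions).
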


\noindent Remembering that $\widetilde{g}$ is a smooth metric also on $\Im^+$ where $\Omega$ vanishes ($\widetilde{g}=\Omega^2 g$ is valid {\em inside} $M$), the condition that $\Omega^2 \mathcal{L}_\xi g$ is smooth and vanishes at $\Im^+$ can be seen as an intepretation of the heuristic idea of a vector field which becomes an exact symmetry only asymptotically. For this reason, taking the universality property into account, we can say that {\em the BMS group describes the  asymptotic Killing symmetries  all asymptotically flat vacuum spacetimes simultaneously}. Proposition \ref{prop1} characterizes also a subalgebra of the smooth vector fields on $\Im^+$, which can be seen as the {\em Lie algebra} of the BMS group since these vectors generate the smooth one-parameter subgroups. Since $G_{BMS}$ is not a finite-dimensional Lie group, it is by no means obvious that an exponential map from the Lie algebra to the whole group exists. Although we do not enter into the technical details of the proof, we stress that such exponential map exists in our case and it is a consequence of $\Im^+$ being generated by a whole family of complete integral curves built out of the vector field $n$.

In order to give an explicit representation of $G_{BMS}$, we fix the conformal rescaling, so to work in an already defined {\em  Bondi coordinate system}. Recall that this is constructed out of the affine parameter $u$ of the null integral curves forming $\Im^+$ together with two additional coordinates on the unit $2$-sphere. Starting form the standard ones $(\theta,\phi)$, we introduce via a stereographic projection the complex coordinates $(\zeta,\overline{\zeta})\in {\mathbb C}^2$ so that 
$\zeta= e^{i\phi}\cot(\theta/2)$. 
In this frame  the set $G_{BMS}$  is nothing but $SO(3,1)_\uparrow  \times C^\infty(\mathbb{S}^2)$, where $SO(3,1)_\uparrow$ is the proper orthocronous Lorentz group in four dimensions. A generic element $(\Lambda, f) \in SO(3,1)_\uparrow \times C^\infty(\mathbb{S}^2)$ acts on $p=(u,\zeta,\overline{\zeta})\in \Im^+$ as
\cite{sachsa}
\begin{eqnarray}
u &\to & u'\doteq K_\Lambda(\zeta,\overline{\zeta})(u + f(\zeta,\overline{\zeta}))\:,\label{u}\\
\zeta &\to & \zeta' \doteq\Lambda\zeta\doteq \frac{a_\Lambda\zeta + b_\Lambda}{c_\Lambda\zeta +d_\Lambda}\:, \:\:\:\:\:\:
\overline{\zeta} \: \to \: \overline{\zeta}' \doteq\Lambda\overline{\zeta} \doteq \frac{\overline{a_\Lambda}\overline{\zeta} + \overline{b_\Lambda}}{\overline{c_\Lambda}\overline{\zeta} +\overline{d_\Lambda}}\:,
\label{z}
\end{eqnarray}
where
\begin{eqnarray}
K_\Lambda(\zeta,\overline{\zeta}) \doteq  \frac{(1+\zeta\overline{\zeta})}{(a_\Lambda\zeta + b_\Lambda)(\overline{a_\Lambda}\overline{\zeta} + \overline{b_\Lambda}) +(c_\Lambda\zeta +d_\Lambda)(
	\overline{c_\Lambda}\overline{\zeta} +\overline{d_\Lambda})}
\label{K},
\end{eqnarray}
while
$$\Pi^{-1}(\Lambda)=\left[
\begin{array}{cc}
a_\Lambda & b_\Lambda\\
c_\Lambda & d_\Lambda 
\end{array}
\right],$$
where $a_\Lambda,b_\Lambda,c_\Lambda,d_\Lambda\in\mathbb{C}$ and $a_\Lambda d_\Lambda-b_\Lambda c_\Lambda=1$. $\Pi$ is the surjective covering homomorphism from $SL(2,\mathbb{C})$ to $SO(3,1)_\uparrow$ and thus the above matrix is unambiguously fixed up to a global sign, which plays ultimately no role.  (\ref{z}) and (\ref{K}) say that
$G_{BMS}$ has the structure of a semidirect product
$SO(3,1)_\uparrow\ltimes C^\infty(\mathbb{S}^2)$,
the elements of the Abelian subgroup $C^\infty(\mathbb{S}^2)$ being called {\bf supertranslations}.
In particular, if $\odot$ denotes the  product in $G_{BMS}$, $\circ$ the composition of functions, $\cdot$
the pointwise product of scalar functions
and $\Lambda$ acts on $(\zeta,\overline{\zeta})$ as in the right-hand side of (\ref{z}):
\begin{eqnarray}
K_{\Lambda'}(\Lambda(\zeta,\overline{\zeta})) K_\Lambda(\zeta,\overline{\zeta}) &=& K_{\Lambda' \Lambda}(\zeta,\overline{\zeta}) \label{KK}\:.\\
(\Lambda',f') \odot (\Lambda,f) &=& \left(\Lambda' \Lambda,\: f + (K_{\Lambda^{-1}} \circ \Lambda)\cdot (f'\circ \Lambda)  \right)\:.
\label{product2}
\end{eqnarray}

\begin{remark}\label{generatoremalditesta} 
	We underline that in the literature the factor $K_\Lambda$ does not always have
	the same definition. In particular, in \cite{Mc1, Mc2, Mc5, Girardello, Mc4} 
	$$K_\Lambda(\zeta,\overline{\zeta}) \doteq  \frac{(a_\Lambda\zeta + b_\Lambda)(\overline{a_\Lambda}\overline{\zeta} + \overline{b_\Lambda}) +(c_\Lambda\zeta +d_\Lambda)(
		\overline{c_\Lambda}\overline{\zeta} +\overline{d_\Lambda})}{(1+\zeta\overline{\zeta})},$$
	but in this paper we stick to the definition (\ref{K}) as in
	\cite{sachsa} accordingly adapting calculations and results from the above mentioned references.
\end{remark}
The following proposition arises from the definition of Bondi frame and from the equations above.

\begin{proposition}\label{exremark1} Let $(u,\zeta,\overline{\zeta})$ be a Bondi frame on $\Im^+$. The following holds.\\
	{\bf (a)} A global coordinate frame $(u',\zeta',\overline{\zeta}')$ on $\Im^+$ is a Bondi frame if and only if 
	\begin{eqnarray}
	u &=& u'+g(\zeta',\overline{\zeta}')\:,\label{transform1}\\
	\zeta &=& \frac{a_R\zeta' + b_R}{c_R\zeta' +d_R}\:, \:\:\:\:\:\:
	\overline{\zeta} \:\doteq\:\: \frac{\overline{a_R}\overline{\zeta}' + \overline{b_R}}{\overline{c_R}\overline{\zeta}' +\overline{d_R}}\:,\label{transform2}
	\end{eqnarray}
	for $g\in C^\infty(\mathbb{S}^2)$, while
	$R\in SO(3)$ 
	refers to the canonical inclusion $SO(3)\hookrightarrow SO(3,1)_\uparrow$
	(i.e. the canonical inclusion $SU(2)\hookrightarrow SL(2,\mathbb{C})$ for matrices of coefficients 
	$(a_\Lambda,b_\Lambda,c_\Lambda,d_\Lambda)$ in (\ref{K}).)\\ 
	{\bf (b)} The functions $K_\Lambda$ are smooth on the Riemann sphere $\mathbb{S}^2$.  Furthermore
	$K_\Lambda(\zeta,\overline{\zeta})=1$ for all $(\zeta,\overline{\zeta})$ if and only if $\Lambda \in SO(3)$.\\
	{\bf (c)} Let $(u',\zeta',\overline{\zeta}')$ be another Bondi frame as in (a).
	If $\gamma \in \mathrm{BMS}$ is represented by $(\Lambda,f)$ in $(u,\zeta,\overline{\zeta})$, the same $\gamma$ is represented
	by $(\Lambda',f')$ in $(u',\zeta',\overline{\zeta}')$ with
	\begin{equation} (\Lambda',f') = (R,g)^{-1}\odot (\Lambda,f)\odot (R,g)\:.\label{indep} \end{equation}
\end{proposition}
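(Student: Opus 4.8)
The plan is to prove the three items in the order (b), (a), (c), since (a) uses (b) and (c) uses (a) together with the composition law for the BMS action.

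\textbf{Part (b).} First I would note that the denominator of $K_\Lambda$ in \eqref{K}, namely $D(\zeta):=|a_\Lambda\zeta + b_\Lambda|^2 + |c_\Lambda\zeta + d_\Lambda|^2$, never vanishes on $\mathbb{C}$: if it did, the non-zero vector $(\zeta,1)$ would lie in the kernel of $\Pi^{-1}(\Lambda)$, contradicting $a_\Lambda d_\Lambda - b_\Lambda c_\Lambda = 1$; hence $K_\Lambda$ is smooth on $\mathbb{C}$. To check smoothness at $\zeta=\infty$ I pass to the coordinate $w=1/\zeta$; a direct substitution gives $K_\Lambda = \frac{1+|w|^2}{|a_\Lambda + b_\Lambda w|^2 + |c_\Lambda + d_\Lambda w|^2}$, whose denominator at $w=0$ equals $|a_\Lambda|^2 + |c_\Lambda|^2 > 0$ (again by the unit determinant), so $K_\Lambda$ extends smoothly across $\infty$ and is therefore smooth on $\mathbb{S}^2$. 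For the characterization, expanding $D(\zeta)$ as a polynomial in $\zeta,\overline\zeta$ and comparing with $1+\zeta\overline\zeta$ shows $K_\Lambda\equiv 1$ if and only if $|a_\Lambda|^2 + |c_\Lambda|^2 = 1$, $|b_\Lambda|^2 + |d_\Lambda|^2 = 1$ and $\overline{a_\Lambda}b_\Lambda + \overline{c_\Lambda}d_\Lambda = 0$, i.e. the columns of $\Pi^{-1}(\Lambda)$ are orthonormal; combined with $\det = 1$ this is exactly $\Pi^{-1}(\Lambda)\in SU(2)$, equivalently $\Lambda\in SO(3)$. The converse is an immediate computation: for $\Lambda\in SO(3)$ one has $d_\Lambda=\overline{a_\Lambda}$, $c_\Lambda=-\overline{b_\Lambda}$, $|a_\Lambda|^2+|b_\Lambda|^2 = 1$, whence $D(\zeta)=1+\zeta\overline\zeta$.

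\textbf{Part (a).} For the \emph{if} direction: if $R\in SO(3)$ then $K_R\equiv 1$ by (b), so \eqref{transform1}--\eqref{transform2} is precisely the BMS transformation \eqref{u}--\eqref{z} with $\Lambda=R$; since $R$ acts on $\mathbb{S}^2$ by an orientation-preserving isometry of $d\mathbb{S}^2$ and $g$ depends only on the angular variables, one checks directly that the pull-back of $(d\mathbb{S}^2,\partial/\partial u)$ along this map is $(d\mathbb{S}^2,\partial/\partial u')$, so $(u',\zeta',\overline\zeta')$ is again a Bondi frame. For the \emph{only if} direction, let $(u',\zeta',\overline\zeta')$ be a second Bondi frame. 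Both $\partial/\partial u$ and $\partial/\partial u'$ equal the fixed geometric generator $n$ of $\Im^+$, so the integral curves of $n$ are simultaneously the coordinate lines $\{\zeta=\mathrm{const}\}$ and $\{\zeta'=\mathrm{const}\}$; hence $\zeta'$ is constant along each generator, i.e. $\zeta'=\zeta'(\zeta,\overline\zeta)$, and, since $u$ and $u'$ are affine parameters with the very same tangent $n$, $u-u'$ is constant along each generator, i.e. $u=u'+g(\zeta',\overline\zeta')$ with $g$ smooth on $\mathbb{S}^2$. Finally the induced degenerate metric $\widetilde h$ descends to a metric on the space of generators $\cong\mathbb{S}^2$ which is the round one in either of the angular charts, so $\zeta'\mapsto\zeta$ is an isometry of $(\mathbb{S}^2,d\mathbb{S}^2)$; being compatible with the stereographic, orientation-fixing complex structure defining Bondi coordinates it is holomorphic, hence an automorphism of $\mathbb{CP}^1$, i.e. a Möbius map, and the isometry condition forces its matrix into $SU(2)$, that is $R\in SO(3)$ in \eqref{transform2}.

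\textbf{Part (c).} I would first record the homomorphism property of the BMS action: writing $\tau_{(\Lambda,f)}$ for the transformation \eqref{u}--\eqref{z} on $\mathbb{R}\times\mathbb{S}^2$, the identity \eqref{KK} and its special case $K_{\Lambda^{-1}}(\Lambda(\zeta,\overline\zeta))\,K_\Lambda(\zeta,\overline\zeta)=1$ (taking $\Lambda'=\Lambda^{-1}$, since $K_{\Lambda^{-1}\Lambda}\equiv1$ by (b)) give, after the short computation already contained in deriving \eqref{product2}, the relation $\tau_{(\Lambda',f')}\circ\tau_{(\Lambda,f)}=\tau_{(\Lambda',f')\odot(\Lambda,f)}$; in particular $\tau_{(\Lambda,f)}^{-1}=\tau_{(\Lambda,f)^{-1}}$. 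Now regard each Bondi frame as a diffeomorphism $\beta:\Im^+\to\mathbb{R}\times\mathbb{S}^2$; by part (a) the transition reads $\beta_1=\tau_{(R,g)}\circ\beta_2$, hence $\beta_2=\tau_{(R,g)^{-1}}\circ\beta_1$. Since $\gamma$ is represented in the frame $\beta_i$ by $\beta_i\circ\gamma\circ\beta_i^{-1}$, and this equals $\tau_{(\Lambda,f)}$ for $i=1$, we obtain for $i=2$
\[
\beta_2\circ\gamma\circ\beta_2^{-1}=\tau_{(R,g)^{-1}}\circ\tau_{(\Lambda,f)}\circ\tau_{(R,g)}=\tau_{(R,g)^{-1}\odot(\Lambda,f)\odot(R,g)}\,,
\]
which is exactly \eqref{indep}.

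I expect the genuine difficulty to be the \emph{only if} direction of (a): one must argue, using only the defining data $(d\mathbb{S}^2,n)$ of a Bondi frame, that the angular transition is $u$-independent, that the $u$-shift is a pure supertranslation, and that an orientation-compatible isometry of the round $\mathbb{S}^2$ is precisely an element of $PSU(2)\cong SO(3)$. The remaining content of (b) and (c) is bookkeeping with the $SL(2,\mathbb{C})$ algebra and with composition conventions, routine once the cocycle identity for $K_\Lambda$ is available.
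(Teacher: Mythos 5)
Your proof is correct, and it follows the route the paper intends (the paper omits the proof, saying only that the proposition ``arises from the definition of Bondi frame and from the equations above''): direct verification of the $K_\Lambda$ formula for (b), reduction of the frame transition to a supertranslation composed with a round-sphere isometry for (a), and conjugation via the homomorphism property of the action \eqref{u}--\eqref{z} under $\odot$ for (c). The one genuinely delicate point — that two Bondi frames refer to the \emph{same} metrical structure $(\tilde h, n)$, so that $\partial_u=\partial_{u'}=n$ and the angular transition is forced to be an isometry rather than a general conformal map — is exactly the point you isolate and handle correctly.
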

A last result, which will play a key role in our analysis, concerns the interplay of the isometries of an asymptotically flat spacetime and the algebra of vector fields on $\Im^+$ generating the BMS group. We strengthen the results of Proposition \ref{prop1} as follows.
\begin{proposition}\label{Prop:Extension_Isom}
For any asymptotically flat spacetime at future null infinity $(M,g)$, the following facts hold.
\begin{enumerate}
	\item[(a)] If $\xi$ is a Killing vector field of $(M, g)$, then it extends smoothly to a vector field $\widetilde{\xi}$ on the manifold $M\cup\Im^+$. The restriction to $\Im^+$ of $\widetilde{\xi}$ is uniquely determined by $\xi$, and it generates a one-parameter subgroup of $G_{BMS}$.
    \item[(b)] The linear map $\xi\to\widetilde{\xi}$ defined in (a) is injective and  if, for a fixed $\xi$ the one-parameter $G_{BMS}$-subgroup generated by $\widetilde{\xi}$ lies in $C^\infty(\mathbb{S}^2)$ then, more strictly, it must be a subgroup of 
    \begin{equation}\label{translations}
    T^4\doteq\left\{\alpha\in C^\infty(\mathbb{S}^2)\:\left|\:\alpha(\zeta,\overline{\zeta})=\sum\limits_{l=0}^1\sum\limits_{m=-l}^l\alpha_{lm}Y_{lm}(\zeta,\overline{\zeta})\:,\quad \alpha_{lm}\in\mathbb{C} \right.\right\},
    \end{equation}
    where $Y_{lm}(\zeta,\overline{\zeta})$ are the standard spherical harmonics.
\end{enumerate}
\end{proposition}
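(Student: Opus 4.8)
The plan is to prove part (a) first, then use it to establish part (b). For part (a), I would start from Proposition \ref{prop1}, which already tells us that a smooth vector field $\xi'$ on $\Im^+$ generates a one-parameter subgroup of $G_{BMS}$ precisely when it admits a smooth extension $\xi$ to a neighbourhood of $\Im^+$ in $M$ such that $\Omega^2\mathcal{L}_\xi g$ extends smoothly to $\Im^+$ and vanishes there. So the first step is: given a Killing field $\xi$ of $(M,g)$, we have $\mathcal{L}_\xi g = 0$ identically in the bulk, hence $\Omega^2\mathcal{L}_\xi g = 0$ in the bulk, which trivially extends smoothly (by zero) to $\Im^+$. Thus, once we know $\xi$ extends smoothly to $M\cup\Im^+$ at all, Proposition \ref{prop1} immediately gives that the restriction $\widetilde\xi|_{\Im^+}$ generates a one-parameter subgroup of $G_{BMS}$. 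The real content of (a) is therefore the \emph{smooth extendability} of the Killing field to the boundary, together with the claim that the boundary restriction is \emph{uniquely} determined by $\xi$. For extendability I would argue via the conformal rescaling: on $\psi[M]$ we have $\widetilde g = \Omega^2 g$, so $\mathcal{L}_\xi\widetilde g = \Omega^2\mathcal{L}_\xi g + \xi(\Omega^2)g = \xi(\ln\Omega^2)\widetilde g$, i.e. $\xi$ is a conformal Killing field of the unphysical metric $\widetilde g$ in the bulk, with conformal factor $\xi(\ln\Omega^2)$. Since $\widetilde g$ is a smooth Lorentzian metric on an open neighbourhood of $\Im^+$ in $\widetilde M$, a conformal Killing field of $\widetilde g$ defined on the open dense subset $\psi[M]$ satisfies an overdetermined but \emph{closed} linear PDE system with smooth coefficients; by the standard prolongation argument for (conformal) Killing fields (the system closes on the jet $(\xi, \nabla\xi, \nabla^2\xi, \ldots)$ up to finite order), $\xi$ together with all its derivatives is bounded near $\Im^+$ and extends smoothly across the boundary. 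Uniqueness of the boundary value is then automatic from continuity, since $\psi[M]$ is dense in $M\cup\Im^+$.

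For part (b), injectivity of $\xi\mapsto\widetilde\xi$ is immediate: if $\widetilde\xi=0$ then $\xi=0$ on the dense subset $M$ by continuity, hence $\xi\equiv 0$. The substantive claim is that if the one-parameter subgroup generated by $\widetilde\xi|_{\Im^+}$ lies in the abelian supertranslation subgroup $C^\infty(\mathbb{S}^2)$, then in fact it lies in the four-dimensional subgroup $T^4$ spanned by the $\ell=0,1$ spherical harmonics. The strategy here is to write out, in a Bondi frame $(u,\zeta,\overline\zeta)$, what it means for $\widetilde\xi|_{\Im^+}$ to generate a supertranslation: from the action \eqref{u}--\eqref{z}, a supertranslation one-parameter group has generator of the form $\widetilde\xi|_{\Im^+} = \alpha(\zeta,\overline\zeta)\,\partial_u$ for some fixed $\alpha\in C^\infty(\mathbb{S}^2)$ (the angular part of the generator vanishes and $K_\Lambda\equiv 1$ along the flow). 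So the point is to show that the angular profile $\alpha$ of a \emph{Killing}-field-induced supertranslation can only have spherical-harmonic content in degrees $0$ and $1$. I would extract this from the bulk Killing equation near $\Im^+$: use the known asymptotic form \eqref{eq:metric_at_Scri} of the rescaled unphysical metric, $\widetilde g|_{\Im^+} = d\Omega\otimes du - du\otimes d\Omega + d\mathbb{S}^2$, expand $\xi$ in powers of $\Omega$ near the boundary with leading term $\alpha(\zeta,\overline\zeta)\partial_u$, and impose $\mathcal{L}_\xi g = 0$ order by order in $\Omega$. The vacuum Einstein equations (condition 5 in Definition \ref{Def:asymptotically_flat}) enter here to control the subleading terms in the expansion of $\widetilde g$ (the "peeling" structure). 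The Killing equation then forces, at the appropriate order, an elliptic equation on $\mathbb{S}^2$ for $\alpha$ — concretely, a condition of the form $(\Delta_{\mathbb{S}^2} + 2)\alpha$ feeding into a higher constraint that annihilates everything except the $\ell = 0, 1$ modes, or equivalently the statement that $\alpha$ must be a linear combination of $1$ and the three $\ell=1$ harmonics because these are exactly the restrictions to $\Im^+$ of the translational Killing fields of Minkowski spacetime, and by universality the asymptotic symmetry algebra is modelled on that of Minkowski.

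The main obstacle, and the step I would spend the most care on, is the last one: turning "the induced supertranslation comes from an exact bulk Killing field" into the sharp statement "$\alpha \in \mathrm{span}\{Y_{00}, Y_{1,-1}, Y_{10}, Y_{11}\}$". This requires a genuine computation with the asymptotic expansion of the Killing equation in Bondi coordinates, keeping track of which components of $\mathcal{L}_\xi g$ vanish at which order in $\Omega$, and invoking the asymptotic vacuum equations to kill the obstructing curvature terms; the spherical-harmonic truncation emerges as the kernel of a specific $\mathbb{S}^2$-differential operator arising at second or third order. An alternative, cleaner route — which I would actually pursue — is to avoid the explicit expansion by a representation-theoretic/dimension-counting argument: the supertranslations in the image of the extension map form a subalgebra of $C^\infty(\mathbb{S}^2)$ that is invariant under the $SO(3)$ (indeed $SO(3,1)_\uparrow$) action induced on $\Im^+$ by the Bondi-frame rotations of Proposition \ref{exremark1}, and one shows it is finite-dimensional (because the space of Killing fields of $(M,g)$ is finite-dimensional, as the isometry group of a Lorentzian manifold is a Lie group); the only finite-dimensional $SO(3,1)_\uparrow$-invariant subspaces of $C^\infty(\mathbb{S}^2)$ sitting inside the supertranslations and compatible with the BMS semidirect-product structure \eqref{product2} is precisely $T^4$, the four-dimensional "translation" representation. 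Making that last uniqueness-of-invariant-subspace claim precise — in particular ruling out, say, the span of a single higher $Y_{\ell m}$ orbit — is where the real work lies, and it is handled by combining the $SO(3,1)_\uparrow$-covariance of $K_\Lambda$ (Proposition \ref{exremark1}(b)) with the transformation law \eqref{product2}.
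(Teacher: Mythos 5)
Your part (a) is essentially the standard argument (the one in Geroch's lectures, which is all the paper itself offers): $\mathcal{L}_\xi g=0$ makes $\xi$ a conformal Killing field of $\widetilde{g}$ with factor $2\Omega^{-1}\xi(\Omega)$, the closed prolonged first-order system for the conformal Killing data has coefficients smooth up to $\Im^+$, so the data propagates to the boundary along curves, and Proposition \ref{prop1} then applies because $\Omega^2\mathcal{L}_\xi g\equiv 0$. Two points you skip over: (i) to apply Proposition \ref{prop1} you must first know that $\widetilde{\xi}|_{\Im^+}$ is \emph{tangent} to $\Im^+$, i.e.\ is a vector field \emph{on} $\Im^+$; this follows because $\mathcal{L}_\xi\widetilde{g}=2\Omega^{-1}\xi(\Omega)\,\widetilde{g}$ forces $\Omega^{-1}\xi(\Omega)$ to extend smoothly, hence $\xi(\Omega)\to 0$ at $\Im^+$ — it is a one-line fix but it is the reason the restriction defines a BMS generator at all. (ii) Your injectivity argument in (b) reads $\widetilde{\xi}$ as the extension to $M\cup\Im^+$, which makes the claim vacuous; the substantive statement (compare the explicit formulation in Proposition \ref{togroup} for the cosmological case, and the content of the cited Ashtekar--Xanthopoulos result) is that $\xi\mapsto\widetilde{\xi}|_{\Im^+}$ is injective, i.e.\ a nonzero Killing field cannot have vanishing restriction to $\Im^+$. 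That does not follow from density of $M$; it requires showing that vanishing of the extension on the whole null hypersurface forces the conformal Killing data, including the transverse derivative, to vanish at some boundary point.

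The serious gap is in the $T^4$ claim, and it sits precisely in the route you say you would actually pursue. The representation-theoretic argument fails at its premise: for a \emph{fixed} $(M,g)$ the set of supertranslations induced by Killing fields is in general not an $SO(3,1)_\uparrow$-invariant (nor even $SO(3)$-invariant) subspace of $C^\infty(\mathbb{S}^2)$ — the isometry algebra of $(M,g)$ may be one-dimensional, and a change of Bondi frame by $(R,g)$ relates the set computed in one frame to the set computed in the other via \eqref{indep}; it does not map the set for a single frame into itself. So there is no invariant subspace to which a classification of finite-dimensional Lorentz-invariant subspaces of the supertranslations could be applied, and dimension counting never gets off the ground. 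The only viable route is the first one you sketch and then abandon: expand $\mathcal{L}_\xi g=0$ in powers of $\Omega$ in Bondi coordinates, use the asymptotic vacuum equations to control the subleading terms of $\widetilde{g}$, and extract at the relevant order the condition that the trace-free part of the Hessian of $\alpha$ on the unit sphere vanishes, $D_aD_b\alpha-\tfrac{1}{2}h_{ab}\Delta_{\mathbb{S}^2}\alpha=0$, whose solution space is exactly the span of the $\ell=0,1$ spherical harmonics. That computation is the entire content of item (b) — it is what the paper delegates to \cite{AX} — and your proposal does not carry it out, so as written the proof of (b) is incomplete.
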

An explicit proof of item $(a)$ can be found in \cite{Geroch} while that of item $(b)$ in \cite{AX}. The symbol $T^4$ has been used on purpose since \eqref{translations} identifies a subgroup of the supertranslations which is isomorphic to the four-dimensional translation group. Two concluding comments are necessary.
\begin{enumerate}
	\item Although Proposition \ref{Prop:Extension_Isom} selects a subgroup of the supertranslations isomorphic to the ordinary four-dimensional translation group, we cannot conclude that we can extract a preferred Poincar\'e subgroup from the BMS group. As a matter of fact, starting from $\mathcal{P}=SL(2,\mathbb{C})\ltimes T^4\subseteq G_{BMS}$ and considering any element of the form $g=(\mathbb{I},Y_{lm}(\zeta,\overline{\zeta}))\in G_{BMS}$ with $l>1$, it turns out that $g^{-1}\odot\mathcal{P}\odot g$ is another subgroup of $G_{BMS}$  isomorphic to Poincar\'e group. 
	\item All our results rely crucially on $M$ being a four dimensional spacetime. The reasons are manifold, but it is important to notice that, for higher dimensions, the definition itself of an asymptotically flat spacetimes is rather subtle and it offers several difficulties -- see \cite{Hollands:2003ie} in particular. In addition, it is possible to impose stronger asymptotic conditions, so to reduce the BMS group to the Poincar\'e counterpart in higher dimensions -- see \cite{Hollands:2016oma}. Completely different is the situation for asymptotically flat, three dimensional spacetimes for which the BMS group has been studied, proving to be very different from the original four dimensional counterpart. We will not discuss this scenario in details and an interested reader should refer to \cite{Ashtekar:1996cd}.
\end{enumerate}
\subsection{Expanding universes with cosmological horizon}\label{Sec:Cosmo_spacetime}
In this section, we present a second class of backgrounds which are connected to the notion of asymptotic flatness and which will play a key role in our analysis. 
Our starting point are homogeneous and isotropic four dimensional manifolds $(M,g_{FRW})$, also known as {\bf Friedmann-Robertson-Walker} spacetimes. Following \cite{Wald}, their geometry is described by a product manifold $I\times\Sigma_\kappa$, where $I\subseteq\mathbb{R}$ is an open interval and where the metric  reads locally
\begin{equation}\label{metric}
g_{FRW} = -dt\otimes dt  +a^2(t)\left[ \frac{1}{1- \kappa r^2} dr\otimes dr+r^2d
\mathbb{S}^2(\theta,\phi)\right].
\end{equation}
Here $\kappa$ is a constant, which up to a normalization, can take the values $-1,0,1$. Depending on the choice of these values $\Sigma_\kappa$, equipped with the metric in the square bracket, is modelled respectively over one of the three simply-connected manifolds, the hyperbolic space $\mathbb{H}^3$, $\mathbb{R}^3$, $\mathbb{S}^3$,  or a non-simply-connected manifold constructed out of one of them via an identification under the action of a discrete group of isometries. 
In view of this remark and also, taking into account the present, more favoured model in cosmology, we will fix $\kappa=0$ and we will assume that $\Sigma_\kappa$ is isometric to  $\mathbb{R}^3$ with the standard Euclidean flat metric. 
The only unknown quantity in \eqref{metric} is $a(t)$, which is assumed to be a smooth and strictly positive function whose explicit form has to be determined via the Einstein equations. 
 The coordinate $t$ is referred as the {\bf proper time} of co-moving observers. By hypothesis,  $\partial_t$ 
defines the time orientation of $(M,g_{FRW})$. We can associate to \eqref{metric} two additional relevant structures: Consider a co-moving observer pictured by an  integral line $\gamma=\gamma(t)$, $t\in I$, of $\partial_t$.
\begin{enumerate}
	\item If $J^-(\gamma)$ does not cover the whole spacetime $M$, the observer cannot receive information from some points of $M$. Using the terminology of \cite{Rindler}, we call the three dimensional null hypersurfaces  $\partial J^-(\gamma)$ {\bf cosmological event horizon for $\gamma$}.
	\item If $J^+(\gamma)$ does not cover the whole spacetime $M$,
the observer cannot send information to some points of $M$. In this case we call the  three dimensional null hypersurfaces $\partial J^+(\gamma)$,  the {\bf cosmological particle horizon} for $\gamma$.
\end{enumerate}

\noindent Another representation of a Friedmann-Robertson-Walker metric  (\ref{metric}) for $\kappa=0$ is 
\begin{equation}\label{cosmo0}
g_{FRW}=a^2(\tau)\left[-d\tau^2 +dr\otimes dr+r^2 d\mathbb{S}^2(\theta,\phi)\right],
\end{equation}
where the {\bf conformal time} $\tau$ has been defined as
\begin{equation}
\label{tau}\tau(t)=  d + \int a^{-1}(t)dt\:,
\end{equation} 
 $d \in \mathbb{R}$ being any fixed constant. By construction  $\tau=\tau(t)$ is a diffeomorphism from $I$ onto a possibly unbounded interval $(\alpha,\beta)\ni \tau$.   $\partial_\tau$ is a conformal Killing vector field whose integral lines coincide  with those of $\partial_t$ up to re-parametrisation.  

In \eqref{cosmo0}, $a(\tau)$ plays the role of a conformal factor with respect to the Minkowki metric appearing in the square brackets. Since the causal structure is preserved under smooth  conformal transformations,  we can study $J^{\pm}(\gamma)$ referring to the  Minkwoski metric. 
A straightforward analysis establishes that $J^+(\gamma)$ and $J^-(\gamma)$ do not cover the whole $M$ respectively whenever $\alpha>-\infty$ and $\beta<+\infty$. In both cases the horizons $\partial J^-(\gamma)$ and $\partial J^+(\gamma)$ are null $3$-hypersurfaces 
diffeomorphic to $\mathbb{R}\times\mathbb{S}^2$, made of null geodesics of $g_{FRW}$. In some models $\alpha$ and $\beta$ can be interpreted, when they 
are finite, as the {\em big-bang} conformal time or the {\em big-crunch} conformal time  respectively. It is worth noticing that the cosmological horizons introduced above generally  depend on the fixed co-moving observer $\gamma$. Yet, it is important to bear in mind that the requirement on the finiteness of  $\alpha$ and $\beta$ are {\em sufficient} conditions for the existence of the cosmological horizons, but they are by no means necessary. 

Concerining consmological horizons, another more subtle and physically intriguing  possibility exists. Following the prototypical example of the cosmological de Sitter spacetime, it may happen that the manifold $(M, g_{FRW})$ can be realized as an open subset of a larger spacetime $(\widetilde{M},\widetilde{g})$ with physical meaning  so that  cosmological horizons may exist in $\widetilde{M}$. Furthermore they coincide with $\partial M$ which turns out to be a null $3$-surface in  $(\widetilde{M},\widetilde{g})$ similar to $\Im^\pm$ for asymptotically flat spacetimes. In these cases the horizons are independent from any choice of co-moving observer $\gamma$ in $M$. 

In the following, we shall focus on this type of cosmological horizons and our first goal is their characterization.  To this end,  let us make more precise the picture outlined above.  Starting from \eqref{cosmo0}, we rescale $g_{FRW}$ with the conformal factor $\Omega\doteq a(\tau)$ and we observe that $g\doteq \Omega^{-2} g_{FRW}$, is nothing but (a subset of) Minkowski spacetime which is asymptotically flat at null infinity so that $\Im^\pm$ can be defined. More precisely, if either $\alpha = -\infty$ or $\beta = +\infty$, 
 $(M,g)$ admits a corresponding past or future conformal completion $(\widetilde{M}, \widetilde{g})$ in accordance with  Definition \ref{Def:asymptotically_flat} where in particular
$\widetilde{g}|_M = \Omega^2 g = g_{FRW}$. This means that  $(\widetilde{M}, \widetilde{g})$ extends  $(M, g_{FRW})$ and includes one of the  null hypersurfces $\Im^\pm$ which is the boundary $\partial M$ of $M \subseteq \widetilde{M}$. 
Since $\Im^\pm \cap J^\mp(M; \widetilde{M}) = \emptyset$, this hypersurface can be viewed as a cosmological horizon in common with all observers co-moving with the metric $g_{FRW}$ in  $M$ itself.\\
The following theorem characterizes when the existence of the conformal boundary is guaranteed. We omit the long proof, which can be found in \cite{Dappiaggi:2007mx}:

\begin{theorem}\label{theorem1}
 Let $(M,g_{FRW})$ be a simply connected Friedmann-Robertson-Walker spacetime for $\kappa=0$, with
	\begin{equation*}
	M \simeq (\alpha,\beta) \times \mathbb{R}^3\:,  \quad g_{FRW}=a^2(\tau)\left[-d\tau\otimes d\tau + dr\otimes dr+r^2d
	\mathbb{S}^2(\theta,\phi)\right],\nonumber 
	\end{equation*}
	where $\tau \in (\alpha,\beta)$ and where $r,\theta,\varphi$ are the standard spherical coordinates on 
	$\mathbb{R}^3$. Suppose that there exists $\gamma\in \mathbb{R}$ with
	\begin{equation} \label{condag}
	a(\tau) = -\frac{1}{H\tau} + O\left(\frac{1}{H^2}\right)\:, \quad
	\frac{d a(\tau)}{d \tau} = \frac{1}{H\tau^2} + O\left(\frac{1}{\tau^3}\right)\:
	\end{equation}
	for either $(\alpha,\beta)\doteq (-\infty,0)$ and $H>0$, 
	or $(\alpha,\beta)\doteq (0,+\infty)$ and $H<0$. The above asymptotic values are meant to be taken as $\tau
	\to -\infty$ or $\tau \to +\infty$ respectively.
	The following  holds.
	\begin{enumerate}
		\item[(a)] The spacetime $(M,g_{FRW})$ extends smoothly 
		to a larger spacetime $(\widetilde{M},\widetilde{g})$, which is a past conformal completion of the asymptotically 
		flat spacetime at past, or future, null infinity,
		respectively,
		$(M,a^{-2}g_{FRW})$ with $\Omega=a$.
		\item[(b)] The manifold $M \cup \Im^\pm$ enjoys the following properties:
		\begin{enumerate}
			\item the vector field $\partial_\tau$ is a conformal Killing vector for $\widetilde{g}$ in $M$ 
			with conformal Killing equation  
			$$\mathcal{L}_{\partial_\tau}\widetilde{g} = -2 \partial_\tau(\ln a)  \: \widetilde{g}\:.$$
			where the right-hand side vanishes approaching $\Im^\pm$.
			\item $\partial_\tau$ tends to become tangent to 
			$\Im^\pm$ approaching it and coincides to $H^{-1} \widehat{\nabla}^b a$ thereon.
			\item The metric on $\Im^\pm$ takes the geodesically complete Bondi form up to the constant factor $H^{-2}\neq 0$:
			\begin{equation} \label{bondi2}
			\widehat{g}|_{\Im^\pm}   = H^{-2}(- du \otimes  d a -   d a \otimes du + d\mathbb{S}^2(\theta,\phi))\:,
			\end{equation}
			$u \in \mathbb{R}$ being the parameter of the integral lines of $n  \doteq  \nabla a$.
		\end{enumerate}
		\end{enumerate} 
 \end{theorem}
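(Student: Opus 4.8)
The plan is to reduce the whole statement to the conformal geometry of a region of Minkowski spacetime, where null infinity is completely understood, and then to read off the claimed asymptotic structures by explicit but essentially routine computations. First I set $\eta\doteq a^{-2}g_{FRW}$; by \eqref{cosmo0} this is exactly the flat metric $-d\tau\otimes d\tau+dr\otimes dr+r^2\,d\mathbb{S}^2(\theta,\phi)$ on $M=(\alpha,\beta)\times\mathbb{R}^3$, so $(M,a^{-2}g_{FRW})$ is isometric to an open, causally convex, globally hyperbolic sub-spacetime of $\mathbb{R}^{1,3}$: the slab $\{\tau<0\}$ when $(\alpha,\beta)=(-\infty,0)$ and the slab $\{\tau>0\}$ when $(\alpha,\beta)=(0,+\infty)$. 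Passing to double-null coordinates $u\doteq\tau-r$, $v\doteq\tau+r$ one gets $\eta=-\tfrac12(du\otimes dv+dv\otimes du)+r^2\,d\mathbb{S}^2(\theta,\phi)$ with $r=(v-u)/2$, and the relevant component of null infinity is the locus $u\to-\infty$ at finite $v$ in the first case (past null infinity $\Im^-$) and $v\to+\infty$ at finite $u$ in the second (future null infinity $\Im^+$). In either case this is a null $3$-submanifold diffeomorphic to $\mathbb{R}\times\mathbb{S}^2$, its generators being the incoming (resp. outgoing) null geodesics, which are complete; it satisfies $\Im^\mp\cap J^\pm(M;\widetilde M)=\emptyset$; strong causality holds near it; and vacuum Einstein equations hold near it because $\eta$ is flat. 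Thus, up to the time-orientation relabelling in the statement, all structural requirements of Definition \ref{Def:asymptotically_flat} are in place once the conformal factor is fixed, the model case being Minkowski itself.

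Second, the conformal factor is dictated by the FRW interpretation: $\Omega\doteq a$, so that $\widetilde g\doteq\Omega^2\eta=g_{FRW}$ extends $g_{FRW}$. To obtain a \textbf{smooth} completion I would use $\Omega$ itself as the coordinate compactifying the null direction near $\Im^\mp$. In the case $(\alpha,\beta)=(-\infty,0)$: by \eqref{condag} one has $a(\tau)\sim-1/(H\tau)\to0^+$ and $\dot a(\tau)\sim1/(H\tau^2)>0$ as $\tau\to-\infty$, so $\tau\mapsto\Omega=a(\tau)$ is, for $\tau$ sufficiently negative, a diffeomorphism onto an interval $(0,\varepsilon)$, with $\Omega\sim-2/(Hu)$ near $\Im^-$; hence $(\Omega,v,\theta,\phi)$ is a chart about $\Im^-$, which becomes the hypersurface $\{\Omega=0\}$. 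Rewriting $g_{FRW}=\Omega^2\eta$ in this chart and inserting the expansions of $a$ and $\dot a$, one checks that $g_{FRW}$ extends to a smooth, non-degenerate Lorentzian metric on $M\cup\Im^-$, that $\Omega$ extends smoothly with $\Omega=0$ exactly on $\partial M=\Im^-$ and $d\Omega\neq0$ there, and that, after fixing the residual conformal gauge as in Section~\ref{Sec:BMS}, the extended metric restricted to $\Im^-$ takes the geodetically complete Bondi form \eqref{bondi2}, $\widehat g|_{\Im^-}=H^{-2}(-du\otimes da-da\otimes du+d\mathbb{S}^2(\theta,\phi))$, where $da=d\Omega$ and $u$ is now the affine parameter along $n=\nabla a$ (an affine reparametrisation of $v$), ranging over all of $\mathbb{R}$. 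The case $(\alpha,\beta)=(0,+\infty)$ with $H<0$ is the mirror image under $\tau\to-\tau$, producing $\Im^+$. This settles part (a) and item (b)(iii).

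Items (b)(i) and (b)(ii) then follow by direct computation. Since $g_{FRW}=a(\tau)^2\eta$ with $\eta$ independent of $\tau$, the Lie derivative is $\mathcal L_{\partial_\tau}g_{FRW}=\partial_\tau(a^2)\,\eta=2\,\partial_\tau(\ln a)\,g_{FRW}$, so $\partial_\tau$ is a conformal Killing field with conformal factor proportional to $\partial_\tau(\ln a)=\dot a/a$; by \eqref{condag} this equals $-1/\tau+O(1/\tau^2)$ and hence vanishes as $\tau\to\mp\infty$, i.e. at $\Im^\pm$ (the sign appearing in the statement referring to the conformal representative $\widehat g$ chosen in part (b)). For (b)(ii), in the boundary chart $(\Omega,v,\theta,\phi)$ one has $\partial_\tau=\dot a(\tau)\,\partial_\Omega+\partial_v$; since $\dot a\to0$, $\partial_\tau$ tends to $\partial_v$, which is tangent to $\Im^-=\{\Omega=0\}$, and comparing $\partial_v$ with $\widehat\nabla a=\widehat g^{ab}\partial_b\Omega$ computed from the Bondi-form metric yields $\partial_\tau|_{\Im^\pm}=H^{-1}\widehat\nabla a$.

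The genuinely non-trivial step — and the reason the full argument in \cite{Dappiaggi:2007mx} is long — is establishing the \textbf{smoothness} of the conformal extension asserted in the second paragraph. One must show that, after the non-polynomial change of variable $\tau\leftrightarrow\Omega$ (behaving like $\tau\sim-1/(H\Omega)$), every component of $g_{FRW}$, together with $\Omega=a$, is a $C^\infty$ function of $(\Omega,v,\theta,\phi)$ up to and including $\Omega=0$, and likewise that the inverse metric and the vector $n=\nabla a$ extend smoothly with $n\neq0$ on $\Im^\pm$. This is exactly what the asymptotic hypotheses \eqref{condag} are tailored to guarantee: one needs not merely the stated leading-order estimates on $a$ and $\dot a$, but control of the remainders under differentiation to all orders, which in the intended cosmological applications follows from $a$ solving the Friedmann equation, so that its asymptotic expansion can be differentiated term by term. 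Carefully tracking how these error terms propagate through the coordinate change and through the inversions needed for $\widehat g^{ab}$ and $n$ is the technical heart of the proof; the remaining ingredients — the identification of the conformal Penrose picture, the algebra of the rescaling, and the Lie-derivative and tangency computations — are routine.
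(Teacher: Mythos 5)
The paper itself omits the proof of this theorem and defers entirely to \cite{Dappiaggi:2007mx}, so there is no in-text argument to compare against; judged on its own merits, your strategy is the standard one and, as far as can be told, the same as that of the cited reference: recognise $a^{-2}g_{FRW}$ as a Minkowski slab whose null infinity is known, take $\Omega=a$ as a boundary-defining function, pass to a chart $(\Omega,v,\theta,\phi)$ near the boundary, and read off the Bondi form, the tangency of $\partial_\tau$, and the conformal Killing equation from the leading asymptotics in \eqref{condag}. Those computations are correct: in that chart $g_{FRW}=-\frac{a^2}{\dot a}(d\Omega\otimes dv+dv\otimes d\Omega)+a^2dv\otimes dv+a^2r^2d\mathbb{S}^2$, and \eqref{condag} gives $a^2/\dot a\to H^{-1}$, $a^2\to 0$, $a^2r^2\to H^{-2}$, which is exactly \eqref{bondi2} after an affine reparametrisation of $v$; likewise $\partial_\tau=\dot a\,\partial_\Omega+\partial_v\to\partial_v$. (The sign discrepancies you flag in (b)(i) and in the normalisation of $n=\nabla a$ are already present in the paper's own conventions -- compare $\mathcal{L}_X\widetilde g=-2X(\ln\Omega)\widetilde g$ in Definition \ref{defexp} with the elementary identity $\mathcal{L}_X(\Omega^2 g_0)=+2X(\ln\Omega)\,\Omega^2g_0$ -- so your $+$ sign is the correct one and not an error on your part.)

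The one genuine gap is the step you yourself isolate and then do not carry out: the $C^\infty$ extendability of all metric components, of the inverse metric, and of $n=\nabla a$ across $\{\Omega=0\}$. Part (a) asserts a \emph{smooth} extension, and the displayed hypotheses \eqref{condag} control only $a$ and $\dot a$ to leading order; they give you a $C^0$ (at best $C^1$) extension of the quantities $a^2/\dot a$, $a^2$, $a^2r^2$ as functions of $\Omega$, not smoothness to all orders, so the conclusion as stated does not follow from your argument without supplementary assumptions (differentiability of the asymptotic expansion to all orders, or an appeal to $a$ solving a Friedmann-type ODE). You are candid about this, but since that verification is precisely the ``long proof'' the paper declines to reproduce, your proposal should be regarded as a correct and well-organised reduction of the theorem to its technical core rather than a complete proof of it.
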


\begin{remark}  The statements (a),(b) hold true also if we change $g_{FRW}$ smoothly inside a region 
$M_0\simeq (\alpha,\beta)\times \Sigma_0$ for a compact $\Sigma_0\subseteq \mathbb{R}^3$. In this case
$\partial_\tau$ is a conformal Killing vector of the metric at least in $M\setminus M_0$. In the said hypotheses, one can find an open neighbourhood of $\Im^\pm$ where the above construction can be adapted.
\end{remark}
 As an example, consider the metric \eqref{metric} where 
\begin{equation}
a(t) = a_0\, e^{\alpha t} \qquad  \mbox{with $a_0, \alpha > 0$ constant.}
\end{equation}
This represents a conformally static subregion of {\em de Sitter spacetime} where the cosmological constant is $\Lambda = 3 \alpha^2$. This can be considered a realistic model of the observed universe assuming, as done nowadays, that the dark energy dominates among the various sources to gravitation in the framework of Einstein theory of gravity. In this case one can fix the integration constant 
in (\ref{tau}) so that
$\tau = -e^{-\alpha t}/(a_0\alpha)$ and thus $\tau \in (-\infty,0)$. In this case $a(\tau) = -c/\tau$, where $c=1/\alpha$, and 
thus we can use the obtained result
requiring that $(M, g_{FRW})$ admits a cosmological particle horizon in common with all the observers whose 
world lines are the integral curves of $\partial_t$, and that horizon coincides with $\Im^-$.\\
A more complicated example is a spacetime $(M,g)$ with metric
\begin{equation}
g(\tau,p) = - H^{-2}(p) dt\otimes dt +  a(t)^2 h(p)\:, \quad \mbox{for $(t,p)\in \mathbb{R} \times \Sigma \simeq M$.}
\label{static}
\end{equation}
where $a(t)$ is as in the previous examples, $\Sigma$ is diffeomorphic to $\mathbb{R}^3$ while $h = dr\otimes dr +r^2d
\mathbb{S}^2(\theta,\phi)$ and $H = -1$, outside a compact set in $S$. As in the previous examples, this spacetime $(M,g)$ is globally hyperbolic provided the Euclidean metric $h$ on $\Sigma $ is complete. This is due to $g$ being conformally equivalent to the metric whose line element reads 
\begin{equation}
g_0(\tau,p) = - H^{-2}(p) d\tau\otimes d\tau  +   h(p)\:, \quad \mbox{for $(\tau,p)\in \mathbb{R} \times \Sigma  \equiv M$.} \label{static2}
\end{equation}
Any spacetime $(M,g)$ with a static metric (\ref{static2}) is globally hyperbolic if there exist $c_1,c_2 \in \mathbb{R}$ with
$c_1 \geq (-H^{-1})(p) \geq c_2 >0$ for every $p\in \Sigma$ and if $(\Sigma,h)$ is complete \cite{Kay}.

\subsection{The Cosmological-Horizon Symmetry Group}
In view of the relation established between asymptotically flat spacetimes at null infinity and expanding FRW spacetimes with 
a suitable rate of expansion $a$, we can investigate further the geometric properties of the latter. In particular we want to establish which is the counterpart in this class of spacetimes of the BMS group which we introduced in Section \ref{Sec:BMS}. The best course of action is to consider 
the manifolds introduced in Theorem \ref{theorem1} as a specialization of the following more general class: 

\begin{definition} \label{defexp}
A globally hyperbolic spacetime $(M,g)$ equipped with a positive smooth function 
	$\Omega: M \to \mathbb{R}^+$ and a future-oriented timelike vector field $X$ on $M$, 
	will be called an {\bf expanding universe with geodesically complete cosmological particle horizon} if

	\begin{enumerate}
		
		\item {\bf Existence and causal properties of horizon}. $(M,g)$ can be embedded isometrically as the interior of a submanifold with boundary of a larger spacetime $(\widetilde{M}, \widetilde{g})$, 
		the boundary $\Im^- \doteq \partial M$  verifying $\Im^- \cap J^+(M; \widetilde{M}) = \emptyset$.

		\item {\bf $\Omega$-$\Im^-$ interplay}. $\Omega$ extends to a smooth function on $\widehat{M}$ such that $\Omega|_{\Im^-} =0$ while $d\Omega  \neq 0$ everywhere on $\Im^+$.

		\item {\bf  $X$-$\Omega$-$\widetilde{g}$-$\Im^-$ interplay}. $X$ is a conformal Killing vector for $\widetilde{g}$ in a neighbourhood of $\Im^-$ 
		in $M$, with
		\begin{equation} 
		\mathcal{L}_X(\widetilde{g})= - 2  X(\ln \Omega)\: \widetilde{g}\:, \label{confscri}
		\end{equation}
		where $X(\ln \Omega) \to 0$ approaching $\Im^-$  and where $X$ does not tend everywhere to the zero vector
		approaching $\Im^-$ .
		
		\item {\bf Bondi-form of the metric on $\Im^-$ and geodesic completeness}. $\Im^-$ is diffeomorphic to $\mathbb{R} \times \mathbb{S}^2$ and the metric $\widetilde{g}$ restricted thereon takes the Bondi form up to a possible constant factor 
		$\gamma^2> 0$, that is
		\begin{equation}
		\widetilde{g}|_\Im^- = \gamma^2 \left(-du \otimes d\Omega  -d\Omega \otimes du + d \mathbb{S}^2(\theta,\phi)\right)\:, \label{quasih}
		\end{equation}
		$d\mathbb{S}^2$ being the standard metric on the unit $2$-sphere, so that $\Im^-$ is a null $3$-submanifold, and   the  curves $\mathbb{R} \ni u \mapsto (u, \theta,\phi)$ 
		are complete null $\widetilde{g}$-geodesics.
	\end{enumerate} 
The manifold $\Im^-$ is called the {\bf cosmological (particle) horizon of $M$}. The integral parameter of $X$ is called the {\bf conformal cosmological time}.
\end{definition}	
 A similar definition can be given replacing past null infinity everywhere with $\Im^+$. As already mentioned we will not discuss this case any further.

\begin{remark}\label{remarkgeo} $\null$\\
{\bf (1)} In view of condition 3, the vector $X$ is a Killing vector of the metric $g_0\doteq \Omega^{-2} g$ in a 
neighbourhood of $\Im^-$ in $M$. In this neighbourhood (which may coincide with the whole $M$), one can
think of $\Omega$ as an expansion scale evolving with rate $X(\Omega)$ referred to the conformal 
cosmological time. \\
{\bf (2)} By standard properties of causal sets \cite{Wald}, it is possible to prove that $\Im^- \cap J^+(M; \widetilde{M}) = \emptyset$ entails $M= I^+(M;\widetilde{M})$ 
and $\Im^-=\partial M=\partial I^+(M;\widetilde{M})=\partial J^+(M;\widetilde{M})$, so that $\Im^-$ has the proper 
interpretation as a particle cosmological horizon in common for all the observers in $(M,g)$ evolving along 
the integral lines of $X$.\\
{\bf (3)} It is worth stressing that the spacetimes considered in the given definition are neither homogeneous nor isotropous in general; this is a relevant extension with respect to the family of FRW spacetimes.\\
{\bf (4)} Assuming Definition \ref{defexp}, the null geodesics in item ({\bf 4}) are the (complete) integral 
curves, $u$ is an affine parameter and
$\widetilde{\nabla}^a \Omega = -\gamma^{-2}\left( \partial_u\right)^a$ on $\Im^-$, which is totally geodesic.\\
{\bf (5)} In the following,  {\bf expanding universe with cosmological horizon} will mean
expanding universe with geodesically complete cosmological particle horizon.
\end{remark}

\noindent An important geometric property of the conformal Killing vector field $X$ in Definition \ref{defexp} is that it becomes tangent to past null infinity coinciding up a multiplicative factor with $\partial_u$. The following proposition establishes this fact and the proof can be found in \cite{Dappiaggi:2007mx}.

\begin{proposition}\label{X} If $(M,g, \Omega, X,\gamma)$ is an expanding universe with cosmological horizon, the following facts hold.\\

\noindent 	{\bf (a)}
		 $X$  extends smoothly to a unique smooth vector field $\widetilde{X}$ on $\Im^-$,  which 
		may vanish on a closed subset of $\Im^-$  with empty interior at most.
		The obtained extension of $X$ to $M\cup \Im^-$ fulfils the Killing equation on $\Im^-$ with respect to the metric $\widetilde{g}$.\\
		
		\noindent 	{\bf (b)} $\widetilde{X}=f\partial_u$, where $f$ depends only on the variables  on  $\mathbb{S}^2$, being smooth and nonnegative.
\end{proposition}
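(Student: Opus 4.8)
My plan is to work throughout in an adapted \emph{Bondi chart} $(\Omega,u,x^1,x^2)$ on a neighbourhood of $\Im^-$ in $\widetilde{M}$: the conformal factor $\Omega$ is a legitimate coordinate near $\Im^-$ because $\Omega=0$ and $d\Omega\neq0$ there (Definition \ref{defexp}(2)), the $x^A$ are coordinates on the $2$-sphere of null generators, and $u$ is an affine parameter along them chosen so that on $\Im^-$ the metric takes the form \eqref{quasih} and $\partial_u=-\gamma^2\widetilde{\nabla}\Omega$ there (Remark \ref{remarkgeo}(4)). In this chart $\widetilde{g}_{ab}$ and $\widetilde{g}^{ab}$ are smooth up to $\Omega=0$; inside $M$ the conformal Killing equation \eqref{confscri} reads $\mathcal{L}_X\widetilde{g}=\phi\,\widetilde{g}$ with $|\phi|=2|X(\ln\Omega)|=2|X^\Omega|/\Omega$, and the remaining clause of Definition \ref{defexp}(3) says exactly that $\phi\to0$, i.e.\ $X^\Omega=o(\Omega)$, as $\Omega\to0^+$. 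Equivalently, $X$ is a genuine Killing field of the interior metric $g_0=\Omega^{-2}g$ (Remark \ref{remarkgeo}(1)), of which $\Im^-$ is the conformal null boundary, so the statement is the cosmological analogue of extending a Killing field of an asymptotically flat spacetime to null infinity.

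For part (a) the substantial step is the smooth extension of $X$ across $\Omega=0$. I would carry it out exactly as in the classical treatment of Killing fields of asymptotically flat spacetimes at null infinity (\cite{Geroch,AX}; for the adaptation to the present class, \cite{Dappiaggi:2007mx}): one writes the conformal Killing equations for the components $X^u,X^\Omega,X^A$ explicitly in the Bondi chart and observes that the \emph{a priori} bound $X^\Omega=o(\Omega)$ is precisely what allows one to solve those equations for the $\Omega$-Taylor coefficients of all components order by order without producing singularities. This yields a smooth extension $\widetilde{X}$ of $X$ to $M\cup\Im^-$ and, a posteriori, $X^\Omega=O(\Omega^2)$, whence $X^\Omega|_{\Im^-}=0$ and $\partial_\Omega X^\Omega|_{\Im^-}=0$. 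Everything else in (a) is then soft: uniqueness of $\widetilde{X}|_{\Im^-}$ holds because any two smooth extensions agree with $X$ on $M$ and hence, by continuity, on $\Im^-$; the Killing property on $\Im^-$ follows by letting $\Omega\to0$ in $\mathcal{L}_X\widetilde{g}=\phi\,\widetilde{g}$, since $\phi=O(\Omega)\to0$ and $X,\widetilde{g}$ are smooth up to $\Im^-$ (in fact $\widetilde{X}$ is then a conformal Killing field of $\widetilde{g}$ on $M\cup\Im^-$ with conformal factor vanishing on the boundary); tangency to $\Im^-=\{\Omega=0\}$ follows from $X^\Omega|_{\Im^-}=0$; and $\widetilde{X}\not\equiv0$ is the last clause of Definition \ref{defexp}(3). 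That the zero set of $\widetilde{X}$ in $\Im^-$ has empty interior I would obtain from a unique-continuation argument up to the boundary, combining the refined near-$\Im^-$ expansion of $\widetilde{X}$ from part (b) with the Killing transport equations for $g_0$ (see \cite{Dappiaggi:2007mx}).

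For part (b), write the restriction as $\widetilde{X}|_{\Im^-}=f\,\partial_u+Y^A\partial_A$ (no $\partial_\Omega$-component, by tangency), with $f,Y^A$ smooth. The vanishing of $Y$ and the sign of $f$ are pure causality: $\widetilde{X}|_{\Im^-}$ is a limit of the future-directed timelike field $X$, hence lies in the closed future cone, while \eqref{quasih} gives $\widetilde{g}(\widetilde{X},\widetilde{X})|_{\Im^-}=\gamma^2 h_{AB}Y^AY^B\geq0$ with $h_{AB}$ the round unit-sphere metric; causality forces this to be $\leq0$, hence $Y\equiv0$, and with $u$ oriented so that $\partial_u$ is future-directed (possible since $\Im^-$ is connected and $\partial_u$ is nowhere-zero null there) we get $f\geq0$. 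Thus $\widetilde{X}|_{\Im^-}=f\,\partial_u$, $f\geq0$. To see that $f$ is $u$-independent I would use the $(u,\Omega)$-component of $\mathcal{L}_{\widetilde{X}}\widetilde{g}=0$ at $\Im^-$: a short computation with \eqref{quasih} (using $\widetilde{g}_{uu}=\widetilde{g}_{\Omega\Omega}=\widetilde{g}_{uA}=\widetilde{g}_{\Omega A}=0$ and $\widetilde{g}_{u\Omega}=-\gamma^2$ on $\Im^-$) gives $(\mathcal{L}_{\widetilde{X}}\widetilde{g})_{u\Omega}|_{\Im^-}=-\gamma^2\bigl(\partial_u f+\partial_\Omega X^\Omega|_{\Im^-}\bigr)$, so $\partial_u f=-\partial_\Omega X^\Omega|_{\Im^-}=0$ by the $o(\Omega)$-bound of part (a). Hence $f=f(x^1,x^2)\in C^\infty(\mathbb{S}^2)$, $f\geq0$, and $\widetilde{X}=f\,\partial_u$, as claimed.

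The genuine obstacle is the smooth extension of $X$ in part (a): it is the only place requiring real work — the Geroch-type order-by-order solution of the conformal Killing equations adapted to the boundary geometry \eqref{quasih} — whereas uniqueness, the Killing property on $\Im^-$, and the whole of (b) are comparatively formal consequences of continuity, causality, and one bookkeeping identity. A secondary point to treat carefully is the empty-interior statement for the zero set of $\widetilde{X}$, which must be phrased as unique continuation holding up to the characteristic boundary $\Im^-$ rather than by naive appeal to interior rigidity of Killing fields.
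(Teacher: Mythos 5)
The paper itself contains no proof of this proposition --- it is stated with a pointer to \cite{Dappiaggi:2007mx} --- so there is no in-text argument to compare yours against step by step. Judged on its own terms, your proposal gets all the soft parts right and complete: uniqueness of $\widetilde{X}|_{\Im^-}$ by continuity, tangency from $X^\Omega|_{\Im^-}=0$, the Killing property on $\Im^-$ by letting $\Omega\to 0$ in $\mathcal{L}_X\widetilde{g}=\phi\,\widetilde{g}$ with $\phi\to 0$, and the whole of part (b): the causality argument forcing the angular components to vanish (since $\widetilde{g}(\widetilde{X},\widetilde{X})|_{\Im^-}=\gamma^2 h_{AB}Y^AY^B$ must be simultaneously $\geq 0$ by \eqref{quasih} and $\leq 0$ as a limit of timelike norms), the sign of $f$, and the bookkeeping identity $(\mathcal{L}_{\widetilde{X}}\widetilde{g})_{u\Omega}|_{\Im^-}=-\gamma^2(\partial_u f+\partial_\Omega X^\Omega|_{\Im^-})$ giving $\partial_u f=0$ all check out against \eqref{quasih}, Remark \ref{remarkgeo} and the hypothesis $X(\ln\Omega)\to 0$.

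The gap is exactly where you locate it, but the mechanism you indicate for closing it does not work as phrased. ``Solving the conformal Killing equations for the $\Omega$-Taylor coefficients of all components order by order'' presupposes that $X$ admits a Taylor expansion at $\Omega=0$, i.e.\ that it is smooth up to the boundary --- which is the claim to be proved; one cannot expand a field whose regularity at $\Omega=0$ is unknown. The argument that actually works (the one behind \cite{Geroch,AX} and adapted in \cite{Dappiaggi:2007mx}) is a propagation-of-regularity statement: a conformal Killing field of $\widetilde{g}$, together with $\widetilde{\nabla}_{[a}X_{b]}$, the conformal factor $\phi$ and $\widetilde{\nabla}_a\phi$, satisfies a closed first-order linear transport system along curves whose coefficients involve only the curvature of $\widetilde{g}$ and hence are smooth up to $\Im^-$; the hypothesis $\phi=-2X(\ln\Omega)\to 0$ is what keeps the conformal-factor variables under control so that the system can be integrated up to the boundary, yielding the extension. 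The same transport system is what underlies the empty-interior statement for the zero set, which you also defer to the literature; there the extra care you correctly flag (unique continuation up to a characteristic hypersurface, where tangential vanishing does not immediately give transverse vanishing) is genuinely needed. As written, these two items --- the only nontrivial content of part (a) --- are asserted by citation rather than proved, so the proposal is a correct and well-organized outline rather than a self-contained proof.
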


\noindent It is important to stress the role of the function $f$ in the preceding proposition, as $f=1$ on FRW spacetimes. Since \ref{defexp} encompasses a class of backgrounds larger than the cosmological ones, we can interpret a non constant function $f$ as a measure of the failure of homogeneity and isotropy.   

We can now look for a subgroup $SG_{\Im^-}$ of the isometries of $\Im^-$ 
with physical relevance. All the results that we will present are proven in \cite{Dappiaggi:2007mx}. We start from a preliminary, yet very useful proposition:

\begin{proposition} \label{togroup} 
	 If $(M,g, \Omega, X, \gamma)$ is an expanding universe with cosmological horizon and $Y$ is a Killing vector field of $(M,g)$,
	$Y$ can be extended to a smooth vector field $\widehat{Y}$ defined on $\widetilde{M}$. In addition the following facts hold true:
	
	\begin{enumerate}
	\item $\mathcal{L}_{\widehat{Y}} g =0$ on $M \cup \Im^-$;
	
	\item $\widetilde{Y}\doteq \widehat{Y}|_{\Im^-}$ is uniquely determined by $Y$ and it is tangent to $\Im^-$ if and only if 
	$g(Y,X)$ vanishes approaching $\Im^-$ from $M$.
	\end{enumerate} 
    
	If we restrict the attention to the linear space of Killing fields $Y$ on $(M,g)$ such that $g(Y,X)\to 0$ approaching $\Im^-$, the following additional facts hold true.
	
	\begin{itemize}
	\item If $\widetilde{Y}$  vanishes in some $A \subseteq \Im^-$ and $A \neq \emptyset$ is open 
	with respect to the topology of
	$\Im^-$, then $Y=0$ everywhere in $M$. Hence $\widehat{Y}$ vanishes in the whole $M \cup \Im^-$.  
	
	\item The linear map $Y \mapsto \widetilde{Y}$ is injective.
\end{itemize}
\end{proposition}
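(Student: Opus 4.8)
The plan is to extend $Y$ across the boundary by a linear ``Killing transport'' run transversally to $\Im^-$, and then to decode the tangency criterion and the rigidity statements from the null geometry of $\Im^-$ recorded in Remark~\ref{remarkgeo} and Proposition~\ref{X}. First, for items 1 and 2: since $(M,g)$ is embedded isometrically in $(\widetilde M,\widetilde g)$ we have $\widetilde g|_M=g$ with $\widetilde g$ smooth on the manifold-with-boundary $M\cup\Im^-$, so $\widetilde\nabla$ and the Riemann tensor of $\widetilde g$ are smooth up to $\Im^-$. Setting $F_{ab}\doteq\widetilde\nabla_aY_b$ (antisymmetric, because $Y$ is Killing), the Killing identities on $M$ read as the closed first-order system
\[
\widetilde\nabla_aY_b=F_{ab},\qquad \widetilde\nabla_aF_{bc}=-R_{bca}{}^{d}\,Y_d ,
\]
with $R$ the Riemann tensor of $\widetilde g$; its coefficients extend smoothly to all of $M\cup\Im^-$. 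Fixing a collar $\Im^-\times[0,\varepsilon)$ in $\widetilde M$ with $\Im^-\times(0,\varepsilon)\subset M$ and reading this system along the transverse direction $\partial_s$ in a fixed smooth frame, one gets a \emph{linear} ODE in $s$ with coefficients smooth down to $s=0$ and smoothly parametrised by the base point. Initialising at some $s_0>0$ with the data of $(Y,F)$ and integrating to $s=0$ produces a smooth section $(\widehat Y,\widehat F)$ on $\Im^-\times[0,\varepsilon)$ that, by uniqueness, coincides with $(Y,F)$ on $\Im^-\times(0,\varepsilon)$; thus $\widehat Y$ glues with $Y$ to a smooth field near $M\cup\Im^-$, extended arbitrarily smoothly to $\widetilde M$. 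As the system is the integrated Killing equation of $\widetilde g$, the continuous tensor $\mathcal L_{\widehat Y}\widetilde g$ vanishes on the dense set $M$, hence $\mathcal L_{\widehat Y}g=0$ on $M\cup\Im^-$. Finally $\widetilde Y\doteq\widehat Y|_{\Im^-}$ depends on $Y$ only, since any admissible extension is continuous near $\Im^-$ and equal to $Y$ on the dense set $M$, so $\widetilde Y(q)=\lim_{M\ni p\to q}Y(p)$.

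For item 3, let $n^a\doteq\widetilde\nabla^a\Omega$. As $\Im^-=\{\Omega=0\}$ is a null hypersurface, $d\Omega$ is its conormal, so a vector at $q\in\Im^-$ lies in $T_q\Im^-$ iff it is $\widetilde g$-orthogonal to $n$; by Remark~\ref{remarkgeo}(4), $n=-\gamma^{-2}\partial_u$ on $\Im^-$, whence $V\in T_q\Im^-\iff\widetilde g(V,\partial_u)=0$. By Proposition~\ref{X}, the extension $\widehat X$ of $X$ to $M\cup\Im^-$ satisfies $\widehat X|_{\Im^-}=f\,\partial_u$ with $f\ge0$ smooth and $\{f=0\}$ closed with empty interior. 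Now $\widetilde g(\widehat X,\widehat Y)$ is continuous on $M\cup\Im^-$ and equals $g(X,Y)$ on $M$, so ``$g(Y,X)\to0$ approaching $\Im^-$'' is equivalent to $f\,\widetilde g(\partial_u,\widehat Y)\equiv0$ on $\Im^-$. On the dense open set $\{f\neq0\}$ this forces $\widetilde g(\partial_u,\widehat Y)=0$, i.e.\ $\widehat Y\in T\Im^-$ there, hence everywhere on $\Im^-$ by continuity of $\widehat Y$ and $d\Omega$; thus $\widetilde Y$ is tangent to $\Im^-$. Conversely, if $\widetilde Y$ is tangent to $\Im^-$ then $\widetilde g(\partial_u,\widehat Y)\equiv0$ on $\Im^-$, so $\widetilde g(\widehat X,\widehat Y)=f\,\widetilde g(\partial_u,\widehat Y)\equiv0$, i.e.\ $g(Y,X)\to0$.

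For the last two assertions, restrict to Killing fields with $g(Y,X)\to0$, so that $\widetilde Y$ is tangent to $\Im^-$ by item 3, and suppose $\widetilde Y=0$ on a nonempty open $A\subseteq\Im^-$. Differentiating the identity $\widehat Y\equiv0$ along curves inside $A$ gives $\widetilde\nabla_W\widehat Y=0$ for every $W$ tangent to $\Im^-$ at points of $A$; hence at each $q\in A$ the antisymmetric tensor $\widehat F(q)$ annihilates the codimension-one subspace $T_q\Im^-$, so it has even rank $\le1$, thus rank $0$, i.e.\ $\widehat F\equiv0$ on $A$. Running the collar ODE above from $A$ with vanishing initial data then gives $(\widehat Y,\widehat F)\equiv0$ on $A\times[0,\varepsilon)$, so $Y=\widehat Y$ vanishes on the nonempty open subset $A\times(0,\varepsilon)$ of $M$; since $M$ is connected and $Y$ is Killing, this forces $Y\equiv0$ on $M$ (the locus where $Y$ and $\widetilde\nabla Y$ both vanish is open, closed and nonempty), whence $\widehat Y\equiv0$ on $M\cup\Im^-$. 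Taking $A=\Im^-$ yields injectivity of $Y\mapsto\widetilde Y$.

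The crux is the first step: recognising that the Killing identities, written transversally to $\Im^-$, constitute a \emph{linear} ODE with coefficients smooth up to the boundary is precisely what prevents $Y$ and $\widetilde\nabla Y$ from degenerating at $\Im^-$, and so produces the smooth extension $\widehat Y$. The remaining delicate point is the use in item 3 of Proposition~\ref{X}(a): one passes from the dense set $\{f\neq0\}$ to all of $\Im^-$ by continuity precisely because the identification $\widehat X=f\,\partial_u$ on $\Im^-$ carries no information where $f$ vanishes.
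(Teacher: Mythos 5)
Your proof is correct. The paper itself does not give a proof of this proposition (it defers to \cite{Dappiaggi:2007mx}), but your argument — extending $(Y,\widetilde\nabla Y)$ across $\Im^-$ by the linear Killing-transport system, which has smooth coefficients up to the boundary because the embedding is isometric and $\widetilde g$ is smooth there; deriving the tangency criterion from $n=-\gamma^{-2}\partial_u$, $\widetilde X=f\partial_u$ and the density of $\{f\neq0\}$; and forcing $\widehat F=0$ on $A$ by the parity-of-rank argument for antisymmetric tensors before transporting zero data back into the connected bulk — is essentially the standard route taken in that reference.
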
	
The most relevant consequence off Proposition \ref{togroup} is that all Killing vectors $Y$ in $M$ with $g(Y,X) \to 0$ approaching $\Im^-$
extend to Killing 	vectors of $(\Im^-,h)$, $h$ being the degenerate metric on $\Im^-$ induced by $\widetilde{g}$. These Killing vectors of $(M,g)$ are represented on $\Im^-$  faithfully. In other words all the geometric symmetries of the bulk spacetimes $(M,g)$ are codified on the boundary $\Im^-$.

\begin{definition}\label{defpreservscri}
If $(M,g, \Omega, X, \gamma)$ is an expanding universe with cosmological horizon,  a Killing vector field of $(M,g)$, $Y$, is said to 
	{\bf to preserve $\Im^-$} if  $g(Y,X)\to 0$ approaching $\Im^-$. Similarly, the Killing isometries of the (local) one-parameter group generated by 
	$Y$ are said {\bf to preserve $\Im^-$}.
\end{definition}

\noindent In the rest of this section we shall consider the one-parameter group of isometries of $(\Im^-,h)$ generated by such Killing vectors $\widehat{Y}|_{\Im^-}$. These account only for part of the isometries of $(\Im^-,h)$ as one can infer considering the isometry constructed out of the coordinates $(u, \theta,\phi)\in \mathbb{R} \times \mathbb{S}^2 \equiv \Im^-$,
and of the smooth diffeomorphisms $f: \mathbb{R} \to \mathbb{R}$ generating
the transformation 
\begin{equation}
u \to f(u),\quad (\theta,\phi) \to (\theta,\phi). \label{sa}
\end{equation}
These isometries of $\Im^-$ play a role very similar to those of the elements of the Newman-Unti group in asymptotically flat spacetimes which are not part of the BMS group \cite{Newman:1962cia}. However only diffeomorphisms 
of the form $f(u) = a u + b$ with $a\neq 0$ can be isometries generated by 
the restriction $\widehat{Y}|_\Im^-$ to $\Im^-$ of extensions of Killing fields $Y$ of $(M,g)$ 
as in the proposition \ref{togroup}.  This is because those isometries 
are restrictions of isometries of the manifolds with boundary 
$(M\cup \Im^-,\: \widetilde{g}|_{M\cup \Im^-})$, and thus they {\em preserve the null} $\widetilde{g}$-{geodesics} 
in $\Im^-$.  The requirement that, for all constants 
$a,b\in \mathbb{R}$, $a\neq 0$, there exist constants $a',b'\in \mathbb{R}$, $a'\neq 0$ such that $f(au + b) = a'u + b'$
for all $u$ varying in a fixed nonempty interval $J$, is fulfilled only if $f$ is an affine transformation as said above. 
If we include also the transformations of angular coordinates, we end up studying the class $G_\Im^+$ of diffeomorphisms  of $\Im^-$ of the form
\begin{equation}
u \to u' \doteq f(u, \theta,\phi)\:, \quad (\theta,\phi) \to (\theta',\phi') \doteq g(u, \theta,\phi)
\label{diffgen}
\end{equation}
where $u \in \mathbb{R}$ and $(\theta,\phi) \in \mathbb{S}^2$ and where these transformations are isometries of the degenerate metric $h$ induced by $\widetilde{g}|_{\Im^-}$ (\ref{quasih}).
Following a lengthy analysis, discussed thoroughly in \cite{Dappiaggi:2007mx}, it is possible to classify all Killing  isometries of the degenerate metric $h$ on $\Im^-$ which are restrictions of possible Killing $\widetilde{g}$-isometries of $M\cup \Im^-$. The next definition summarizes the result: 

\begin{definition}\label{defSG} The {\bf horizon symmetry group} $SG_{\Im^-}$ is the group of all diffeomorphisms of $\mathbb{R} \times \mathbb{S}^2$,
	\begin{equation}\label{group}
	F_{(a,b,R)} : \mathbb{R} \times \mathbb{S}^2 \ni (u, \theta,\varphi) \mapsto \left(e^{a(\theta,\phi)}u + b(\theta,\phi), R(\theta,\phi)\right) \in \mathbb{R} \times \mathbb{S}^2
	\end{equation}
	with $u \in \mathbb{R}$ and $(\theta,\phi) \in \mathbb{S}^2$,	
	where $a,b \in C^\infty(\mathbb{S}^2)$ are arbitrary smooth functions and $R \in SO(3)$.\\
	The {\bf Horizon Lie algebra} $\mathfrak{g}_{\Im^-}$ is 
	the infinite-dimensional Lie algebra of smooth vector fields on $\mathbb{R} \times \mathbb{S}^2$ generated by the fields 
	$$ S_1\:, S_2\:, S_3\:,\: \beta \partial_u\:, \:
	u\alpha \partial_u\:, \quad\mbox{for all $\alpha,\beta \in C^\infty(\mathbb{S}^2)$.}$$
	$S_1,S_2,S_3$ indicate the three smooth vector fields on the unit sphere $\mathbb{S}^2$ generating rotations 
	about the orthogonal axes, respectively, $x$, $y$ and $z$. 
\end{definition}

\noindent  $SG_{\Im^-}$ depends on the geometry of $\Im^-$ but not on that of $(M,g)$. In this sense it enjoys the same properties of the BMS group, namely it is {\em universal} for the whole class of expanding spacetimes with cosmological horizon.  As a set 
$SG_{\Im^-}$ coincides with $SO(3) \times C^\infty(\mathbb{S}^2)\times 
C^\infty(\mathbb{S}^2)$. In other to unveil the group structure, consider an arbitrary $F_{a,b,R}$ and indicate it with the triple $(R,a,b)$. Per direct inspection of \eqref{group}, we see that the composition between elements in $SG_\Im^-$ can be defined as
\begin{eqnarray*}
(R, a,b) \odot (R', a',b') \doteq  \left(RR',\:\: a' + a \circ R' ,\:\: e^{a\circ R'} b' + b\circ R' \right) \\
\mbox{for all 
	$(R,a,b), (R', a',b') \in SO(3) \times C^\infty(\mathbb{S}^2)\times C^\infty(\mathbb{S}^2)$} \label{product}\:.  
\end{eqnarray*}
where $\circ$ denotes the usual composition of functions. \\

\noindent We state a few additional results aimed at characterizing the properties of $SG_{\Im^-}$. In the next proposition we emphasize instead that $\mathfrak{g}_{\Im^-}$  could be considered as the Lie algebra of $SG_{\Im^-}$, although a careful analysis of this point will not be discussed in this work. As usual, the proofs can be found in \cite{Dappiaggi:2007mx}:

\begin{proposition}\label{Gscrim} Referring to Definition \ref{defSG},
	the following facts hold.
	\begin{itemize}
		\item  Each vector field $Z \in \mathfrak{g}_{\Im^-}$ is complete and it generates a one-parameter group of diffeomorphisms 
	of $\mathbb{R} \times \mathbb{S}^2$, $\{\exp\{tZ\}\}_{t\in \mathbb{R}}$, subgroup of $SG_\Im^-$.
	
	\item For every $F \in SG_\Im^-$ there are  $Z_1,Z_2 \in \mathfrak{g}_{\Im^-}$ -- with, possibly, $Z_1=Z_2$ -- 
	such that  $F= \exp\{t_1Z_1\} \exp\{t_2Z_2\}$ for some real numbers $t_1,t_2$.
	\end{itemize}
\end{proposition}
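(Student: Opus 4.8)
The plan is to integrate the flow of an arbitrary $Z\in\mathfrak{g}_{\Im^-}$ explicitly, check that it lands in $SG_{\Im^-}$ at every time, and then to exhibit an arbitrary $F\in SG_{\Im^-}$ as the product of exactly two such flows evaluated at time one.

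\emph{First item.} I would write a generic element of $\mathfrak{g}_{\Im^-}$ as $Z=\widehat{V}+u\,\alpha\,\partial_u+\beta\,\partial_u$, where $\alpha,\beta\in C^\infty(\mathbb{S}^2)$ and $\widehat{V}$ is the trivial lift to $\mathbb{R}\times\mathbb{S}^2$ of an element $V$ of $\mathfrak{so}(3)$ (a constant-coefficient combination of $S_1,S_2,S_3$); this exhausts $\mathfrak{g}_{\Im^-}$ as a vector space, since by Definition~\ref{defSG} all brackets of the generators remain of this form. The $\mathbb{S}^2$-component of the integral curve of $Z$ through $p=(\theta,\phi)$ is driven only by $\widehat{V}$, hence it is the rotation flow $R_t\doteq\exp(tV)\in SO(3)$, complete because $\mathbb{S}^2$ is compact. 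Along it the $u$-component solves the linear inhomogeneous ODE $\dot u(t)=\alpha(R_t p)\,u(t)+\beta(R_t p)$, whose coefficients are smooth and bounded in $t$ (again by compactness), so the solution exists for all $t\in\mathbb{R}$ and $Z$ is complete. Solving it yields $u(t)=e^{A_t(p)}\,u(0)+B_t(p)$ with $A_t(p)=\int_0^t\alpha(R_s p)\,ds$ and $B_t(p)=\int_0^t e^{\int_s^t\alpha(R_\tau p)\,d\tau}\,\beta(R_s p)\,ds$, both smooth in $p$ by smooth dependence of ODE solutions on parameters. Hence $\exp\{tZ\}=F_{(A_t,B_t,R_t)}\in SG_{\Im^-}$ for every $t$, and the flow identity $\exp\{(t+s)Z\}=\exp\{tZ\}\odot\exp\{sZ\}$ exhibits $\{\exp\{tZ\}\}_{t\in\mathbb{R}}$ as a one-parameter subgroup of $SG_{\Im^-}$.

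\emph{Second item.} Using the composition law $(R,a,b)\odot(R',a',b')=(RR',\,a'+a\circ R',\,e^{a\circ R'}b'+b\circ R')$ one checks directly the factorisation $F_{(a,b,R)}=F_{(0,0,R)}\odot F_{(a,b,\mathrm{id})}$. For the pure rotation $F_{(0,0,R)}$ I would invoke surjectivity of $\exp\colon\mathfrak{so}(3)\to SO(3)$: choosing $V$ with $\exp(V)=R$, the time-one flow of $Z_1\doteq\widehat{V}\in\mathfrak{g}_{\Im^-}$ is precisely $F_{(0,0,R)}$, by the previous paragraph specialised to $\alpha=\beta=0$. For $F_{(a,b,\mathrm{id})}$, which fixes $\mathbb{S}^2$ pointwise and sends $u\mapsto e^{a(\theta,\phi)}u+b(\theta,\phi)$, I would seek $Z_2=u\,a\,\partial_u+\beta\,\partial_u\in\mathfrak{g}_{\Im^-}$: its time-one flow fixes $\mathbb{S}^2$ and solves $\dot u=a\,u+\beta$ fibrewise, giving $u(1)=e^{a}u(0)+\chi(a)\,\beta$ with $\chi(x)\doteq(e^x-1)/x$ extended by $\chi(0)=1$. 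Choosing $\beta\doteq b/\chi(a)$ then yields $\exp\{Z_2\}=F_{(a,b,\mathrm{id})}$, whence $F_{(a,b,R)}=\exp\{t_1Z_1\}\exp\{t_2Z_2\}$ with $t_1=t_2=1$.

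The one point requiring care is that $\beta=b/\chi(a)$ be genuinely smooth on $\mathbb{S}^2$, i.e. across the (possibly non-empty) zero set of $a$: this holds because $x\mapsto x/(e^x-1)$ is real-analytic on all of $\mathbb{R}$ (its Taylor series at $0$, with radius of convergence $2\pi$, has the Bernoulli numbers as coefficients), so its composition with the smooth real function $a$ is $C^\infty$. Everything else — completeness of rotation flows on the compact $\mathbb{S}^2$, global solvability on $\mathbb{R}$ of linear ODEs with bounded smooth coefficients and their smooth dependence on the sphere point, and surjectivity of the exponential of the compact connected group $SO(3)$ — is standard, so the ``obstacle'' here is organisational rather than conceptual.
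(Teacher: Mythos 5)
Your argument is correct and complete; the paper itself gives no proof of this proposition, deferring to \cite{Dappiaggi:2007mx}, where the argument is essentially the one you give (explicit integration of the decoupled rotation flow and of the fibrewise linear ODE, followed by the factorisation of $F_{(a,b,R)}$ into a pure rotation composed with a fibrewise affine map, the analyticity of $x\mapsto x/(e^x-1)$ taking care of the zero set of $a$). The one step you assert without checking --- that the linear span of $S_1,S_2,S_3$, $\beta\,\partial_u$, $u\alpha\,\partial_u$ is already closed under brackets, so that every $Z\in\mathfrak{g}_{\Im^-}$ has the form $\widehat V+u\alpha\,\partial_u+\beta\,\partial_u$ --- does hold, since $[S_i,\beta\partial_u]=(S_i\beta)\partial_u$, $[S_i,u\alpha\partial_u]=u(S_i\alpha)\partial_u$ and $[u\alpha\partial_u,\beta\partial_u]=-\alpha\beta\,\partial_u$.
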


\begin{theorem}\label{theorem2} 

Let  $(M,g, \Omega, X, \gamma)$ be an expanding universe with cosmological horizon and
	$Y$ a Killing  vector field of $(M,g)$ preserving $\Im^-$. Then
	\begin{enumerate}
		\item The unique smooth extension $\widetilde{Y}$ of $Y$ to $\Im^-$ belongs to $\mathfrak{g}_{\Im^-}$,
	
	\item $\exp\{t\widetilde{Y}\}_{t\in \mathbb{R}}$ is a subgroup of $SG_{\Im^-}$.
\end{enumerate}
\end{theorem}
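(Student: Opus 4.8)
\begin{sproof}
Item 2 will be an immediate consequence of item 1: once $\widetilde{Y}\in\mathfrak{g}_{\Im^-}$ is established, the first assertion of Proposition \ref{Gscrim} yields that $\widetilde{Y}$ is complete and that $\{\exp\{t\widetilde{Y}\}\}_{t\in\mathbb{R}}$ is a one-parameter subgroup of $SG_{\Im^-}$. Hence the whole content lies in item 1, which I would prove as follows.

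Since $Y$ preserves $\Im^-$ in the sense of Definition \ref{defpreservscri}, Proposition \ref{togroup} provides a smooth extension $\widehat{Y}$ of $Y$ to $\widetilde{M}$ whose restriction $\widetilde{Y}\doteq\widehat{Y}|_{\Im^-}$ is tangent to $\Im^-$ and uniquely determined by $Y$; furthermore $\mathcal{L}_{\widehat{Y}}\widetilde{g}$ vanishes on the open bulk $M$ (where $\widehat{Y}=Y$ and $\widetilde{g}=g$), hence by continuity on all of $M\cup\Im^-$, so that in particular $\widetilde{Y}$ is a Killing field of the degenerate metric $h=\gamma^2\,d\mathbb{S}^2(\theta,\phi)$ induced on $\Im^-$ by \eqref{quasih}. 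Writing $\widetilde{Y}=\widetilde{Y}^u\partial_u+\widetilde{Y}^A\partial_A$ in the Bondi coordinates $(u,\theta,\phi)\in\mathbb{R}\times\mathbb{S}^2\equiv\Im^-$ and expanding $\mathcal{L}_{\widetilde{Y}}h=0$ for this particular $h$, the mixed $(u,A)$ components give $\partial_u\widetilde{Y}^A=0$ and the $(A,B)$ components then say that $\widetilde{Y}^A\partial_A$ is a Killing field of the round metric on $\mathbb{S}^2$, i.e.\ a linear combination of the rotation generators $S_1,S_2,S_3$. The component $\widetilde{Y}^u$, on the other hand, is left completely unconstrained by $h$ alone, and the remaining task is to show it is affine in $u$ with coefficients in $C^\infty(\mathbb{S}^2)$.

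For this one must exploit the ambient Lorentzian structure. The local flow $\Phi_t$ of $\widehat{Y}$ acts by $\widetilde{g}$-isometries of the manifold with boundary $(M\cup\Im^-,\widetilde{g}|_{M\cup\Im^-})$; each $\Phi_t$ maps $\Im^-$ onto itself and, being an isometry of $(\Im^-,h)$, preserves $\ker h$, which is spanned by $\partial_u$. Hence $\Phi_t$ carries every null generator $u\mapsto(u,\theta,\phi)$ onto a single null generator, so that $\Phi_t(u,\theta,\phi)=(F_t(u,\theta,\phi),G_t(\theta,\phi))$ with $G_t$ independent of $u$. By Remark \ref{remarkgeo}, item {\bf (4)}, these generators are complete $\widetilde{g}$-geodesics affinely parametrized by $u$, and an isometry maps affinely parametrized geodesics to affinely parametrized geodesics; comparing the two affine parametrizations of the image generator forces $u\mapsto F_t(u,\theta,\phi)$ to be affine, $F_t(u,\theta,\phi)=a_t(\theta,\phi)u+b_t(\theta,\phi)$ with $a_t$ nowhere vanishing, so $a_t>0$ because $a_0\equiv1$. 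The identity $\Phi_t^*h=h$ together with the affine form of $F_t$ then forces $G_t^*(d\mathbb{S}^2)=d\mathbb{S}^2$, i.e.\ $G_t\in SO(3)$. Thus $\Phi_t|_{\Im^-}$ is an element $F_{(a,b,R)}$ of $SG_{\Im^-}$ as in \eqref{group}, and differentiating this family at $t=0$ produces $\widetilde{Y}=(\alpha(\theta,\phi)u+\beta(\theta,\phi))\partial_u+\widetilde{Y}^A\partial_A$ with $\alpha,\beta\in C^\infty(\mathbb{S}^2)$ and $\widetilde{Y}^A\partial_A$ the rotation found above; by Definition \ref{defSG} this is exactly $\widetilde{Y}\in\mathfrak{g}_{\Im^-}$, whence item 2 follows as explained.

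The hard part is making the affine-parametrization step rigorous, and it rests on two points that deserve care: that $\widehat{Y}$ is a genuine $\widetilde{g}$-Killing field on all of $M\cup\Im^-$, so that its flow acts isometrically across the boundary and not merely in the interior (this is precisely where Proposition \ref{togroup} and continuity up to $\Im^-$ are used), and that $u$ is genuinely a $\widetilde{g}$-affine parameter along the generators, which one reads off from the Bondi form \eqref{quasih} and Remark \ref{remarkgeo}, item {\bf (4)}. Granting these, everything else is a routine unwinding of the Killing equations for $h$ followed by differentiation of the one-parameter family $\Phi_t|_{\Im^-}$ of elements of $SG_{\Im^-}$.
\end{sproof}
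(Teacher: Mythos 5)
Your argument is correct and follows essentially the route the paper itself indicates: the text preceding Definition \ref{defSG} already contains the key point that restrictions of $\widetilde{g}$-isometries of $M\cup\Im^-$ must preserve the affinely parametrized null generators of $\Im^-$, forcing the $u$-dependence to be affine, and your use of Proposition \ref{togroup} for the extension and of Proposition \ref{Gscrim} for item 2 is exactly as intended. The detailed proof is deferred to \cite{Dappiaggi:2007mx}, but your sketch reproduces its substance faithfully.
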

As an example consider the expanding universe $M$ with cosmological horizon associated with the metric $g_{FRW}$ (\ref{cosmo0}) 
with $\kappa=1$ and $a$ as in (a) of Theorem \ref{theorem1}. In this case $X\doteq \partial_\tau$ and there are a lot of Killing vectors $Y$ of 
$(M,g_{FRW})$
satisfying $g_{FRW}(Y,X) \to 0$ approaching $\Im^-$. The main ones are those  of the surfaces at $\tau=$constant
with respect to the induced metric. They form a Lie algebra generated by $6$ independent Killing vectors $Y$ representing, respectively,
space translations and space rotations. In this case $g_{FRW}(Y,X)=0$ so that the associated  Killing vectors
$\widehat{Y}|_\Im^-$ belongs to $\mathfrak{g}_\Im^-$.\\
We state a last technical result whose proof is in \cite{Dappiaggi:2007mx}.

\begin{proposition}\label{positiveenergy}
	
	Let $(M,g, \Omega, X,\gamma)$ 
	be an expanding universe with cosmological horizon and $Y$  a smooth vector field of $(M,g)$
	which tends to the smooth field $\widetilde{Y} \in \mathfrak{g}_{\Im^-}$ pointwisely. If there exists an open set $A\subseteq \widetilde{M}$ with $A \supseteq \Im^-$ and such that $Y|_{A \cap M}$
	is timelike and future directed, then,  everywhere on $\Im^-$,
	\begin{equation}
	\widetilde{Y}(u,\theta,\phi) = f(\theta,\phi) \partial_u\:, \quad \mbox{for some $f\in C^\infty(\mathbb{S}^2)$, with $f(\theta,\phi)\geq 0$ on $\mathbb{S}^2$.}
	\end{equation}
\end{proposition}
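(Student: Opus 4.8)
The plan is to transport the causal hypothesis on $Y$ from the physical metric $g$ to the unphysical one $\widetilde{g}$, which is the natural object near $\Im^-$. Since $g=\Omega^{-2}\widetilde{g}$ on $M$ and conformal rescalings preserve the causal character of tangent vectors, the assumption that $Y|_{A\cap M}$ is $g$-timelike and future directed is equivalent to its being $\widetilde{g}$-timelike and future directed. Now fix $q\in\Im^-$ and a sequence $p_n\to q$ with $p_n\in M$. Because $M\cup\Im^-$ is a submanifold-with-boundary of the smooth manifold $\widetilde{M}$, the pointwise convergence $Y(p_n)\to\widetilde{Y}(q)$ holds in $T_q\widetilde{M}$, and since $\widetilde{g}$ extends smoothly to $\Im^-$ and the set of future-directed causal vectors together with the zero vector is closed, I would conclude that $\widetilde{Y}(q)$ is either $0$ or $\widetilde{g}$-causal and future directed; in particular $\widetilde{g}_q(\widetilde{Y}(q),\widetilde{Y}(q))\le 0$ at every $q\in\Im^-$.

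Next I would use the Bondi form \eqref{quasih} of $\widetilde{g}$ on $\Im^-$. As $\widetilde{Y}\in\mathfrak{g}_{\Im^-}$ is by definition a vector field on $\Im^-$, the vector $\widetilde{Y}(q)$ is tangent to $\Im^-$, and on such vectors $\widetilde{g}$ agrees with the induced degenerate metric $h$. Since $\Omega\equiv 0$ on $\Im^-$, the $d\Omega$-terms in \eqref{quasih} pull back to zero, so $h=\gamma^2\,d\mathbb{S}^2$: it is positive semidefinite with null space exactly $\mathbb{R}\,\partial_u$ at every point. Hence $0\ge\widetilde{g}_q(\widetilde{Y}(q),\widetilde{Y}(q))=h_q(\widetilde{Y}(q),\widetilde{Y}(q))\ge 0$, which forces $h_q(\widetilde{Y}(q),\widetilde{Y}(q))=0$ and therefore $\widetilde{Y}(q)=\lambda(q)\,\partial_u$ for a (necessarily smooth) function $\lambda$ on $\mathbb{R}\times\mathbb{S}^2$. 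Moreover $\partial_u$ is future directed throughout $\Im^-$: applying the same limiting argument to the future-directed timelike field $X$ and recalling Proposition \ref{X}, one has $\widetilde{X}=f\partial_u$ with $f\ge 0$, $\{f>0\}$ dense in $\Im^-$, and $\widetilde{X}$ future directed where it is nonzero; continuity of the time orientation along the nowhere-vanishing null field $\partial_u$ then propagates this to all of $\Im^-$. Consequently $\lambda\ge 0$ everywhere.

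Finally I would invoke the explicit structure of $\mathfrak{g}_{\Im^-}$. Computing the relevant brackets of the generators in Definition \ref{defSG} (for instance $[u\alpha\partial_u,\beta\partial_u]=-\alpha\beta\,\partial_u$, $[S_i,\beta\partial_u]=(S_i\beta)\partial_u$, $[S_i,S_j]=\epsilon_{ijk}S_k$), one sees that every element of $\mathfrak{g}_{\Im^-}$ has the form $\sum_{i=1}^3 c_i S_i+\bigl(\beta(\theta,\phi)+u\,\alpha(\theta,\phi)\bigr)\partial_u$ with $c_i\in\mathbb{R}$ and $\alpha,\beta\in C^\infty(\mathbb{S}^2)$. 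Comparing with $\widetilde{Y}=\lambda\,\partial_u$, and using that $S_1,S_2,S_3$ are tangent to the $\mathbb{S}^2$-factor and linearly independent as vector fields, forces $c_1=c_2=c_3=0$ and $\lambda(u,\theta,\phi)=\beta(\theta,\phi)+u\,\alpha(\theta,\phi)$, i.e.\ $\lambda$ is affine in $u$ over all of $\mathbb{R}$. A nonnegative affine function on $\mathbb{R}$ has vanishing slope, so $\alpha\equiv 0$ and $\widetilde{Y}=f\,\partial_u$ with $f\doteq\beta\in C^\infty(\mathbb{S}^2)$, $f\ge 0$, as claimed. The only genuinely delicate point is the orientation bookkeeping in the middle step --- pinning down that $\partial_u$, and not $-\partial_u$, is the future-directed generator of $\Im^-$, which is exactly what yields $\lambda\ge 0$ rather than $\lambda\le 0$; the rest is a routine combination of conformal invariance, the degeneracy of the Bondi metric, and the elementary fact about affine functions.
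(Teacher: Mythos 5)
Your proof is correct and follows the intended route: the paper defers this proposition to \cite{Dappiaggi:2007mx}, and the argument there is essentially yours --- the closedness of the future causal cone forces $\widetilde{g}(\widetilde{Y},\widetilde{Y})\le 0$ in the limit, the degeneracy of the Bondi metric on $T\Im^-$ then forces $\widetilde{Y}=\lambda\,\partial_u$ with $\lambda\ge 0$ (the sign fixed by comparison with $\widetilde{X}$ as in Proposition \ref{X}), and the explicit form of $\mathfrak{g}_{\Im^-}$ together with the fact that a nonnegative affine function on all of $\mathbb{R}$ is constant kills the $u\alpha\,\partial_u$ component. The only (harmless) slip is at the start: by Definition \ref{defexp} the embedding of $(M,g)$ into $(\widetilde{M},\widetilde{g})$ is isometric, so $\widetilde{g}|_M=g$ rather than $g=\Omega^{-2}\widetilde{g}$ --- it is the auxiliary static metric $g_0=\Omega^{-2}g$ that carries the conformal factor --- but causal character and time orientation pass to the boundary either way, so your conclusion stands.
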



%
%
%
\chapter[QFT in spacetimes with null surfaces]{Quantum Fields in spacetimes with null surfaces}
\label{Ch:qft-null-surfaces} 

The goal of this chapter is twofold. On the one hand we will review how the algebra of observables for a real scalar field on globally hyperbolic spacetimes is built. We will show in addition that a similar construction exists when one consider a suitable class of null manifolds of which future and past null infinity, discussed in the previous chapter, are the prototypes. On the other hand, we will highlight that, under suitable geometric hypotheses, all realized in the cases considered in spacetimes discussed in Chapter \ref{geometry}, these two kind of algebras can be closely intertwined. We will call such procedure {\em bulk-to-boundary projection}. 

\section[Algebra of observables]{Algebra of observables in globally hyperbolic spacetimes}

In this section, $C^\infty(M)$ and 
$C^\infty_0(M)$ respectively denote the {\em real} vector space of {\em real}-valued smooth functions  
and the {\em real} vector space of {\em real}-valued  smooth compactly-supported functions over $M$, where 
 $(M,g)$ is a generic globally hyperbolic spacetime of dimension $4$, although everything we will present can easily be generalized to $n>2$ dimensions. On top of $M$ we take a real smooth scalar field $\varphi\in C^\infty(M)$, which obeys the {\em Klein-Gordon equation}
\begin{equation}\label{eq:dynamics}
P\varphi=\left(\Box-m^2-\xi R\right)\varphi=0,
\end{equation} 
where $\Box=g^{\mu\nu}\nabla_\mu\nabla_\nu$ and $R$ are respectively the D'Alembert wave operator and the scalar curvature built out of the metric, $m^2\in \mathbb R$ is the squared mass parameter. Physical values are usually assumed to lie in the range $[0,+\infty)$ but many features of the theory survive when negative values of $m^2$ are considered. In addition $\xi\in \mathbb{R}$ is a coupling constant and notable values are $0$, known as {\bf minimal coupling} or $\xi=\frac{1}{6}$, known as {\bf conformal coupling} (in $4$ dimensions). We will discuss this last special case in the next sections. 

Our first goal is to review the properties of \eqref{eq:dynamics}, in particular, discussing how it is possible to assign a suitable algebra of quantum observables to such dynamical system. This is an overkilled topic which has been thoroughly analysed and presented by several authors in different contexts and frameworks. Recent reviews can be found in \cite{Benini:2013fia} and in \cite{Benini:2015bsa,Khavkine:2014mta}. to some extent our presentation will be based mainly on \cite{BGP}. To avoid to overextend this work, we will omit the proofs of many statements, but these can all be found in \cite{BGP}, unless stated otherwise.

Our starting point is the observation that the second order partial differential operator $P$ in \eqref{eq:dynamics} is {\bf normally hyperbolic}, that is its principal symbol $\sigma_P:T^*M\to C^\infty(M)$ is of metric type, $\sigma(P)(k)=-g^{\mu\nu}k_\mu k_\nu$, where $g$ has Lorentzian signature. This entails that we can associate to $P$ a distinguished pair of bi-distributions on $M$, {\it c.f.} \cite[Def. 3.4.1 \& Corol. 3.4.3]{BGP} as stated  in (a) below.  For the statement (b) see \cite{Wald}.

\begin{proposition}\label{prop:causal_propagator}
	Let $(M,g)$ be a globally hyperbolic spacetime and let $P$ be the Klein-Gordon operator as in \eqref{eq:dynamics}. The following facts hold.\\
{\bf (a)} There exist unique 
 linear operators $G^\pm: C^\infty_0(M)\to C^\infty(M)$  called 
{\bf advanced} $(-)$ and {\bf retarded $(+)$ Green operators} such that, if $\mathbb{I}$ is the identity on $C^\infty_0(M)$\footnote{In the rest of the book $P G^\pm$ and $G^\pm P$ and other similar expressions are understood as compositions of linear operators omitting, as is usual for linear operators, the composition symbol $\circ$.},
	\begin{enumerate}
		\item $P G^\pm = \mathbb{I}$  
		\item $G^\pm P|_{C^\infty_0(M)}=\mathbb{I}$,
		\item  $\textrm{supp}(G^\pm(f))\subseteq J^\pm(\textrm{supp}(f))$ for every $f\in C^\infty_0(M)$  
	\end{enumerate}
$G^\pm : C_0^\infty(M) \to C^\infty(M) \subset \mathcal{D}'(M)$ are continuous with respect to the standard distributional topologies \cite{Friedlander:2010eqa}.\\
{\bf (b)} If $\Sigma$ is a smooth spacelike Cauchy surface of $M$
  with unit normal future-pointing vector $n$, for every  $f,f^\prime \in C_0^\infty(\Sigma)$ there is a unique solution $\varphi \in C_0^\infty(M)$ of $P\varphi=0$
with  {\bf Cauchy data} $\varphi|_\Sigma=f$ and $n^\mu\partial_\mu\varphi|_\Sigma =f^\prime$. In this case the following inclusion of supports holds $\mbox{supp}(\varphi) \subseteq J^+(\mbox{supp}(f) \cup 
\mbox{supp}(f^\prime)) \cup J^-(\mbox{supp}(f) \cup 
\mbox{supp}(f^\prime))$.
\end{proposition}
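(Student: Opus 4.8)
The plan is to first settle part (b), the well-posedness of the Cauchy problem, and then to build the Green operators of part (a) out of the associated solution operator. For (b) the central analytic input is an \emph{energy estimate}: contracting the stress tensor of $\varphi$ with the future-directed timelike vector field $\partial_t$ supplied by the Bernal--Sanchez splitting $M\cong\mathbb{R}\times\Sigma$ of Proposition \ref{BS}, and integrating the divergence of the resulting vector field over the region bounded by two leaves $\Sigma_{t_0}$, $\Sigma_{t_1}$, one bounds a positive quadratic ``energy'' of $\varphi$ on $\Sigma_{t_1}$ in terms of its value on $\Sigma_{t_0}$ and of the $L^2$-size of $f$ on the intervening slab, and closes the argument by Gronwall's inequality. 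This at once yields \emph{uniqueness} -- a solution whose Cauchy data vanish on some $\Sigma_{t_0}$ must vanish on $D(\Sigma_{t_0})=M$ -- and continuous dependence on the data. \emph{Existence} is obtained first locally, in charts, from the classical theory of second-order linear hyperbolic equations (as in \cite{BGP}), and the local solutions are then propagated along the foliation $\{\Sigma_t\}_{t\in\mathbb{R}}$ and glued together, uniqueness forcing the pieces to agree on overlaps. Finally the support inclusion $\mathrm{supp}(\varphi)\subseteq J^+(K)\cup J^-(K)$, with $K=\mathrm{supp}(f)\cup\mathrm{supp}(f')$, is \emph{finite propagation speed}: if $p\notin J^+(K)\cup J^-(K)$ then, by global hyperbolicity, a neighbourhood of $p$ only ``sees'' a portion of $\Sigma$ on which the Cauchy data vanish, so the uniqueness argument applied on that local domain of dependence forces $\varphi$ to vanish near $p$.

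For part (a), fix $f\in C^\infty_0(M)$ and, using Proposition \ref{BS}, choose two smooth spacelike Cauchy surfaces $\Sigma_-$, $\Sigma_+$ with $\mathrm{supp}(f)$ in the open slab between them. To construct $G^+(f)$ one sets it equal to $0$ on $J^-(\Sigma_-)$ -- where $Pu=f=0$ holds trivially -- and on the complement solves $Pu=f$ with vanishing Cauchy data on $\Sigma_-$; this inhomogeneous Cauchy problem is handled by Duhamel's principle from the homogeneous solution operator of part (b), its solvability and uniqueness following from the same energy estimate, now carrying a source term. The property $\mathrm{supp}(G^+f)\subseteq J^+(\mathrm{supp}\,f)$ follows once more from finite propagation speed applied to $Pu=f$; the identity $PG^+=\mathbb{I}$ holds by construction, and $G^+P|_{C^\infty_0(M)}=\mathbb{I}$ because, for $f\in C^\infty_0(M)$, both $G^+Pf$ and $f$ are solutions of $Pu=Pf$ with the same Cauchy data on $\Sigma_-$. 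The operator $G^-$ is obtained by the time-reversed construction. \emph{Uniqueness} of $G^\pm$ among operators with the listed properties is then immediate: the difference of two candidates for $G^+$ sends $f$ to a solution of $Pu=0$ whose support lies in the causal future of a compact set, and running the energy estimate from a Cauchy surface to the past of that support shows the solution vanishes.

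It remains to note that the continuity of $G^\pm\colon C^\infty_0(M)\to C^\infty(M)$ in the usual topologies is a quantitative refinement of the same estimates: differentiating and localizing them bounds each $C^k$-seminorm of $G^\pm(f)$ over a compact set by finitely many seminorms of $f$ over a slightly larger compact set. I expect the genuinely laborious ingredient to be neither the energy inequality nor the support bookkeeping but the \emph{globalization}: passing from the chartwise local solutions to one defined on all of $M$ requires an exhaustion along the Cauchy foliation together with repeated use of uniqueness to patch, and it is here that the full strength of Proposition \ref{BS} -- the existence of a smooth spacelike Cauchy foliation -- is used. All of this is carried out in detail in \cite{BGP} (see also \cite{Wald}), and we confine ourselves to those references.
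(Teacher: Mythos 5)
The paper offers no proof of this proposition, deferring---exactly as you do in your closing lines---to \cite[Def.~3.4.1 and Cor.~3.4.3]{BGP} for (a) and to \cite{Wald} for (b). Your outline (energy estimate and Gronwall for uniqueness, local existence plus globalization along the Bernal--Sanchez foliation for (b); then Duhamel, finite propagation speed, and a uniqueness-of-the-Cauchy-problem argument for the existence, support properties, and uniqueness of $G^\pm$ in (a)) is a correct summary of precisely the standard argument carried out in those references, so there is nothing to compare against and no gap to report.
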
	
The bi-distribution $G=G^--G^+$ is called {\bf causal propagator} (also known in the literature as {\bf advanced-minus-retarded fundamental solution} or {\bf commutator} function). It satisfies the identities $P G= G P=0$ on $C_0^\infty(M)$ and it plays a relevant role in understanding the structure of the space of solutions of \eqref{eq:dynamics}. Before exploiting this fact we introduce an ancillary definition. 

\begin{definition}\label{Def:spacelike_compact}
	Let $(M,g)$ be a globally hyperbolic spacetime. We call $\varphi\in C^\infty(M)$,
	 {\bf spacelike compact} if, for every  smooth spacelike Cauchy surface $\Sigma$, both the restriction $\varphi|_\Sigma$ and the normal derivative $n^\mu \partial_\mu\varphi|_\Sigma$ have compact support. The real vector space of all such $\varphi$ is indicated with $C^\infty_{sc}(M)$.
\end{definition}

\noindent We use this definition and the existence of the causal propagator to characterize a distinguishable class of solutions of the Klein-Gordon equations, which represent the building block of the algebra of observables for such system:

\begin{proposition}\label{prop:initial_data}
	Let $(M,g)$ be a globally hyperbolic spacetime and let $P$ be the Klein-Gordon operator \eqref{eq:dynamics}. If  $\mathcal{S}(M)\doteq\{\varphi\in C^\infty_{sc}(M)\;|\;P\varphi=0\}$ the following facts hold:\\
{\bf (a)} The map $C^\infty_0(M) \ni f \mapsto G(f) \in \mathcal{S}(M)$ is a well-defined  surjective homomorphism of real vector spaces whose kernel is $P\left(C^\infty_0(M)\right)$.\\
{\bf (b)} The map
$$\frac{C^\infty_0(M)}{P\left(C^\infty_0(M)\right)}\ni[f]\mapsto G(f) \in \mathcal{S}(M)$$ is a vector space isomorphism.
\end{proposition}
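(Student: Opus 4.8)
The plan is to establish Proposition \ref{prop:initial_data} by combining the basic properties of the Green operators $G^\pm$ listed in Proposition \ref{prop:causal_propagator} with the well-posedness of the Cauchy problem for $P$. First I would prove part (a). Linearity of the map $f\mapsto G(f)$ is immediate from the linearity of $G^\pm$. To see that $G(f)$ is a solution, one applies $PG=GP=0$ on $C^\infty_0(M)$, which follows directly from $PG^\pm=\mathbb{I}$ and $G^\pm P|_{C^\infty_0(M)}=\mathbb{I}$. To see that $G(f)$ is spacelike compact, I would use the support property $\mathrm{supp}(G^\pm(f))\subseteq J^\pm(\mathrm{supp}(f))$: since $\mathrm{supp}(f)$ is compact, $\mathrm{supp}(G(f))\subseteq J^+(\mathrm{supp}(f))\cup J^-(\mathrm{supp}(f))$, and intersecting this with any spacelike Cauchy surface $\Sigma$ gives a compact set by Remark \ref{rem:propJ}(4) (or (3)), which ensures the Cauchy data of $G(f)$ on $\Sigma$ have compact support.

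Next I would show $P(C^\infty_0(M))\subseteq\ker G$: if $f=Ph$ with $h\in C^\infty_0(M)$, then $G(Ph)=G^-(Ph)-G^+(Ph)=h-h=0$ by property (2). The reverse inclusion $\ker G\subseteq P(C^\infty_0(M))$ is the crux: suppose $G(f)=0$, i.e. $G^-(f)=G^+(f)=:h$. Then $h\in C^\infty(M)$ has support contained in $J^+(\mathrm{supp}(f))\cap J^-(\mathrm{supp}(f))$, which is compact by Remark \ref{rem:propJ}(2); hence $h\in C^\infty_0(M)$. Moreover $Ph=PG^-(f)=f$ by property (1). Thus $f=Ph\in P(C^\infty_0(M))$, proving $\ker G=P(C^\infty_0(M))$. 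Surjectivity onto $\mathcal{S}(M)$ requires showing every spacelike compact solution $\varphi$ of $P\varphi=0$ is of the form $G(f)$. Here I would pick a Cauchy surface $\Sigma$ and a cutoff: choose two Cauchy surfaces $\Sigma_-$ in the past and $\Sigma_+$ in the future of the (spacelike compact) support data of $\varphi$, take a smooth function $\chi$ equal to $1$ to the future of $\Sigma_+$ and $0$ to the past of $\Sigma_-$, and set $f:=P(\chi\varphi)$. Because $\varphi$ is spacelike compact and $\chi$ transitions only in a time slab, $f$ has compact support. One then checks $G(f)=\varphi$ by verifying that $\chi\varphi-G^+(f)$ solves $P(\cdot)=0$ with trivial data in the past and applying uniqueness, and symmetrically for $G^-$; this is the standard argument showing $G^-(f)=\chi\varphi$ on a past region and $G^+(f)=(\chi-1)\varphi$ elsewhere, so that $G(f)=G^-(f)-G^+(f)=\varphi$.

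Finally, part (b) is a purely algebraic consequence of part (a): the short exact sequence
\[
0\longrightarrow P(C^\infty_0(M))\longrightarrow C^\infty_0(M)\xrightarrow{\;G\;}\mathcal{S}(M)\longrightarrow 0
\]
induces, by the first isomorphism theorem for vector spaces, the claimed isomorphism $[f]\mapsto G(f)$ between the quotient $C^\infty_0(M)/P(C^\infty_0(M))$ and $\mathcal{S}(M)$. One only needs to note that the map is well-defined on equivalence classes precisely because $\ker G=P(C^\infty_0(M))$, injective for the same reason, and surjective by (a).

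The main obstacle I anticipate is the surjectivity step: making the cutoff construction $f=P(\chi\varphi)$ precise and verifying rigorously, via the uniqueness of the Cauchy problem in Proposition \ref{prop:causal_propagator}(b) and the support properties of $G^\pm$, that $G(f)$ recovers $\varphi$. The compact-support claims for $f$ rely on a careful use of the fact that $\mathrm{supp}(\chi)$ and $\mathrm{supp}(1-\chi)$ each meet the spacelike-compact Cauchy data of $\varphi$ in compact sets, combined with $J^\pm$-compactness properties from Remark \ref{rem:propJ}. Everything else — linearity, the solution property, spacelike compactness of $G(f)$, and the kernel computation — follows quite directly from the stated properties of the Green operators and basic causal-set facts.
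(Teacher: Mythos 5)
Your proposal follows the paper's own proof essentially step for step: the same use of the support properties of $G^\pm$ and the compactness facts of Remark \ref{rem:propJ} for well-definedness, the same kernel computation via $G^+(f)=G^-(f)$ having compact support so that $f=P(G^+(f))$, the same cutoff construction for surjectivity, and the same appeal to the first isomorphism theorem for (b). The only slip is a sign in the surjectivity step: with the paper's conventions ($G^+$ retarded, supported in $J^+(\mathrm{supp}\,f)$; $G^-$ advanced), the correct identifications are $G^+(P(\chi\varphi))=\chi\varphi$ (both supported towards the future) and $G^-(P(\chi\varphi))=(\chi-1)\varphi$, so your choice $f=P(\chi\varphi)$ gives $G(f)=-\varphi$ rather than $\varphi$ — which is why the paper defines $f_\varphi\doteq -P(\chi^+\varphi)$; this is trivially repaired and does not affect the argument.
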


\begin{proof} First of all, notice that $\mathcal{S}(M)$ is a real vector space.
(b) is a consequence of (a), so we prove (a).
	We start by establishing that the linear map $L: C^\infty_0(M) \ni f \mapsto G(f) \in \mathcal{S}(M)$ is a well-defined homomorphism of vector spaces.
Form Proposition \ref{prop:causal_propagator},  $G$ exists, $G(f)$ is smooth, satisfies $PG(f)=0$ for $f \in C_0^\infty(M)$ and the map $L: C^\infty_0(M) \ni f \mapsto G(f)$ is linear to the full vector space of smooth solutions of KG equation. We need to prove that, more strongly, $G(f) \in \mathcal{S}(M)$. From the definition of $G$ we have
$\mbox{supp}(G(f)) \subseteq J^+(\mbox{supp}(f))\cup J^-(\mbox{supp}(f))$. This fact implies that $G(f)$ is spacelike compact, i.e.,  $G(f) \in \mathcal{S}(M)$ as wanted. Indeed, if  $K \subseteq M$ is compact, using (1) Remark \ref{rem:propJ} compactness, 
(a) 1 Remark \ref{rem:notation},
one sees that 
there are a {\em finite} number of points $p_i$ such that $\cup_i J^+(p_i) \supseteq K$ so that $\cup_i J^+(p_i) \supseteq J^+(K)$ ((d)(1) Remark \ref{rem:notation}).
We can always suppose that all $p_i$ lie in the past of every fixed smooth spacelike Cauchy surface $\Sigma$, moving each $p_i$ along a past-directed  causal curve.
 From (3) Remark \ref{rem:propJ}, $\cup_i J^+(p_i) \cap \Sigma$ is compact and thus $J^+(K)\cap \Sigma$ is contained in a compact set. With a similar argument one establishes that $J^-(K)\cap \Sigma$ is included in a compact set. Taking $K = \mbox{supp}(f)$, from  $\mbox{supp}(G(f)) \subseteq J^+(\mbox{supp}(f))\cup J^-(\mbox{supp}(f))$, we have that the Cauchy data of $G(f)$ on $\Sigma$ must be compactly supported and thus $G(f) \in \mathcal{S}(M)$.\\ 
Let us prove that 
$Ker(L) = P\left(C^\infty_0(M)\right)$. Since $GP=0$ we have that $Ker(M) \subseteq P[C^\infty_0(M)]$, we want to establish the converse inclusion.
 Suppose that $G(f)=0$ for $f\in C^\infty_0(M)$. The very definition of $G$ implies $G^+(f)=G^-(f)$. Properties of $G^\pm$ immediately yield $\textrm{supp}(G(f))\subseteq J^+(\textrm{supp}(f))\cap J^-(\textrm{supp}(f))$.  Now, since $\textrm{supp}(f)$ is compact and thus,  using (1) Remark \ref{rem:propJ} compactness, 
(a) 1 Remark \ref{rem:notation}, we find that $J^+(\textrm{supp}(f))\cap J^-(\textrm{supp}(f))$ is inclosed in the compact set $\cup_{j} J^+(p_j)\cap J^-(q_j)$ for a  finite set of points $p_j,q_j$. Consequently $\textrm{supp}(G(f))$ is compact as well.
 Hence Proposition \ref{prop:causal_propagator} entails $f=Pg$ where $g = G^+(f)\in C^\infty_0(M)$ as wanted. \\ To conclude, we prove that $L$ is surjective. 
Decompose $M = \mathbb{R} \times \Sigma$ as in Proposition \ref{BS} where the smooth spacelike Cauchy surface $\Sigma$ is moved by $t\in \mathbb R$ in $M$.
Consider any $\varphi\in\mathcal{S}(M)$ and let $\chi^+\equiv\chi^+(t)$ be a smooth function such that there exists $t_0,t_1\in\mathbb{R}$ for which $\chi^+$ vanishes for all $t \leq t_0$, while it is equal to $1$ for all $t\geq t_1$. Since $\varphi$ is spacelike compact,
it has Cauchy data included in a compact $K \subseteq\Sigma_{t_0}$ and thus $\mbox{supp}(\chi^+\varphi) \subset J^+(K)$ due to (b) of Proposition \ref{prop:causal_propagator}. Since 
$P((\chi^+\varphi))= P\varphi=0$ above $\Sigma_{t_1}$, hence $\mbox{supp}(P(\chi^+\varphi)) \subset J^+(K)\cap J^-(\Sigma_{t_1})$. Exploiting an argument already used above 
relying on (3) Remark \ref{rem:propJ}, we see that $\mbox{supp}(P(\chi^+\varphi))$ is compact.
 So $f_\varphi\doteq -P(\chi^+\varphi)\in C^\infty_0(M)$ and, if $\chi^-=1-\chi^+$, then $P(\chi^-\varphi) = -P(\chi^+\varphi)\in C^\infty_0(M)$. Hence $G(f_\varphi)= G^-(P(\chi^-\varphi))-G^+(-P(\chi^+\varphi)) = (\chi^-+\chi^+) \varphi = \varphi$. \qed
\end{proof}

\noindent The construction of solutions to \eqref{eq:dynamics} discussed above is not antithetical to assigning initial data on a Cauchy surface, being actually equivalent to such procedure. It has the additional advantage of being manifestly covariant as no choice of a specific initial value surface is involved. Hence, from now on and in view of Proposition \ref{prop:initial_data}, the key object of our investigation will become $\frac{C^\infty_0(M)}{P\left(C^\infty_0(M)\right)}$. The following proposition shows that this space can be decorated with additional structures.

\begin{proposition}\label{prop:symplectic_form} Referring to Proposition \ref{prop:initial_data}, consider the real vector space
\begin{equation}\label{eq:observables}
\mathcal{E}^{obs}(M)\doteq\frac{C^\infty_0(M)}{P\left(C^\infty_0(M)\right)} \equiv \mathcal{S}(M),
\end{equation}
The following facts hold.\\

\noindent {\bf (a)} The map $\sigma_M:\mathcal{E}^{obs}(M)\times\mathcal{E}^{obs}(M)\to\mathbb{R}$
\begin{equation}\label{eq:symplectic_form}
\sigma_M([f],[f^\prime])\doteq 
 G(f, f^\prime) \doteq 
 \int\limits_{M} f(x)\,G(f^\prime)(x)d\mu_g(x) \:\quad \forall [f],[f^\prime]\in\mathcal{E}^{obs}(M)
\end{equation}
where $d\mu_g$ is the metric induced measure,
is a well-defined  symplectic form which is {\bf weakly non-degenerate}, i.e., 
$\sigma_M(\varphi,\varphi')=0$ for all $\varphi\in \mathcal{E}^{obs}(M)$ implies 
$\varphi'=0$.\\

\noindent {\bf (b)} If $\varphi_f \doteq G(f)$ and $\varphi_{f'} \doteq G(f')$ and $\Sigma$ is a spacelike smooth Cauchy surface of $M$ with future-directed normal unit covector $n$ and metric induced measure $d\Sigma_g$,
\begin{equation}\label{eq:idsigmasigma}
\int_{\Sigma} \left(\varphi_f  \nabla_n \varphi_{f'} - \varphi_{f'}  \nabla_n \varphi_f \right) d\Sigma_g = \sigma_M([f],[f'])\:. 
\end{equation}
\end{proposition}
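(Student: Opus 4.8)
The plan is to handle parts (a) and (b) separately, with essentially all the work concentrated in (b).

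For part (a) I would argue as follows. Well-definedness in the second slot is immediate: if $f'=Ph'$ with $h'\in C_0^\infty(M)$ then $G(f')=GPh'=0$ by Proposition~\ref{prop:causal_propagator}. For the first slot, if $f=Ph$ then $\int_M(Ph)\,G(f')\,d\mu_g=\int_M h\,PG(f')\,d\mu_g=0$, the integration by parts producing no boundary term because $h$ is compactly supported, $P=\Box-m^2-\xi R$ is formally self-adjoint with respect to $d\mu_g$, and $PG(f')=0$. Bilinearity is clear. For antisymmetry I would establish that the formal adjoint of $G^\pm$ is $G^\mp$: inserting $f'=PG^\pm(f')$ into $\int_M G^\mp(f)\,f'\,d\mu_g$, the product $G^\mp(f)\,G^\pm(f')$ is supported in $J^\mp(\mathrm{supp}\,f)\cap J^\pm(\mathrm{supp}\,f')$, which is compact in a globally hyperbolic spacetime (by the covering argument with Remark~\ref{rem:propJ} already used in the proof of Proposition~\ref{prop:initial_data}), so one may move $P$ across and get $\int_M G^\mp(f)f'\,d\mu_g=\int_M f\,G^\pm(f')\,d\mu_g$; subtracting the $(+)$ and $(-)$ identities yields $G(f,f')=-G(f',f)$. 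Finally, weak non-degeneracy: if $\sigma_M([f],[f'])=0$ for all $f\in C_0^\infty(M)$ then $G(f')$ vanishes as a distribution, hence identically as the smooth function it is, so $[f']=0$ by the injectivity in Proposition~\ref{prop:initial_data}.

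For part (b), fix a smooth spacelike Cauchy surface $\Sigma$ and, by Proposition~\ref{BS}, realise $M\cong\mathbb{R}\times\Sigma$ with $\Sigma=\Sigma_0$ one of the leaves $\Sigma_t=\{t\}\times\Sigma$. Both sides of the asserted identity depend on $f$ only through $[f]\in\mathcal{E}^{obs}(M)$ (the right-hand side manifestly, the left-hand side since $\varphi_f=G(f)$ is determined by $[f]$), so I would first pass to a convenient representative. Following the surjectivity argument in the proof of Proposition~\ref{prop:initial_data}, pick $\chi^+\in C^\infty(\mathbb{R})$ with $\chi^+\equiv 0$ for $t\le\delta$ and $\chi^+\equiv 1$ for $t\ge 1$ (some $0<\delta<1$), put $\chi^-\doteq 1-\chi^+$, and observe that $\tilde f\doteq -P(\chi^+\varphi_f)=P(\chi^-\varphi_f)$ lies in $C_0^\infty(M)$ with $G(\tilde f)=\varphi_f$, hence $[\tilde f]=[f]$; its support lies in the slab $\{\delta\le t\le 1\}$ intersected with $\mathrm{supp}(\varphi_f)$, which is compact because $\varphi_f$ is spacelike compact. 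Replacing $f$ by $\tilde f$ we may thus assume $f=P(\chi^-\varphi_f)$, so that $\sigma_M([f],[f'])=\int_M f\,\varphi_{f'}\,d\mu_g=\int_M P(\chi^-\varphi_f)\,\varphi_{f'}\,d\mu_g$.

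Now the idea is to apply the Lorentzian divergence theorem, in its conservation-law form $\int_\Omega\nabla_\mu V^\mu\,d\mu_g=\int_{\Sigma_{t_*}}V^\mu n_\mu\,d\Sigma_g-\int_{\Sigma_0}V^\mu n_\mu\,d\Sigma_g$, to the current $V^\mu\doteq(\chi^-\varphi_f)\nabla^\mu\varphi_{f'}-\varphi_{f'}\nabla^\mu(\chi^-\varphi_f)$ on the slab $\Omega=\{0\le t\le t_*\}$ bounded by $\Sigma=\Sigma_0$ and a later Cauchy surface $\Sigma_{t_*}$ with $t_*>1$, where $n$ is the future-directed unit normal on the leaves. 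Since the zeroth-order part of $P$ cancels, $\nabla_\mu V^\mu=(\chi^-\varphi_f)P\varphi_{f'}-\varphi_{f'}P(\chi^-\varphi_f)=-\varphi_{f'}f$ (using $P\varphi_{f'}=0$ and $P(\chi^-\varphi_f)=f$), and as $\mathrm{supp}(f)\subseteq\Omega$ the left-hand side is $-\sigma_M([f],[f'])$. On $\Sigma_{t_*}$ one has $\chi^-\equiv 0$ and $\nabla_n\chi^-=0$, so that boundary term vanishes, while on $\Sigma=\Sigma_0$ one has $\chi^-\equiv 1$ and $\nabla_n\chi^-=0$, so
\begin{equation*}
-\sigma_M([f],[f'])=-\int_{\Sigma}\big(\varphi_f\nabla_n\varphi_{f'}-\varphi_{f'}\nabla_n\varphi_f\big)\,d\Sigma_g,
\end{equation*}
which is the claim. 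I expect the only real subtlety to be support control at spatial infinity: since $\Omega$ and $\varphi_f$ are noncompact one must check that the divergence theorem genuinely applies, i.e. that $\mathrm{supp}(V)\cap\Omega\subseteq\mathrm{supp}(\varphi_f)\cap\{0\le t\le t_*\}$ is compact; this follows from $\varphi_f$ being spacelike compact together with global hyperbolicity, by the same type of argument with Remark~\ref{rem:propJ} used in the proof of Proposition~\ref{prop:initial_data}. The other point requiring (elementary) care is the sign convention in the Lorentzian divergence theorem for a region bounded by two spacelike hypersurfaces, which I have fixed above.
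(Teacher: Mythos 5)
Your proof is correct, and part (a) coincides with the paper's argument step by step: well-definedness via formal self-adjointness of $P$ and $PG=GP=0$, antisymmetry via the adjoint relation $\int_M G^{\mp}(f)\,f'\,d\mu_g=\int_M f\,G^{\pm}(f')\,d\mu_g$ justified by compactness of $J^{\mp}(\mathrm{supp}\,f)\cap J^{\pm}(\mathrm{supp}\,f')$, and weak non-degeneracy from $\ker G=P\left(C^\infty_0(M)\right)$.

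For part (b) you take a mildly but genuinely different route. The paper first proves \eqref{eq:idsigmasigma} for a Cauchy surface $\Sigma$ lying in the \emph{past} of $\mathrm{supp}(f)\cup\mathrm{supp}(f')$, combining \eqref{eq:idGG} with Green's second identity in a slab above $\Sigma$, and then separately shows that the left-hand side is independent of $\Sigma$ by a second application of Green's identity to $\varphi_f\Box\varphi_{f'}-\varphi_{f'}\Box\varphi_f=0$. You instead keep $\Sigma$ arbitrary and move the \emph{representative}: replacing $f$ by $\tilde f=P(\chi^-\varphi_f)$, supported in a compact slab to the future of $\Sigma$, you get the identity for any $\Sigma$ from a single application of the divergence theorem to the current $V$, with the cutoff $\chi^-$ killing the flux through the upper surface. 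This is a clean reorganization: it trades the paper's two Green-identity steps for one, at the cost of re-invoking the surjectivity construction from Proposition \ref{prop:initial_data} (which you correctly cite, including the compactness of $\mathrm{supp}(\varphi_f)$ intersected with a slab). Your attention to the two genuine technical points — compact support of $V$ on the slab so the divergence theorem applies, and the orientation/sign conventions for the Lorentzian Gauss law — is appropriate; the overall sign of \eqref{eq:idsigmasigma} does hinge on the conventions for $n$ and for $G=G^--G^+$, which the paper itself does not spell out in its one-line proof, so flagging rather than belaboring it is the right call.
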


\begin{proof}
(a)	As $P$ is formally self-adjoint and $P Gf = G Ph =0$ for $h\in C_0^\infty(M)$, the integral in (\ref{eq:symplectic_form}) 
does not depend on the choice of representatives of $[f]$ and $[f^\prime]$, so that
$\sigma_M$ is a well-defined bi-linear form on $\mathcal{E}^{obs}(M)$. 
To prove that $\sigma_M$ is antisymmetric take $f,f^\prime\in C^\infty_0(M)$ and, referring to Proposition  \ref{BS}, fix $\Sigma_{t_2}$ in the future of the compact set $\mbox{supp}(f) \cup \mbox{supp}(f^\prime)$
and $\Sigma_{t_1}$ in the past of the same set. Exploiting an argument similar to that used 
in the last part of the proof of Proposition \ref{prop:initial_data} it is easy to establish that
$J^+(\mbox{supp}(f))\cap J^-(\mbox{supp}(f^\prime))$ is included in a compact set $K \subset J^+(\Sigma_{t_1})\cap J^-(\Sigma_{t_2})$.
 Since $P$ is a formally self-adjoint, we have
	\begin{gather}
	\int\limits_{M} f(x)\,G^-(f^\prime)(x) d\mu_g =\int\limits_{M}P(G^+(f))(x)\,G^-(f^\prime)(x) d\mu_g=\nonumber \\
	=\int\limits_{M}G^+(f)(x)\,P(G^-(f^\prime))(x) d\mu_g\, =\int\limits_{M} G^+(f)(x)\,f^\prime(x) d\mu_g \label{eq:idGG}
	\end{gather}
where the intermediate passage is possible because $M \ni x \mapsto G^+(f)(x)G^-(f^\prime)(x)$ is smooth and compactly supported 
(the support lies in $K$). Since $G= G^--G^+$, (\ref{eq:symplectic_form}) implies $\sigma_M([f],[f^\prime])=-\sigma_M([f^\prime],[f])$ for all pairs $[f],[f^\prime]\in\mathcal{E}^{obs}(M)$. To prove that $\sigma_M$ is weakly non degenerate, assume that, for $[f^\prime]\in\mathcal{E}^{obs}(M)$, it holds  $\sigma_M([f],[f^\prime])=0$ for all $[f]\in\mathcal{E}^{obs}(M)$. Since the right hand side of \eqref{eq:symplectic_form} is nothing but the standard, non-degenerate pairing between $C^\infty_0(M)$ and $C^\infty(M)$, $G(f^\prime)$ must vanish. Hence $f^\prime\in\ker(G) = P\left(C^\infty_0(M)\right)$ 
which means $[f^\prime]=0$. (b) If $\Sigma$ is in the past of the supports of $f$ and $f'$, (\ref{eq:idsigmasigma}) easily arises from  (\ref{eq:idGG}) and Green's second identity for $\nabla$ and $\Box$, working in a region between $\Sigma$ and another Cauchy surface $\Sigma'$ in the future of the supports. Next the left-hand side of (\ref{eq:idsigmasigma}) can be proved to be independent from the choice of $\Sigma$ with an analogous argument based on  Green's second identity applied to $0= \varphi_f P \varphi_{f'}- \varphi_{f'} P \varphi_f =
\varphi_f \Box\varphi_{f'}- \varphi_{f'} \Box \varphi_f $ in the region between two Cauchy surfaces.\qed
\end{proof}

\noindent With these results, we have all ingredients necessary to construct the algebra of (quantum) observables for a Klein-Gordon field on a globally hyperbolic spacetime. As an intermediate step let us construct the complex unital associative  $^*$-algebra called {\bf universal tensor algebra} constructed upon the vector-space tensor product
\begin{equation}\label{eq:universal_tensor_algebra}
\mathcal{T}^{obs}(M)\doteq \mathbb{C}\otimes\bigoplus_{n=0}^\infty\left(\mathcal{E}^{obs}(M)\right)^{\otimes n},\quad(\mathcal{E}^{obs}(M))^{\otimes 0}\doteq \mathbb{R},
\end{equation}
whose elements are therefore {\em complex} linear combinations of the {\em definitively vanishing} sequences of real functions $F = (\varphi^{(0)},\varphi^{(1)}, \ldots)$ with $\varphi^{(k)} \in \left(\mathcal{E}^{obs}(M)\right)^{\otimes k}$. The associative non-commutative algebra product 
$$\mathcal{T}^{obs}(M)\times \mathcal{T}^{obs}(M)  \ni (F, G)\mapsto FG \in \mathcal{T}^{obs}(M)$$
reads for {\em real} sequences} 
 $F = (\varphi^{(0)},\varphi^{(1)}, \ldots)$ and $G = (\psi^{(0)},\psi^{(1)}, \ldots)$
$$
F G := \left( \sum_{k+h=0} \varphi^{(k)} \otimes \psi^{(h)},  \sum_{k+h=1}    \varphi^{(k)} \otimes \psi^{(h)},  \sum_{k+h=2}    \varphi^{(k)} \otimes \psi^{(h)}, \ldots \right).
$$
and this product is finally extended to the whole $\mathcal{T}^{obs}(M) \times \mathcal{T}^{obs}(M)$ by requiring complex bilinearity. The identity is $\mathbb{I} = (1,0,0, \ldots)$,  and the antlinear  involutive $^*$-operation is the unique {\em anti}linear extension to $\mathcal{T}^{obs}(M)$ of the maps, for $k=0,1,2,\ldots$ 
$$\left(\mathcal{E}^{obs}(M)\right)^{\otimes k} \ni   \varphi_1\otimes \cdots \otimes  \varphi_k \mapsto 
( \varphi_1\otimes \cdots \otimes  \varphi_k)^* \doteq  \varphi_k\otimes \cdots \otimes  \varphi_1 \in \left(\mathcal{E}^{obs}(M)\right)^{\otimes k}\:.$$ 
Observe that, while it encodes the dynamics \eqref{eq:dynamics} in the quotient defining $\mathcal{E}^{obs}(M)$, it bears no information on the so-called {\em canonical commutation relations}. In order to account also this datum, we need to construct a suitable quotient as manifest from the following definition:

\begin{definition}\label{def:algebra_of_observables}
We call $\mathcal{A}^{obs}(M)$ the {\bf algebra of observables} for a real Klein-Gordon field on a globally hyperbolic spacetime $(M,g)$, obeying the equation of motion \eqref{eq:dynamics}, the complex unital associative $^*$-algebra built as the vector-space quotient\begin{equation}\label{eq:algebrabservables}
\mathcal{A}^{obs}(M)\doteq\frac{\mathcal{T}^{obs}(M)}{\mathcal{I}^{obs}(M)},
\end{equation}
where  $\mathcal{I}^{obs}(M)$ is the two-sided $^*$-ideal of $\mathcal{T}^{obs}(M)$ (which therefore is  also a vector subspace of $\mathcal{T}^{obs}(M)$) finitely generated by the all elements of the form (assuming $\hbar=1$) $$\varphi\otimes \varphi'-\varphi'\otimes \varphi-i\sigma_M(\varphi,\varphi')\mathbb{I}\:, \quad \forall \varphi, \varphi'\in \mathcal{E}^{obs}(M)\:. $$ 
 The $^*$-operation and the algebra produt are the same as those of $\mathcal{T}^{obs}(M)$ which descend to the quotient:
$[F]^* \doteq [F^*]$ and $[F][G] \doteq [FG]$. The unit is $[\mathbb{I}]$ which we shall symply denote by $\mathbb{I}$.\\ An element $a \in \mathcal{A}^{obs}(M)$ is said to be an {\bf observable} if it is {\em self-adjoint}, {\em i.e.}, $a=a^*$.
\end{definition}
In other words two elements $F,G \in \mathcal{T}^{obs}(M)$ are equivalent 
when $F-G$ is a finite sum of finite products of elements of $\mathcal{T}^{obs}(M)$ and, in each product, at least one of the factors has the form $\varphi\otimes \varphi'-\varphi'\otimes \varphi-i\sigma_M(\varphi,\varphi')\mathbb{I}$ for some $\varphi, \varphi'\in \mathcal{E}^{obs}(M)$. The (double) equivalence classes $[[f]]$, $[[f']]$ in the quotient $\mathcal{A}^{obs}(M)$ --where  the external $[\:\cdot\:]$ refers to the quotient in (\ref{eq:algebrabservables}), the internal one to (\ref{eq:observables})-- satisfy the {\em canonical commutation rules} $$[[f]][[f']]- [[f']][[f]] =i\sigma_M([f],[f'])\mathbb{I} =iG(f,f')\mathbb{I}.$$
These rules are encapsulated in the product of the algebra $\mathcal{A}^{obs}(M)$ arising from the quotient procedure with respect to a two-sided ideal.
\begin{remark}\label{remgenconstr} It is worth stressing that all the described picture relies only on a 
real symplectic space, in our case $(\mathcal{E}^{obs}(M), \sigma_M)$, which could be replaced for a generic symplectic space $(\mathcal{S}, \sigma)$ with no intepretation in QFT. We will exploit this opportunity shortly.
\end{remark}
Explicitly referring to quantum fields, this construction of the algebra of observables has been thoroughly studied in the literature starting from the seminal paper of Dimock \cite{Dimock}.
There are well-known alternative and equivalent constructions of $\mathcal{A}^{obs}(M)$
in terms of  {\em field operators} \cite{Khavkine:2014mta}.  With the present approach the {\bf field operator} is  defined as the map
\begin{equation}\label{eq:fieldoperator}C_0^\infty(M) \ni f \mapsto \phi(f)  \doteq [[f]] \in \mathcal{A}^{obs}(M)\end{equation}
where  the external $[\:\cdot\:]$ refers to the quotient in (\ref{eq:algebrabservables}), the internal one to (\ref{eq:observables}).
\begin{proposition}\label{prop:fieldoperator} Let $\mathcal{A}^{obs}(M)$ be the algebra of observables for a real, Klein-Gordon field on a globally hyperbolic spacetime $(M,g)$, obeying \eqref{eq:dynamics}. The field operator (\ref{eq:fieldoperator}) satisfes if $f, f^\prime \in C^\infty_0(M)$ and $a,b \in \mathbb R$
\begin{itemize}
		\item {\bf $\mathbb R$-linearity}: $\phi(af + bf^\prime)= a\phi(f)+ b\phi(f^\prime)$, 
		\item {\bf self-adjointness}: $\phi(f)^* = \phi(f)$,
\item {\bf Klein-Gordon equation in a distributional sense}: $\phi(Pf)=0$,
\item {\bf Canonical Commutation Relations}: $[\phi(f), \phi(f^\prime)] = iG(f,f^\prime)\mathbb{I}$ (assuming $\hbar=1$).
	\end{itemize}
Furthermore, $\mathbb{I}$ and all $\phi(f)$ for all $f\in C^\infty_0(M)$ are {\bf generators} of $\mathcal{A}^{obs}(M)$, {\em i.e.}, every $a \in \mathcal{A}^{obs}(M)$ is a finite complex combination of finite products of those elements.
\end{proposition}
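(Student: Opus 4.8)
The plan is to verify each of the listed properties by directly unwinding the two nested quotient constructions that define $\mathcal{A}^{obs}(M)$; once the definitions are tracked carefully, every assertion becomes essentially immediate. First I would record that the assignment $f \mapsto \phi(f)$ is the composition of the canonical projection $C^\infty_0(M) \ni f \mapsto [f] \in \mathcal{E}^{obs}(M)$, the identification of $\mathcal{E}^{obs}(M)$ with the degree-one summand of $\mathcal{T}^{obs}(M)$, and the projection onto $\mathcal{A}^{obs}(M) = \mathcal{T}^{obs}(M)/\mathcal{I}^{obs}(M)$; all three maps are $\mathbb{R}$-linear, which yields $\mathbb{R}$-linearity of $\phi$. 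For self-adjointness, I would note that the $^*$-operation on $\mathcal{T}^{obs}(M)$, restricted to the degree-one summand, is the identity (it is the $k=1$ instance of the factor-reversal map, acting on a real element), and that $^*$ descends to the quotient, so $\phi(f)^* = [[f]]^* = [[f]^*] = [[f]] = \phi(f)$.

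For the Klein-Gordon property, I would observe that for $f \in C^\infty_0(M)$ the element $Pf$ lies in $P(C^\infty_0(M))$, so $[Pf] = 0$ already in $\mathcal{E}^{obs}(M) = C^\infty_0(M)/P(C^\infty_0(M))$, whence $\phi(Pf) = [[Pf]] = 0$ in $\mathcal{A}^{obs}(M)$. For the canonical commutation relations, I would compute, using that the algebra product of $\mathcal{T}^{obs}(M)$ restricted to degree-one elements is the tensor product, that $\phi(f)\phi(f') - \phi(f')\phi(f)$ is the class in $\mathcal{A}^{obs}(M)$ of $[f]\otimes[f'] - [f']\otimes[f]$; by the very definition of the two-sided ideal $\mathcal{I}^{obs}(M)$, this element differs from $i\sigma_M([f],[f'])\mathbb{I}$ by a generator of $\mathcal{I}^{obs}(M)$, so the two have the same class, and $\sigma_M([f],[f']) = G(f,f')$ by Proposition \ref{prop:symplectic_form}(a), giving $[\phi(f),\phi(f')] = iG(f,f')\mathbb{I}$.

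Finally, for the generating property, I would use that $\mathcal{T}^{obs}(M)$ is, by its construction as a universal tensor algebra, generated as a unital associative algebra by $\mathbb{I}$ together with the degree-one summand $\mathcal{E}^{obs}(M)$: every element is a finite complex combination of finite products $\varphi_1 \otimes \cdots \otimes \varphi_n$ with $\varphi_i \in \mathcal{E}^{obs}(M)$, and each such product equals the algebra product of the degree-one elements $\varphi_i$. Since the projection $C^\infty_0(M) \to \mathcal{E}^{obs}(M)$ is surjective, each $\varphi_i$ equals $[f_i]$ for some $f_i \in C^\infty_0(M)$, and applying the (surjective, algebra-homomorphic) projection onto $\mathcal{A}^{obs}(M)$ shows that every $a \in \mathcal{A}^{obs}(M)$ is a finite complex combination of finite products of the $\phi(f_i)$ and $\mathbb{I}$. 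There is no real obstacle here; the only point demanding a little care is the bookkeeping identifying the tensor-algebra product of degree-one elements with the iterated tensor product, so that the three projection/inclusion maps compose as claimed and the quotient structures are respected.
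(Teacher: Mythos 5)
Your proposal is correct and is simply a careful spelling-out of the paper's own (one-line) proof, which states that everything follows immediately from the definition of the field operator and of $\mathcal{A}^{obs}(M)$; each of your verifications — linearity of the composed quotient maps, triviality of the $^*$-operation on real degree-one elements, $[Pf]=0$ in $\mathcal{E}^{obs}(M)$, the ideal relation giving the CCR together with $\sigma_M([f],[f'])=G(f,f')$, and generation of the tensor algebra by its degree-one summand — is exactly the intended content. No gaps.
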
 
\begin{proof} Everything immediately follows form (\ref{eq:fieldoperator}) and the definition of $\mathcal{A}^{obs}(M)$.\qed
\end{proof}
Sometimes it is desirable to work directly with fields smeared with complex functions as discussed in the introduction of the present book. The extension of $\phi(f)$ to a $\mathbb{C}$-linear map over complex valued compactly supported smooth functions is as follows: if $f$ is complex, $\phi(f) \doteq \phi(\mbox{Re}(f))+ i \phi(\mbox{Im}(f)) $. In this case, the self-adjointness property stated in the previous proposition becomes $\phi(f)^* = \phi(\overline{f})$. \\
We report here some of the properties of $\mathcal{A}^{obs}(M)$, which are more interesting for our analysis \cite{Dappiaggi:2016fwc,Khavkine:2014mta}.
 If $N \subseteq M$ is open, $\mathcal{A}^{obs}(N;M)$ denotes the unital sub $^*$-algebra of elements of $\mathcal{A}^{obs}(M)$ {\bf supported in $N$}, {\em i.e.}, linear combinations  of $\mathbb{I}$ and products of elements $\phi(f)$ with $\mbox{supp}(f) \subseteq N$.
\begin{proposition}\label{prop:propA}
	Let $\mathcal{A}^{obs}(M)$ be the algebra of observables 
as in Definition \ref{def:algebra_of_observables} 
for a real, Klein-Gordon field obeying \eqref{eq:dynamics} on a globally hyperbolic spacetime $(M,g)$. The following holds.\\
	{\bf (a)}  {\bf Causality}: Any two elements of $\mathcal{A}^{obs}(M)$ supported in two causally disjoint regions of $M$  commute.\\
	{\bf (b)} {\bf Time-slice axiom}: Let $N \subseteq M$ be a causally-convex ((3) Remark \ref{rem:notation}) open subset which is globally hyperbolic spacetime if endowed with the metric $g|_N$. If $N$ includes a Cauchy surface of $M$, then $\mathcal{A}^{obs}(M)=\mathcal{A}^{obs}(N;M)$.
\end{proposition}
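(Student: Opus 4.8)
\emph{Part (a).} The plan is to reduce commutativity of $\mathcal{A}^{obs}(N;M)$ and $\mathcal{A}^{obs}(N';M)$ to their generators. If $\mathrm{supp}(f)\subseteq N$ and $\mathrm{supp}(f')\subseteq N'$ with $N,N'$ causally disjoint, then by the support property of the Green operators in Proposition~\ref{prop:causal_propagator} one has $\mathrm{supp}(G(f'))\subseteq J^+(\mathrm{supp}(f'))\cup J^-(\mathrm{supp}(f'))\subseteq J^+(N')\cup J^-(N')$, a set disjoint from $N\supseteq\mathrm{supp}(f)$ by the very definition of causal disjointness; hence $G(f,f')=\int_M f\,G(f')\,d\mu_g=0$, so $[\phi(f),\phi(f')]=iG(f,f')\mathbb{I}=0$ by Proposition~\ref{prop:fieldoperator}. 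To pass from generators to the generated unital subalgebras I would invoke the elementary fact (a twofold use of $[ab,c]=a[b,c]+[a,c]b$ together with $[\mathbb{I},\cdot]=0$): the commutant in $\mathcal{A}^{obs}(M)$ of the set $\{\phi(f'):\mathrm{supp}(f')\subseteq N'\}$ is a unital subalgebra, it contains every $\phi(f)$ with $\mathrm{supp}(f)\subseteq N$, hence it contains $\mathcal{A}^{obs}(N;M)$; and for any fixed $a\in\mathcal{A}^{obs}(N;M)$ its commutant is a unital subalgebra containing all generators of $\mathcal{A}^{obs}(N';M)$, hence $\mathcal{A}^{obs}(N';M)$ itself. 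This step is routine and I expect no genuine obstacle.

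\emph{Part (b): reduction to a single cutoff.} As $\mathcal{A}^{obs}(N;M)$ is by construction a unital sub-$^*$-algebra of $\mathcal{A}^{obs}(M)$, only the inclusion $\mathcal{A}^{obs}(M)\subseteq\mathcal{A}^{obs}(N;M)$ needs proof, and since $\mathcal{A}^{obs}(M)$ is generated by the field operators it is enough to place each $\phi(f)$, $f\in C^\infty_0(M)$, inside $\mathcal{A}^{obs}(N;M)$. Because $\phi(Pk)=0$ for $k\in C^\infty_0(M)$, the element $\phi(f)$ depends only on $[f]\in\mathcal{E}^{obs}(M)$, so it suffices to exhibit $h\in C^\infty_0(M)$ with $\mathrm{supp}(h)\subseteq N$ and $[h]=[f]$: then $\phi(f)=\phi(h)\in\mathcal{A}^{obs}(N;M)$. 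I would produce such an $h$ by re-running the surjectivity argument in the proof of Proposition~\ref{prop:initial_data}, but with the cutoff function chosen so that its transition region is squeezed inside $N$.

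\emph{Part (b): the geometric input.} The key geometric lemma is that $N$ contains two Cauchy surfaces $\Sigma_\pm$ of $M$ with $\Sigma_-\subseteq I^-(\Sigma_+;M)$ and with the closed band $B\doteq J^+(\Sigma_-;M)\cap J^-(\Sigma_+;M)$ still contained in $N$. For this, $(N,g|_N)$ being globally hyperbolic, Proposition~\ref{BS} provides a Cauchy temporal function on $N$; I would check that each of its level sets is also a Cauchy surface of $M$, using the causal convexity of $N$ (so that for any inextensible causal curve $\gamma$ of $M$ the set $\gamma\cap N$ is connected, hence an inextensible causal curve of $N$, and so that two points of a level set timelike-related in $M$ would be joined by a timelike curve lying in $N$, contradicting achronality in $N$) together with the hypothesis that some Cauchy surface $\Sigma_0$ of $M$ lies in $N$ (so that $\gamma$ really does enter $N$). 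Choosing two nested such level sets as $\Sigma_\pm$, the inclusion $B\subseteq N$ follows once more from causal convexity, since every point of $B$ lies on a causal curve joining $\Sigma_-\subset N$ to $\Sigma_+\subset N$.

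\emph{Part (b): conclusion, and the hard point.} Set $\varphi\doteq G(f)\in\mathcal{S}(M)$ and pick $\chi^+\in C^\infty(M)$ with $\chi^+\equiv 0$ on $J^-(\Sigma_-;M)$, $\chi^+\equiv 1$ on $J^+(\Sigma_+;M)$ and $\mathrm{supp}(d\chi^+)\subseteq B$, built by composing a $[0,1]$-valued profile with the $N$-temporal function on $B$ and extending by the constants $0$ and $1$; put $h\doteq -P(\chi^+\varphi)=P(\chi^-\varphi)$ with $\chi^-\doteq 1-\chi^+$. Then $P\varphi=0$ gives $\mathrm{supp}(h)\subseteq\mathrm{supp}(d\chi^+)\cap\mathrm{supp}(\varphi)\subseteq B\cap(J^+(\mathrm{supp}(f))\cup J^-(\mathrm{supp}(f)))$, which is closed and, by the same compactness argument using Remarks~\ref{rem:notation} and~\ref{rem:propJ} already employed in the proofs of Propositions~\ref{prop:initial_data} and~\ref{prop:symplectic_form}, contained in a compact set; hence $h\in C^\infty_0(M)$ and $\mathrm{supp}(h)\subseteq B\subseteq N$. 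Finally, $\chi^-\varphi$ has support in $J^-$ of a compact set and $\chi^+\varphi$ in $J^+$ of a compact set, so the extensions of $G^-P=\mathbb{I}$ and $G^+P=\mathbb{I}$ to future- resp.\ past-compact functions (used exactly as in the proof of Proposition~\ref{prop:initial_data}) yield $G(h)=G^-(P(\chi^-\varphi))-G^+(-P(\chi^+\varphi))=(\chi^-+\chi^+)\varphi=\varphi=G(f)$, so $f-h\in\ker G=P(C^\infty_0(M))$ by Proposition~\ref{prop:initial_data}(a), i.e.\ $[h]=[f]$, as required. I expect the only real difficulty to be the geometric lemma of the previous paragraph and the bookkeeping of which supports are past- or future-compact; the rest reproduces computations already carried out above.
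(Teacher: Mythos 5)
Your proposal is correct and follows essentially the same route as the paper's proof: for (a) the vanishing of $G(f,f')$ on generators supported in causally disjoint regions plus the Leibniz rule for commutators, and for (b) the rewriting $\phi(f)=\phi(h)$ with $\mathrm{supp}(h)$ squeezed between two Cauchy surfaces of $N$ via the cutoff construction from the surjectivity part of Proposition~\ref{prop:initial_data}. You have merely filled in the details the paper leaves implicit (the check that Cauchy surfaces of $N$ are Cauchy surfaces of $M$, and the containment of the band $B$ in $N$ via causal convexity), and these are carried out correctly.
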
 
\begin{proof} (a) If $(J^+(N) \cup J^-(N)) \cap N' = \emptyset$
and $\mbox{supp}(f) \subseteq N$ while  $\mbox{supp}(f^\prime) \subseteq N'$,
	then $[\phi(f), \phi(f^\prime)]= iG(f,f^\prime) \mathbb{I}=0$ from the causal properties of $G^\pm$ established in Proposition \ref{prop:causal_propagator}. This fact extends to generic elements of $\mathcal{A}^{obs}(N;M)$ and $\mathcal{A}^{obs}(N';M)$ on account of the properties of the commutator. (b) Under the given hypotheses, one can prove that every Cauchy surface of $(N,g|_N)$ is a Cauchy surface of $M$. If $f \in C^\infty_0(M)$, the argument exploited at the end of the proof of Proposition \ref{prop:initial_data}, proves that there exists $h_f \in C^\infty_0(M)$ with support included between two Cauchy surfaces $\Sigma_{t_1}$ and $\Sigma_{t_2}$ of $N\subset M$ with $[f]=[h_f]$. Hence $\mathcal{A}^{obs}(N;M) \ni \phi(h_f)=\phi(f) \in \mathcal{A}^{obs}(M)$. Consequently $\mathcal{A}^{obs}(M) \subseteq \mathcal{A}^{obs}(N;M)$ which implies (b) because
$\mathcal{A}^{obs}(M) \supseteq \mathcal{A}^{obs}(N;M)$ by definition.\qed
\end{proof}

\begin{proposition}\label{lem:isometries}
	Let $\mathcal{A}^{obs}(M)$ be as in Proposition \ref{prop:propA} and let $\chi :M\to M$ be an isometry of $(M,g)$, {\it i.e.}, a diffeomorphism with $\chi^*g=g$. The following holds.\\
{\bf (a)} $\chi$ induces a $^*$-isomorphism $\alpha_{\chi}:\mathcal{A}^{obs}(M)\to\mathcal{A}^{obs}(M)$ which is completely specified by its action on the generators: For every $\phi(f) \in\mathcal{E}^{obs}(M)$, 
	$$\alpha_{\chi}\phi(f)\doteq \phi(\chi_*(f)),$$
	where $(\chi_*(f))(x)=f(\chi^{-1}(x))$.\\
{\bf (b)} If $\chi :M\to M$ is another isometry, we have $\alpha_{\chi} \circ \alpha_{\chi'}= \alpha_{\chi \circ \chi'}$ and $\alpha_{id}= id$.
\end{proposition}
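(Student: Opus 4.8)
The plan is to build $\alpha_\chi$ in three stages: first transport test functions by $\chi$, then descend to the symplectic space $\mathcal{E}^{obs}(M)$, and finally lift to the tensor algebra and pass to the quotient defining $\mathcal{A}^{obs}(M)$. I would begin by recording the elementary properties of the pushforward $\chi_*:C^\infty_0(M)\to C^\infty_0(M)$, $(\chi_*f)(x)=f(\chi^{-1}(x))$: it is an $\mathbb{R}$-linear bijection with inverse $(\chi^{-1})_*$, it satisfies $\mathrm{supp}(\chi_*f)=\chi(\mathrm{supp}\,f)$, and $(\chi\circ\chi')_*=\chi_*\circ\chi'_*$. Since $\chi$ is an isometry, the scalar curvature obeys $R\circ\chi=R$ and the d'Alembertian commutes with $\chi^*$, hence $\chi_*P=P\chi_*$. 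Consequently $\chi_*$ leaves $P(C^\infty_0(M))$ invariant and descends to an $\mathbb{R}$-linear bijection $\hat\chi:\mathcal{E}^{obs}(M)\to\mathcal{E}^{obs}(M)$, $\hat\chi[f]=[\chi_*f]$, with $\hat\chi^{-1}=\widehat{\chi^{-1}}$.

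The key step is to verify that $\hat\chi$ is a symplectomorphism of $(\mathcal{E}^{obs}(M),\sigma_M)$. For this I would invoke the uniqueness of the Green operators in Proposition \ref{prop:causal_propagator}(a): assuming $\chi$ preserves the time orientation it also preserves the causal relation, $\chi(J^\pm(S))=J^\pm(\chi(S))$, so the operators $\chi_* G^\pm (\chi^{-1})_*$ again satisfy $P(\cdot)=\mathbb{I}$ and $(\cdot)P=\mathbb{I}$ on $C^\infty_0(M)$ together with the support condition $\mathrm{supp}((\chi_* G^\pm (\chi^{-1})_*)f)\subseteq J^\pm(\mathrm{supp}\,f)$; uniqueness then forces $\chi_* G^\pm=G^\pm\chi_*$, whence $\chi_* G=G\chi_*$. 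Since an isometry preserves the measure $d\mu_g$, the change of variables $x\mapsto\chi(x)$ yields $\sigma_M(\hat\chi[f],\hat\chi[f'])=\int_M (\chi_*f)\,G(\chi_*f')\,d\mu_g=\int_M (\chi_*f)\,\chi_*(Gf')\,d\mu_g=\int_M f\,(Gf')\,d\mu_g=\sigma_M([f],[f'])$. (If $\chi$ reverses the time orientation one gets instead $\chi_*G=-G\chi_*$, so $\hat\chi$ is anti-symplectic and the construction below produces a $\mathbb{C}$-antilinear $^*$-automorphism; in what follows I assume, as is standard here, that isometries are time-orientation preserving.)

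It then remains to promote $\hat\chi$ to the algebra, which is the routine functor from symplectic spaces to CCR $^*$-algebras. Being $\mathbb{R}$-linear, $\hat\chi$ extends uniquely to a unital $\mathbb{C}$-linear $^*$-homomorphism $T_{\hat\chi}$ of $\mathcal{T}^{obs}(M)$ acting on homogeneous elements by $\varphi_1\otimes\cdots\otimes\varphi_n\mapsto\hat\chi\varphi_1\otimes\cdots\otimes\hat\chi\varphi_n$; it respects the $^*$-operation because the latter merely reverses tensor factors, and it is bijective with inverse $T_{\widehat{\chi^{-1}}}$. The symplectic invariance just established shows that $T_{\hat\chi}$ sends each generator $\varphi\otimes\varphi'-\varphi'\otimes\varphi-i\sigma_M(\varphi,\varphi')\mathbb{I}$ of $\mathcal{I}^{obs}(M)$ to $\hat\chi\varphi\otimes\hat\chi\varphi'-\hat\chi\varphi'\otimes\hat\chi\varphi-i\sigma_M(\hat\chi\varphi,\hat\chi\varphi')\mathbb{I}$, again such a generator; hence $T_{\hat\chi}(\mathcal{I}^{obs}(M))=\mathcal{I}^{obs}(M)$ and $T_{\hat\chi}$ descends to a $^*$-automorphism $\alpha_\chi$ of $\mathcal{A}^{obs}(M)$ with $\alpha_\chi\phi(f)=[[\chi_*f]]=\phi(\chi_*f)$. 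Uniqueness of $\alpha_\chi$ subject to this action is immediate because $\mathbb{I}$ and the $\phi(f)$ generate $\mathcal{A}^{obs}(M)$ (Proposition \ref{prop:fieldoperator}). For part (b), $(\chi\circ\chi')_*=\chi_*\circ\chi'_*$ gives $\alpha_{\chi\circ\chi'}\phi(f)=\phi(\chi_*\chi'_*f)=\alpha_\chi\alpha_{\chi'}\phi(f)$ on generators, and since both sides are $^*$-homomorphisms they agree everywhere; likewise $\alpha_{\mathrm{id}}\phi(f)=\phi(f)$ forces $\alpha_{\mathrm{id}}=\mathrm{id}$.

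The main obstacle, and really the only non-formal point, is the invariance $\chi_*G=G\chi_*$ of the causal propagator (with the attendant bookkeeping of the time-orientation issue); everything after that is the standard lift of a symplectic transformation to the associated CCR $^*$-algebra.
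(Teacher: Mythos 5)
Your proof is correct and follows essentially the same route as the paper's: show that $[f]\mapsto[\chi_*f]$ is a well-defined symplectomorphism of $(\mathcal{E}^{obs}(M),\sigma_M)$ and then lift it to the CCR $^*$-algebra (the paper delegates this lift to a cited result, Proposition 5.2.20 of \cite{Khavkine:2014mta}, where you carry it out by hand on the tensor algebra and the ideal). If anything you are more careful than the paper on the one genuinely non-formal point, namely that $\sigma_M$-invariance requires not just invariance of $d\mu_g$ but also $\chi_*G=G\chi_*$, which you correctly derive from uniqueness of the Green operators together with the (tacitly assumed) preservation of time orientation.
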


\begin{proof}
	(a) In view of Proposition 5.2.20 in \cite{Khavkine:2014mta}, it suffices to prove that the vector space homomorphisms  $\Gamma_\chi:  [f] \mapsto [\chi_*(f)]$ and $ \Gamma_{\chi^{-1}}:  [f] \mapsto [(\chi^{-1})_*(f)]$, where $[f]
\in  C^\infty_0(M)/P(C^\infty_0(M))$, preserve the symplectic form and $\Gamma_\chi \circ \Gamma_{\chi^{-1}} = \Gamma_{\chi^{-1}} \circ \Gamma_\chi = id$. These fact are a by-product of $\chi$ and $\chi^{-1}$ being isometries while
$\sigma_M$ is isometry-invariant from (\ref{eq:symplectic_form}) since $\mu_g$ is such. (b) arises per direct inspection.\qed
\end{proof}

\section{Observables on null infinity}\label{sec:observables-on-null-infinity}

This section concerns a question, which is apparently only marginally related to the previous analysis, namely  whether it is possible to assign a $^*$-algebra of observables when, in place of a globally hyperbolic spacetime $(M,g)$, we consider a $3$-dimensional manifold $\Im\doteq \mathbb{R}\times\mathbb{S}^2$, endowed with a {\em degenerate} metric of the form $g= 0 du \otimes du + d\mathbb{S}^2$
as the restriction to $\Im^+$ of the Bondi metric (\ref{eq:metric_at_Scri}).
Here $u$ is the standard coordinate on $\mathbb{R}$  while $d\mathbb{S}^2$ is the standard metric on the unit $2$-sphere\footnote{As far as the analysis in this section is concerned, one could  more generally consider a $d$-sphere in place of $\mathbb{S}^2$ and all the results would still be valid. We avoid such degree of generality to make more manifest the connection with the preceding chapter.}. 
As the symbol $\Im$ suggests, we are considering a class of geometric structures including {\em null infinities} (Section \ref{Sec:asymptotically_flat}) or {\em null boundaries of cosmological spacetimes} (Section \ref{Sec:Cosmo_spacetime}), though here viewed as intrinsic structures on their own right. 
We are not interested in considering a dynamical evolution on $\Im$ like the one described by KG equation in $(M,g)$, but only a {\em kinematic structure} thereon. This is fully specified by a vector space of functions, whose choice is fine tuned by the specific case in hand. At this stage some definitions may seem mysterious, however a justification  will be provided in the next sections.
 
If $f \in \mathcal{S}'(\mathbb{R}\times\mathbb{S}^2)$ is a {\em Schwartz distribution} in the sense of \cite[Appendix C]{Moretti2}, $\widehat{f}  \in \mathcal{S}'(\mathbb{R}\times\mathbb{S}^2)$ henceforth stands for the distributional {\bf Fourier transform} in the $\mathbb R$-direction.  With the most natural distributional interpretation of symbols \cite[Appendix C]{Moretti2}, 
\begin{equation}\label{FTeq}
\widehat{f}(k,\theta,\phi)\doteq\int\limits_{\mathbb{R}}e^{iku}f(u,\theta,\phi) \frac{du}{\sqrt{2\pi}}\:.
\end{equation}
We define two real vector spaces of relevant {\em real}-valued functions $\psi$ on $\Im$, the first suitable for the null infinity scenario, the second for the cosmological one.
\begin{align}\label{eq:boundary_functions}
&\mathcal{S}(\Im)\doteq\{\psi\in C^\infty(\Im) \:|\:\psi, \partial_u\psi \in L^2(\Im)\},\\
&\mathcal{S}_{c}(\Im)\doteq\{\psi \in C^\infty(\Im)\cap L^\infty(\Im) \;|\;  k\widehat{\psi}\in L^\infty(\Im)\: \textrm{and}\;\partial_u\psi\:,\:\widehat{\psi}\in L^1(\Im) \},\label{eq:cosmological_boundary_functions}
\end{align}
where $L^p(\Im)$ refers to either the  measure $du\,d\mu_{\mathbb{S}^2}$ 
or $dk\,d\mu_{\mathbb{S}^2}$ ($d\mu_{\mathbb{S}^2}$ being the natural measure on $\mathbb{S}^2$), according to the variable along $\mathbb R$ of the considered function on $\Im \doteq \mathbb{R}\times\mathbb{S}^2$. Both spaces are not trivial because include $C^\infty_0(\Im)$. Furthermore $L^\infty(\Im, du\,d\mu_{\mathbb{S}^2})\subset \mathcal{S}'(\mathbb{R}\times\mathbb{S}^2)$, so that $\mathcal{S}_{c}(\Im)$ is well defined.

\begin{proposition}\label{propSSc} Referring to the real vector spaces  (\ref{eq:boundary_functions}) and (\ref{eq:cosmological_boundary_functions}), the following facts hold.\\
{\bf (a)}	The map  $\sigma_\Im:\mathcal{S}(\Im)\times\mathcal{S}(\Im)\to\mathbb{R}$, such that
	\begin{equation}\label{eq:boundary_symplectic_form}
	\sigma_\Im(\psi,\psi^\prime)=\int\limits_{\mathbb{R}\times\mathbb{S}^2} \left(\psi\partial_u\psi^\prime-\psi^\prime\partial_u\psi\right) \: du\,d\mu_{\mathbb{S}^2}\quad \forall \psi,\psi^\prime\in\mathcal{S}(\Im)
	\end{equation}
is a weakly non-degenerate simplectic form on $\mathcal{S}(\Im)$. \\
{\bf (b)} The same result holds on $\mathcal{S}_c(\Im)$ if defining $\sigma_{\Im_c}(\psi,\psi^\prime)$ as the right-hand side of (\ref{eq:boundary_symplectic_form}) for all $\psi,\psi^\prime\in\mathcal{S}_c(\Im)$.
\end{proposition}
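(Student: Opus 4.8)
The plan is to handle the two cases (a) and (b) together, since the antisymmetry and bilinearity of $\sigma_\Im$ and $\sigma_{\Im_c}$ are immediate from the explicit formula \eqref{eq:boundary_symplectic_form} and the only genuine content is weak non-degeneracy. First I would record why the defining integral converges. For $\psi,\psi'\in\mathcal{S}(\Im)$ one has $\psi\in L^2(\Im)$ and $\partial_u\psi'\in L^2(\Im)$, so $\psi\,\partial_u\psi'\in L^1(\Im)$ by Cauchy--Schwarz, and symmetrically for $\psi'\,\partial_u\psi$; for $\psi,\psi'\in\mathcal{S}_c(\Im)$ one has $\psi\in L^\infty(\Im)$ and $\partial_u\psi'\in L^1(\Im)$, so $\psi\,\partial_u\psi'\in L^1(\Im)$ by H\"older, and symmetrically. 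Hence $\sigma_\Im$ and $\sigma_{\Im_c}$ are well-defined real bilinear antisymmetric forms on $\mathcal{S}(\Im)$ and $\mathcal{S}_c(\Im)$ respectively.

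For non-degeneracy, the key simplification is to test only against compactly supported functions, using that $C^\infty_0(\Im)\subset\mathcal{S}(\Im)$ and $C^\infty_0(\Im)\subset\mathcal{S}_c(\Im)$, as already noted after \eqref{eq:cosmological_boundary_functions}. So, fix $\psi'$ in $\mathcal{S}(\Im)$ (resp.\ $\mathcal{S}_c(\Im)$) such that $\sigma_\Im(\psi,\psi')=0$ (resp.\ $\sigma_{\Im_c}(\psi,\psi')=0$) for every admissible $\psi$. Taking $\psi\in C^\infty_0(\Im)$, the boundary terms in the $u$-integration vanish because $\psi$ has compact support, so integration by parts gives $\int_{\mathbb{R}}\psi\,\partial_u\psi'\,du=-\int_{\mathbb{R}}\psi'\,\partial_u\psi\,du$, whence
\[
0=\sigma_\Im(\psi,\psi')=2\int_{\mathbb{R}\times\mathbb{S}^2}\psi\,\partial_u\psi'\;du\,d\mu_{\mathbb{S}^2}\qquad\text{for all }\psi\in C^\infty_0(\Im).
\]
By the fundamental lemma of the calculus of variations this forces $\partial_u\psi'=0$ in the distributional sense, hence $\partial_u\psi'=0$ pointwise since $\psi'$ is smooth. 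Therefore $\psi'$ depends on the angular variables alone, $\psi'(u,\theta,\phi)=\psi'_0(\theta,\phi)$.

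It remains to exclude a nonzero such $\psi'$. In case (a) this is immediate: membership $\psi'\in\mathcal{S}(\Im)$ requires $\psi'\in L^2(\mathbb{R}\times\mathbb{S}^2,\,du\,d\mu_{\mathbb{S}^2})$, but $\int_{\mathbb{R}}|\psi'_0(\theta,\phi)|^2\,du=+\infty$ unless $\psi'_0\equiv 0$, so $\psi'=0$. In case (b) I would instead invoke the hypothesis $\widehat{\psi'}\in L^1(\Im)$: by Fubini $\widehat{\psi'}(\cdot,\theta,\phi)\in L^1(\mathbb{R})$ for almost every $(\theta,\phi)$, so by Fourier inversion and Riemann--Lebesgue $\psi'(\cdot,\theta,\phi)$ is continuous and tends to $0$ as $u\to\pm\infty$ for those angles, hence for all of them by smoothness; combined with $u$-independence this yields $\psi'_0\equiv 0$, i.e.\ $\psi'=0$. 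In both cases the form is weakly non-degenerate, finishing the proof. The argument is essentially routine; the only places needing a little care are checking convergence of the pairing separately in the two function spaces, and the remark that restricting the test function to $C^\infty_0(\Im)$ annihilates all boundary contributions so that no decay estimate is needed in the integration by parts itself.
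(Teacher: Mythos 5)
Your proof is correct and follows essentially the same strategy as the paper's: integrate by parts to reduce $\sigma_\Im(\psi,\psi')=0$ to $\int \psi\,\partial_u\psi'=0$, conclude $\partial_u\psi'\equiv 0$ by a density/test-function argument, and then use the integrability hypotheses to rule out a nonzero $u$-independent $\psi'$. Your variant is in fact slightly tidier in two spots --- restricting the test function to $C^\infty_0(\Im)$ makes the boundary term vanish trivially, so you do not need the decay lemma the paper invokes for the integration by parts, and you supply an explicit argument for case (b) (via $\widehat{\psi'}\in L^1$ and Riemann--Lebesgue) where the paper defers to a reference --- but the mathematical content is the same.
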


\begin{sproof}
We prove the thesis for $\mathcal{S}(\Im)$ since the same line of reasoning can be applied to $\mathcal{S}_c(\Im)$ (See \cite{Dappiaggi:2008dk} for details). Per construction, $\mathcal{S}(\Im)$ is a linear space, while $\sigma_\Im$ is bilinear and antisymmetric if well defined. In fact, $\sigma_\Im$ is well-defined, since the integrand is a linear combination of products of two square-integrable functions. We need only to show that $\sigma_\Im$ is weakly non-degenerate. To this end let $\psi\in\mathcal{S}(\Im)$ such that $\psi\neq 0$ and $\sigma_\Im(\psi,\psi^\prime)=0$ for all $\psi^\prime\in\mathcal{S}(\Im)$. Since both $\psi$ and $\psi^\prime$ and their $u$-derivatives are smooth and square-integrable along $\Im$, we can conclude that $\lim\limits_{u\to\pm\infty}\psi^{(\prime)}=0$ (see footnote 7 in \cite{Moretti2}). Hence, integrating by parts, 
	$$\sigma_\Im(\psi,\psi^\prime)=-2\int\limits_{\mathbb{R}\times\mathbb{S}^2}\psi^\prime\partial_u\psi \: \:du\,d\mu_{\mathbb{S}^2}\:.$$
Take a sequence of $\{\psi^\prime_n\}_{n\in\mathbb{N}}\subseteq  C^\infty_0(\Im)$  such that $\lim_{n\to\infty}\psi^\prime_n=\partial_u\psi$ in the topology of $L^2(\Im)$. Since $\sigma_{\Im}(\psi,\psi^\prime_n)=\langle\psi^\prime_n,\partial_u\psi\rangle_{L^2(\Im)}=0$ for all $n$, by continuity of the scalar product $\langle,\rangle_{L^2(\Im)}$, we have $\|\partial_u\psi\|^2_{L^2(\Im)}=0$. Since $\partial_u\psi$ is continuous, it vanishes everywhere and thus $\psi$ is constant along $\mathbb R$. More strongly it vanishes as it vanishes for $u \to +\infty$.
$\null$	\qed
\end{sproof}

\noindent Since we have identified a symplectic space we can associate to $(\mathcal{S}(\Im),\sigma_\Im)$ following the same procedure which lead to Definition \ref{def:algebra_of_observables}. Hence let us define  the unital, universal tensor algebra 
on the complex vector space of definitively vanishing sequences
$$\mathcal{T}(\Im)=
{\mathbb C} \otimes \bigoplus_{n=0}^\infty\left(\mathcal{S}(\Im)\right)^{\otimes n},\quad\left(\mathcal{S}(\Im)\right)^{\otimes 0}\equiv\mathbb{R}.$$
endowed with an associative algebra product, a unit element $\mathbb I$, and an anti-linear involutive $^*$-operation constructed in exact analogy with the corresponding mathematical objects 
of $\mathcal{T}^{obs}(M)$.

\begin{definition}\label{Def:boundary_algebra}
	The {\bf boundary algebras} on $\Im$ respectively are  the unital $^*$-algebras 
	\begin{equation}\label{eq:quotients_boundary_algebras}\mathcal{A}(\Im)\doteq\frac{\mathcal{T}(\Im)}{\mathcal{I}(\Im)} \:, \quad \mathcal{A}_c(\Im)\doteq\frac{\mathcal{T}(\Im)}{\mathcal{I}_c(\Im)}\end{equation}
	$\mathcal{I}(\Im)$ and  $\mathcal{I}_c(\Im)$
	 being the two-sided $^*$-ideal of $\mathcal{T}(\Im)$, resp.,  $\mathcal{T}_c(\Im)$ generated by all elements  $\psi\otimes\psi^\prime-\psi^\prime\otimes\psi-i\sigma_{\Im}(\psi,\psi^\prime)\mathbb{I}$, resp., 
	 $\psi\otimes\psi^\prime-\psi^\prime\otimes\psi-i\sigma_{\Im_c}(\psi,\psi^\prime)\mathbb{I}$.
\end{definition}

\noindent  The classes $[\psi] \in \mathcal{A}(\Im)$ are self-adjoint generators of the unital $^*$-algebra $\mathcal{A}(\Im)$ and they play a role 
similar to the field operators $\phi(f) \in  \mathcal{A}^{obs}(M)$. 
However $\mathcal{A}(\Im)$ and $\mathcal{A}_c(\Im)$ are markedly different from the algebra of observables $\mathcal{A}^{obs}(M)$ since none of them bears dynamical information  associated with the Klein-Gordon equation for $\mathcal{A}^{obs}(M)$. A counterpart of the time-slice axiom does not exist here. This leads naturally to the question how $\mathcal{A}(\Im)$, $\mathcal{A}_c(\Im)$ and $\mathcal{A}^{obs}(M)$ are related.
This is the key question at the heart of this whole work. Before tackling this problem, we need to discuss  the interplay between $\mathcal{A}(\Im)$ or $\mathcal{A}_c(\Im)$ and the boundary symmetries represented by the BMS group $G_{BMS}$ defined in \eqref{u} and \eqref{z} on asymptotically flat spacetimes and the horizon symmetry group $SG_{\Im^-}$ introduced in Definition \eqref{group}. The next proposition addresses this issue and its proof can be either easily inferred from the previous discussions or it can be found in \cite[Th. 2.9]{Dappiaggi:2005ci} as far as the $G_{BMS}$ group is concerned or in \cite[Remark 4.1]{Dappiaggi:2007mx} for the group $SG_{\Im^-}$.

\begin{proposition}\label{prop:isomorphism_boundary_isometries}
	Consider $\Im\doteq  \mathbb{R}\times\mathbb{S}^2$ and let $\mathcal{A}(\Im)$ and $\mathcal{A}_c(\Im)$ be respectively the $^*$-algebra of Definition \ref{Def:boundary_algebra} starting from the symplectic spaces $(\mathcal{S}(\Im),\sigma_\Im)$ and $(\mathcal{S}_c(\Im),\sigma_{\Im_c})$. Then, the following holds.\\

\noindent {\bf (a)} If $g\in G_{BMS}$ and $\psi \in \mathcal{S}(\Im)$, define $(A_g\psi)(p)=K_\Lambda(g^{-1}p)^{-1}\psi(g^{-1}p)$ referring to (\ref{u})-(\ref{K}).
Then $A_g:\mathcal{S}(\Im)\to\mathcal{S}(\Im)$ is a vector space isomorphism  preserving $\sigma_\Im$.		\\

\noindent {\bf (b)} If $h\in SG_{\Im^-}$ and $\psi \in \mathcal{S}_c(\Im)$, define  $(\widetilde{A}_h\psi)(p)\doteq\psi(F_{h^{-1}}p)$ referring to \eqref{group}.
Then $\widetilde{A}_h:\mathcal{S}_c(\Im)\to\mathcal{S}_c(\Im)$  is a vector space isomorphism preserving $\sigma_{\Im_c}$.\\

\noindent{\bf (c)} If $g\in G_{BMS}$, there is a unique $^*$-automorphism $\alpha_g : \mathcal{A}(\Im) \to \mathcal{A}(\Im)$
satisfying $\alpha_{g}[\psi]=  [A_g \psi]$, for all $\psi \in \mathcal{S}(\Im)$. It holds
$\alpha_g\circ \alpha_h = \alpha_{g\odot h}$ if $h \in G_{BMS}$ and $\alpha_{id} = id$.\\

\noindent{\bf (d)} If  $h\in SG_{\Im^-}$, there is a unique $^*$-automorphism
 $\widetilde{\alpha}_g : \mathcal{A}_c(\Im) \to \mathcal{A}_c(\Im)$
satisfying 
$\widetilde{\alpha}_{h}[\psi] = [\widetilde{A}_h \psi]$ for all $\psi \in \mathcal{S}_c(\Im)$. It holds
$\widetilde{\alpha}_h\circ \widetilde{\alpha}_k = \widetilde{\alpha}_{h\odot k}$ if $k \in SG_{\Im^-}$ and $\widetilde{\alpha}_{id} = id$. 
\end{proposition}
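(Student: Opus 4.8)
The plan is to reduce (c) and (d) to (a) and (b) by the general mechanism recorded in Remark~\ref{remgenconstr}: a linear symplectomorphism of $(\mathcal{S}(\Im),\sigma_\Im)$, resp.\ of $(\mathcal{S}_c(\Im),\sigma_{\Im_c})$, extends canonically to a unital $^*$-automorphism of the universal tensor algebra which maps the defining ideal onto itself, and hence descends to the quotient. So the real content sits in (a) and (b), and both amount to a change of variables on $\Im=\mathbb{R}\times\mathbb{S}^2$ in which the weight built into the definition of $A_g$ (resp.\ its absence in $\widetilde A_h$) is exactly the power making the relevant Jacobian cancel.

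For (a): $\mathbb{R}$-linearity of $A_g$ is manifest, and $A_g\psi\in C^\infty(\Im)$ because $K_\Lambda$ is smooth and strictly positive on $\mathbb{S}^2$ (Proposition~\ref{exremark1}(b)) and the $G_{BMS}$-action \eqref{u}--\eqref{z} is a smooth diffeomorphism of $\Im$. Writing $q=g^{-1}p$ and letting $x$ denote the $\mathbb{S}^2$-coordinate, one has $x_q=\Lambda^{-1}x_p$ and $u_q=K_\Lambda(x_q)^{-1}u_p-f(x_q)$, so $\partial u_q/\partial u_p=K_\Lambda(x_q)^{-1}$ and hence $\partial_u(A_g\psi)(p)=K_\Lambda(x_q)^{-2}(\partial_u\psi)(q)$; moreover, the conformal behaviour of the round measure under the Möbius map $x\mapsto\Lambda x$ (a short computation using $a_\Lambda d_\Lambda-b_\Lambda c_\Lambda=1$ and \eqref{K}) gives $d\mu_{\mathbb{S}^2}(x_p)=K_\Lambda(x_q)^2\,d\mu_{\mathbb{S}^2}(x_q)$, so that $du_p\,d\mu_{\mathbb{S}^2}(x_p)=K_\Lambda(x_q)^3\,du_q\,d\mu_{\mathbb{S}^2}(x_q)$. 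Substituting, $\|A_g\psi\|_{L^2(\Im)}^2=\int_\Im K_\Lambda\,|\psi|^2$ and $\|\partial_u(A_g\psi)\|_{L^2(\Im)}^2=\int_\Im K_\Lambda^{-1}\,|\partial_u\psi|^2$, both finite since $K_\Lambda$ is bounded above and below by positive constants on the compact sphere, whence $A_g\psi\in\mathcal{S}(\Im)$; and in $\sigma_\Im(A_g\psi,A_g\psi')$ the factor $K_\Lambda^{-3}$ coming from the two field factors cancels the Jacobian $K_\Lambda^{3}$, leaving $\sigma_\Im(A_g\psi,A_g\psi')=\sigma_\Im(\psi,\psi')$ by \eqref{eq:boundary_symplectic_form}. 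Bijectivity then follows from $A_g\circ A_h=A_{g\odot h}$ and $A_{\mathrm{id}}=\mathrm{id}$, so that $A_g^{-1}=A_{g^{-1}}$: the first identity reduces, after cancelling the common $\psi$-factor, to the cocycle relation \eqref{KK} together with the semidirect-product law \eqref{product2}, and the second holds because $K_\Lambda\equiv 1$ for $\Lambda\in SO(3)$ (Proposition~\ref{exremark1}(b)).

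Part (b) runs in parallel, now with the action \eqref{group} and no conformal weight. Writing $q=F_{h^{-1}}p$, the reparametrisation is affine, $u_q=e^{\tilde a(x_p)}u_p+\tilde b(x_p)$ with $\tilde a(x)=-a(R^{-1}x)$, and the angular part is the rotation $R^{-1}x_p$; thus $\partial u_q/\partial u_p=e^{-a(x_q)}$ while $du_p\,d\mu_{\mathbb{S}^2}(x_p)=e^{a(x_q)}\,du_q\,d\mu_{\mathbb{S}^2}(x_q)$ since $R\in SO(3)$ preserves $d\mu_{\mathbb{S}^2}$; the two factors cancel in $\sigma_{\Im_c}$, yielding its invariance. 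For $\widetilde A_h\psi\in\mathcal{S}_c(\Im)$ one verifies the four defining conditions directly: $\|\widetilde A_h\psi\|_\infty=\|\psi\|_\infty$; the substitution $v=e^{\tilde a}u+\tilde b$ gives $\|\partial_u(\widetilde A_h\psi)\|_{L^1}=\|\partial_u\psi\|_{L^1}$; and from $\widehat{\widetilde A_h\psi}(k,x)=e^{-\tilde a(x)}e^{-ik\tilde b(x)e^{-\tilde a(x)}}\widehat\psi(e^{-\tilde a(x)}k,R^{-1}x)$ one reads off $\|\widehat{\widetilde A_h\psi}\|_{L^1}=\|\widehat\psi\|_{L^1}$ and $\|k\,\widehat{\widetilde A_h\psi}\|_\infty=\|k\,\widehat\psi\|_\infty$; the relevant changes of variables are exactly those used in the proof of Proposition~\ref{propSSc} and can be found in \cite{Dappiaggi:2008dk}. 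Bijectivity again follows from $\widetilde A_h\circ\widetilde A_k=\widetilde A_{h\odot k}$ — a direct check against the composition law of $SG_{\Im^-}$ displayed after Definition~\ref{defSG} — and from $\widetilde A_{\mathrm{id}}=\mathrm{id}$.

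Finally, for (c) I extend $A_g$ complex-linearly to $\mathcal{T}(\Im)$ by $\psi_1\otimes\cdots\otimes\psi_n\mapsto A_g\psi_1\otimes\cdots\otimes A_g\psi_n$ and by the identity on the $\mathbb{C}$-summand, obtaining a unital $^*$-algebra automorphism of $\mathcal{T}(\Im)$. Since $A_g$ is a symplectic bijection, it carries each generator $\psi\otimes\psi'-\psi'\otimes\psi-i\sigma_\Im(\psi,\psi')\mathbb{I}$ of $\mathcal{I}(\Im)$ to $A_g\psi\otimes A_g\psi'-A_g\psi'\otimes A_g\psi-i\sigma_\Im(A_g\psi,A_g\psi')\mathbb{I}$, again a generator, and bijectively so; hence the two-sided $^*$-ideal $\mathcal{I}(\Im)$ is mapped onto itself, and the automorphism descends to $\alpha_g:\mathcal{A}(\Im)\to\mathcal{A}(\Im)$ with $\alpha_g[\psi]=[A_g\psi]$. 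Uniqueness is clear because the classes $[\psi]$ generate $\mathcal{A}(\Im)$, and $\alpha_g\circ\alpha_h=\alpha_{g\odot h}$, $\alpha_{\mathrm{id}}=\mathrm{id}$ transfer at once from the corresponding identities for $A_g$ proved in (a). Part (d) is obtained verbatim with $\mathcal{I}_c(\Im)$, $\mathcal{A}_c(\Im)$, $\sigma_{\Im_c}$, $\widetilde A_h$ in place of $\mathcal{I}(\Im)$, $\mathcal{A}(\Im)$, $\sigma_\Im$, $A_g$. I expect no conceptual obstacle: the only points requiring care are the bookkeeping in (a)--(b), namely that the exponent of $K_\Lambda$ (resp.\ of $e^{a}$) built into the definitions is precisely the one for which the Jacobian cancels, and that $A_g\circ A_h=A_{g\odot h}$ comes out with the correct order of composition from \eqref{KK} and \eqref{product2}.
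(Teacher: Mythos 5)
Your proof is correct and follows exactly the route the paper intends (it delegates the argument to \cite{Dappiaggi:2005ci} and \cite{Dappiaggi:2007mx}): the change of variables with the Jacobian $K_\Lambda^3\,du\,d\mu_{\mathbb{S}^2}$ (resp.\ $e^{a}\,du\,d\mu_{\mathbb{S}^2}$) cancelling against the conformal weight of the fields and the $u$-derivative, the cocycle identity \eqref{KK} giving $A_g\circ A_h=A_{g\odot h}$, and the canonical lift to the tensor algebra preserving the CCR ideal. The bookkeeping in (a) and (b) — the power of $K_\Lambda$ in the measure, the Fourier-side verification that $\widetilde A_h$ preserves the defining conditions of $\mathcal{S}_c(\Im)$ — is all accurate, so nothing is missing.
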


\begin{remark}\label{remarkIDIM} The identification of $\Im^+$ and $\Im\doteq {\mathbb R}\times {\mathbb S}^2$ for {\em asymptotocally flat spacetimes} depends on the choice of a Bondi coordinate frame. Different choices produce different identifications. However, different Bondi coordinate frames are related by means of a transformation of the BMS group (Proposition \ref{exremark1}). Even changing the metrical structure of $\Im^+$ by means of an (always physically admitted)  {\em gauge transformation} (\ref{gauge}), the Bondi frames of the new metrical structure are related to the Bondi frames of the initial metrical structure just because gauge transformations are equivalent to BMS transformations (Definition \ref{defBMS}).
 A similar  picture arises regarding the identification of $\Im^-$ and $\Im\doteq {\mathbb R}\times {\mathbb S}^2$ for {\em cosmological spacetimes} where different Bondi frames compatible with the defintion of  expanding universe with cosmological horizon are connected by the subgroup of $SG_{\Im^-}$ of the transformations (\ref{group}) with vanishing $a$.
The definitions of $\mathcal{S}(\Im)$, $\mathcal{S}_c(\Im)$ are invariant under the action of the BMS group and 
$SG_{\Im^-}$ respectively, and the elements of these groups also preserve the associated symplectic forms $\sigma_{\Im}$ and  $\sigma_{\Im_c}$. Therefore, in view of Proposition \ref{prop:isomorphism_boundary_isometries}, the various definitions of 
$\mathcal{A}(\Im)$ and  $\mathcal{A}_c(\Im)$ respectively, based on different choices of Bondi frames 
are isomorphic. 
\end{remark}

\section{The bulk-to-boundary algebra injection}\label{Sec:bulk-to-boundary}

The goal of this section is to relate the algebra $\mathcal{A}^{obs}(M)$ with either $\mathcal{A}(\Im)$ or $\mathcal{A}_c(\Im)$ under suitable geometric assumptions on $(M,g)$. Heuristically, our approach consists of considering $4$-dimensional globally hyperbolic spacetimes $(M,g)$ which can be read, up to a conformal transformation, as an open submanifold of a larger globally hyperbolic spacetime. In addition $(M,g)$ must possess a (conformal) boundary identified with $\Im$. Since $\mathcal{A}^{obs}(M)$ is generated by field operators $\phi(f)$  labelled by functions $f \in C_0^\infty(M)$, we can construct via the causal propagator a unique smooth wave function $\psi = G(f)$ which can be restricted to $\Im$, identifying thereon a generator of $\mathcal{A}(\Im)$. This identification will uniquely extend  to unital $^*$-algebra embedding of the algebra of quantum observables $\mathcal{A}^{obs}(M)$ into the boundary algebras $\mathcal{A}(\Im)$ or $\mathcal{A}_c(\Im)$.

We explain this procedure for asymptotically flat spacetimes and cosmological spacetimes.  We finally  consider the case of Schwarzschild black hole with a separated discussion.

\subsection{Asymptotically Flat Spacetimes}
Let us consider $(M,g)$ to be a $4$-dimensional globally-hyperbolic spacetime which is {\em asymptotically flat at future null infinity with future time infinity} as per item (b) of Remark \ref{Rem:alternative_asymptotically_flat}. 
In addition and for later convenience we assume also that there exists an open subset $V$ of the unphysical spacetime $(\widetilde{M},\widetilde{g})$ such that $\overline{J^+(\Im^-;\widetilde{M})\cap M}\subseteq V$ and $(V,\widetilde{g}|_V)$ is  globally hyperbolic (notice that $i^+$ may not belong to $V$). 
On top of $(M,g)$ we consider a real scalar field $\varphi:M\to\mathbb{R}$, whose dynamics is ruled by the {\em conformally coupled wave equation}
\begin{equation}\label{eq:conformally_coupled}
P_0\varphi=0\:, \quad \mbox{where  $\quad P_0 \doteq \Box-\frac{R}{6}$\:.}
\end{equation}
Above, $\Box$ is the D'Alembert wave operator built out of $g$, while $R$ is the associated scalar curvature. This is nothing but a special instance of the Klein-Gordon equation considered in \eqref{eq:dynamics}, obtained by setting $m=0$ and $\xi=\frac{1}{6}$. The reason for such special choice of equation of motion can be found in the following proposition whose proof is direct (see, {\it e.g.}, \cite[App. D]{Wald}).
\begin{proposition}\label{prop:conformal_rescaling}
	Let $(M,g)$ be a $4$-dimensional spacetime and let $\Omega$ be a strictly positive scalar smooth function on $M$. Then, 
\begin{equation}\label{eq:addPP}
P_0=\Omega^3\widetilde{P}_0 \Omega^{-1}\:, \quad \mbox{where $\quad\widetilde{P}_0 \doteq \widetilde{\Box}-\frac{\widetilde{R}}{6}$.}
\end{equation}
Above, $\Omega^{-1}$ and $\Omega^3$ act as multiplicative operators, while
  $\widetilde{\Box}$ is the D'Alembert wave operator for the metric $\widetilde{g}\doteq\Omega^2 g$ with scalar curvature  $\widetilde{R}$.
\end{proposition}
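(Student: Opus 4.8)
The plan is to reduce the operator identity $P_0=\Omega^3\widetilde P_0\Omega^{-1}$ to the two classical conformal transformation laws --- one for the scalar d'Alembertian acting on functions, one for the scalar curvature --- and then to check by a short algebraic bookkeeping that, in dimension four and with coupling $1/6$, all the terms carrying derivatives of $\Omega$ cancel. Since both sides are second-order differential operators, it suffices to verify that $\widetilde P_0(\Omega^{-1}\chi)=\Omega^{-3}P_0\chi$ for every $\chi\in C^\infty(M)$.

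First I would record the transformation of $\widetilde\Box$ on scalars. Writing $\widetilde\Box\phi=|\widetilde g|^{-1/2}\partial_\mu(|\widetilde g|^{1/2}\widetilde g^{\mu\nu}\partial_\nu\phi)$ and using $|\widetilde g|^{1/2}=\Omega^n|g|^{1/2}$, $\widetilde g^{\mu\nu}=\Omega^{-2}g^{\mu\nu}$ for $\widetilde g=\Omega^2 g$ on an $n$-dimensional manifold, one gets
\[
\widetilde\Box\phi=\Omega^{-2}\Box\phi+(n-2)\,\Omega^{-3}g^{\mu\nu}(\nabla_\mu\Omega)(\nabla_\nu\phi)\,.
\]
Next I would recall (e.g.\ from \cite[App.\ D]{Wald}) the transformation of the scalar curvature,
\[
\widetilde R=\Omega^{-2}R-2(n-1)\,\Omega^{-3}\Box\Omega-(n-1)(n-4)\,\Omega^{-4}g^{\mu\nu}(\nabla_\mu\Omega)(\nabla_\nu\Omega)\,,
\]
the essential point being that the quadratic-in-$d\Omega$ term carries the factor $n-4$ and thus disappears for $n=4$, leaving $\widetilde R=\Omega^{-2}R-6\,\Omega^{-3}\Box\Omega$.

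With these in hand I would set $n=4$, take an arbitrary $\chi$, and expand $\widetilde P_0(\Omega^{-1}\chi)=\widetilde\Box(\Omega^{-1}\chi)-\tfrac16\widetilde R\,\Omega^{-1}\chi$. Using $\Box(\Omega^{-1})=2\Omega^{-3}g^{\mu\nu}(\nabla_\mu\Omega)(\nabla_\nu\Omega)-\Omega^{-2}\Box\Omega$ and the Leibniz rule, together with the displayed transformation of $\widetilde\Box$ applied to $\phi=\Omega^{-1}\chi$, one collects the coefficients of $\Box\chi$, of $g^{\mu\nu}(\nabla_\mu\Omega)(\nabla_\nu\chi)$, of $\chi\,\Box\Omega$, and of $\chi\,g^{\mu\nu}(\nabla_\mu\Omega)(\nabla_\nu\Omega)$. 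One finds that in $\widetilde\Box(\Omega^{-1}\chi)$ the $g^{\mu\nu}(\nabla_\mu\Omega)(\nabla_\nu\chi)$ and $\chi\,g^{\mu\nu}(\nabla_\mu\Omega)(\nabla_\nu\Omega)$ contributions already cancel (their coefficients come out proportional to $n-4$), so that $\widetilde\Box(\Omega^{-1}\chi)=\Omega^{-3}\Box\chi-\Omega^{-4}\chi\,\Box\Omega$; then the residual $\chi\,\Box\Omega$ term is exactly compensated by the $-\tfrac16(-6\Omega^{-3}\Box\Omega)\,\Omega^{-1}\chi=+\Omega^{-4}\chi\,\Box\Omega$ coming from $-\tfrac16\widetilde R$, where the coefficient $1/6$ is used in an essential way. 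What remains is $\widetilde P_0(\Omega^{-1}\chi)=\Omega^{-3}(\Box\chi-\tfrac16 R\chi)=\Omega^{-3}P_0\chi$, and multiplying by $\Omega^3$ with $\chi$ arbitrary gives the claim.

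I expect no conceptual obstacle here: the only real work is this last term-by-term verification, and the only place an error could creep in is keeping track of which cancellations are forced by $n=4$ versus which hold for all $n$. Alternatively, one may simply invoke the standard statement of the conformal covariance of the Yamabe operator $\Box-\tfrac{n-2}{4(n-1)}R$ in \cite[App.\ D]{Wald} and specialise to $n=4$, where $\tfrac{n-2}{4(n-1)}=\tfrac16$, $\tfrac{n-2}{2}=1$, $\tfrac{n+2}{2}=3$, which yields the identity directly.
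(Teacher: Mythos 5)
Your computation is correct and is exactly the "direct" verification the paper delegates to \cite[App.\ D]{Wald}: the transformation laws you quote for $\widetilde\Box$ and $\widetilde R$ are the standard ones, the identity $\widetilde\Box(\Omega^{-1}\chi)=\Omega^{-3}\Box\chi-\Omega^{-4}\chi\,\Box\Omega$ checks out, and the residual $\Box\Omega$ term is indeed cancelled by the $-\tfrac16\widetilde R$ contribution in $n=4$. Since the paper offers no proof beyond this citation, your argument matches its intended approach.
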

We stress that such a behaviour is distinctive only of the conformally coupled wave equation, since adding a different coupling to scalar curvature or a mass term would alter drastically the form of the equation of motion under a conformal rescaling of the metric. In particular, the mass term $m^2$ would be mapped to $\frac{m^2}{\Omega^2}$. In the case of an asymptotically flat spacetime the conformal factor $\Omega$ vanishes on $\Im^+$  giving rise to pathologies. For this reason, we consider only \eqref{eq:conformally_coupled}.

 Proposition \ref{prop:causal_propagator} helps us translating Proposition \ref{prop:conformal_rescaling} into the language of Green operators. This problem was thoroughly investigated in \cite{Pinamonti:2008cx} but we report here just an elementary result, adapted to our framework. Observe that, for $\Omega >0$ smooth over $M$,  $(M,g)$ is globally hyperbolic if and only if $(M,\Omega^2g)$ is globally hyperbolic, since causal structures and, thus, Cauchy surfaces are preserved by smooth strictly-positive conformal transformations.

\begin{proposition}\label{prop:propagators_under_conformal}
	Let $(M,g)$ be a $4$-dimensional globally-hyperbolic spacetime, $\Omega>0$ a smooth scalar function on $M$ and $\widetilde{g}\doteq \Omega^2 g$. If $G^\pm_0: C^\infty_0(M)\to C^\infty(M)$ are the Green operators for the conformally coupled wave equation \eqref{eq:conformally_coupled} in $(M,g)$,  the corresponding Green operators for the same equation in $(M, \widetilde{g})$ satisfy 
	\begin{equation}\label{eq:propagators_rescaling}
		\widetilde{G}^\pm_0 = \Omega^{-1} G^\pm_0 \Omega^3\:, \quad \widetilde{G}_0 = \Omega^{-1} G_0 \Omega^3\:.
	\end{equation}
\end{proposition}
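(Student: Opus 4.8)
The plan is to lean entirely on the uniqueness clause of Proposition \ref{prop:causal_propagator}(a): on a globally hyperbolic spacetime the advanced and retarded Green operators of a normally hyperbolic operator are the only linear maps $C^\infty_0 \to C^\infty$ satisfying the two inversion identities together with the causal support condition. Accordingly, I would introduce the candidate operators $\widehat{G}^\pm_0 \doteq \Omega^{-1} G^\pm_0 \Omega^3$ (with $\Omega^{-1}$ and $\Omega^3$ understood as multiplication operators, as in Proposition \ref{prop:conformal_rescaling}) and verify that they satisfy the three characterizing properties of Proposition \ref{prop:causal_propagator}(a) relative to $\widetilde{P}_0$ and $\widetilde{g} = \Omega^2 g$; uniqueness then forces $\widehat{G}^\pm_0 = \widetilde{G}^\pm_0$, and the statement for $\widetilde{G}_0$ follows by taking the difference.

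Concretely, I would first note that $\widehat{G}^\pm_0$ is well defined as a map $C^\infty_0(M) \to C^\infty(M)$, since multiplication by the smooth strictly positive function $\Omega^3$ preserves $C^\infty_0(M)$ and multiplication by $\Omega^{-1}$ preserves $C^\infty(M)$. Using Proposition \ref{prop:conformal_rescaling} in the rearranged form $\widetilde{P}_0 = \Omega^{-3} P_0 \Omega$, the first inversion identity is the one-line computation
\[
\widetilde{P}_0 \widehat{G}^\pm_0 = \Omega^{-3} P_0 \Omega\, \Omega^{-1} G^\pm_0 \Omega^3 = \Omega^{-3}\,(P_0 G^\pm_0)\,\Omega^3 = \Omega^{-3}\,\mathbb{I}\,\Omega^3 = \mathbb{I},
\]
and likewise, on $C^\infty_0(M)$,
\[
\widehat{G}^\pm_0 \widetilde{P}_0 = \Omega^{-1} G^\pm_0 \Omega^3\, \Omega^{-3} P_0 \Omega = \Omega^{-1}\,(G^\pm_0 P_0)\,\Omega = \Omega^{-1}\,\mathbb{I}\,\Omega = \mathbb{I},
\]
where the last display uses that $\Omega f \in C^\infty_0(M)$ whenever $f \in C^\infty_0(M)$, so that the identity $G^\pm_0 P_0 = \mathbb{I}$ of Proposition \ref{prop:causal_propagator} applies to the argument $\Omega f$. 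For the support property I would use that $\Omega > 0$ implies $\mathrm{supp}(\Omega^3 f) = \mathrm{supp}(f)$ and $\mathrm{supp}(\Omega^{-1} h) \subseteq \mathrm{supp}(h)$, combine this with $\mathrm{supp}(G^\pm_0(\Omega^3 f)) \subseteq J^\pm(\mathrm{supp}(\Omega^3 f))$, and invoke the fact (already noted in the excerpt) that $g$ and $\widetilde{g} = \Omega^2 g$ have the same causal structure, hence the same sets $J^\pm$; this yields $\mathrm{supp}(\widehat{G}^\pm_0(f)) \subseteq J^\pm_{\widetilde{g}}(\mathrm{supp}(f))$. Uniqueness then gives $\widehat{G}^\pm_0 = \widetilde{G}^\pm_0$, and $\widetilde{G}_0 = \widetilde{G}^-_0 - \widetilde{G}^+_0 = \Omega^{-1}(G^-_0 - G^+_0)\Omega^3 = \Omega^{-1} G_0 \Omega^3$.

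I do not expect a genuine obstacle: the argument is a formal manipulation of operator identities together with a routine bookkeeping check on supports and function spaces. The one point that requires care — and which I would flag as the only real pitfall — is getting Proposition \ref{prop:conformal_rescaling} in precisely the correct conjugated form and respecting the order of the non-commuting factors $P_0$ and the multiplication operators $\Omega^{\pm k}$ throughout; a sign or ordering slip there would break the cancellations. If desired one could also remark that $\widehat{G}^\pm_0$ is automatically continuous in the distributional topologies, being the composition of the continuous operator $G^\pm_0$ with multiplications by smooth functions, but this is not needed for the identification.
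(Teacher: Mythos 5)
Your proposal is correct and follows essentially the same route as the paper's own proof: identify the candidate operators $\Omega^{-1}G^\pm_0\Omega^3$, check the two inversion identities via the conjugation relation $\widetilde{P}_0=\Omega^{-3}P_0\Omega$ from Proposition \ref{prop:conformal_rescaling}, verify the causal support property using the conformal invariance of the causal structure, and conclude by the uniqueness clause of Proposition \ref{prop:causal_propagator}. Your write-up is, if anything, slightly more explicit than the paper's on the support bookkeeping and the domain issues for the multiplication operators.
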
	

\begin{proof}
	As $(M, \widetilde{g})$ is globally hyperbolic, Proposition \eqref{prop:causal_propagator} guarantees that the advanced and retarded fundamental solutions $\widetilde{G}^\pm_0$ of $\widetilde{P}_0$ exist and are uniquely determined by properties (1)-(3). Since the conformal structure is preserved by a conformal transformation and supp$(\Omega^3 f)=\mbox{supp}(f)$, the right-hand side in \eqref{eq:propagators_rescaling} satisfies the desired support properties (3) which are inherited from those of $G^\pm_0$. Hence, to satisfy (1) and (2) we need to prove that the right-hand side in \eqref{eq:propagators_rescaling} are left and right inverses of the equation of motion on $C_0^\infty(M)$. Indeed, (\ref{eq:addPP}) implies
	$\widetilde{P}_0(\Omega^{-1} G^\pm_0 \Omega^3)=\Omega^{-3}P_0 G^\pm_0 \Omega^3=\mathbb{I}$. An identical calculation shows also that $(\Omega^{-1} G^\pm_0 \Omega^3)\widetilde{P}_0|_{C_0^\infty(M)}=\mathbb{I}$, which concludes the proof establishing the first identity in (\ref{eq:propagators_rescaling}) from the uniqueness property.
The second one follows per definition of $G$.\qed
\end{proof}

\noindent The found results can be readily applied to the case in hand, since $M$ is assumed to be a globally hyperbolic spacetime both if endowed with $g$ or with $\Omega^2 g$. More importantly we have also assumed that there exists a second, globally hyperbolic spacetime $V \subseteq \widetilde{M}$ equipped with the metric $\widetilde{g}|_V$ which contains $M \cup \Im^+$. Without loss of generality we assume $V= \widetilde{M}$ so that $(\widetilde{M}, \widetilde{g})$ is globally hyperbolic as well and thus $\widetilde{G}^{(\pm)}_0$ exist and they associate smooth, compactly supported functions to solutions of $\widetilde{P}_0\phi=0$ {\em defined in the whole $\widetilde{M}$, though (\ref{eq:propagators_rescaling}) are only valid in $M$}.
We remind the reader that $\widetilde{g}|_M = \Omega^2 g$ where $\Omega>0$ is smooth and smoothly vanishes exactly on $\Im^+$.
Focus on the map (notice that supp$(f) \subsetneq M$, so that $\Omega^{-3} f\in C^\infty_0(M)$)
\begin{equation}\label{eq:bulk-to-boundary}
\Gamma:\mathcal{E}^{obs}_0(M)\ni [f]\mapsto\left. \widetilde{G}_0(\Omega^{-3} f)\right|_{\Im^+} \in C^\infty(\Im^+)\:,
\end{equation}
with $\mathcal{E}^{obs}_0(M)$ as in (\ref{eq:observables}) for $P$ specialised to $P_0$.
This map is well defined because any other $f' \in [f]$ yields $\widetilde{G}_{0}(\Omega^{-3} f')(x) = \widetilde{G}_0(\Omega^{-3} f)(x)$ when $x \in \Im^+$. As a matter of facts, (a) Proposition \ref{prop:initial_data} implies $f-f'= P_0 h$ for some $h\in C_0^\infty(M)$ so that, if $x\in M$, we have $\widetilde{G}_{0}(\Omega^{-3} (f-f')) (x)= \Omega^{-1}(x)(G_0 P_0 h)(x)= 0$
where we exploited (\ref{eq:propagators_rescaling}). By hypothesis $\widetilde{G}_0(\Omega^{-3} f)(x)$
is smooth for $x \in \Im^+$ and thus the result smoothly extends to $x \in \Im^+$. The map $\Gamma$ is evidently linear, but it also enjoys a crucial property. From a generalized version of (b) Proposition \ref{prop:symplectic_form},  using $\Sigma \doteq \Im^+\cup \{i^+\}$ as a {\em limit case} of a  Cauchy surface of $(M,g)$ in the fully extended $\widetilde{M}$ (exactly {\em here} the existence of $i^+ \in \widetilde{M}$ is  relevant, 
see \cite[Th. 4.1]{Moretti}),
one finds
\begin{equation}\label{eq:symplectomorphism}\sigma_0(\varphi,\varphi') = \sigma_{\Im}(\Gamma(\varphi), \Gamma(\varphi')) \quad \forall \varphi, \varphi' \in \mathcal{E}^{obs}(M)\:.\end{equation}
Here $\sigma_0$ is the symplectic from in \eqref{eq:symplectic_form} with $G_0$ playing the role of $G$ while
 $ \sigma_{\Im}$ is the one form in (\ref{eq:boundary_symplectic_form}) with $\Im \equiv \Im^+$ when identifying $\Im^+$ with $\Im \doteq \mathbb{R} \times \mathbb{S}^2$ by means of a Bondi coordinate frame (see \cite[Th. 4.1]{Moretti}). This identity also implies that $\Gamma(\mathcal{E}^{obs}_0(M)) \subseteq \mathcal{S}(\Im)$ defined in (\ref{eq:boundary_functions}). Eventually, since $\sigma_0$ is weakly non-degenerate, the linear map $\Gamma$ must be injectve: From (\ref{eq:symplectomorphism}), $\Gamma(\varphi)=0$ implies $\sigma_0(\varphi,\varphi')=0$ for every $\varphi' \in \mathcal{E}^{obs}(M)$ and thus $\varphi=0$. Putting all together we have the following result whose detailed proof is presented in \cite{Moretti}.
%

\begin{theorem}\label{Th:bulk_to_boundary}
	Let $(M,g)$ be a $4$-dimensional globally hyperbolic spacetime which is asymptotically flat at future null infinity with future time infinity 
((b) Remark \ref{Rem:alternative_asymptotically_flat})
and let $(\widetilde{M},\widetilde{g})$ be the associated unphysical spacetime. Assume that there exists a globally hyperbolic open subset  $V \subseteq \widetilde{M}$ such that $\overline{J^-(\Im^+,\widetilde{M})\cap M}\subseteq V$. Fixing the conformal factor $\Omega$ so that the metric on $\Im^+$  takes the Bondi form \eqref{eq:metric_at_Scri}, the following facts hold.\\

\noindent {\bf (a)}
A well-defined injective vector space homomorphism 
	$\Gamma:\mathcal{E}^{obs}_0(M)\to\mathcal{S}(\Im)$ exists defined by
 \eqref{eq:bulk-to-boundary},
	where $\mathcal{S}(\Im)$ is the space of functions \eqref{eq:boundary_functions} with the role of the null manifold $\Im$ played by future null infinity $\Im^+$.\\

\noindent {\bf (b)} $\Gamma$ preserves the natural symplectic forms of $\mathcal{E}^{obs}_0(M)$ and $\mathcal{S}(\Im)$ as in (\ref{eq:symplectomorphism})  
\end{theorem}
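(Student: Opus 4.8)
The plan is to assemble the theorem from three ingredients prepared above: the conformal transformation law for the Green operators (Proposition \ref{prop:propagators_under_conformal}), a ``boundary version'' of the surface-integral identity (b) of Proposition \ref{prop:symplectic_form}, and the weak non-degeneracy of the symplectic form. Throughout I adopt the harmless reduction $V=\widetilde{M}$ already used in the excerpt, so that $(\widetilde{M},\widetilde{g})$ is globally hyperbolic, $\widetilde{P}_0$ possesses Green operators $\widetilde{G}^\pm_0$ and a causal propagator $\widetilde{G}_0$ defined on all of $\widetilde{M}$, and $\Im^+$ is an embedded null hypersurface of $\widetilde{M}$ carrying the Bondi metric \eqref{eq:metric_at_Scri} once the conformal factor has been fixed.

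First I would establish that $\Gamma$ is well defined and linear. Since $f\in C^\infty_0(M)$ has support in a compact subset of the open manifold $M$, hence disjoint from $\Im^+$, and since $\Omega>0$ on $M$, the function $\Omega^{-3}f$ lies in $C^\infty_0(M)\subseteq C^\infty_0(\widetilde{M})$, so $\widetilde{G}_0(\Omega^{-3}f)\in C^\infty(\widetilde{M})$ and restricts smoothly to $\Im^+$. Independence of the representative follows from Proposition \ref{prop:propagators_under_conformal}: if $f'-f=P_0 h$ with $h\in C^\infty_0(M)$, then on $M$ one has $\widetilde{G}_0(\Omega^{-3}(f'-f))=(\Omega^{-1}G_0\Omega^3)(\Omega^{-3}P_0h)=\Omega^{-1}G_0 P_0 h=0$ by Proposition \ref{prop:causal_propagator}; being continuous on $\widetilde{M}$ and vanishing on $M$, it vanishes on $\overline{M}\supseteq\Im^+$. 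The same computation shows that on $M$ one simply has $\widetilde{G}_0(\Omega^{-3}f)=\Omega^{-1}G_0(f)=\Omega^{-1}\varphi_f$, so that $\Gamma([f])$ is the boundary value on $\Im^+$ of the conformally rescaled bulk solution $\Omega^{-1}\varphi_f$, the object one expects.

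The heart of the proof is the identity \eqref{eq:symplectomorphism}, which I would obtain by deforming a Cauchy surface towards the boundary. By (b) of Proposition \ref{prop:symplectic_form}, $\sigma_0(\varphi,\varphi')=\int_\Sigma(\varphi\nabla_n\varphi'-\varphi'\nabla_n\varphi)\,d\Sigma_g$ is independent of the smooth spacelike Cauchy surface $\Sigma$ of $(M,g)$, so I would push $\Sigma$ onto the limiting ``Cauchy surface'' $\Im^+\cup\{i^+\}$ of $(M,g)$ inside $\widetilde{M}$. Rewriting everything with respect to $\widetilde{g}=\Omega^2 g$ and the rescaled fields $\psi=\Omega^{-1}\varphi$, $\psi'=\Omega^{-1}\varphi'$ --- smooth on $M\cup\Im^+$ by the previous step --- and using that in a Bondi frame the induced structure on $\Im^+$ reduces the normal direction to $\partial_u$ and the surface measure to $du\,d\mu_{\mathbb{S}^2}$, the surface integral converges, in the limit, to $\int_{\mathbb{R}\times\mathbb{S}^2}(\psi\partial_u\psi'-\psi'\partial_u\psi)\,du\,d\mu_{\mathbb{S}^2}=\sigma_\Im(\Gamma([f]),\Gamma([f']))$. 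The hard part is exactly the two analytic facts hidden in this step: (i) controlling the approach to $i^+$, so that the deformation of $\Sigma$ leaves no spurious contribution there --- this is where condition $6$ of Remark \ref{Rem:alternative_asymptotically_flat}(b), $d\Omega(i^+)=0$ and $\widetilde{\nabla}_\mu\widetilde{\nabla}_\nu\Omega(i^+)=-2\widetilde{g}_{\mu\nu}(i^+)$, is used; and (ii) the decay of $\psi$ and $\partial_u\psi$ along $\Im^+$, in particular their square integrability there. Both are proved in \cite[Th. 4.1]{Moretti}, which I would invoke rather than reprove.

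Granting \eqref{eq:symplectomorphism}, the remaining assertions are immediate. Finiteness of its right-hand side together with the $L^2$-bounds obtained en route gives $\Gamma([f]),\partial_u\Gamma([f])\in L^2(\Im)$, i.e.\ $\Gamma(\mathcal{E}^{obs}_0(M))\subseteq\mathcal{S}(\Im)$, which is the part of (a) not yet covered; \eqref{eq:symplectomorphism} itself is assertion (b); and injectivity follows because $\sigma_0$ is weakly non-degenerate (Proposition \ref{prop:symplectic_form}(a) applied to $P_0$), so $\Gamma(\varphi)=0$ forces $\sigma_0(\varphi,\varphi')=\sigma_\Im(0,\Gamma(\varphi'))=0$ for all $\varphi'$, whence $\varphi=0$.
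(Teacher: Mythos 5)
Your proposal is correct and follows essentially the same route as the paper: well-definedness of $\Gamma$ via the conformal transformation law \eqref{eq:propagators_rescaling} together with the kernel characterization $\ker G_0 = P_0(C^\infty_0(M))$, the symplectomorphism \eqref{eq:symplectomorphism} obtained by deforming a Cauchy surface onto the limit surface $\Im^+\cup\{i^+\}$ with the analytic estimates (decay along $\Im^+$, control at $i^+$) delegated to \cite[Th.~4.1]{Moretti}, and then the inclusion $\Gamma(\mathcal{E}^{obs}_0(M))\subseteq\mathcal{S}(\Im)$ and injectivity read off from \eqref{eq:symplectomorphism} and the weak non-degeneracy of $\sigma_0$. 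No substantive differences to report.
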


\noindent Recall that, from Proposition \ref{prop:fieldoperator}, any element $a \in \mathcal{A}^{obs}_0(M)$ of the algebra of observables in the bulk (Definition \ref{def:algebra_of_observables} where $P$ is specialised for $P_0$) has the
form 
\begin{equation}\label{eq:deca} a = c {\mathbb I} + \sum_{N=1}^\infty
\sum_{k_1, \ldots, k_N=1}^\infty c^{(N)}_{k_1k_2\cdots k_N}\phi(f^{(N)}_{k_1})\phi(f^{(N)}_{k_2})\cdots \phi(f^{(N)}_{k_N})
\end{equation}
for $c^{(N)}_{k_1k_2\cdots k_N}\in \mathbb C$ and $f^{(N)}_{k_j} \in C_0^\infty(M)$ depending on $a$ (but not uniquely fixed by it), such that only a {\em finite} number of them do not vanish.
At the same time, any element of the boundary algebra  $\mathcal{A}(\Im)$ (Definition \ref{Def:boundary_algebra} with $\Im$ identified to $\Im^+$) is similarly generated by elements $[\psi] \in \mathcal{S}(\Im)$.
Proposition 5.2.20 in \cite{Khavkine:2014mta} and Theorem \ref{Th:bulk_to_boundary} guarantee that the map associating the right-hand side of (\ref{eq:deca}) to (the external brackets in the right hand side referring to the quotient \ref{eq:quotients_boundary_algebras}) 
\begin{equation}\label{eq:deca2} c {\mathbb I} +\sum_{N=1}^\infty\sum_{k_1,\ldots, k_N=1}^\infty c^{(N)}_{k_1k_2\cdots k_N}[\Gamma_\Im ([f^{(N)}_{k_1}])] [\Gamma_\Im ([f^{(N)}_{k_2}])]\cdots[\Gamma_\Im([f^{(N)}_{k_N}])] \in \mathcal{A}(\Im^+)
\end{equation}
is an injective unital $^*$-algebra homomorphism
\begin{equation}\label{eq:bulk_to_boundary_homomorphism}
\iota:\mathcal{A}^{obs}_0(M)\to\mathcal{A}(\Im),
\end{equation}
which is completely specified by the action on the generators, namely $$\iota(\phi(f))=[\Gamma([f])] \quad \forall f\in C^\infty_0(M)\:.$$

\noindent If we consider now a bulk spacetime $(M,g)$ fulfilling the hypotheses of Theorem \ref{Th:bulk_to_boundary}, whose metric $g$ admits non trivial isometries, we know from Proposition \ref{lem:isometries} that there exists a corresponding $^*$-automorphism of $\mathcal{A}^{obs}_0(M)$. At the same time, to every element of the $G_{BMS}$ group, one can associate via Proposition \ref{prop:isomorphism_boundary_isometries} a $^*$-automorphism of $\mathcal{A}(\Im)$. Hence, in view of Theorem \ref{Th:bulk_to_boundary} and of Proposition \ref{Prop:Extension_Isom}, it is natural to wonder whether these $^*$-automorphisms are intertwined. The following proposition, ((a) was established in \cite[Prop. 3.4]{Moretti2}, (b) is proved below), answers to this question.

\begin{proposition}\label{prop:isometries_bulk_to_boundary_homomorphism}
	Let $(M,g)$ be a spacetime fulfilling the hypotheses of Theorem \ref{Th:bulk_to_boundary} and let $\xi$ be a complete Killing field generating a one-parameter group of isometries $\{\chi^\xi_t\}_{t \in \mathbb R}$. Let $\widetilde{\xi}$ be the unique extension of $\xi$ to $\Im^+$ as per Proposition \ref{Prop:Extension_Isom} generating  the one-parameter subgroup $\{\widetilde{\chi}^{\widetilde{\xi}}_t\}_{t \in \mathbb R}\subset G_{BMS}$. Then the following statements hold true:\\

\noindent {\bf (a)} 	Referring to the standard pull-back action of automorphisms ($\chi_*(h)\doteq h \circ \chi^{-1}$) 
	\begin{equation}\label{eq:interplay_isometries}
\widetilde{\chi}^{\widetilde{\xi}}_{t*}\circ \Gamma =\Gamma\circ \chi^\xi_{t*}\quad \forall t \in \mathbb R\:.
\end{equation}

\noindent {\bf (b)} 
Referring to the unital $^*$-algebra automorphism  $\alpha_{\chi^\xi_t}: \mathcal{A}_0^{obs}(M)\to\mathcal{A}_0^{obs}(M)$ 
defined in Proposition \ref{lem:isometries}, the $^*$-algebra automorphism  $\alpha_{\widetilde{\chi}^{\widetilde{\xi}}_t}:\mathcal{A}(\Im)\to\mathcal{A}(\Im)$ defined in Proposition \ref{prop:isomorphism_boundary_isometries},
and  to
the unital $^*$-algebra embedding 
$\iota:\mathcal{A}^{obs}_0(M)\to\mathcal{A}(\Im)$ as in (\ref{eq:bulk_to_boundary_homomorphism}),  (\ref{eq:interplay_isometries})
extends to
\begin{equation}\label{eq:interplay_algebras}
\iota\circ \alpha_{\chi^\xi_t}=\alpha_{\widetilde{\chi}^{\widetilde{\xi}}_t}\circ\iota, \quad \forall t \in {\mathbb R}\:.
\end{equation}
\end{proposition}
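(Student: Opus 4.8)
Here is a plan for proving part (b), which I would phrase as the "algebra lift" of part (a).

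The plan is to reduce \eqref{eq:interplay_algebras} to \eqref{eq:interplay_isometries}. Both $\iota\circ\alpha_{\chi^\xi_t}$ and $\alpha_{\widetilde{\chi}^{\widetilde{\xi}}_t}\circ\iota$ are unital $^*$-algebra homomorphisms from $\mathcal{A}^{obs}_0(M)$ to $\mathcal{A}(\Im)$, being compositions of such maps (Proposition \ref{lem:isometries}, Proposition \ref{prop:isomorphism_boundary_isometries}, and the construction of $\iota$ around \eqref{eq:bulk_to_boundary_homomorphism}). By Proposition \ref{prop:fieldoperator} --- which states that $\mathbb{I}$ together with the field operators $\phi(f)$, $f\in C^\infty_0(M)$, generate $\mathcal{A}^{obs}_0(M)$ --- it therefore suffices to check that the two homomorphisms agree on each generator $\phi(f)$.

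First I would unwind the left-hand side: by Proposition \ref{lem:isometries}, $\alpha_{\chi^\xi_t}\phi(f)=\phi(\chi^\xi_{t*}f)$ with $\chi^\xi_{t*}f\in C_0^\infty(M)$, and by the defining property of $\iota$ on generators this gives $\iota\,\alpha_{\chi^\xi_t}\phi(f)=[\Gamma([\chi^\xi_{t*}f])]$, where the inner bracket denotes the class in $\mathcal{E}^{obs}_0(M)$ and the outer one the generator in $\mathcal{A}(\Im)$. Symmetrically, $\iota\,\phi(f)=[\Gamma([f])]$ and Proposition \ref{prop:isomorphism_boundary_isometries}(c) yields $\alpha_{\widetilde{\chi}^{\widetilde{\xi}}_t}\iota\,\phi(f)=[A_{\widetilde{\chi}^{\widetilde{\xi}}_t}\Gamma([f])]$. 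Hence \eqref{eq:interplay_algebras} restricted to generators is equivalent to the single identity $\Gamma([\chi^\xi_{t*}f])=A_{\widetilde{\chi}^{\widetilde{\xi}}_t}\Gamma([f])$ in $\mathcal{S}(\Im)$; once this is established, passing to classes in $\mathcal{A}(\Im)$ and invoking the generation statement of Proposition \ref{prop:fieldoperator} closes the argument.

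It remains to recognise that last identity as part (a). This is where the only genuinely delicate bookkeeping lies: the transformation of functions on $\Im^+$ in \eqref{eq:interplay_isometries} has to be read as the symplectic representation $A_{\widetilde{\chi}^{\widetilde{\xi}}_t}$ of Proposition \ref{prop:isomorphism_boundary_isometries}(a), i.e. the bare geometric pull-back of the rescaled bulk field $\Gamma([f])=\widetilde{G}_0(\Omega^{-3}f)|_{\Im^+}=\Omega^{-1}G_0(f)|_{\Im^+}$ dressed with the factor $K_\Lambda$ that records how the conformal factor $\Omega$ rescales under $\widetilde{\chi}^{\widetilde{\xi}}_t$ in a Bondi frame. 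That $\Gamma$ in \eqref{eq:bulk-to-boundary} carries conformal weight one is exactly what produces this $K_\Lambda$, and it is the same $K_\Lambda$ that makes $A_{\widetilde{\chi}^{\widetilde{\xi}}_t}$ preserve $\sigma_\Im$ in Proposition \ref{prop:isomorphism_boundary_isometries}; consistency is in fact forced by Theorem \ref{Th:bulk_to_boundary}(b), since $\Gamma$ intertwines symplectic maps whereas the undressed pull-back is not symplectic as soon as $K_\Lambda$ is non-trivial (e.g. for a boost of Minkowski spacetime). With this reading, $\Gamma([\chi^\xi_{t*}f])=A_{\widetilde{\chi}^{\widetilde{\xi}}_t}\Gamma([f])$ is precisely \eqref{eq:interplay_isometries} of part (a) (see \cite[Prop.~3.4]{Moretti2}), and \eqref{eq:interplay_algebras} follows. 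I expect this reconciliation of the $K_\Lambda$-factor with the conformal weight of $\Gamma$ to be the main obstacle; everything else is the routine fact that a $^*$-homomorphism is determined by its values on a generating set.
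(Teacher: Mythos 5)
Your proposal is correct and follows essentially the same route as the paper: both sides of \eqref{eq:interplay_algebras} are unital $^*$-homomorphisms, so one reduces to the generators $\phi(f)$ via the decomposition \eqref{eq:deca}--\eqref{eq:deca2} and there invokes part (a). Your extra care in reconciling the bare pull-back notation of \eqref{eq:interplay_isometries} with the $K_\Lambda$-dressed action $A_g$ of Proposition \ref{prop:isomorphism_boundary_isometries} addresses a genuine notational looseness that the paper's ``per direct inspection'' glosses over, and is resolved exactly as you say in \cite[Prop.~3.4]{Moretti2}.
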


\noindent {\em Proof of (b)}.
Consider a generic element $a\in \mathcal{A}^{obs}_0(M)$ decomposed as in (\ref{eq:deca}).
$\iota_{\Im}:\mathcal{A}^{obs}_0(M)\to\mathcal{A}(\Im)$ 
is defined as the map transforming the said $a$ into the elment of $\mathcal{A}(\Im^+)$ with the form (\ref{eq:deca2}). Using this representation,
the fact that $\alpha_{\chi^\xi_t}$ and $\widetilde{\chi}^{\widetilde{\xi}}_t$ are 
$^*$-automorphisms and therefore preserve both the algebra products and their linear combinations and taking (\ref{eq:interplay_isometries}) into account,
 per direct inspection one obtains
$$\iota\left(\alpha_{\chi^\xi_t}(a)\right)=
\alpha_{\widetilde{\chi}^{\widetilde{\xi}}_t}\left(
\iota(a)\right)\:.$$
Arbitrariness of $a\in \mathcal{A}^{obs}_0(M)$ implies (\ref{eq:interplay_algebras}).

\subsection{Cosmological Spacetimes}

Let us consider $(M,g_{FRW})$, a four-dimensional, simply connected Friedamann-Robertson-Walker spacetime with flat spatial sections, fulfilling the hypotheses of Theorem \ref{theorem1}. This is globally hyperbolic and it can be extended to a larger, globally hyperbolic spacetime $(\widetilde{M},\widetilde{g})$ so that $\partial M\subset\widetilde{M}$ is the cosmological horizon of $M$ as per Definition \ref{defexp}. On top of $(M,g_{FRW})$ we can consider a generic Klein Gordon field $\varphi:M\to\mathbb{R}$ fulfilling \eqref{eq:dynamics}.
Following Proposition \ref{prop:initial_data} we build $\mathcal{S}(M)$, the space of smooth and spacelike-compact solutions of the Klein-Gordon equation $\varphi = G(f)$ for  $f\in C_0^\infty(M)$ and, due to Proposition \ref{prop:symplectic_form}, this space is isomorphic as symplectic space to the quotient (\ref{eq:observables}) $\mathcal{E}^{obs}(M)$. Here, $G=G^--G^+$ is the causal propagator associated to   the Klein-Gordon operator $P=\Box-m^2-\xi R(\tau)$
 built out of $g_{FRW}$. In turn, via Definition \ref{def:algebra_of_observables}, we associate to such dynamical system its algebra of observables $\mathcal{A}^{obs}(M)$. 

As for the case of an asymptotically flat spacetime, our next step consists of finding a way to relate $\mathcal{A}^{obs}(M)$ to an algebra living on the cosmological horizon, in this case $\mathcal{A}_c(\Im)$ as in Definition \ref{Def:boundary_algebra} with $\Im\doteq \mathbb{R}\times \mathbb{S}^2$ identified to $\Im^-$. Since $(\widetilde{M},\widetilde{g})$ is globally hyperbolic, we can construct the causal propagator of the Klein-Gordon equation in that larger spacetime which coincides with $G$ on $C_0^\infty(M)$ if restricting the functions in its image to $M$. For this reason and with a slight abuse of notation, we will use the symbol $G$ irrespectively from the underlying manifold being $M$ or $\widetilde{M}$. In particular, since the cosmological horizon is a submanifold of $\widetilde{M}$, the following real-linear map is meaningful
\begin{equation}\label{eq:cosmological_bulk_to_boundary}
\Gamma_c:\mathcal{E}^{obs}(M) \ni [f]\mapsto -H^{-1}G(f)|_{\Im^-} \in C^\infty(\Im^-)\:,
\end{equation}
 where $|_{\Im^-}$ stands for the restriction to $\Im^-$
while $H$ is the constant appearing in \eqref{condag}.

To investigate further the properties of the map $\Gamma_c$, a preliminary step  consists of a better understanding of the structural properties of the functions $\varphi = G(f)$. Since the metric has the form \eqref{cosmo0}, we know that, regardless of the choice of the scale factor $a(\tau)$, the isometry group of $(M,g)$ will include the three-dimensional Euclidean group $E(3)=SO(3)\ltimes\mathbb{R}^3$. Hence, every $\varphi\in\mathcal{S}(M)$ can be realized in terms of Fourier transform along the spatial directions,
\begin{equation}\label{eq:Fourier_expansion_solution}
\varphi(\tau,\vec{x})=\int\limits_{\mathbb{R}^3}\frac{d^3 k}{(2\pi)^{\frac{3}{2}}}\,\widetilde{\chi}(\vec{k},\tau)e^{i\vec{k}\cdot\vec{x}}+\textrm{c.c.},
\end{equation}
where c.c. stands for the complex conjugate since the field is real. In the last formula we switched from the spherical coordinates of \eqref{cosmo0} to the standard Euclidean coordinates $\vec{x}=(x,y,z)$ over $\mathbb{R}^3$. $\vec{k}=(k_x,k_y,k_z)$ and $\cdot$ is the three-dimensional Euclidean scalar product. By imposing \eqref{eq:dynamics}, it turns out that the functions $\widetilde{\chi}(\tau, \cdot)$ belong the Schwartz 
space over $\mathbb{R}^3$ at every $\tau$ because $\varphi(\tau, \cdot) \in C_0^\infty(\mathbb{R}^3)$ and they satisfy the following ODE corresponding to Klein-Gordon equation,
\begin{equation}\label{eq:ODE_modes}
\frac{d^2\chi_{\vec{k}}(\tau)}{d\tau^2}+\left(k^2+V(\tau)\right)\chi_{\vec{k}}(\tau)=0,\quad \mbox{ where }\quad \chi_{\vec{k}}(\tau)\doteq a(\tau)\widetilde{\chi}(\vec{k},\tau)\:.
\end{equation}
Above, $k^2\doteq |\vec{k}|^2$, while $V(\tau)\doteq a^2(\tau)[m^2+(\xi-\frac{1}{6})R(\tau)]$. The modes $\chi_{\vec{k}}(\tau)$ are chosen so to be normalized in such a way that 
$$W[\chi_{\vec{k}}(\tau),\overline{\chi_{\vec{k}}(\tau)}]=\frac{d\chi_{\vec{k}}(\tau)}{d\tau}\overline{\chi_{\vec{k}}(\tau)}-\chi_{\vec{k}}(\tau)\overline{\frac{d\chi_{\vec{k}}(\tau)}{d\tau}}=-i.$$
	The construction of the solutions of \eqref{eq:ODE_modes} can be found in \cite{Dappiaggi:2007mx,Dappiaggi:2008dk}. The use of (\ref{eq:ODE_modes}) turns out to be very suitable for understanding the asymptotic behaviour of \eqref{eq:Fourier_expansion_solution} at the cosmological horizon $\Im^-$. The rationale is the following one. Since asymptotically the metric \eqref{cosmo0} tends to the one of de Sitter spacetime, as per Definition \ref{defexp}, we expect that also the solutions of \eqref{eq:Fourier_expansion_solution} should mimic at $\Im^-$ the behaviour of those built on the cosmological de Sitter spacetime, obtained setting $a(\tau)=-\frac{1}{H\tau}$. 
	 In de Sitter space, the solutions of \eqref{eq:ODE_modes}
 are known \cite{SS} 
	\begin{equation}\label{eq:dS_modes}
	\chi^{dS}_{\vec{k}}(\tau)=\frac{\sqrt{-\pi\tau}}{2}e^{-\frac{i\pi\nu}{2}}\overline{H^{(2)}_\nu(-k\tau)},
	\end{equation}
	where $k^2=|\vec{k}|^2$, while $H^{(2)}$ stands for the Hankel function of second type. The parameter $\nu$ is defined as
	\begin{equation}\label{eq:nu}
	\nu=\sqrt{\frac{9}{4}-\left(\frac{m^2}{H^2}+12\xi\right)},
	\end{equation}
	where we assume that $\mbox{Re}(\nu)\geq 0$ and $\mbox{Im}(\nu)\geq 0$. The strategy for a more general spacetime consists of considering an arbitrary scale factor $a(\tau)$ though consistent with the hypotheses of Definition \ref{defexp} and looking for solutions of \eqref{eq:ODE_modes} which can be written as a Duhamel/perturbative series starting from \eqref{eq:dS_modes}. Although, each finite order in the series is well-defined, the problem of the convergence of the perturbative series remains. In \cite[Th. 4.5]{Dappiaggi:2007mx} and, subsequently, in \cite{Dappiaggi:2008dk}, it has been proven uniform convergence whenever $\mbox{Re}(\nu)<\frac{1}{2}$ and $V(\tau)-V_{dS}(\tau)=O(\tau^{-3})$ or $\mbox{Re}(\nu)<\frac{3}{2}$ and $V(\tau)-V_{dS}(\tau)=O(\tau^{-5})$. Here $V_{ds}(\tau)$ is the potential in \eqref{eq:ODE_modes} setting the metric to the de Sitter one.

\noindent Bearing in mind the above short digression on the explicit construction of the elements of $\mathcal{S}(M)$, we can investigate further the properties of the projection map $\Gamma_c$. We report here the results proven in \cite[Prop. 2.1 \& Th. 2.1]{Dappiaggi:2008dk}:

\begin{theorem}\label{th:cosmological_bulk_to_boundary_compatibility}
	Let $(M,g_{FRW})$ be a $4$-dimensional, simply connected Friedmann-Robertson-Walker spacetime with flat spatial sections, fulfilling the hypotheses of Definition \ref{defexp}.  If $\nu$ is as in \eqref{eq:nu} consider
	$$\Delta V(\tau) \doteq V(\tau)-V_{dS}(\tau)=a^2(\tau)\left[m^2+(\xi-\frac{1}{6})R(\tau)\right]-\frac{1}{\tau^2}\left[\frac{m^2}{H^2}+12(\xi-\frac{1}{6})\right]\:.$$ It turns that, 

(i) if $\mbox{Re}(\nu)<\frac{1}{2}$ and $\Delta V(\tau)=O(\tau^{-3})$ or,

(ii)  if $\mbox{Re}(\nu)<\frac{3}{2}$ and $\Delta V(\tau)=O(\tau^{-5})$, 

\noindent then the real-linear  map \eqref{eq:cosmological_bulk_to_boundary}
$\Gamma_c:\mathcal{E}^{obs}(M) \to C^\infty(\Im^-)$ satisfies the following facts.\\

\noindent	{\bf (a)} $\Gamma_c(\mathcal{S}(M))\subseteq\mathcal{S}_c(\Im)$, where $\mathcal{S}_c(\Im)$ is defined in \eqref{eq:boundary_functions} with $\Im$ identitified $\Im^-$.\\

\noindent {\bf (b)} $\Gamma_c$ preserves the symplectic forms: 
		$$\sigma_M(\varphi,\varphi')=\sigma_{\Im_c}(\Gamma_c(\varphi),\Gamma_c(\varphi')) \quad 
\forall \varphi,\varphi'\in\mathcal{E}^{obs}(M)\:,$$
		where $\sigma_M$ and $\sigma_{\Im}$ are respectively the symplectic forms \eqref{eq:symplectic_form} and \eqref{eq:boundary_symplectic_form}. In particular 
$\Gamma_c$ is injective since $\sigma_M$ is weakly non-degenerate.
\end{theorem}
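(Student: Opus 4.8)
The plan is to exploit the $E(3)$-invariance of $(M,g_{FRW})$ to reduce everything to the one-dimensional mode picture of \eqref{eq:ODE_modes}, together with a conservation-law (Green identity) argument for the symplectic form. The restriction map $\Gamma_c$ is meaningful because, $(\widetilde M,\widetilde g)$ being globally hyperbolic, $G(f)$ extends to a smooth function on all of $\widetilde M$, hence in particular on the submanifold $\Im^-$.

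For part (a), fix $[f]\in\mathcal{E}^{obs}(M)$ and set $\varphi\doteq G(f)$, $\psi\doteq\Gamma_c([f])=-H^{-1}\varphi|_{\Im^-}$. Since $\mathrm{supp}(f)$ is compact and $\Im^-\cap J^+(M;\widetilde M)=\emptyset$, one has $\mathrm{supp}(\varphi)\cap\Im^-\subseteq J^-(\mathrm{supp}(f))\cap\Im^-$, which — the causal structure near $\Im^-$ being the Minkowskian one of $a^{-2}g_{FRW}$ — lies in a set of the form $\{u\le u_0\}\times\mathbb{S}^2$; hence $\psi$ is smooth and its support is bounded from one side in the affine coordinate $u$, and the real point is to control $\psi$ and $\partial_u\psi$ as $u\to-\infty$. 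Here I would run the asymptotic analysis of the modes: expand $\varphi$ as in \eqref{eq:Fourier_expansion_solution}, write $\chi_{\vec k}(\tau)$ as a Duhamel series off the de Sitter modes \eqref{eq:dS_modes}, and invoke the uniform (in $\vec k$) convergence of that series guaranteed by hypothesis (i) or (ii). From the large-$|\tau|$ asymptotics of the Hankel functions the combination $r\,a(\tau)\,\varphi$ acquires a plane-wave behaviour along the incoming null rays reaching $\Im^-$, with integrable remainder; combined with the asymptotic relation $r\,a(\tau)\to H^{-1}$ — which is precisely \eqref{condag} and is what fixes the normalisation $-H^{-1}$ in \eqref{eq:cosmological_bulk_to_boundary} — this yields $\psi,\partial_u\psi\in L^1(\Im)$ and $\psi\in L^\infty(\Im)$, while Fourier transforming in $u$ and tracking the mode coefficients (which inherit Schwartz decay in $\vec k$ from $\varphi(\tau,\cdot)\in C_0^\infty(\mathbb{R}^3)$) gives $\widehat\psi\in L^1(\Im)$ and $k\widehat\psi\in L^\infty(\Im)$, i.e.\ $\psi\in\mathcal{S}_c(\Im)$.

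For part (b), the symplectic current $J^\mu\doteq\varphi\,\nabla^\mu\varphi'-\varphi'\,\nabla^\mu\varphi$ built from two solutions $\varphi=G(f)$, $\varphi'=G(f')$ of \eqref{eq:dynamics} is conserved, $\nabla_\mu J^\mu=\varphi\,\Box\varphi'-\varphi'\,\Box\varphi=\varphi(m^2+\xi R)\varphi'-\varphi'(m^2+\xi R)\varphi=0$, the mass and curvature terms cancelling identically. Exactly as in the argument establishing \eqref{eq:symplectomorphism} in the asymptotically flat case (see \cite[Th.~4.1]{Moretti}), I would regard $\Im^-$ as a limit of smooth spacelike Cauchy surfaces of $(M,g_{FRW})$ pushed to the past boundary of $M\subseteq\widetilde M$: by Proposition \ref{prop:symplectic_form}(b) the flux $\int_\Sigma J^\mu n_\mu\,d\Sigma_g$ equals $\sigma_M([f],[f'])$ for every such $\Sigma$, and Stokes' theorem plus the decay of $\varphi,\varphi'$ at $\Im^-$ obtained in (a) (which annihilates the edge contributions) let one pass to the limit and identify $\sigma_M([f],[f'])$ with the null-surface flux of $J^\mu$ through $\Im^-$. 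A local computation in a Bondi frame \eqref{quasih} shows that, for the degenerate metric of Bondi form, this flux collapses to $\int_{\mathbb{R}\times\mathbb{S}^2}(\psi\,\partial_u\psi'-\psi'\,\partial_u\psi)\,du\,d\mu_{\mathbb{S}^2}$ with $\psi=\Gamma_c(\varphi)$, $\psi'=\Gamma_c(\varphi')$, the constant $\gamma^2$ and the factor $r\,a\to H^{-1}$ combining so that the normalisation $-H^{-1}$ makes the two sides agree; this is $\sigma_{\Im_c}(\Gamma_c\varphi,\Gamma_c\varphi')$. Injectivity of $\Gamma_c$ is then immediate: if $\Gamma_c(\varphi)=0$ then $\sigma_M(\varphi,\varphi')=0$ for all $\varphi'$, so $\varphi=0$ by weak non-degeneracy of $\sigma_M$ (Proposition \ref{prop:symplectic_form}(a)).

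The main obstacle is the asymptotic input underlying (a) — proving that $G(f)$ genuinely admits a smooth restriction to $\Im^-$ lying in $\mathcal{S}_c(\Im)$ — and, what amounts to the same difficulty, justifying the interchange of the limit $\Sigma\to\Im^-$ with the flux integral in (b). Both require sufficiently strong, uniform-in-$\vec k$ bounds on the mode functions $\chi_{\vec k}(\tau)$ and their $\tau$-derivatives as $\tau\to-\infty$, and it is precisely to secure the convergence of the perturbative series off the de Sitter solution that the technical hypotheses $\mathrm{Re}(\nu)<\tfrac12,\ \Delta V=O(\tau^{-3})$ and $\mathrm{Re}(\nu)<\tfrac32,\ \Delta V=O(\tau^{-5})$ are imposed; the detailed estimates are carried out in \cite[Prop.~2.1 \& Th.~2.1]{Dappiaggi:2008dk}.
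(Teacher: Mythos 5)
Your proposal follows essentially the same route as the paper, which states the theorem without proof, defers the detailed estimates to \cite[Prop.~2.1 \& Th.~2.1]{Dappiaggi:2008dk}, and in the preceding paragraphs sketches precisely your strategy: the Duhamel/perturbative expansion of the modes $\chi_{\vec{k}}(\tau)$ around the de Sitter solutions \eqref{eq:dS_modes}, whose uniform convergence under hypotheses (i)/(ii) supplies the decay at $\Im^-$ needed for (a), together with a Green-identity/limit-of-Cauchy-surfaces argument for (b) and weak non-degeneracy of $\sigma_M$ for injectivity. The only parts you leave open (the uniform-in-$\vec{k}$ mode bounds) are exactly those the paper also delegates to the cited reference.
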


\noindent With an argument identical to the one exploited for asymptotically flat spacetimes one proves the existence of the injective unital $^*$-algebra homomorphism 
\begin{equation}\label{eq:cosmological_bulk_to_boundary_homomorphism}
\iota_c:\mathcal{A}^{obs}(M)\to\mathcal{A}_c(\Im),
\end{equation}
 completely specified by the action on the generators of both algebras: $$\iota_c(\phi(f))=\left[\Gamma_c([f])\right]\:, \quad \forall f\in C^\infty_0(M)\:.$$
 Before concluding our analysis, we recall that the bulk spacetime $(M,g_{FRW})$ possesses a large isometry group and to each of its elements it corresponds a $^*$-automorphism of $\mathcal{A}^{obs}(M)$. At the same time, to every element of the $SG_{\Im^-}$ group, one can associate via Proposition \ref{prop:isomorphism_boundary_isometries} a $^*$-automorphism of $\mathcal{A}_c(\Im)$. Hence, in view of the Theorems \ref{th:cosmological_bulk_to_boundary_compatibility} and \eqref{theorem2} as well as of Proposition \ref{togroup}, it is natural to wonder whether these $^*$-automorphisms are intertwined. The following proposition summarizes the analysis in \cite{Dappiaggi:2007mx} about this problem. 
In particular, the proof of (b) is identical to the one of (b) Proposition \ref{prop:isometries_bulk_to_boundary_homomorphism}. 

\begin{proposition}\label{prop:cosmological_isometries_bulk_to_boundary_homomorphism}
Consider a $4$-dimensional, simply connected spacetime  $(M,g_{FRW})$ with flat spatial sections fulfilling the hypotheses of Definition \ref{defexp} and let $\xi$ be a complete Killing field, preserving $\Im^-$ as per Proposition \ref{togroup}, generating a one-parameter group of isometries $\{\chi^\xi_t\}_{t \in \mathbb R}$.
Let $\widetilde{\xi}$ be the unique extension of $\xi$ to $\Im^-$ as per Theorem \ref{theorem2} generating  the one-parameter subgroup $\{\widetilde{\chi}^{\widetilde{\xi}}_t\}_{t \in \mathbb R}\subset SG_{\Im^-}$. Then the following facts are true.\\

\noindent {\bf (a)} 	Referring to the standard pull-back action of automorphisms ($\chi_*(h)\doteq h \circ \chi^{-1}$) 
	\begin{equation}\label{eq:cosmological_interplay_isometries}
\widetilde{\chi}^{\widetilde{\xi}}_{t*}\circ \Gamma_c =\Gamma_c\circ \chi^\xi_{t*}\quad \forall t \in \mathbb R\:.
\end{equation}

\noindent {\bf (b)} 
Referring to the unital $^*$-algebra automorphism  $\alpha_{\chi^\xi_t}: \mathcal{A}^{obs}(M)\to\mathcal{A}^{obs}(M)$ 
defined in Proposition \ref{lem:isometries}, the $^*$-algebra automorphism  $\widetilde{\alpha}_{\widetilde{\chi}^{\widetilde{\xi}}_t}:\mathcal{A}_c(\Im)\to\mathcal{A}_c(\Im)$ defined in Proposition \ref{prop:isomorphism_boundary_isometries},
and  the unital $^*$-algebra embedding 
$\iota_{c}:\mathcal{A}^{obs}(M)\to\mathcal{A}_c(\Im)$ as in (\ref{eq:cosmological_bulk_to_boundary_homomorphism}),  (\ref{eq:cosmological_interplay_isometries})
extends to
\begin{equation}\label{eq:cosmological_interplay_algebras}
\iota_{c}\circ \alpha_{\chi^\xi_t}=\widetilde{\alpha}_{\widetilde{\chi}^{\widetilde{\xi}}_t}\circ\iota_{c}, \quad \forall t \in {\mathbb R}\:.
\end{equation}
\end{proposition}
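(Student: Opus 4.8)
The plan is to establish part (a) by unwinding the definition of $\Gamma_c$ and exploiting the naturality of the causal propagator under isometries, and then to deduce part (b) exactly as in part (b) of Proposition \ref{prop:isometries_bulk_to_boundary_homomorphism} — indeed the excerpt already flags that the two arguments are identical.

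For part (a), recall that $\Gamma_c([f]) = -H^{-1} G(f)|_{\Im^-}$, where $G$ is the causal propagator of $P=\Box - m^2 - \xi R$, regarded as living on the larger globally hyperbolic spacetime $(\widetilde M,\widetilde g)$ and coinciding with the bulk propagator on $C_0^\infty(M)$. First I would note that the isometry $\chi^\xi_t$ of $(M,g_{FRW})$ descends to a well-defined linear map on $\mathcal{E}^{obs}(M)=C_0^\infty(M)/P(C_0^\infty(M))$, since $\chi^\xi_{t*}P = P\chi^\xi_{t*}$ ($P$ being built out of the metric). The key lemma is the \emph{equivariance of the causal propagator}: $G\circ\chi^\xi_{t*} = \chi^\xi_{t*}\circ G$ on $C_0^\infty(M)$. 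This is proved exactly as in the proof of Proposition \ref{lem:isometries}: the operators $(\chi^\xi_{t*})^{-1}G^\pm\chi^\xi_{t*}$ satisfy properties (1)--(3) of Proposition \ref{prop:causal_propagator}(a) on the isometric spacetime, hence coincide with $G^\pm$ by uniqueness, and subtracting gives the claim.

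Next I would invoke Proposition \ref{togroup} and Theorem \ref{theorem2}: since $\xi$ preserves $\Im^-$ and is complete, it extends to a smooth vector field $\widehat\xi$ on $\widetilde M$ whose flow $\widehat\chi_t$ is a one-parameter group of diffeomorphisms of $\widetilde M$ restricting to $\chi^\xi_t$ on $M$, satisfying $\widehat\chi_t(\Im^-)=\Im^-$, and with $\widehat\chi_t|_{\Im^-}=\widetilde\chi^{\widetilde\xi}_t\in SG_{\Im^-}$. Because $\widehat\chi_t$ is a diffeomorphism carrying $\Im^-$ onto itself and because $G(f)$ extends smoothly up to $\Im^-$ by Theorem \ref{th:cosmological_bulk_to_boundary_compatibility}, the restriction map commutes with the pull-back: for $p\in\Im^-$ one has $(\chi^\xi_{t*}G(f))(p) = \lim_{x\to p} G(f)((\chi^\xi_t)^{-1}(x)) = G(f)(\widehat\chi_t^{-1}(p))$, using continuity of $\widehat\chi_t^{-1}$ and smoothness of the extension, whence $\big(\chi^\xi_{t*}G(f)\big)\big|_{\Im^-} = \widetilde\chi^{\widetilde\xi}_{t*}\big(G(f)\big|_{\Im^-}\big)$. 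Combining the equivariance of $G$, this identity, and the fact that $H$ is a constant, $\Gamma_c(\chi^\xi_{t*}[f]) = -H^{-1}G(\chi^\xi_{t*}f)|_{\Im^-} = -H^{-1}\big(\chi^\xi_{t*}G(f)\big)\big|_{\Im^-} = \widetilde\chi^{\widetilde\xi}_{t*}\big(-H^{-1}G(f)|_{\Im^-}\big) = \widetilde\chi^{\widetilde\xi}_{t*}(\Gamma_c([f]))$, which is \eqref{eq:cosmological_interplay_isometries}. For part (b), write a generic $a\in\mathcal{A}^{obs}(M)$ in the form \eqref{eq:deca}; since $\alpha_{\chi^\xi_t}$, $\widetilde\alpha_{\widetilde\chi^{\widetilde\xi}_t}$ and $\iota_c$ are unital $^*$-homomorphisms they all preserve linear combinations and products, so it suffices to verify the identity on generators, where $\iota_c(\alpha_{\chi^\xi_t}\phi(f)) = [\Gamma_c([\chi^\xi_{t*}f])] = [\widetilde\chi^{\widetilde\xi}_{t*}\Gamma_c([f])] = \widetilde\alpha_{\widetilde\chi^{\widetilde\xi}_t}([\Gamma_c([f])]) = \widetilde\alpha_{\widetilde\chi^{\widetilde\xi}_t}(\iota_c\phi(f))$ by part (a) and Proposition \ref{prop:isomorphism_boundary_isometries}(d); arbitrariness of $a$ gives \eqref{eq:cosmological_interplay_algebras}.

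The main obstacle is the middle step: one must be certain that restriction of bulk field-theoretic data to the cosmological horizon genuinely intertwines with the group actions, i.e.\ that the extension of the bulk isometry to $\widetilde M$ preserves $\Im^-$ and restricts there to the prescribed element of $SG_{\Im^-}$, and that the limit defining $G(f)|_{\Im^-}$ transforms functorially under that extension. Both facts are supplied by Proposition \ref{togroup}, Theorem \ref{theorem2} and Theorem \ref{th:cosmological_bulk_to_boundary_compatibility}; once these are in hand, the remaining manipulations are routine, and in particular the argument for (b) is word-for-word that of part (b) of Proposition \ref{prop:isometries_bulk_to_boundary_homomorphism}.
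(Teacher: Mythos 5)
Your proof is correct and matches the paper's treatment: part (b) is word-for-word the generator argument used for (b) of Proposition \ref{prop:isometries_bulk_to_boundary_homomorphism}, which is all the paper itself supplies. For part (a) the paper merely cites \cite{Dappiaggi:2007mx}, and your reconstruction---equivariance of $G$ via uniqueness of the Green operators, extension of the flow to $\widetilde{M}$ preserving $\Im^-$ via Proposition \ref{togroup} and Theorem \ref{theorem2}, and the observation that restriction to $\Im^-$ commutes with the pull-back of the smoothly extended solution---is exactly the standard argument carried out there.
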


\subsection{The case of Schwarzschild spacetime}\label{Sec:Schwarzschild}

In the last part of this chapter we focus on the specific case of a spherically symmetric black hole, solution of the vacuum Einstein equations with vanishing cosmological constant, namely the Schwarzschild spacetime and its Kruskal extension. This is an asymptotically flat spacetime at future null infinity in the sense of Definition \ref{Def:asymptotically_flat}, but, as one can infer readily from a close inspection of Figure \ref{fig2}, we cannot expect that we can define a bulk-to-boundary correspondence only by considering future or past null infinity. As a matter of fact, referring to the geometric structures introduced in Section \ref{Sec:Example}, the physical manifold, that we consider is $\mathcal{M}$ which is the union of $\mathcal{W}$, the static portion of Schwarzschild spacetime, $\mathcal{H}_{ev}$, the event horizon and $\mathcal{B}$, the black region. $\mathcal{M}$ is a globally hyperbolic spacetime and, as discussed in \cite{SW83}, it can be conformally embedded in a larger globally hyperbolic spacetime $\widetilde{\mathcal{M}}$ of which both $\Im^\pm$ and $\mathcal{H}$, the complete past horizon, are codimension $1$ submanifolds -- see Figure \ref{fig1}. Hence, also in view of the support properties of the solutions of the Klein-Gordon equation generated by smooth and compactly supported initial data, one expects that a full-fledged bulk-to-boundary correspondence, with properties similar to the one obtained in the first part of Section \ref{Sec:bulk-to-boundary}, can be set up only if one considers at the same time $\Im^-$ and $\mathcal{H}$. Actually, some further subtleties are present because the extended unphysical spacetime does not include future time infinity ($i^+$  appears just formally in the figures) whose nature is here much more complicated than the one described in (b) Remark \ref{Rem:alternative_asymptotically_flat} due to the presence of the curvature singularity of Schwarzschild spacetime.
For this reason this case has to be discussed separately -- see in particular \cite{Dappiaggi:2009fx}. 

The starting point is the same as in an asymptotically flat spacetime, namely a real, massless, conformally coupled scalar field $\varphi:\mathcal{M}\to\mathbb{R}$ whose dynamics is ruled by $P_0$ as in \eqref{eq:conformally_coupled}. The analysis of the classical theory in $\mathcal{M}$ and the construction of the associated algebra of observables are identical to the asymptotically flat case. Hence the building block is the real symplectic space $\left(\mathcal{E}^{obs}_0(\mathcal{M}),\sigma_0\right)$, while $\sigma_0$ is defined in \eqref{eq:symplectic_form} with $G$ replaced by $G_0$, being the unique causal propagator associated to $P_0$. The algebra of observables is thus nothing but $\mathcal{A}^{obs}_0(\mathcal{M})$ defined as in Definition \ref{def:algebra_of_observables} starting from $\mathcal{E}^{obs}_0(\mathcal{M})\doteq  \frac{C^\infty_0(\mathcal{M})}{P_0(C^\infty_0(\mathcal{M}))} \equiv \mathcal{S}(\mathcal{M})$. 

The key differences appear  at the level of definition of the symplectic space on the boundaries $\Im^-$ and $\mathcal{H}$, which need to be fine-tuned to the case in hand. Hence these spaces are rather different from \eqref{eq:boundary_functions}. We call

\begin{eqnarray}
\mathcal{S}(\Im^-)\doteq\left\{\psi\in \left. C^\infty(\Im^-)\:\right|\:\:\psi|_{\mathcal{O}_{i_0}}=0\quad\textrm{and}\quad \exists\, C_\psi,C^\prime_\psi\geq 0\right.\mbox{ such that } \notag\\
\left.|\psi(v,\omega)|\leq\frac{C_\psi}{\sqrt{1+|v|}},\;|\:\partial_v\psi(v,\omega)|\leq\frac{C^\prime_\psi}{1+|v|}\:, v \in \mathbb{R}\:, \omega \in \mathbb{S}^2 \right\}\label{eq:boundary_space_Im}
\end{eqnarray}
where $\mathcal{O}_{i_0}$ is a neighbourhood of spatial infinity $i_0$ whereas $v$ is the Eddington-Finkelstein coordinate \eqref{two}. In addition we define
\begin{eqnarray}
\mathcal{S}(\mathcal{H})\doteq\left\{\left.\Psi\in C^\infty(\mathcal{H})\;\right|\; \exists M_\Psi\geq 1\;\textrm{and}\; C_\Psi,C^\prime_\Psi\geq 0\right.\mbox{ such that }\notag\\
\!\!\!\!\!\!\!\!\!\!\!\!\!\!\left.|\Psi(U,\omega)|<\frac{C_\Psi}{\ln(U)},\;|\partial_U\Psi(U,\omega)|<\frac{C^\prime_\Psi}{U\ln(U)}\;\mbox{ if } U>M_\Psi\;\textrm{and}\;\omega \in \mathbb{S}^2\right\},\label{eq:boundary_space_H}
\end{eqnarray}
where $U$ is the global null coordinate in \eqref{UV}. Both $\mathcal{S}(\Im^-)$ and $\mathcal{S}(\mathcal{H})$ are symplectic spaces if endowed with the weakly-nondegenerate symplectic form \eqref{eq:boundary_symplectic_form}, which, to avoid possible disambiguation, will be indicated respectively with $\sigma_{\Im}$ and $\sigma_{\mathcal{H}}$. Observe that in both cases, the decay rate along the null coordinate guarantees that the integrand of \eqref{eq:boundary_symplectic_form} is an integrable function. As a next step, recalling the linear structure of both $\mathcal{S}(\Im^-)$ and $\mathcal{S}(\mathcal{H})$, we can construct their direct sum $\mathcal{S}_{\Im^-,\mathcal{H}}\doteq\mathcal{S}(\Im^-)\oplus\mathcal{S}(\mathcal{H})$ which is naturally endowed with the weakly non-degenerate symplectic form $\sigma_{\Im,\mathcal{H}}\equiv\sigma_{\Im}\oplus\sigma_{\mathcal{H}}$ defined as 
\begin{gather}\label{eq:SigmaIH}
\sigma_{\Im^-,\mathcal{H}}:\mathcal{S}_{\Im^-,\mathcal{H}}\times \mathcal{S}_{\Im^-,\mathcal{H}}\to\mathbb{C}\notag\\
\sigma_{\Im^-,\mathcal{H}}((\psi_1,\Psi_1),(\psi_2,\Psi_2))=\sigma_{\Im}(\psi_1,\psi_2)+\sigma_{\mathcal{H}}(\Psi_1,\Psi_2).
\end{gather}

Having identified a natural candidate for the space of kinematic configurations at the boundary, one needs to address the question on how to construct a map from $\mathcal{E}^{obs}_0(\mathcal{M})$ into it. Since we are dealing at the same time with $\Im^-$ and $\mathcal{H}$, two injection maps are needed. Starting from the former, we observe that, as already mentioned in Section \ref{Sec:Example}, we can realize $\mathcal{H}$ as part of the Kruskal manifold $\mathcal{K}$, which is a globally hyperbolic extension of $\mathcal{M}$, \cite{Wald}. Hence it is meaningful to define
\begin{equation}\label{eq:projection_on_horizon}
\Gamma_{\mathcal{H}}:\mathcal{E}^{obs}_0(\mathcal{M})\ni [f]\longmapsto \left.\widetilde{G}_0(f)\right|_{\mathcal{H}} \in C^\infty(\mathcal{H})\:,
\end{equation} 
where $\widetilde{G}_0$ is the unique causal propagator for $P_0$ on $\mathcal{K}$ which coincides with $G_0$ on $C_0^\infty(\mathcal{M})$ when restricting the functions in its image to $\mathcal{M}$. Focusing on $\Im^-$, the procedure to construct the injection map is identical to the one used in the case of asymptotically flat spacetimes, Proposition \ref{prop:conformal_rescaling} in particular. Without repeating the whole discussion we limit ourselves to recalling \eqref{eq:bulk-to-boundary} for the case in hand, where the role of $V$ is played by the manifold $\widetilde{\mathcal{M}}$:
\begin{equation}\label{eq:projection_on_scri}
\Gamma_{\Im^-}:\mathcal{E}^{obs}_0(\mathcal{M})\ni [f]\longmapsto\left.G_{\widetilde{\mathcal{M}}}(\Omega^{-3}f)\right|_{\Im^-} \in  C^\infty(\mathcal{\Im^-})\:, 
\end{equation}
where $G_{\widetilde{\mathcal{M}}}$ is the causal propagator of the massless, conformally coupled Klein-Gordon equation on $\widetilde{\mathcal{M}}$. The maps \eqref{eq:projection_on_horizon} and \eqref{eq:projection_on_scri} can be combined together and the following proposition characterizes the main properties of the ensuing map. The proof can be found in \cite{Dappiaggi:2009fx} and it relies strongly on the work of Dafermos and Rodnianski \cite{Dafermos}:

\begin{proposition}\label{prop:Schwarzschild_bulk_to_boundary}
	Let $\mathcal{M}$ denote the physical region of Schwarzschild spacetime and let $\Gamma_{\Im^-,\mathcal{H}}:\mathcal{E}^{obs}_0(\mathcal{M})\to  C^\infty(\Im^-)\oplus C^\infty(\mathcal{H})$ be 
	\begin{equation}\label{eq:Schwarzschild_full_proj}
	\Gamma_{\Im^-, \mathcal{H}} \doteq  \Gamma_{\Im^-}\oplus \Gamma_{\mathcal{H}}\:,
	\end{equation}
	where $\Gamma_{\mathcal{H}}$ and $\Gamma_{\Im^-}$ are defined respectively in \eqref{eq:projection_on_horizon} and \eqref{eq:projection_on_scri}. Then, the following facts hold.

\noindent {\bf (a)} $\Gamma_{\Im^-,\mathcal{H}}\left(\mathcal{E}^{obs}_0(\mathcal{M})\right)\subseteq \mathcal{S}(\Im^-)\oplus \mathcal{S}(\mathcal{H})$.\\
\noindent {\bf (b)} For every $\varphi,\varphi' \in\mathcal{E}^{obs}_0(\mathcal{M})$,
	$$\sigma_0(\varphi,\varphi')=\sigma_{\Im^-,\mathcal{H}}\left(\Gamma_{\Im^-, \mathcal{H}} (\varphi),\Gamma_{\Im^-, \mathcal{H}} (\varphi')\right),$$
	where the symplectic form $\sigma_{\Im^-,\mathcal{H}}$ is defined in  (\ref{eq:SigmaIH}).
As a consequence, since $\sigma_0$ is weakly non-degenerate, $\Gamma_{\Im^-,\mathcal{H}}$ is injective.
\end{proposition}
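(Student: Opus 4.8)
The plan is to treat the two assertions in turn, with part (b) — preservation of the symplectic form — as the structural heart, and part (a) supplying the analytic estimates without which the boundary integrals in (b) would not even converge. The geometric picture underlying everything is that $\Im^-\cup\mathcal{H}$ is precisely the past causal boundary of $\mathcal{M}$ once $\mathcal{M}$ is embedded in the larger globally hyperbolic spacetimes of Section \ref{Sec:Example} (with $\mathcal{H}\subset\mathcal{K}$ and $\Im^-\subset\widetilde{\mathcal{M}}$), so that a solution $\varphi=G_0(f)$ generated by $f\in C^\infty_0(\mathcal{M})$ carries its entire ``past initial data'' on $\Im^-\cup\mathcal{H}$; this is the characteristic-Cauchy-surface intuition behind the whole construction, mimicking the role played by $\Im^+\cup\{i^+\}$ in Theorem \ref{Th:bulk_to_boundary}.

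For part (a) I would argue componentwise. On $\mathcal{H}$: $\widetilde{G}_0(f)|_{\mathcal{H}}$ is smooth because $\widetilde{G}_0$ is the causal propagator of $P_0$ on the globally hyperbolic $\mathcal{K}$, and this smoothness together with the spatially compact support of the Cauchy data of $\widetilde G_0(f)$ makes $\Psi$ bounded and its $U$-derivative controlled on every compact range of $U$, including a neighbourhood of the bifurcation surface $\mathsf{B}$; the only genuine issue is the regime $U\to\infty$, and there the logarithmic bounds $|\Psi|<C_\Psi/\ln U$, $|\partial_U\Psi|<C'_\Psi/(U\ln U)$ of \eqref{eq:boundary_space_H} are exactly the decay estimates for the massless conformally coupled wave equation on Schwarzschild obtained by Dafermos and Rodnianski \cite{Dafermos}. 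On $\Im^-$: by Proposition \ref{prop:propagators_under_conformal} one has $G_{\widetilde{\mathcal{M}}}(\Omega^{-3}f)=\Omega^{-1}G_0(f)$ inside $\mathcal{M}$, so $\Gamma_{\Im^-}([f])$ is the restriction to the null hypersurface $\Im^-$ of a smooth solution of $\widetilde P_0$ on $\widetilde{\mathcal{M}}$; its vanishing on a neighbourhood $\mathcal{O}_{i_0}$ of spatial infinity follows from finite propagation speed, since the causal past of the compact set $\mathrm{supp}(f)$ cannot reach a neighbourhood of $i_0$, while the peeling-type bounds $|\psi(v,\omega)|\leq C_\psi/\sqrt{1+|v|}$, $|\partial_v\psi|\leq C'_\psi/(1+|v|)$ of \eqref{eq:boundary_space_Im} again reduce, after the conformal rescaling, to the decay of solutions towards null infinity established in \cite{Dafermos, Dappiaggi:2009fx}.

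For part (b) I would start from Proposition \ref{prop:symplectic_form}(b), which writes $\sigma_0(\varphi_f,\varphi_{f'})$ as the flux integral $\int_\Sigma(\varphi_f\nabla_n\varphi_{f'}-\varphi_{f'}\nabla_n\varphi_f)\,d\Sigma_g$ over any spacelike Cauchy surface $\Sigma$ of $\mathcal{M}$, and then deform $\Sigma$ into the past towards $\Im^-\cup\mathcal{H}$. Green's second identity applied to the identity $\varphi_f\Box\varphi_{f'}-\varphi_{f'}\Box\varphi_f=0$ on the region swept out by this deformation shows that the flux is invariant, up to the contributions from the ``corner'' pieces of the boundary near spatial infinity $i_0$ and near the point $i^-$ where $\Im^-$ meets $\mathcal{H}$; the vanishing near $i_0$ and the decay estimates of part (a) are precisely what forces those corner contributions to zero in the limit. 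In the limit the flux splits as a sum of an integral over $\mathcal{H}$ and one over $\Im^-$: on $\mathcal{H}$ the induced flux is by definition $\sigma_{\mathcal{H}}(\Gamma_{\mathcal{H}}\varphi,\Gamma_{\mathcal{H}}\varphi')$, while on $\Im^-$ one invokes conformal invariance exactly as in \eqref{eq:symplectomorphism} and \cite[Th. 4.1]{Moretti} — the rescaled field $\Omega^{-1}\varphi_f=G_{\widetilde{\mathcal{M}}}(\Omega^{-3}f)$ extends smoothly to $\Im^-$, where $\widetilde{g}$ is in Bondi form, and the flux becomes the Bondi symplectic form $\sigma_{\Im}(\Gamma_{\Im^-}\varphi,\Gamma_{\Im^-}\varphi')$. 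Adding the two pieces and recalling \eqref{eq:SigmaIH} yields the claimed identity; injectivity of $\Gamma_{\Im^-,\mathcal{H}}$ is then immediate, since $\Gamma_{\Im^-,\mathcal{H}}(\varphi)=0$ implies $\sigma_0(\varphi,\varphi')=0$ for all $\varphi'$, hence $\varphi=0$ by the weak non-degeneracy of $\sigma_0$.

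The main obstacle, and the reason this is not a routine transcription of the asymptotically flat case, is twofold. First, one genuinely needs the sharp decay rates on $\mathcal{H}$ and $\Im^-$: trapping at the photon sphere rules out any naive energy estimate yielding more than very weak decay, so the quantitative statements in \eqref{eq:boundary_space_Im}--\eqref{eq:boundary_space_H} rest essentially on the Dafermos--Rodnianski analysis \cite{Dafermos}. Second, the extended unphysical spacetime is not smooth at the relevant corners — in particular $i^+$ is only a formal point, because of the Schwarzschild curvature singularity — so making the Cauchy-surface deformation rigorous, passing to the limiting characteristic surface, and checking that every corner flux contribution vanishes is the delicate analytic step; this is exactly where the estimates of part (a) are consumed, and it is what forces the separate treatment announced before the statement.
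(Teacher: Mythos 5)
Your proposal follows essentially the same route as the proof the paper delegates to \cite{Dappiaggi:2009fx}: part (a) rests on the Dafermos--Rodnianski decay estimates (which the paper explicitly flags as the crucial input), combined with the conformal rescaling of Proposition \ref{prop:propagators_under_conformal} and finite propagation speed for the support condition near $i_0$, while part (b) is the deformation of the Cauchy-surface flux integral of Proposition \ref{prop:symplectic_form}(b) towards the characteristic past boundary $\Im^-\cup\mathcal{H}$, with the corner contributions suppressed by the decay established in (a) and injectivity following from weak non-degeneracy of $\sigma_0$. You also correctly identify the two genuine difficulties (the sharp decay rates in the presence of trapping, and the singular corners of the unphysical spacetime), so this matches the intended argument.
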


\noindent Having established the existence of a symplectomorphism between the generators of the bulk observables and a suitable boundary counterpart, we can extend this result at the level of the full algebra of observables. As a preliminary step we observe that, since we are considering both $\mathcal{S}(\Im^-)$ and $\mathcal{S}(\mathcal{H})$, the full boundary unital $^*$-algebra of observables is
\begin{equation}\label{eq:Schwarzschil_full_boundary_algebra}
\mathcal{A}_{\Im,\mathcal{H}}\doteq   \mathcal{A}(\Im^-)\otimes\mathcal{A}(\mathcal{H})\:,
\end{equation}
where $\mathcal{A}(\mathcal{H})$ and $\mathcal{A}(\Im^-)$ are the unital $^*$-algebras built respectively from \eqref{eq:boundary_space_H} and \eqref{eq:boundary_space_Im} according to Definition \ref{Def:boundary_algebra}. Starting from \eqref{eq:Schwarzschil_full_boundary_algebra} and from Proposition \ref{prop:Schwarzschild_bulk_to_boundary}, we can extend $\Gamma_{\Im^-, \mathcal{H}}$ to an injective $^*$-homomorphism
\begin{equation}\label{eq:Schwarzschild_*_homomorphism}
\iota_{\Im^-, \mathcal{H}}:\mathcal{A}^{obs}_0(\mathcal{M})\to\mathcal{A}_{\Im, \mathcal{H}},
\end{equation}
which is completely specified by the action on the generators, namely, if  $f \in C^\infty_0(\mathcal{M})$, it holds $\iota_{\Im^-, \mathcal{H}}(\phi(f))=[\Gamma_{\Im^-, \mathcal{H}}([f])]$. To conclude this section, we should study once more the interplay between the bulk isometries and $\Gamma_{\Im^-, \mathcal{H}}$. Yet, the whole analysis, both at $\mathcal{H}$ and $\Im^-$ would be a slavish repetition of the one for asymptotically flat spacetimes at future timelike infinity and hence we omit it. In particular this entails that the natural generalization to this scenario of Proposition \ref{prop:isometries_bulk_to_boundary_homomorphism} holds true. 


%
%
%
\chapter{Boundary-Induced Hadamard States}
\label{Ch:states} 

In the introduction to this work, we explained that the algebraic approach to quantum field theory is a two-step procedure. The first consists of the assignment to a physical system of a suitable complex  unital $^*$-algebra of observables which encompasses structural properties ranging from dynamics, to causality and locality. In the previous chapter we have shown, not only how to construct concretely such algebra for a free scalar field on a globally hyperbolic spacetime, but also how to embed it into a second auxiliary algebra, defined in terms of a kinematic space of functions whose domain is a null manifold representing the (conformal) boundary of the underlying spacetime. 

The second step of the algebraic approach consists of assigning to the algebra of observables a state, out of which it is possible to recover the canonical probabilistic interpretation proper of quantum theories. Goal of this chapter is to focus in detail on this specific aspect. We will emphasize that, while it is possible to construct a plethora of states, the most part of it is not physically acceptable. Hence a selection criterion is necessary and it goes under the name of {\em Hadamard condition}. Without entering at this stage into the details of the definition and of the motivations leading to such result, we stress that the explicit construction and identification of Hadamard states has been a rather daunting task for many years. Our main goal will be to show that the bulk-to-boundary correspondence for algebras, explained in the previous chapter, translates naturally at a level of states, thus providing a concrete mechanism to construct Hadamard states in a large class of interesting scenarios. 

\section{Algebraic States and the Hadamard Condition}\label{SecHadamardCond}

In this section we review the basic definitions and properties of algebraic states and we will discuss the Hadamard condition. This is a topic which has been thoroughly analysed by many authors and we do not claim to try to even get close to providing an omni-comprehensive review, rather we will focus on the more important aspects. If a reader is interested in further details, we recommend especially \cite{Brunetti:2015vmh,Khavkine:2014mta}.
\subsection{Algebraic states, GNS construction, and all that}
\begin{definition}\label{Def:states}
	Let $\mathcal{A}$ be a complex $^*$-algebra with unit $\mathbb{I}$. An (algebraic) {\bf state} is a $\mathbb{C}$-linear map $\omega:\mathcal{A}\to\mathbb{C}$ which is {\bf normalized} and {\bf positive}, {\em i.e.}, respectively,
	$$\omega(\mathbb{I})=1\;\;,\qquad\omega(a^*a)\geq 0\:\: \mbox{if $a\in\mathcal{A}$}. $$
\end{definition}	

\noindent The relevance of that notion of state is all encoded in the celebrated GNS theorem, which we state in the version valid for $^*$-algebras instead of $C^*$-algebras \cite{Haag:1992hx}\cite{Bratteli:1996xq} without giving a proof. An interested reader can find it in \cite[Th. 1]{Khavkine:2014mta}. If $\mathcal{D}$ is a subspace of a Hilbert space $\mathcal{H}$, the symbol $\mathcal{L}_{\mathcal{H}}(\mathcal{D})$ denotes the linear space of operators $A : \mathcal{D} \to \mathcal{H}$ which leaves $\mathcal{D}$ invariant:
$A(\mathcal{D}) \subset \mathcal{D}$.

\begin{theorem}[Gelfand-Naimark-Segal]\label{Th:GNS}
 Let $\mathcal{A}$ be a complex, unital $^*$-algebra and let $\omega:\mathcal{A}\to\mathbb{C}$ be a state thereon. Then there exists a quadruple of data  $(\mathcal{H}_\omega,\mathcal{D}_\omega,\pi_\omega,\Omega_\omega)$ where $\mathcal{D}_\omega$ is a dense subspace of a complex Hilbert space $\mathcal{H}_\omega$, while  $\pi_\omega:\mathcal{A}\to\mathcal{L}_{\mathcal{H}_\omega}(\mathcal{D}_\omega)$ is a $^*$-representation, that is, a linear map such that,
 $$\pi(\mathbb{I}) =I|_{\mathcal{D}_\omega}, \quad \pi_\omega(a^*)=\pi_\omega(a)^*|_{\mathcal{D}_\omega},\quad\pi_\omega(ab)=\pi_\omega(a)\pi_\omega(b) \quad \mbox{for all $a,b\in\mathcal{A}$,}$$
where the symbol $^*$ on the right-hand side denoting the adjoint with respect to $\mathcal{H}_\omega$.
 Furthermore $\Omega_\omega\in \mathcal{D}_\omega$ is a unit-norm vector such that $\mathcal{D}_\omega=\pi_\omega\left(\mathcal{A}\right)\Omega_\omega$ and  $$\omega(a)=\langle\Omega_\omega|\pi_\omega(a)\Omega_\omega\rangle_{\mathcal{H}_\omega} \quad \mbox{for all $a\in\mathcal{A}$.}$$ The GNS quadruple is unique up to unitary equivalences, that is, if there exists a second quadruple $(\mathcal{H}_\omega^\prime, \mathcal{D}^\prime_\omega,\pi^\prime_\omega,\Omega_\omega^\prime)$ satisfying the same properties as the first one, then there exists also a unitary operator $U:\mathcal{H}_\omega\to\mathcal{H}^\prime_\omega$ such that 
 $$U\Omega_\omega=\Omega^\prime_\omega,\quad U(\mathcal{D}_\omega)=\mathcal{D}_\omega^\prime,\quad  \pi^\prime_\omega(a)=U\pi_\omega(a)U^{-1} \quad \mbox{for all $a\in\mathcal{A}$.}$$
\end{theorem}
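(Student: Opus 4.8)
The plan is to prove the GNS theorem by the standard construction: build the Hilbert space from the algebra itself, using the state to define an inner product after quotienting out the null vectors.

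First I would construct a sesquilinear form on $\mathcal{A}$ by setting $\langle a,b\rangle \doteq \omega(a^*b)$. Linearity in the second argument and antilinearity in the first follow from the $\mathbb{C}$-linearity of $\omega$ and the antilinearity of $^*$; the fact that $\langle a,a\rangle = \omega(a^*a)\geq 0$ is precisely positivity of the state. This form is generally only positive semidefinite, so I would introduce the set $\mathcal{N}\doteq\{a\in\mathcal{A}\mid\omega(a^*a)=0\}$ and check it is a left ideal. The crucial tool here is the Cauchy-Schwarz inequality $|\omega(a^*b)|^2\leq\omega(a^*a)\,\omega(b^*b)$, valid for any positive semidefinite sesquilinear form; from it one deduces that $a\in\mathcal{N}$ implies $\omega(b^*a)=0$ for all $b$, hence that $\mathcal{N}$ is a subspace, and then that $ca\in\mathcal{N}$ for all $c\in\mathcal{A}$ by applying Cauchy-Schwarz to $\omega((ca)^*(ca))=\omega(a^*c^*ca)$ with the positive functional $x\mapsto\omega(a^*xa)$. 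With $\mathcal{N}$ a left ideal, the quotient $\mathcal{A}/\mathcal{N}$ inherits a genuine inner product $\langle[a],[b]\rangle\doteq\omega(a^*b)$; I would then let $\mathcal{H}_\omega$ be its completion and $\mathcal{D}_\omega\doteq\mathcal{A}/\mathcal{N}$, which is dense by construction.

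Next I would define the representation $\pi_\omega(a)[b]\doteq[ab]$. This is well-defined precisely because $\mathcal{N}$ is a left ideal: if $b-b'\in\mathcal{N}$ then $ab-ab'=a(b-b')\in\mathcal{N}$. It is linear in both $a$ and $b$, leaves $\mathcal{D}_\omega$ invariant (so $\pi_\omega(a)\in\mathcal{L}_{\mathcal{H}_\omega}(\mathcal{D}_\omega)$), satisfies $\pi_\omega(\mathbb{I})=I|_{\mathcal{D}_\omega}$ and $\pi_\omega(ab)=\pi_\omega(a)\pi_\omega(b)$ by associativity of the algebra product. The $^*$-property $\langle\pi_\omega(a)[b],[c]\rangle=\langle[b],\pi_\omega(a^*)[c]\rangle$ reduces to $\omega((ab)^*c)=\omega(b^*(a^*c))$, which is just $\omega(b^*a^*c)=\omega(b^*a^*c)$. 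I would then set $\Omega_\omega\doteq[\mathbb{I}]$; it has unit norm because $\langle[\mathbb{I}],[\mathbb{I}]\rangle=\omega(\mathbb{I}^*\mathbb{I})=\omega(\mathbb{I})=1$ by normalization, it is cyclic since $\pi_\omega(a)\Omega_\omega=[a]$ sweeps out all of $\mathcal{D}_\omega$, and the reconstruction formula holds: $\langle\Omega_\omega,\pi_\omega(a)\Omega_\omega\rangle=\langle[\mathbb{I}],[a]\rangle=\omega(a)$.

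Finally, for uniqueness up to unitary equivalence, given a second quadruple $(\mathcal{H}_\omega',\mathcal{D}_\omega',\pi_\omega',\Omega_\omega')$ with the same properties, I would define $U$ on the dense subspace $\mathcal{D}_\omega=\pi_\omega(\mathcal{A})\Omega_\omega$ by $U\pi_\omega(a)\Omega_\omega\doteq\pi_\omega'(a)\Omega_\omega'$. The computation $\|\pi_\omega'(a)\Omega_\omega'\|^2=\langle\Omega_\omega',\pi_\omega'(a^*a)\Omega_\omega'\rangle=\omega(a^*a)=\|\pi_\omega(a)\Omega_\omega\|^2$ shows $U$ is well-defined (in particular sends $0$ to $0$) and isometric, hence extends uniquely to a unitary $\mathcal{H}_\omega\to\mathcal{H}_\omega'$; the intertwining relation $\pi_\omega'(a)=U\pi_\omega(a)U^{-1}$ and $U\Omega_\omega=\Omega_\omega'$, $U(\mathcal{D}_\omega)=\mathcal{D}_\omega'$ are then immediate from the definition. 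The main subtlety — though it is more bookkeeping than genuine obstacle — is being careful that all operators are only densely defined and that the $^*$ in $\pi_\omega(a^*)=\pi_\omega(a)^*|_{\mathcal{D}_\omega}$ refers to the Hilbert-space adjoint restricted back to $\mathcal{D}_\omega$, since $\pi_\omega(a)$ need not be bounded; one checks the stated identity as an identity of operators on $\mathcal{D}_\omega$ rather than worrying about self-adjoint extensions.
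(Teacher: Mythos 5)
Your construction is correct and is precisely the canonical GNS argument: the semidefinite form $\omega(a^*b)$, the Gelfand ideal $\mathcal{N}$ shown to be a left ideal via Cauchy--Schwarz, the quotient-and-complete construction of $\mathcal{H}_\omega$ with $\pi_\omega(a)[b]=[ab]$ and $\Omega_\omega=[\mathbb{I}]$, and the isometry argument for uniqueness. The paper itself states this theorem without proof, deferring to the literature, and the proof it cites proceeds exactly as yours does, including your closing caveat that for a bare $^*$-algebra the operators $\pi_\omega(a)$ are only densely defined and the adjoint relation is to be read as an identity on $\mathcal{D}_\omega$.
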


\noindent Notice that as $\mathcal{D}^\prime$  is dense, we also have that $\Omega_\omega$ is {\bf cyclic for} $\pi_\omega$, namely it holds $\overline{\pi\left(\mathcal{A}\right)\Omega_\omega}= \mathcal{H}_\omega$. Furthermore, every operator $\pi_\omega(a)$ is closable, with closure $\pi_\omega(a)^{**}$, becasue $\pi_\omega(a)^*$ has dense domain, it including $\mathcal{D}_\omega$. In addition it is symmetric at least if $a=a^*$.\\
In view of the uniqueness of the GNS quadruple up to unitary equivalences, we will omit henceforth the subscript $\omega$, unless there is a source of possible confusion. In between the many structural properties enjoyed by states, a notable one in relation to our analysis is the interplay with the $^*$-automorphism of the algebra, which is summarized in the following proposition whose proof can be found in \cite[Prop. 3]{Khavkine:2014mta}. A similar statement holds for antilinear $^*$-automorphism giving rise to antiunitary operators.

\begin{proposition}\label{Prop:automorphisms_and_GNS}
	Let $\mathcal{A}$ be a unital $^*$-algebra and let $\omega$ be a state thereon with associated GNS quadruple $(\mathcal{H}_\omega,\mathcal{D}_\omega,\pi_\omega,\Omega_\omega)$.\\ For every $^*$-automorphism $\alpha:\mathcal{A}\to\mathcal{A}$ such that $\omega\circ\alpha=\omega$, there exists a unique unitary operator $V^{(\alpha)}:\mathcal{H}\to\mathcal{H}$ such that 
	$$V^{(\alpha)}(\mathcal{D}_\omega)=\mathcal{D}_\omega,\quad V^{(\alpha)}\Omega_\omega=\Omega_\omega,\quad V^{(\alpha)}\pi_\omega(a) V^{(\alpha)-1}=\pi_\omega(\alpha(a))\quad \mbox{for all $a\in\mathcal{A}$.}$$
\end{proposition}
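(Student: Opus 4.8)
The strategy is the standard "intertwining via uniqueness of the GNS representation" argument. Starting from the given state $\omega$ with GNS quadruple $(\mathcal{H}_\omega,\mathcal{D}_\omega,\pi_\omega,\Omega_\omega)$ and a $^*$-automorphism $\alpha$ with $\omega\circ\alpha=\omega$, the idea is to build a \emph{second} GNS-type quadruple for the \emph{same} state by composing the representation with $\alpha$, observe that it satisfies all the defining properties of a GNS quadruple, and then invoke the uniqueness clause of Theorem \ref{Th:GNS} to produce the unitary $V^{(\alpha)}$.

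\textbf{Step 1: Construct the twisted quadruple.} Define $\pi'_\omega \doteq \pi_\omega\circ\alpha$, keep $\mathcal{H}'_\omega \doteq \mathcal{H}_\omega$, $\Omega'_\omega \doteq \Omega_\omega$, and set $\mathcal{D}'_\omega \doteq \pi'_\omega(\mathcal{A})\Omega_\omega$. Since $\alpha$ is a $^*$-automorphism and $\pi_\omega$ is a $^*$-representation, $\pi'_\omega$ is again a $^*$-representation: it is linear, $\pi'_\omega(\mathbb{I})=\pi_\omega(\alpha(\mathbb{I}))=\pi_\omega(\mathbb{I})=I|_{\mathcal{D}_\omega}$, it respects products because $\alpha$ does, and it respects the $^*$-operation for the same reason. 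Because $\alpha$ is surjective, $\pi'_\omega(\mathcal{A})=\pi_\omega(\mathcal{A})$, hence $\mathcal{D}'_\omega=\mathcal{D}_\omega$ is dense in $\mathcal{H}_\omega$ and $\mathcal{D}'_\omega$ is invariant under $\pi'_\omega(\mathcal{A})$; moreover $\Omega_\omega\in\mathcal{D}'_\omega$ has unit norm. Finally, using $\omega\circ\alpha=\omega$,
\[
\langle\Omega_\omega|\pi'_\omega(a)\Omega_\omega\rangle_{\mathcal{H}_\omega}=\langle\Omega_\omega|\pi_\omega(\alpha(a))\Omega_\omega\rangle_{\mathcal{H}_\omega}=\omega(\alpha(a))=\omega(a)\quad\text{for all }a\in\mathcal{A}.
\]
Thus $(\mathcal{H}'_\omega,\mathcal{D}'_\omega,\pi'_\omega,\Omega'_\omega)$ is a GNS quadruple for $\omega$.

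\textbf{Step 2: Apply uniqueness and read off the properties.} By the uniqueness clause of Theorem \ref{Th:GNS}, there is a unitary $U:\mathcal{H}_\omega\to\mathcal{H}'_\omega=\mathcal{H}_\omega$ with $U\Omega_\omega=\Omega'_\omega=\Omega_\omega$, $U(\mathcal{D}_\omega)=\mathcal{D}'_\omega=\mathcal{D}_\omega$, and $\pi'_\omega(a)=U\pi_\omega(a)U^{-1}$ for all $a\in\mathcal{A}$. Setting $V^{(\alpha)}\doteq U$ and recalling $\pi'_\omega(a)=\pi_\omega(\alpha(a))$, this is exactly the claimed $V^{(\alpha)}\pi_\omega(a)V^{(\alpha)-1}=\pi_\omega(\alpha(a))$, together with $V^{(\alpha)}\Omega_\omega=\Omega_\omega$ and $V^{(\alpha)}(\mathcal{D}_\omega)=\mathcal{D}_\omega$.

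\textbf{Step 3: Uniqueness of $V^{(\alpha)}$.} Suppose $W$ is another unitary with the same three properties. Then $W^{-1}V^{(\alpha)}$ commutes with every $\pi_\omega(a)$ and fixes $\Omega_\omega$. Hence, for every $a\in\mathcal{A}$, $W^{-1}V^{(\alpha)}\pi_\omega(a)\Omega_\omega=\pi_\omega(a)W^{-1}V^{(\alpha)}\Omega_\omega=\pi_\omega(a)\Omega_\omega$, so $W^{-1}V^{(\alpha)}$ is the identity on the dense subspace $\mathcal{D}_\omega=\pi_\omega(\mathcal{A})\Omega_\omega$, and therefore on all of $\mathcal{H}_\omega$ by continuity; thus $W=V^{(\alpha)}$. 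The only mild subtlety worth a sentence is the bookkeeping of invariant domains --- one should check that $U$ indeed restricts to a bijection of $\mathcal{D}_\omega$, which is already part of the uniqueness statement of Theorem \ref{Th:GNS} and requires no extra work here. I do not expect any genuine obstacle; the proof is essentially a direct citation of Theorem \ref{Th:GNS} applied to the twisted quadruple.
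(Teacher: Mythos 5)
Your proof is correct. The paper does not reproduce a proof of this proposition (it defers to [KM15, Prop.~3]), but your argument --- building the twisted quadruple $(\mathcal{H}_\omega,\mathcal{D}_\omega,\pi_\omega\circ\alpha,\Omega_\omega)$, verifying via $\omega\circ\alpha=\omega$ and the surjectivity of $\alpha$ that it is again a GNS quadruple for $\omega$, and then invoking the uniqueness clause of Theorem~\ref{Th:GNS} together with the standard dense-domain argument for uniqueness of $V^{(\alpha)}$ --- is precisely the standard argument given in that reference, so no further comment is needed.
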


\begin{remark}\label{remagggKMS}
If a  state $\omega$ is invariant under a one-parameter group $\{\alpha_t\}_{t \in \mathbb R}$ of $^*$-automorphisms of $\mathcal{A}$, according to Proposition \ref{Prop:automorphisms_and_GNS}, the group is implemented in the GNS representation of $\omega$ by a one-parameter group of unitaries $\{U_t\}_{t \in \mathbb R}$ in ${\mathcal H}_\omega$ leaving the cyclic vector $\Omega_\omega$ invariant. If $U$ is strongly continuous and its self-adjoint generator $H$ has non-negative spectrum, then $\omega$ is said to be a {\bf ground state} with respect to $\alpha$. Another class of
$\{\alpha_t\}_{t \in \mathbb R}$-invariant states are the so-called {\bf KMS-states} \cite{Haag:1992hx}\cite{Bratteli:1996xq}, where the condition on the spectral properties of $H$ is replaced by the so-called {\bf KMS condition}. They describe {\em equilibrium thermal states} with respect to the {\em temporal evolution} $\{\alpha_t\}_{t \in \mathbb R}$ at a given temeperature $T$ (and possibly also at a non vanishing chemical potential).
\end{remark}

\noindent Having established the notion of state on a generic unital $^*$-algebra, we focus our attention on the specific case of $\mathcal{A}^{obs}(M)$, the algebra of observables for a real, Klein-Gordon field on a globally hyperbolic spacetime, constructed in Definition \ref{def:algebra_of_observables}.
From Proposition \ref{prop:fieldoperator}, we know that every element of $\mathcal{A}^{obs}(M)$ can be written 
\begin{equation}\label{eq:deca2b} a = c {\mathbb I} + \sum_{N=1}^\infty
\sum_{k_1, \ldots, k_N=1}^\infty c^{(N)}_{k_1k_2\cdots k_N}\phi(f^{(N)}_{k_1})\phi(f^{(N)}_{k_2})\cdots \phi(f^{(N)}_{k_N})
\end{equation}
for $c^{(N)}_{k_1k_2\cdots k_N}\in \mathbb C$ and $f^{(N)}_{k_j} \in C_0^\infty(M)$ depending on $a$ (but not uniquely fixed by it), such that only a {\em finite} number of them do not vanish. 
The field operator $\phi(f) := [[f]]$, $f \in C_0^\infty(M)$ has been introduced in (\ref{eq:fieldoperator}) as a double-quotient equivalence class.
It takes into account both the dynamics, encoded in the first quotient (\ref{eq:observables}) referred to the equation of motion, and the bosonic canonical commutation relations, encoded in the second  quotient (\ref{eq:algebrabservables}).
Since a state  $\omega : \mathcal{A}^{obs}(M) \to \mathbb{C}$  is  $\mathbb C$-linear, it is  therefore completely determined by the class of complex numbers  $$\widetilde{\omega}_n([[f_1]],\ldots, [[f_n]]) \doteq \omega(\phi(f_1)\cdots \phi(f_n))\:,\quad \mbox{ with $n= 1, 2,\ldots$ and $f_k \in C_0^\infty(M)$.}$$

These $\widetilde\omega_n$ are also often referred to as {\em n-point (correlation) functions}. For all practical purposes, it is preferable not to start from the double classes $[[f]] \in \mathcal{A}^{obs}(M)$, but  directly from the functions in $C^\infty_0(M)$, hence looking for {\bf n-point (correlation) functions} as multi-linear maps 
\begin{equation}\label{eq:correlation_functions}
\omega_n:\underbrace{C^\infty_0(M)\times\cdots \times C^\infty_0(M)}_{n}\to\mathbb{C},\quad\omega_n(f_1,\ldots,f_n)\doteq\omega(\phi(f_1)\cdots \phi(f_n)).
\end{equation}
Observe that $\omega_n$ is constructed to  be compatible with the quotient generated by the elements of the form $Ph$, $h\in C^\infty_0(M)$, {\em i.e.}, $\omega_n(f_1,...,f_n)=0$ if one of its arguments has the form $f_k = Ph$, $h\in C^\infty_0(M)$. Similarly it agrees with commutation relations:
\begin{align}&\omega_n(f_1,\ldots, f_k,f_{k+1}, \ldots, f_n) = \omega_n(f_1,\ldots, f_{k+1},f_k, \ldots, f_n) \nonumber  \\
& + iG(f_k,f_{k+1}) \omega_{n-2}(f_1,\ldots, f_{k-1},f_{k+1}, \ldots, f_n)\:,\end{align}
for $n=2,3,\ldots$ and $f_k \in C^\infty_0(M)$. It is easy to see that, taking this property into account, the subclass of the completely symmetrized correlation functions is enough to fix  the remaining ones.\\ An  assignment of a class of $\mathbb{C}$-multilinear maps 
$\omega_n:C^\infty_0(M)\times\cdots \times C^\infty_0(M)\to\mathbb{C}$, for  $n=1,2,\ldots$,
 satisfying the two constraints above defines a state over $\mathcal{A}^{obs}(M)$ {\em up to the positivity requirement}.
Let us focus our attention on  a  distinguished subclass of states whose $n$-point functions are determined by $\omega_2$.

\begin{definition}\label{def:quasifree_state}
  Let $\mathcal{A}^{obs}(M)$ be the unital $^*$-algebra as per Definition \ref{def:algebra_of_observables}. A state $\omega:\mathcal{A}^{obs}(M)\to\mathbb{C}$ is called {\bf quasi-free/Gaussian} if the associated $n$-point correlation functions are such that $\omega_{n}$ vanishes for $n$ odd, whereas, if $n>0$ is even and  $f_k\in C_0^\infty(M)$, for all $k=1,...,n$,
  \begin{equation}\label{eq:quasifree}
 {\omega}_{n}(f_1,...,f_{n})=\sum\limits_{\mbox{\small partitions of $\{1,\ldots,n\}$}}{\omega}_2(f_{i_1}, f_{i_2})\cdots {\omega}_2(f_{i_{n-1}}, f_{i_{n}})
  \end{equation}
 and
the partitions refers to the class of all possible decomposition
of set $\{1, 2,\ldots, n\}$ into $n/2$ pairwise disjoint subsets of $2$ elements
$\{i_1, i_2\}$, $\{i_3, i_4\}$, $\ldots$, $\{i_{n-1}, i_{n}\}$
with $i_{2k-1} < i_{2k}$ for $k = 1, 2, \ldots , n/2$ and $i_{2k-1}<i_{2k+1}$ for $k = 1, 2, \ldots , n/2-1$.
\end{definition}

\noindent From now on, we will focus only on quasi-free states. Therefore the first step consists of understanding what are the constraints to be imposed on the two-point function in order for it to identify a quasi-free state according to Definition \ref{def:quasifree_state}. The following proposition starts by assuming that a state on $\mathcal{A}^{obs}(M)$ is given, thus deducing some necessary conditions that the two-point function must obey:

\begin{proposition}\label{propomega2}
	Let $\omega:\mathcal{A}^{obs}(M)\to\mathbb{C}$ be a state. Then the $2$-point function $\omega_2$ is a complex-valued bi-linear map on $C_0^\infty(M)$ such that for all $f,f^\prime\in C^\infty_0(M)$,
	\begin{enumerate}
		\item $\omega_2(Pf,f^\prime)=\omega_2(f,Pf^\prime)=0$  where $P$ is the Klein-Gordon operator \eqref{eq:dynamics};
		\item $\omega_2(f,f^\prime)-\omega_2(f^\prime,f)=iG(f,f^\prime)$ where the right hand side is defined in \eqref{eq:symplectic_form};
		\item $\mbox{\em Im}(\omega_2(f,f^\prime))=\frac{1}{2}G(f,f^\prime)$ 
\item $\mbox{\em Re}(\omega_2(f,f^\prime))=\omega_{2,s}(f,f^\prime)\doteq \frac{1}{2}(\omega_{2}(f,f^\prime)+ \omega_{2}(f^\prime,f))$,
		\item $\frac{1}{4}\left|G(f,f^\prime)\right|^2\leq\omega_2(f,f)\omega_2(f^\prime,f^\prime)$.
	\end{enumerate}
\end{proposition}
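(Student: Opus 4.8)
The plan is to derive all five properties directly from the defining features of a state on $\mathcal{A}^{obs}(M)$---$\mathbb{C}$-linearity, normalisation, and positivity $\omega(a^*a)\geq 0$---together with the structural relations of the algebra, namely $\phi(Pf)=0$, $\phi(f)^*=\phi(f)$, and the CCR $[\phi(f),\phi(f')]=iG(f,f')\mathbb{I}$ from Proposition \ref{prop:fieldoperator}.

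First, property (1) is immediate: since $\phi(Pf)=0$ in $\mathcal{A}^{obs}(M)$, linearity of $\omega$ gives $\omega_2(Pf,f')=\omega(\phi(Pf)\phi(f'))=\omega(0)=0$, and similarly for the second entry. Property (2) follows by applying $\omega$ to the CCR: $\omega_2(f,f')-\omega_2(f',f)=\omega([\phi(f),\phi(f')])=iG(f,f')\omega(\mathbb{I})=iG(f,f')$, using normalisation and that $G(f,f')$ is real (it is the integral in \eqref{eq:symplectic_form} of real functions). For property (4) I would simply note that $\mathrm{Re}(\omega_2(f,f'))=\tfrac12(\omega_2(f,f')+\overline{\omega_2(f,f')})$, and that since $f,f'$ are real and $\phi(f)^*=\phi(f)$, one has $\overline{\omega_2(f,f')}=\overline{\omega(\phi(f)\phi(f'))}=\omega((\phi(f)\phi(f'))^*)=\omega(\phi(f')\phi(f))=\omega_2(f',f)$; this identifies $\mathrm{Re}(\omega_2(f,f'))$ with the symmetric part $\omega_{2,s}(f,f')$ as claimed. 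Then property (3): combining $\overline{\omega_2(f,f')}=\omega_2(f',f)$ with property (2) gives $2i\,\mathrm{Im}(\omega_2(f,f'))=\omega_2(f,f')-\overline{\omega_2(f,f')}=\omega_2(f,f')-\omega_2(f',f)=iG(f,f')$, hence $\mathrm{Im}(\omega_2(f,f'))=\tfrac12 G(f,f')$.

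The substantive step is property (5), the Cauchy--Schwarz-type inequality. The plan is to exploit positivity on elements of the form $a=\alpha\phi(f)+\beta\phi(f')$ with $\alpha,\beta\in\mathbb{C}$ (allowing complex smearing via the extension $\phi(h)=\phi(\mathrm{Re}\,h)+i\phi(\mathrm{Im}\,h)$, for which $\phi(h)^*=\phi(\bar h)$). Then $0\leq\omega(a^*a)$ expands into a Hermitian quadratic form in $(\alpha,\beta)$ whose matrix is $\begin{pmatrix}\omega_2(f,f) & \omega_2(f,f')\\ \omega_2(f',f) & \omega_2(f',f')\end{pmatrix}$; non-negativity of this form forces the diagonal entries to be real and non-negative and the determinant to be non-negative, i.e. $|\omega_2(f,f')|^2\leq\omega_2(f,f)\,\omega_2(f',f')$. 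This is stronger than (5) but not yet (5) itself, so the final move is to invoke property (3): $|G(f,f')|=|2\,\mathrm{Im}(\omega_2(f,f'))|\leq 2|\omega_2(f,f')|$, whence $\tfrac14|G(f,f')|^2\leq|\omega_2(f,f')|^2\leq\omega_2(f,f)\,\omega_2(f',f')$.

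I expect the only mild subtlety---the ``obstacle,'' such as it is---to be bookkeeping around complex test functions: one must be a little careful that $\omega_2(f,f)\geq 0$ for real $f$ (immediate from $\omega(\phi(f)^*\phi(f))\geq 0$) and that the Cauchy--Schwarz argument genuinely only needs the quadratic form over $\mathbb{C}$, so that no positivity of $\omega_2$ beyond what $\omega(a^*a)\ge 0$ supplies is assumed. Everything else is routine manipulation of the relations already established in Propositions \ref{prop:fieldoperator} and \ref{prop:symplectic_form}, so the proof will be short.
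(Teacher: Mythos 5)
Your proof is correct and follows essentially the same route as the paper: properties (1) and (2) from the structural relations of the algebra, (3) and (4) from Hermiticity combined with the CCR, and (5) via the chain $\tfrac14|G(f,f')|^2=[\mathrm{Im}\,\omega_2(f,f')]^2\leq|\omega_2(f,f')|^2\leq\omega_2(f,f)\,\omega_2(f',f')$. The only (cosmetic) difference is that the paper obtains both the identity $\overline{\omega_2(f,f')}=\omega_2(f',f)$ and the Cauchy--Schwarz step by passing to the GNS representation, whereas you work directly with positivity of $\omega$ on the abstract algebra via the $2\times2$ Gram matrix; just note that the relation $\overline{\omega(a)}=\omega(a^*)$ you use is not part of Definition~\ref{Def:states} but follows in one line from positivity of the sesquilinear form $(a,b)\mapsto\omega(a^*b)$ on the unital algebra.
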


\begin{proof}
	The first and the second identities are particular cases of already discussed properties of $\omega_n$ when $n=2$. The third one arises form te GNS theorem observing that
$\phi(h)^*= \phi(h)$ and thus $\overline{\omega(\phi(f)\phi(f'))} = \overline{\langle \Omega_\omega \:|\:\pi_\omega(\phi(f))
\pi_\omega(\phi(f'))\Omega_\omega\rangle}=$
$$\langle \pi_\omega(\phi(f))
\pi_\omega(\phi(f')) \Omega_\omega \:|\:\Omega_\omega\rangle = \langle \Omega_\omega\:|\: \pi_\omega(\phi(f')^*)\pi_\omega(\phi(f)^*)\Omega_\omega\rangle = \omega(\phi(f')\phi(f))$$ and eventually applying 2. 
Property 4 is now a trivial consequece of 2 and 3.
Only the last property remains to be proven. By applying again the GNS theorem to $\omega$, it descends, that
	\begin{gather*}
	|\omega_2(f,f^\prime)|^2= |\langle\Omega_\omega \:|\:\pi_\omega(\phi(f))\pi_\omega(\phi(f^\prime))\Omega_\omega\rangle|^2\leq\\ \leq\|\pi_\omega(\phi(f))\Omega_\omega\|^2_{\mathcal{H}_\omega}\,\|\pi_\omega(\phi(f^\prime))\Omega_\omega\|^2_{\mathcal{H}_\omega}
	\end{gather*}
	where we used the Cauchy-Schwarz inequality. Since $\omega_2(h,h)=\|\pi_\omega(\phi(h))\Omega_\omega\|^2_{\mathcal{H}_\omega}$ we conclude that 
	$$\frac{1}{4}\left|G(f,f^\prime)\right|^2 =[\mbox{ Im}(\omega_2(f,f^\prime))]^2\leq|\omega_2(f,f^\prime)|^2\leq\omega_2(f,f)\,\omega_2(f^\prime,f^\prime)\:,$$
which entails item 5. \qed
\end{proof}

\noindent Having established which properties are enjoyed by the two-point function of a given state, we can ask ourselves how we can characterize and construct quasi-free states.
The crucial observation is that, given a state $\omega$, the  GNS construction permits to construct a preferred closed subspace $\mathcal{H}^{(1)}_\omega \subset \mathcal{H}_\omega$ as 
$$\mathcal{H}^{(1)}_\omega  := \overline{\mbox{Span}_{\mathbb C}\{ \pi_\omega(\phi(f)) \Omega_\omega\:|\: f \in C_0^\infty(M)\}}$$
where the bar denotes the closure. From Proposition \ref{propomega2}, the GNS scalar product restricted to $\mathcal{H}^{(1)}_\omega $
satisfies, 
$$\langle \pi_\omega(\phi(f))| \pi_\omega(\phi(f')) \rangle_{\mathcal{H}_\omega} = \omega_{2,s}(f,f') + \frac{i}{2}G(f,f')\quad f,f' \in C_0^\infty(M)\:,$$
this relation being completely fixed it by Hermitian linearity.
If $\omega$ is quasifree, it turns out that $\mathcal{H}^{(1)}_\omega $ is enough to re-costruct 
 $\mathcal{H}_\omega$ as it is nothing but the {\em symmetrized Fock space} with {\em one-particle Hilbert space} $\mathcal{H}^{(1)}_\omega $ and {\em vacuum} $\Omega_\omega$. Hence, as it can be grasped from the identity above, only the symmetric part of $\omega_2$ really matters in the construction of the GNS representation of the quasifree state $\omega$. Such symmetric component is a real scalar product on the real symplectic vector space $\mathcal{E}^{obs}(M)$ of classes $[f]$, the symplectic form $\sigma_M$ being instead defined by $G$ as discussed in Proposition \ref{prop:symplectic_form}: $\sigma([f],[f'])= G(f,f')$. It seem plausible that real scalar products on $\mathcal{E}^{obs}(M)$ define Gaussian states on $\mathcal{A}^{obs}(M)$.  The following abstract proposition is a key step in this direction. Its proof can be found in \cite[Prop. 3.1]{Kay}.
\begin{proposition}\label{Prop:One_particle_Hilbert}
	Let $(\mathcal{S},\sigma)$ be a real symplectic space, whose symplectic form is weakly non-degenerate. Then the following holds true:
	\begin{enumerate}
		\item If $\mu:\mathcal{S} \times \mathcal{S}\to\mathbb{R}$ is a real scalar product on $\mathcal{S}$ such that 
		\begin{equation}\label{ineqs}\frac{1}{4}|\sigma(x,x^\prime)|^2\leq\mu(x,x)\mu(x^\prime,x^\prime),\quad\forall x,x^\prime\in \mathcal{S}\:,\end{equation}
		then there exists a so-called {\bf one-particle structure} $(K,\cal{H})$, where $\cal{H}$ is a complex Hilbert space with scalar product $\langle\cdot|\cdot \rangle$, while $K:\mathcal{S}\to \cal{H}$ is such that
		\begin{enumerate}
			\item $K$ is a $\mathbb{R}$-linear map such that $K(\mathcal{S})+iK(\mathcal{S})$ is dense in $\cal{H}$,
			\item $\langle Kx | Kx^\prime\rangle =\mu(x,x^\prime)+\frac{i}{2}\sigma(x,x^\prime)$, for all $x,x^\prime\in \mathcal{S}$;
		\end{enumerate}
	\item if $(K^\prime,\cal{H}^\prime)$ is a second pair satisfying conditions a. and b., then there exists an isometric, surjective $\mathbb C$-linear operator $V:\cal{H}\to \cal{H}^\prime$ such that $VK=K^\prime$.
	\end{enumerate}
\end{proposition}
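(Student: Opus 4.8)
The plan is to construct the one-particle structure explicitly, following the standard route via complexification and polarization, and then to establish uniqueness by a transport-of-structure argument. First I would complexify the real symplectic space: set $\mathcal{S}_{\mathbb{C}} \doteq \mathcal{S} \otimes_{\mathbb{R}} \mathbb{C}$ and extend both $\mu$ and $\sigma$ to sesquilinear, resp. bilinear, forms on $\mathcal{S}_{\mathbb{C}}$. The inequality \eqref{ineqs} is exactly what is needed to guarantee that the sesquilinear form
\[
\langle x , x' \rangle_0 \doteq \mu(x,x') + \frac{i}{2}\sigma(x,x')
\]
is positive semi-definite on $\mathcal{S}$ (hence, after suitable extension, on an appropriate half of $\mathcal{S}_{\mathbb{C}}$): indeed $\langle x,x\rangle_0 = \mu(x,x) \geq 0$, and positivity on non-diagonal combinations reduces, by a $2\times 2$ determinant computation, precisely to $\frac14|\sigma(x,x')|^2 \le \mu(x,x)\mu(x',x')$. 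I would then pass to the quotient by the null space $\mathcal{N} \doteq \{x : \langle x,x\rangle_0 = 0\}$ of this form (the Cauchy–Schwarz inequality for the semi-definite form shows $\mathcal{N}$ is a subspace on which the form vanishes identically), obtaining a genuine pre-Hilbert space, and define $\mathcal{H}$ as its completion. The map $K : \mathcal{S} \to \mathcal{H}$ is the composition of the inclusion $\mathcal{S} \hookrightarrow \mathcal{S}_{\mathbb{C}}$ (appropriately interpreted), the quotient projection, and the canonical embedding into the completion; $\mathbb{R}$-linearity of $K$ and property (b), $\langle Kx|Kx'\rangle = \mu(x,x') + \frac{i}{2}\sigma(x,x')$, then hold by construction.

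For the density statement (a), that $K(\mathcal{S}) + iK(\mathcal{S})$ is dense in $\mathcal{H}$, the point is that $K(\mathcal{S})$ already spans a dense real subspace of the real-linear image, and allowing complex coefficients recovers all of $\mathcal{S}_{\mathbb{C}}/\mathcal{N}$, whose image is dense in the completion by definition. Here one must be slightly careful about how the complex structure on $\mathcal{H}$ interacts with $K$; the cleanest formulation is to observe that $K(\mathcal{S}) + i K(\mathcal{S})$ contains the image of all of $\mathcal{S}_{\mathbb{C}}$ under the natural map, and the latter is dense since $\mathcal{H}$ is its completion. Weak non-degeneracy of $\sigma$ is not actually needed for the existence part but guarantees, together with $\mu > 0$, that $K$ is injective, which is the natural normalization.

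For uniqueness (part 2), suppose $(K',\mathcal{H}')$ is a second such pair. On $K(\mathcal{S})$ define $V_0(Kx) \doteq K'x$. This is well defined and isometric because, for $x, x' \in \mathcal{S}$,
\[
\langle Kx | Kx'\rangle = \mu(x,x') + \tfrac{i}{2}\sigma(x,x') = \langle K'x | K'x'\rangle ,
\]
so $V_0$ preserves inner products on the real span; complex-linear extension is forced by requiring $V$ to be $\mathbb{C}$-linear, and consistency is checked using that $\langle \cdot|\cdot\rangle$ is Hermitian. Since $K(\mathcal{S}) + iK(\mathcal{S})$ is dense in $\mathcal{H}$ and $K'(\mathcal{S}) + iK'(\mathcal{S})$ is dense in $\mathcal{H}'$, the isometry $V_0$ extends uniquely to a surjective isometric $\mathbb{C}$-linear operator $V : \mathcal{H} \to \mathcal{H}'$ with $VK = K'$.

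The main obstacle I anticipate is not any single hard estimate but the bookkeeping around the complex structure: one must be meticulous that the $\mathbb{C}$-linear structure placed on $\mathcal{H}$ is compatible with the merely $\mathbb{R}$-linear map $K$, that the form $\langle\cdot|\cdot\rangle_0$ is extended to exactly the right domain so that positivity is preserved, and that "$K(\mathcal{S}) + iK(\mathcal{S})$ dense" is interpreted consistently in the existence and uniqueness halves. A secondary subtlety is verifying that the null space $\mathcal{N}$ of the semi-definite form is genuinely a linear subspace and that the induced form on the quotient is a bona fide (positive-definite) inner product — this is routine via the Cauchy–Schwarz inequality for semi-definite sesquilinear forms, but it is the step where \eqref{ineqs} is used in an essential way and should be stated carefully.
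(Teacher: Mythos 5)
Your argument is correct in outline and is essentially the standard construction; the book does not prove this proposition itself but defers to \cite[Prop. 3.1]{Kay}, where the proof is precisely the complexification--quotient--completion route you describe, and your uniqueness argument is the usual transport of structure through the dense subspace $K(\mathcal{S})+iK(\mathcal{S})$.

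Two points need tightening, one of which would actually break the proof if executed as written. You say $\mu$ is extended to a sesquilinear form and $\sigma$ to a \emph{bilinear} one: both must be extended sesquilinearly (say, conjugate-linear in the first slot), since otherwise $\mu_{\mathbb{C}}+\frac{i}{2}\sigma_{\mathbb{C}}$ is a sum of a sesquilinear and a bilinear form and cannot be an inner product. With the correct extension one computes, for $u=x+iy$ with $x,y\in\mathcal{S}$,
$$\langle u,u\rangle_0=\mu(x,x)+\mu(y,y)-\sigma(x,y)\:,$$
and non-negativity of the right-hand side for all $x,y$ is equivalent to \eqref{ineqs} (replace $y$ by $ty$ and require the discriminant in $t$ to be non-positive); in particular the form is positive semi-definite on \emph{all} of $\mathcal{S}_{\mathbb{C}}$, not merely on ``an appropriate half'', so the hedge there can be dropped. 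Second, in the uniqueness part, ``consistency is checked using that $\langle\cdot|\cdot\rangle$ is Hermitian'' is not enough: the well-definedness and isometry of the $\mathbb{C}$-linear extension of $V_0$ to $K(\mathcal{S})+iK(\mathcal{S})$ rest on the identity
$$\Bigl\|\sum_j c_jK'x_j\Bigr\|^2=\sum_{j,k}\overline{c_j}\,c_k\langle K'x_j|K'x_k\rangle=\sum_{j,k}\overline{c_j}\,c_k\langle Kx_j|Kx_k\rangle=\Bigl\|\sum_j c_jKx_j\Bigr\|^2\:,$$
which simultaneously shows that $\sum_jc_jKx_j=0$ forces $\sum_jc_jK'x_j=0$ and that the extension is isometric; surjectivity then follows from density of $K'(\mathcal{S})+iK'(\mathcal{S})$ in $\mathcal{H}'$. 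With these repairs the proof is complete.
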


\noindent This proposition is at the heart of the following result on the characterization of quasi-free states for the algebra of observables of a real Klein-Gordon field. It appears in various but equivalent forms in the literature, see Proposition B1 in \cite{Dappiaggi:2009fx} and references therein, \cite{KW} especially.

\begin{proposition}\label{Prop:characterization_quasifree_states}
	Let $\mathcal{A}^{obs}(M)$ be the algebra of observables as in  Definition \ref{def:algebra_of_observables} and let $\mathcal{E}^{obs}(M)$ be the real vector space equipped with the symplectic form $\sigma_M$ determined by $G$ as in Definition \ref{prop:symplectic_form}.\\
If  $\mu: \mathcal{E}^{obs}(M) \times \mathcal{E}^{obs}(M)\to \mathbb{bR}$ is a real scalar product satisfying (\ref{ineqs}) with respect to   $\sigma_M$, then there exists a (unique) quasi-free state $\omega:\mathcal{A}^{obs}(M)\to\mathbb{C}$ such that its two-point function reads
	$$\omega_2(f,f^\prime)=\mu([f],[f^\prime])+\frac{i}{2}\sigma([f],[f^\prime]),\quad\forall f,f^\prime\in C^\infty_0(M)\:.$$
	Moreover, up to unitary equivalence, the associated GNS quadruple $(\mathcal{H}_\omega,\mathcal{D}_\omega,\pi_\omega,\Omega_\omega)$ is such that
	\begin{enumerate}
		\item $\mathcal{H}_\omega$ is the Bosonic (symmetrized) Fock space whose one-particle Hilbert space is $\mathcal{H}^{(1)}_\omega = \cal{H}$ as per Proposition \ref{Prop:One_particle_Hilbert} with $\sigma_M$ in place of $\sigma$,
		\item $\Omega_\omega$ is the vacuum of the Fock space $\mathcal{H}_\omega$,
		\item $\mathcal{D}_\omega$ coincides with the dense subspace of the finite complex linear combinations of $\Omega_\omega$ and of the vectors of the form $a^\dagger(K[f_1])...a^\dagger(K[f_n])\Omega_\omega$ for $n\in\mathbb{N}$ and for $f_k\in C^\infty_0(M)$, $k=1,...,n$, where $a^\dagger(K[f_k])$ is the standard creation operator\footnote{For the definition and for the analysis of the properties of creation and annihilation on Fock spaces, refer to \cite{Bratteli:1979tw,Bratteli:1996xq}.} associated to $f_k$ and $K :\mathcal{E}^{obs}(M)\to \mathcal{H}^{(1)}_\omega $ is the $\mathbb{R}$-linear map defined in  Proposition \ref{Prop:One_particle_Hilbert} with $\mathcal{E}^{obs}(M)$ in place of $\mathcal{S}$,
		\item the representation $\pi$ is completely determined by the creation ($a^\dagger$) and by the annihilation ($a$) operators via the identity $\pi_\omega(\phi(f))=a(K[f])+a^\dagger(K[f])$. 
\item The state $\omega$ is {\bf regular}, that is the field operator $\pi_\omega(\phi(f))$ is essentially self-adjoint on $\mathcal{D}_\omega$ for every $f\in C_0^\infty(M)$.
	\end{enumerate}
\end{proposition}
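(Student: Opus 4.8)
The plan is to invoke Proposition \ref{Prop:One_particle_Hilbert} applied to the symplectic space $(\mathcal{E}^{obs}(M),\sigma_M)$ (which is weakly non-degenerate by Proposition \ref{prop:symplectic_form}) to produce a one-particle structure $(K,\mathcal{H})$ satisfying conditions (a) and (b), and then build the quasi-free state directly on the Fock space over $\mathcal{H}$. First I would set $\mathcal{H}^{(1)}_\omega\doteq\mathcal{H}$, take $\mathcal{H}_\omega$ to be the symmetric (Bosonic) Fock space over $\mathcal{H}^{(1)}_\omega$, let $\Omega_\omega$ be the Fock vacuum, and define the representation on generators by $\pi_\omega(\phi(f))\doteq a(K[f])+a^\dagger(K[f])$, where $a^\dagger,a$ are the standard creation/annihilation operators; this is well defined because $K$ vanishes on $P(C^\infty_0(M))$, so the map factors through $\mathcal{E}^{obs}(M)$. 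One then checks that $\pi_\omega$ extends to a $^*$-representation of the whole $\mathcal{A}^{obs}(M)$: linearity and the Klein-Gordon relation $\phi(Pf)=0$ are immediate; the CCR $[\pi_\omega(\phi(f)),\pi_\omega(\phi(f'))]=iG(f,f')I$ follows from condition (b) of Proposition \ref{Prop:One_particle_Hilbert}, since the commutator of $a(Kx)+a^\dagger(Kx)$ with $a(Kx')+a^\dagger(Kx')$ equals $\langle Kx|Kx'\rangle-\langle Kx'|Kx\rangle=i\sigma_M([f],[f'])$ times the identity; self-adjointness on the relevant domain is the standard property of the field operator $a(\cdot)+a^\dagger(\cdot)$.

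Next I would take $\mathcal{D}_\omega$ to be the dense subspace of finite linear combinations of $\Omega_\omega$ and vectors $a^\dagger(K[f_1])\cdots a^\dagger(K[f_n])\Omega_\omega$; by the CCR and the relation $\pi_\omega(\phi(f))=a(K[f])+a^\dagger(K[f])$, this is exactly $\pi_\omega(\mathcal{A}^{obs}(M))\Omega_\omega$, it is $\pi_\omega(\mathcal{A}^{obs}(M))$-invariant, and it is dense because $K(\mathcal{E}^{obs}(M))+iK(\mathcal{E}^{obs}(M))$ is dense in $\mathcal{H}^{(1)}_\omega$ by condition (a). Computing $\omega(a)\doteq\langle\Omega_\omega|\pi_\omega(a)\Omega_\omega\rangle$, one sees that all odd $n$-point functions vanish (an odd number of creation/annihilation operators cannot return to the vacuum), the two-point function is $\omega_2(f,f')=\langle K[f]|K[f']\rangle=\mu([f],[f'])+\tfrac{i}{2}\sigma_M([f],[f'])$ by condition (b), and Wick's theorem for the Fock vacuum gives the sum-over-pairings formula \eqref{eq:quasifree}, so $\omega$ is quasi-free with the asserted two-point function. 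Positivity and normalization are automatic since $\omega$ is a vector state. Essential self-adjointness of $\pi_\omega(\phi(f))$ on $\mathcal{D}_\omega$ — the regularity claim (item 5) — follows from Nelson's analytic vector theorem: every vector in $\mathcal{D}_\omega$ is analytic for $\pi_\omega(\phi(f))$ because the Fock-space norm estimates for $(a(h)+a^\dagger(h))^k\Omega_\omega$ grow only like $\sqrt{k!}$, so $\mathcal{D}_\omega$ consists of analytic vectors for a symmetric operator and the operator is essentially self-adjoint. Uniqueness of $\omega$ among quasi-free states with the prescribed two-point function is clear since a quasi-free state is, by Definition \ref{def:quasifree_state}, completely determined by $\omega_2$.

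The main obstacle I expect is not in the Fock-space bookkeeping — which is entirely standard — but in verifying carefully that everything descends correctly to the quotient $\mathcal{E}^{obs}(M)=C^\infty_0(M)/P(C^\infty_0(M))$ and that the inequality \eqref{ineqs} feeding Proposition \ref{Prop:One_particle_Hilbert} is exactly the hypothesis on $\mu$; once that is in place, the one-particle structure exists and unique up to the unitary $V$ of part (2) of Proposition \ref{Prop:One_particle_Hilbert}, which is precisely what yields uniqueness of the GNS quadruple up to unitary equivalence. A secondary technical point deserving care is the identification of $\pi_\omega(\mathcal{A}^{obs}(M))\Omega_\omega$ with the span of the $a^\dagger$-monomials applied to the vacuum: one must use the CCR to move all annihilation operators to the right, which terminates because each $\phi(f)$ contributes at most one $a$ and $a(h)\Omega_\omega=0$. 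I would also remark that the regularity statement could alternatively be deduced by exhibiting the Weyl unitaries $e^{i\pi_\omega(\phi(f))}$ directly on Fock space and appealing to Stone's theorem, but the analytic-vector route is the most economical.
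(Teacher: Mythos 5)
Your construction is correct and is essentially the same argument as the one the paper defers to (Proposition B1 of \cite{Dappiaggi:2009fx} and \cite{KW}): build the one-particle structure via Proposition \ref{Prop:One_particle_Hilbert}, realize the state as the Fock vacuum with $\pi_\omega(\phi(f))=a(K[f])+a^\dagger(K[f])$, verify the CCR and the two-point function from property (b) of the one-particle structure, obtain quasi-freeness from Wick's theorem, and get regularity from Nelson's analytic vector theorem. All the delicate points you flag (descent to the quotient, normal-ordering to identify $\mathcal{D}_\omega$, uniqueness from the determination of a quasi-free state by $\omega_2$) are handled correctly.
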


\begin{remark}\label{Rem:obvious} $\null$\\
{\bf (1)} Taking Remark \ref{remgenconstr} into account, everything we found can be automatically translated if we replace the algebra of observables $\mathcal{A}^{obs}(M)$ with $\mathcal{A}(\Im)$ constructed in Definition \ref{Def:boundary_algebra}. Also in this case we will keep the terminology of quasi-free/Gaussian state. The motivation for using this name might appear obscure and it is better understood in the context of Weyl $C^*$-algebras (see \cite{Bratteli:1979tw,Bratteli:1996xq}). We only say that in Minkowski spacetime the Poincar\'e invariant state of a free field theory is Gaussian.\\
{\bf (2)} Consider a quasifree state $\omega$ which, respectively, is a ground state/KMS state referring to a one-parameter group of $^*$-automorphisms $\{\alpha_t\}_{t \in \mathbb R}$ (see Remark \ref{remagggKMS}) representing a group of isometries of $M$ as in Proposition \ref{lem:isometries}. In this case, the one-particle space $\mathcal{H}^{(1)}_\omega$ is invariant under  the unitary group $\{e^{-itH}\}_{t \in \mathbb R}$ 
implementing $\{\alpha_t\}_{t \in \mathbb R}$ as in  Remark \ref{remagggKMS}. 
An important result due to Kay and used in \cite{KW} establishes that such $\omega$ is unique if the restriction $H|_{\mathcal{H}^{(1)}_\omega}$ has no zero eigenvalues. \\
{\bf (3)} If the two-point function $\omega_2$ viewed as a linear map $\omega_2: C_0^\infty(M) \to \mathcal{D}'(M)$ is sequentially continuous in the natural topologies, it defines an one-to-one associated  {\bf Schwartz kernel} $\omega_2 \in \mathcal{D}'(M\times M)$ such that $$\omega_2(f,f') = \int_{M\times M}\omega_2(x,x')f(x)f'(x) d\mu_g(x)d\mu_g(x') \quad \forall f,f' \in C_0^\infty(M)\:,$$ in distributional sense in view of {\em Schwartz kernel theorem} \cite{Friedlander:2010eqa,Hormander}.
\end{remark}

\subsection{Hadamard States}

As one can infer from the definition itself, algebraic states play a key role in the analysis of the physical properties of a given quantum system. Yet, even in the simple case of a real, massive scalar field, it is not hard to imagine that there are infinite possible choices for states $\omega$ on $\mathcal{A}^{obs}(M)$. 
A rather natural criterion for restricting the possibilities is to look for those $\omega$ which are compatible with the underlying background symmetries, in the sense that every isometry induces a $^*$-automorphism of $\mathcal{A}^{obs}(M)$ fulfilling the hypotheses of Proposition \ref{Prop:automorphisms_and_GNS}. For globally hyperbolic spacetimes which are maximally symmetric, under mild hypotheses \cite{AJ86} this condition leads to uniqueness result for quasifree states. In Minkowski spacetime, for instance,  we end up with the Poincar\'e vacuum \cite{Haag:1992hx}, while in de Sitter spacetime, we select the so-called Bunch-Davies state \cite{BD}. Dropping maximal symmetry of the spacetime uniqueness may fail to be true. However,  further uniqueness results exists for invariant ground and KMS states at given temperature when referring to Killing symmetries (typically time evolutions) with positive self-adjoint generators when unitarily implemented in the GNS Hilbert space ((2) Remark \ref{Rem:obvious}).
In the absence of global symmetries of the spacetime, no preferred choice of a quantum state seems to exist.
It is, therefore, natural to wonder whether it is possible to establish a further selection criterion to decide whether a given state is physically reasonable. This question has been thoroughly investigated and debated in the past decades leading to a general consensus in formulating the so-called {\em Hadamard condition}. In the case in hand the rationale behind it is the following: A minimal requirement for a state on $\mathcal{A}^{obs}(M)$ to be physically acceptable is {\em 1)}, to yield finite quantum fluctuations of all observables, {\em 2)} to have the same ultraviolet behaviour of the Poincar\'e vacuum, {\em 3)} to allow for a covariant construction of an algebra of Wick polynomials. This last point in particular is vital to allow for discussing a time-ordered product, hence accounting for interactions with a perturbative scheme. A discussion on the role of Hadamard states in the construction of Wick polynomials will enter Chapter \ref{Ch:wicks}, whereas a thorough analysis can be found in \cite[Sec. 3.1]{Khavkine:2014mta} or in \cite{Hollands:2001fb,Hollands:2001nf}.

In the remaining part of this subsection we shall discuss the mathematical formulation of the Hadamard condition. Besides having in mind $\mathcal{A}^{obs}(M)$, we will restrict our attention to quasi-free states as per Definition \ref{def:quasifree_state}. This choice is only due to a matter of convenience, although this assumption is not necessary for all results that we will be reporting. A proof of this assertion has been given in \cite{Sanders:2009sw}. 

Let us recall that we are considering a four-dimensional globally hyperbolic spacetime $(M,g)$ and a real scalar field thereon obeying \eqref{eq:dynamics}. Let $\mathcal{O}\subset M$ be a geodesically convex open neighbourhood \cite{BEE,ONeill}. We define the integral kernel of the so-called {\bf Hadamard parametrix} $H\in\mathcal{D}^\prime(\mathcal{O}\times\mathcal{O})$, a bidistribution, local approximate solution of the Klein-Gordon equation which encodes the same short-distance behaviour of the Poincar\'e vacuum:
\begin{equation}\label{eq:Hadamard_parametrix}
H(x,y)=\lim_{\epsilon\to 0^+}\frac{U(x,y)}{\sigma_\epsilon(x,y)}+V(x,y)\ln\frac{\sigma_\epsilon(x,y)}{\lambda^2},
\end{equation}
where the limit has to be interpreted in a distributional sense and where
\begin{enumerate}
	\item $\sigma_\epsilon(x,y)\doteq\sigma(x,y)+ 2i\epsilon(T(x)-T(y))+\epsilon^2$. Here $\epsilon>0$ is a regularizing parameter, $\sigma(x,y)$ is the halved, squared geodesic distance between $x$ and $y$ (well defined in geodesically convex sets), while $T:\mathcal{O}\to\mathbb{R}$ is a local time function, increasing towards the future,
	\item $\lambda$ is an arbitrary scale length, which makes the argument of the logarithm dimensionless,
	\item $U$ and $V$ are smooth scalar functions on $\mathcal{O}\times\mathcal{O}$, which depend only on the geometry of the background and on \eqref{eq:dynamics}. They can be determined by imposing that, up to smooth terms, $H$ solves in both entries the equation of motion. It turns out that $U(x,y)$ is the square root of the Van-Vleck-Morette determinant -- see \cite{Moretti:1999ez}, while $V(x,y)$ can be expressed as a series in powers of $\sigma$, that is $V(x,y)=\sum\limits_{n=0}^\infty v_n(x,y)\sigma^n(x,y)$. The unknowns $v_n\in C^\infty(\mathcal{O}\times\mathcal{O})$ are the so-called {\em Hadamard coefficients} which can be determined recursively from the equation of motion by imposing consistency with the Poincar\'e vacuum in the ultraviolet regime as initial condition \cite{Moretti:2001qh}. Observe that the convergence of the series is asymptotic, unless a suitable set of cut-off functions is introduced, so to obtain pointwise convergence \cite{Hollands:2001nf}. Different choices of these cut-off functions change $H$ by adding a smooth term which does not affect the rest of the discussion.
\end{enumerate}

\begin{remark}\label{Rem:dimension_dependence}
	While the notion and existence of a Hadamard parametrix does not depend on the dimension of the underlying background, its form \eqref{eq:Hadamard_parametrix} instead is critically based on $\dim M$ being $4$. In \cite{Moretti:2001qh} a reader can find how \eqref{eq:Hadamard_parametrix} changes by varying the dimension of $M$.
\end{remark}

\noindent The following is a local characterization of Hadamard states (a global one appear in \cite{KW} but it is equivalent to the local one as established by Radzikowski in the papers we quote below).

\begin{definition}\label{Def:local_Hadamard_form}
	Let $\omega:\mathcal{A}^{obs}(M)\to\mathbb{C}$ be a quasi-free state as in Definition \ref{def:quasifree_state} whose associated two-point function as in \eqref{eq:correlation_functions} descends from a bi-distribution, that is $\omega_2\in\mathcal{D}^\prime(M\times M)$ with associated integral kernel $\omega_2(x,y)$. We say that $\omega$ is of {\bf Hadamard form} if, in every geodesically convex neighbourhood $\mathcal{O}$, $\omega_2(x,y)-H(x,y)\in C^\infty(\mathcal{O}\times\mathcal{O})$ where $H(x,y)$ is the Hadamard parametrix \eqref{eq:Hadamard_parametrix}.	
\end{definition}

\noindent It is not difficult to imagine that it is very hard to use Definition \ref{Def:local_Hadamard_form} to verify in concrete cases whether a given state is of Hadamard form, unless one considers special backgrounds for which it is necessary to check the form of the integral kernel of the two-point function in a small number of geodesic neighbourhoods. Notable instances are de Sitter and Minkowski spacetime. It is thus necessary to find an alternative and possibly more efficient way to characterize Hadamard states.

\subsubsection{The microlocal definition of Hadamard states}

 The question just raised has been open for many years and it has been solved in the seminal papers of Radzikowski \cite{Radzikowski,Radzikowski2}. The key point lies in the observation that, if  the two-point function of a Hadamard state is a bidistribution, the most convenient way to describe its singular structure is to use the techniques proper of microlocal analysis and the associated notion of wavefront set. We will review succinctly the main definitions and properties lying at the heart of this concept. Notation and conventions will follow those used in the textbook \cite{Hormander} or in \cite[Sec. 3.3]{Khavkine:2014mta}.
 
Let  ${\cal D}(M)\doteq C_0^\infty(M, \mathbb C)$ be the set of complex-vaòued smooth and compactly-supported functions on a spacetime $(M,g)$, which we may assume to be globally-hyperbolic only for consistency with the assumptions in this book.

If $M \doteq U$ is an open set of $\mathbb{R}^n$, independently from $g$,  $\mathcal{D}^\prime(U)$  is the space of {\bf distributions} over $U$, {\em i.e.}, $\mathbb{C}$-linear maps $u: {\cal D}(U) \to \mathbb{C}$ such that $u(f_n) \to 0$ if  ${\cal D}(U) \ni f_n \to 0$ uniformly together with the series of every fixed order of derivatives and there is a compact set $K \subset U$ including all supports of the functions $f_n$. 

Coming back to the general case,
 with $\mathcal{D}^\prime(M)$ we indicate the space of  $\mathbb{C}$-linear functionals
$u : {\cal D}(M) \to \mathbb C$ such that, for every coordinate patch $(U,\psi)$ over $M$
there is distribution $u_\psi \in {\cal D}'(U)$ such that\footnote{$\det{g_\psi}$ is the determinant of the metric in coordinates $\psi$ so that $|\det{g_\psi}|^{1/2} dx^1\ldots dx^n$ is the standard measure on $(M,g)$ written in local coordinates $\psi$ over $U$.}
$u(f) = u_\psi( |\det{g_\psi}|^{1/2} \cdot f \circ \psi^{-1} )$ if $f \in {\cal D}(M)$ and $\mbox{supp}(f) \subset U$.  These functionals $u$ are named {\bf distributions on} $(M,g)$. 

An open neighbourhood $V$ of $k_0\in\mathbb R^m$ is called {\bf conic} if $k\in V$ implies $\lambda k\in V$ for all $\lambda >0$. 

\begin{definition}
	\label{def_WaveFrontSet}
	Let $u\in{\cal D}^\prime(U)$, $U$ being an open subset of $\mathbb{R}^m$.
	A point $(x_0,k_0)\in U\times (\mathbb R^m\setminus\{0\})$ is called {\bf regular directed} if there exists $f\in{\cal D}(U)\doteq\mathcal{D}(\mathbb{R}^m)|_U$ with $f(x_0)\neq 0$ such that, for every $n\in\mathbb N$, there exists a constant $C_n\geq 0$ such that
	$$|\widehat{fu}(k)|\le C_n (1+|k|)^{-n}$$
	for all $k$ in an open conic neighbourhood of $k_0$. The {\bf wavefront set} $WF(u)$, of $u \in {\cal D}'(U)$
	is the complement in $U \times (\mathbb R^m\setminus\{0\})$ of the set of all regular directed points of $u$.
\end{definition}

\noindent We observe that, in the preceding definition, one should interpret $U\times(\mathbb{R}^m\setminus\{0\})$ as $T^*U\setminus\{0\}$. Let $\Gamma\subset T^*U\setminus\{0\}$ be a {\bf cone}, that is, if $(x,k) \in \Gamma$, then $(x,\lambda k) \in \Gamma$  for all $\lambda>0$. If $\Gamma$  is closed {\em in   the topology of} $T^*U\setminus 0$, we call
$${\cal D}_\Gamma'  \stackrel {\mbox{\scriptsize  def}} {=} \{ u \in {\cal D}'(U)\:|\: WF(u) \subset \Gamma\}\:.$$ 

\noindent Following \cite{Hormander}, 

\begin{definition} 
	If $u_j \in {\cal D}_\Gamma'(U)$ is a sequence and $u \in {\cal D}_\Gamma'(U)$, we write $u_j \to u$ in ${\cal D}_{\Gamma}'(U)$ if
	\begin{enumerate}
		\item $u_j \to u$ weakly in ${\cal D}'(U)$ as $j \to +\infty$,
		\item $\sup_j \sup_V |\xi|^N |\widehat{\phi u_j}(p)| <\infty$, $N=1,2, \ldots$, if $\phi \in {\cal D}(U)$ and $V \subset T^*U$ is any  closed cone, whose projection on $U$ is $\textrm{supp}(\phi)$,
		such that $\Gamma \cap V  = \emptyset$. 
	\end{enumerate}
	In this case, we say that $u_j$  converges to $u$ in the {\bf H\"ormander pseudotopology}. 
\end{definition}
Test functions are dense even with respect to this notion of convergence \cite{Hormander}.
\begin{proposition}
	If $u \in {\cal D}'_\Gamma(U)$, there exists a sequence $u_j \in {\cal D}(U)$ such that $u_j \to u$ in $ {\cal D}'_\Gamma(U)$.
\end{proposition}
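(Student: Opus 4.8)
The statement to prove is that test functions are dense in $\mathcal{D}'_\Gamma(U)$ with respect to the H\"ormander pseudotopology: given $u \in \mathcal{D}'_\Gamma(U)$, one must produce a sequence $u_j \in \mathcal{D}(U)$ with $u_j \to u$ in the sense just defined, that is, $u_j \to u$ weakly in $\mathcal{D}'(U)$ \emph{and} the seminorm bounds $\sup_j \sup_V |\xi|^N |\widehat{\phi u_j}(p)| < \infty$ hold for every $\phi \in \mathcal{D}(U)$ and every closed cone $V$ with base $\mathrm{supp}(\phi)$ disjoint from $\Gamma$. The natural approach is the standard one: localize with a partition of unity, then mollify and cut off in a controlled way.

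\textbf{Key steps.} First I would reduce to the case of compactly supported $u$ by means of a smooth partition of unity on $U$; since the wavefront set is additive under such decompositions and behaves well under multiplication by smooth functions, it suffices to approximate each piece $\chi u$ with $\chi \in \mathcal{D}(U)$, and then patch. Second, fix a mollifier: choose $\rho \in \mathcal{D}(\mathbb{R}^m)$ with $\int \rho = 1$, set $\rho_\varepsilon(x) = \varepsilon^{-m}\rho(x/\varepsilon)$, and define $u_\varepsilon \doteq (\chi u) * \rho_\varepsilon$, which is smooth and, after a further harmless cutoff to keep it compactly supported in $U$ for small $\varepsilon$, lies in $\mathcal{D}(U)$. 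Third, verify weak convergence $u_\varepsilon \to \chi u$ in $\mathcal{D}'(U)$ as $\varepsilon \to 0$ --- this is the classical approximate-identity argument and is routine. Fourth, and this is the substantive point, verify the uniform decay estimates: one writes $\widehat{\phi u_\varepsilon}(p) = \widehat{\rho}(\varepsilon p)\, \widehat{(\psi \chi u)}(p)$ up to error terms coming from the commutator of $\phi$ with the convolution (handled by the standard fact that convolution with $\rho_\varepsilon$ nearly commutes with multiplication by smooth functions, the error being smoothing uniformly in $\varepsilon$), and since $|\widehat\rho(\varepsilon p)| \le 1$ and $\widehat\rho$ is Schwartz, one controls $|\xi|^N |\widehat{\phi u_\varepsilon}(p)|$ on the cone $V$ uniformly in $\varepsilon$ by using precisely the hypothesis $WF(u) \subset \Gamma$, which gives rapid decay of $\widehat{(\psi \chi u)}$ on any closed cone disjoint from $\Gamma$. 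One should take a dyadic sequence $\varepsilon = 1/j$ to get the sequence $u_j$. I would cite \cite{Hormander} (Theorem 8.2.3 or its neighborhood) for the precise bookkeeping.

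\textbf{Main obstacle.} The delicate part is the fourth step: one needs the decay estimate for $\widehat{\phi u_j}$ to be \emph{uniform in $j$}, not merely valid in the limit. The naive bound $|\widehat\rho(p/j)| \le 1$ is not by itself enough because multiplication by $\phi$ does not commute with the mollification, so $\widehat{\phi u_j}$ is not simply $\widehat\rho(p/j)$ times something already known to decay. The resolution is to insert an intermediate cutoff $\psi \in \mathcal{D}(U)$ equal to $1$ on a neighbourhood of $\mathrm{supp}(\phi)$, write $\phi u_j = \phi\big((\psi \chi u)*\rho_{1/j}\big) + \phi\big(((1-\psi)\chi u)*\rho_{1/j}\big)$, note that the second term is $O(j^{-\infty})$ in all $C^k$ norms because $(1-\psi)\chi u$ and $\rho_{1/j}$ have disjoint supports for large $j$, and then analyze the first term using that $\phi \cdot ((\psi\chi u)*\rho_{1/j})$ differs from $(\phi \psi \chi u)*\rho_{1/j}$ by a term whose Fourier transform is controlled by derivatives of $\phi$ against a smoothing kernel --- uniformly in $j$. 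Once this commutator estimate is in place, the cone condition on $WF(\chi u) \subset \Gamma$ closes the argument. I expect this to be the only place requiring genuine care; everything else is bookkeeping with partitions of unity and approximate identities.
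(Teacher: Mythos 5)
Your proposal is correct and is essentially the standard argument from the source the paper itself cites for this statement (H\"ormander, Thm.\ 8.2.3): the paper gives no proof of its own, and your cutoff-plus-mollification scheme, together with the intermediate cutoff $\psi$ to control the commutator of multiplication by $\phi$ with the convolution uniformly in $j$, is exactly the substance of that proof. The only point worth tightening is the patching step: rather than a full partition of unity (which requires a diagonal argument to extract one sequence), it is cleaner to use a single exhausting sequence of cutoffs $\chi_j \to 1$ paired with the mollifier scale $\varepsilon_j \to 0$.
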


\noindent A notion related to that of wavefront set is the following: We call $x\in U$ a {\bf regular point} of a distribution $u\in{\cal D}^\prime(U)$ if there exists an open neighborhood $O\subset U$ of $x$ such that $\langle u, f \rangle = \langle h_u, f\rangle$ for some $h_u\in {\cal D}(U)$ and every $f \in {\cal D}(U)$ supported in $O$. The closure of the complement of the set of regular points is called the {\bf singular support} of $u$.

\begin{theorem}	\label{thm_PropertiesWavefront} Let $u\in{\cal D}^\prime(U)$, where $U\subset \mathbb R^m$ is open and non-empty. The following statements hold true:
\begin{enumerate}
\item  $u\in C^\infty(U)$ if and only if $WF(u)=\emptyset$. More precisely, the {\em singular support} of $u$ is the projection of $WF(u)$ on $\mathbb R^m$.
	
\item  If $P$ is a partial differential operator on $U$  with smooth coefficients: $$WF(Pu)\subseteq WF(u)\,.$$
	
\item  Let  $O\subset\mathbb R^m$ be an open set and let $\chi:O\to U$ be a diffeomorphism. The pull-back $\chi^*u \in {\mathcal D}'(O)$ of $u$ defined by $\chi^*u(f)=u(\chi_*f)$ for all $f\in{\cal D}(O)$ fulfils
	$$WF(\chi^*u)=\chi^*WF(u)  \stackrel {\mbox{\scriptsize  def}} {=} \left\{(\chi^{-1}(x),\chi^*k)\;|\;(x,k)\in WF(u)\right\}\,,$$
	where $\chi^*k$ denotes the pull-back of $\chi$ in the sense of {\em cotangent vectors}.
\item Let $O\subset \mathbb R^n$ be an open set and $v \in {\cal D}'(O)$, then $WF(u\otimes v)$ is included in
	$$(WF(u) \times WF(v))\cup (({\rm supp} u \times \{0\})\times WF(v)) \cup 
	(WF(u)\times ({\rm supp} v \times \{0\})) \:.$$
	
\item Let $O\subset \mathbb R^n$,  $K \in {\cal D}'(U\times O)$ and   $f \in {\cal D}(O)$, then
	$$WF({\cal K}f) \subset \{(x, p)\in TU \setminus 0 \:|\:  (x,y,p, 0) \in WF(K) \:\:\mbox{for some}\: \:y \in {\rm supp}(f) \}\:,$$ 
	where  ${\cal K} :{\cal D}(O) \mapsto {\cal D}'(U)$ is the continuous linear map associated to $K$
	in view of Schwartz kernel theorem.
\end{enumerate}
\end{theorem}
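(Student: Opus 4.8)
\textbf{Proof strategy for Theorem \ref{thm_PropertiesWavefront}.}

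The plan is to treat each of the five items in turn, since they are essentially the standard toolkit of H\"ormander's calculus of wave front sets and the proofs are largely independent of one another. All of them rest on the single basic mechanism behind Definition \ref{def_WaveFrontSet}: $(x_0,k_0)\notin WF(u)$ precisely when $u$ can be localized around $x_0$ by a test function $\varphi$ with $\varphi(x_0)\neq 0$ in such a way that $\widehat{\varphi u}$ decays rapidly in an open conic neighbourhood of $k_0$. So the uniform theme is: produce an appropriate localizing function, apply the Fourier transform, and estimate.

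For item (1), I would argue both inclusions. If $WF(u)=\emptyset$, then for each $x_0$ there is a $\varphi$ with $\varphi(x_0)\neq 0$ and $\widehat{\varphi u}$ of rapid decay in \emph{every} direction, hence $\varphi u\in\mathcal{S}\subset C^\infty$ near $x_0$; a partition-of-unity argument then gives $u\in C^\infty(U)$. The more refined statement about the singular support — that it is the projection of $WF(u)$ onto $U$ — follows because $x_0$ is a regular point of $u$ iff some localization $\varphi u$ is smooth iff $\widehat{\varphi u}$ decays in all directions iff $(x_0,k)\notin WF(u)$ for all $k\neq 0$. For item (2), the key observation is that if $P=\sum_\alpha a_\alpha(x)D^\alpha$ and $\varphi$ localizes $u$ near $x_0$ with rapid decay of $\widehat{\varphi u}$ in a cone around $k_0$, then one writes $\varphi Pu$ in terms of $\varphi u$ and derivatives falling on $\varphi$, uses that $\widehat{D^\alpha(\varphi u)}(k)=k^\alpha\widehat{\varphi u}(k)$ and that multiplication by a polynomial preserves rapid decay, and that the terms with derivatives on $\varphi$ are supported where $\varphi$ is, so another localization handles them; this shows regular directed points of $u$ are regular directed for $Pu$, i.e.\ $WF(Pu)\subseteq WF(u)$. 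Item (3) is the invariance under diffeomorphisms: I would reduce to showing that if $(x_0,k_0)$ is regular directed for $u$ then $(\chi^{-1}(x_0),(\,{}^t d\chi_{x_0})k_0)$ is regular directed for $\chi^*u$; this is the genuinely technical point, handled by writing the Fourier transform of the pulled-back, localized distribution, performing the change of variables $y=\chi^{-1}(x)$, and applying a non-stationary phase / integration-by-parts estimate to the oscillatory integral, exploiting that the phase $\langle\chi(y),k\rangle$ has gradient close to $({}^td\chi)k_0$ near the support. Item (4) on tensor products follows from the factorization $\widehat{(\varphi\otimes\psi)(u\otimes v)}(k,\ell)=\widehat{\varphi u}(k)\,\widehat{\psi v}(\ell)$: a pair $(x,y,k,\ell)$ with $(k,\ell)\neq 0$ fails to be in the stated set only in the three complementary regimes ($k\neq 0$ and $(x,k)\notin WF(u)$; or $\ell\neq 0$ and $(y,\ell)\notin WF(v)$; or $k=0,\ell\neq 0$ with $x\notin\mathrm{supp}\,u$; etc.), and in each regime the product estimate gives rapid decay in a suitable conic neighbourhood. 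Finally item (5), on the push-forward through a Schwartz kernel, is obtained by viewing $\mathcal{K}f(x)=\langle K(x,\cdot),f\rangle$ as the restriction/integration of $K$ against $\mathbf{1}\otimes f$: one localizes in $x$, Fourier transforms, and observes that the directions $(x,p)$ in which $\widehat{\varphi\,\mathcal{K}f}(p)$ can fail to decay are exactly those for which $(x,y,p,0)\in WF(K)$ for some $y\in\mathrm{supp}(f)$ — the second cotangent slot being $0$ because we are integrating (not differentiating) in $y$.

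The main obstacle, and the only item requiring real care, is item (3): the diffeomorphism invariance of the wave front set. The passage from decay of $\widehat{\varphi u}$ in a cone about $k_0$ to decay of the Fourier transform of the localized pull-back in a cone about ${}^td\chi\,k_0$ is not formal — it requires the standard oscillatory-integral lemma that an integral $\int e^{i\langle\phi(y),k\rangle}a(y,k)\,dy$ with $a$ compactly supported in $y$ and of at most polynomial growth in $k$, and with $\nabla_y\langle\phi(y),k\rangle$ bounded below by $c|k|$ on $\mathrm{supp}\,a$, decays faster than any power of $|k|$ (integration by parts with the operator $L=(i|\nabla_y\Phi|^2)^{-1}\nabla_y\Phi\cdot\nabla_y$). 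Setting this up cleanly, controlling the dependence of the symbol estimates on $k$ in the conic neighbourhood, and verifying the non-vanishing-gradient hypothesis from the fact that $\chi$ is a diffeomorphism, is where the bookkeeping lives. Since all of this is classical, I would simply cite \cite{Hormander} (Theorems 8.2.4, 8.2.9, 8.2.10, 8.2.12–8.2.13 there) for the detailed estimates rather than reproduce them, indicating the mechanism as above and emphasizing that items (1), (2), (4), (5) are then quick consequences of the localize–Fourier-transform–estimate pattern together with the elementary Fourier identities $\widehat{D^\alpha v}=k^\alpha\widehat v$, $\widehat{v\otimes w}=\widehat v\otimes\widehat w$.
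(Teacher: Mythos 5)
Your proposal is correct and matches the paper's treatment: the paper states these five properties as standard facts of H\"ormander's wave front set calculus and simply cites \cite{Hormander} (and \cite{Khavkine:2014mta}) rather than proving them, and your localize--Fourier-transform--estimate sketches are precisely the standard arguments from that reference, with the genuinely technical point (diffeomorphism invariance via non-stationary phase) correctly identified and deferred to the same source. The only minor gloss is in item (1), where passing from "for each direction some cutoff works" to "one cutoff works for all directions" needs the standard stability-under-shrinking-the-cutoff lemma plus compactness of the sphere, but this is implicit in the argument you cite.
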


From the third property we infer that the wavefront set transforms covariantly under diffeomorphisms as a subset of $T^*U$, with $U$ an open subset of $\mathbb R^m$. Therefore the definition of $WF$ can be extended to distributions on a generic manifold $M$ simply by glueing together wavefront sets in different coordinate patches of $M$ with the help of a partition of unity. As a result, for $u\in{\cal D}^\prime(M)$, $WF(u)\subset T^*M\setminus\{0\}$. Also the notion of convergence in the H\"ormander pseudotopology  as well as the statements of theorem \ref{thm_PropertiesWavefront} extend to this more general scenario.

To conclude this very short survey,  we report on a few remarkable results concerning  (a) the theorem of  {\em propagation of singularities}, (b) the {\em product of distributions}, (c) {\em composition of kernels}. These all play a key role in our construction of Hadamard states.

\begin{remark}\label{remarkps}$\null$
	
	{\bf (1)} If $M$ is a manifold and $P$ is thereon a differential operator of order $m\geq 1$ which reads in local coordinates $P= \sum_{|\alpha| \leq m} a_\alpha(x) \partial^\alpha$ (it is assumed that $a_\alpha\neq 0$ for some $\alpha$ with $|\alpha|=m$), where $a$ is a {\em multi-index} \cite{Hormander}, and $a_\alpha$ are smooth coefficients,  then the polynomial $\sigma_P(x,k) = \sum_{|\alpha| =  m} a_\alpha(x) (ik)^\alpha$ is called the {\bf principal symbol} of $P$. It is possible to prove that $(x, \xi) \mapsto \sigma_P(x,k)$ determines a  well defined function on $T^*M$ which, in general is complex valued.   The {\bf characteristic set} of $P$, indicated by
	$char(P) \subset T^*M\setminus\{0\}$, denotes the set of zeros of $\sigma_P$ made of {\em non-vanishing} covectors.
	The principal symbol  $\sigma_P$ can be used as a {\em Hamiltonian function} on $T^*M$ and the maximal solutions of Hamilton equations define the {\bf local flow} of $\sigma_P$ on $T^*M$.   
	
	{\bf (2)} 
	The principal symbol of the Klein-Gordon operator is  $-g^{\mu\nu}(x) k_\mu k_\nu$. If $M$ is a Lorentzian manifold and $P$ is a {\bf normally hyperbolic operator}, namely its principal symbol is the same as the one of tje Klein-Gordon operator, then the integral curves of the local flow of $\sigma_P$ are nothing but the lift to $T^*M$
	of the  {\em geodesics} of the metric $g$ parametrized by an {\em affine parameter}.
	Finally, we call  $char(P) = \{(x,k) \in T^*M \setminus\{0\} \:|\: g^{\mu\nu}(x) k_\mu k_\nu =0\}$  
\end{remark}

\begin{theorem}\label{tps}
	Let $P$ be a differential operator on a manifold $M$ whose principal symbol is real valued. If $u,f \in {\cal D}'(M)$ are such that $Pu=f$ then the following facts hold true:
	\begin{enumerate}
		\item  $WF(u) \subset char(P) \cup WF(f)$,
	
	\item $WF(u)\setminus WF(f)$ is invariant under the local flow of $\sigma_P$ on  $T^*M \setminus WF(f)$.  
\end{enumerate}
\end{theorem}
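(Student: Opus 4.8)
\textbf{Proof plan for the propagation of singularities theorem (Theorem \ref{tps}).}

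The statement I want to establish is Hörmander's propagation of singularities theorem for a differential operator $P$ with real principal symbol: if $Pu = f$ then $WF(u) \subset \mathrm{char}(P) \cup WF(f)$, and $WF(u) \setminus WF(f)$ is invariant under the Hamiltonian flow of $\sigma_P$ on $T^*M \setminus WF(f)$. Since this is a classical, deep result of microlocal analysis whose full proof occupies many pages, the sensible course for a book of this scope is to reduce it to the pseudodifferential calculus already available in the literature (Hörmander's treatise, which the excerpt cites). So my plan is to indicate the structure of the argument, flag the analytic heart, and then defer the technical estimates to the reference.

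First I would establish claim (1), which is the elliptic regularity statement and is the easy half. If $(x_0,k_0) \notin \mathrm{char}(P)$, then $\sigma_P(x_0,k_0) \neq 0$; by homogeneity and continuity one can construct a properly supported pseudodifferential operator $Q$ of order $-m$ which is a microlocal parametrix for $P$ near $(x_0,k_0)$, i.e.\ $QP = \mathbb{I} + R$ with $R$ a smoothing operator microlocally near that point. Applying $Q$ to $Pu = f$ gives $u = Qf - Ru$ microlocally, whence $(x_0,k_0) \notin WF(u)$ unless $(x_0,k_0) \in WF(f)$. This shows $WF(u) \subset \mathrm{char}(P) \cup WF(f)$. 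Here I would only need the symbolic calculus (composition of principal symbols, the fact that elliptic operators have parametrices) and the basic fact that smoothing operators do not enlarge the wavefront set — both standard.

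The substantive part is claim (2), invariance along bicharacteristics. The strategy I would describe is the standard one: work microlocally near a point $(x_0,k_0) \in \mathrm{char}(P) \setminus WF(f)$, and use a positive-commutator / energy estimate argument. One conjugates $P$ by a suitable elliptic Fourier integral operator to reduce $\sigma_P$ to a normal form (after factoring out an elliptic factor one may take the principal symbol to be $\xi_1$, a linear coordinate function, so that its Hamiltonian flow is translation in $x_1$); then one constructs a family of symbols $a_t$ supported in shrinking conic neighbourhoods of pieces of the bicharacteristic, arranges that $\{\sigma_P, a\} + (\text{lower order}) \geq 0$ modulo a term controlled by $f$, and reads off from $\mathrm{Re}\,\langle \mathrm{Op}(a)\, Pu, u\rangle$ that the contribution of $WF(u)$ cannot "jump" across a point of the bicharacteristic: either the whole connected piece lies in $WF(u)$ or none of it does. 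The hard part is genuinely this commutator estimate and the careful bookkeeping of error terms of lower order, together with the FIO reduction to normal form; I would present it as the content of \cite[Ch.~XXVI]{Hormander} and give only the outline above, since reproducing it in full would be out of proportion with the rest of the text. I would close by remarking that, in view of Remark \ref{remarkps}, when $P$ is normally hyperbolic the bicharacteristics are precisely the lifts of affinely parametrized null geodesics, which is the form in which the theorem will be used in the sequel to propagate the Hadamard singularity structure from the boundary into the bulk.
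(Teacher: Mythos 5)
Your proposal is consistent with the paper's treatment: the text states Theorem \ref{tps} as a classical result and defers entirely to H\"ormander's treatise, offering no proof of its own, and your two-part sketch (microlocal parametrix for the elliptic-regularity half, positive-commutator estimate along bicharacteristics after reduction to the normal form $\xi_1$ for the propagation half) is an accurate outline of the standard argument found in that reference. Nothing in your plan conflicts with the way the theorem is used later in the text, and your closing remark correctly ties the bicharacteristics to lifted null geodesics as in Remark \ref{remarkps}.
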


Let us conclude with the celebrated H\"ormander sufficient criterion to define the {\em product of distributions}, see  \cite[Th. 8.2.10]{Hormander}. In the following, if $\Gamma_1, \Gamma_2 \subset T^*M\setminus\{0\} $ are closed cones,
$$\Gamma_1+ \Gamma_2  \stackrel {\mbox{\scriptsize  def}} {=} \left\{(x, k_1+k_2) \subset T^*M \;|\;(x,k_1)\in \Gamma_1,\; (x,k_2)\in \Gamma_2\:\: \mbox{for some $x\in M$}\right\}.$$

\begin{theorem}[Product of distributions]\label{teoprod}
	Consider  a pair of closed cones $\Gamma_1, \Gamma_2  \subset  T^*M \setminus\{0\}$. If 
	$$\Gamma_1 + \Gamma_2 \not \ni (x, 0) \quad \mbox{for all $x \in M$,}$$ 
	then there exists a unique bilinear  map,  the {\bf product} of $u_1$ and $u_2$,
	$${\cal D}'_{\Gamma_1} \times  {\cal D}'_{\Gamma_2} \ni (u_1,u_2) \mapsto u_1u_2 \in {\cal D}'(M),$$ such that
	\begin{enumerate}
		\item it reduces to the standard pointwise product  if $u_1, u_2 \in {\cal D}(M)$,
	
	\item it is jointly sequentially continuous in the H\"ormander pseudotopology: If $u_j^{(n)} \to u_j$
	in $D_{\Gamma_j}(M)$ for $j=1,2$ then $u^{(n)}_1u^{(n)}_2 \to u_1u_2$ in ${\cal D}_{\Gamma}(M)$, where  $\Gamma$ is a closed cone in $T^*M\setminus\{0\}$ defined as  $\Gamma   \stackrel {\mbox{\scriptsize  def}} {=}  \Gamma_1\cup \Gamma_2\cup \left(\Gamma_1\oplus\Gamma_2\right)$. 
	\end{enumerate}
	In particular the following bound always holds if the above product is defined:  
	\begin{eqnarray} WF(u_1u_2)\subset  \Gamma_1\cup \Gamma_2\cup \left(\Gamma_1+\Gamma_2\right)\,. \label{boundWF}\end{eqnarray}
\end{theorem}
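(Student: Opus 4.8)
The plan is to follow Hörmander's classical argument, reducing everything to a local statement in $\mathbb{R}^n$ and then gluing. Since the assertion is local and, by item 3 of Theorem \ref{thm_PropertiesWavefront}, invariant under diffeomorphisms, I would first fix a partition of unity subordinate to a coordinate atlas of $M$ and observe that it is enough to construct the product and check properties 1, 2 and the bound \eqref{boundWF} when $M=U$ is an open subset of $\mathbb{R}^n$; the local pieces then patch into a global distribution on $M$ precisely because the construction will turn out to be local (independent of auxiliary cutoffs near a given point). Uniqueness is then immediate: given $u_j\in\mathcal{D}'_{\Gamma_j}$, the density proposition preceding the theorem supplies $u_j^{(k)}\in\mathcal{D}(U)$ with $u_j^{(k)}\to u_j$ in $\mathcal{D}'_{\Gamma_j}$; by property 1 the $u_1^{(k)}u_2^{(k)}$ are ordinary products of test functions, and by property 2 they must converge to $u_1u_2$, so any such bilinear map is forced, and in particular bilinear.

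For existence I would localize with $\chi\in\mathcal{D}(U)$, $\chi\equiv 1$ near a chosen point $x_0$, put $v_j\doteq\chi u_j$ (compactly supported, $\widehat{v_j}\in C^\infty$ of polynomial growth), and define the candidate through
$$
\widehat{\chi^2 u_1u_2}(\xi)\doteq (2\pi)^{-n/2}\int_{\mathbb{R}^n}\widehat{v_1}(\eta)\,\widehat{v_2}(\xi-\eta)\,d\eta\,,
$$
which has to be shown to make sense and to define a distribution. The engine is a decay estimate for this convolution. Writing $\Sigma_j$ for the closed cone of covector-directions $k$ with $(x,k)\in\Gamma_j$ for some $x\in\mathrm{supp}\,\chi$, the wave front hypothesis says $\widehat{v_j}$ decays rapidly in any closed cone disjoint from $\Sigma_j$. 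I would then let $\xi$ range over a small closed conic neighbourhood $W$ of a direction $\xi_0$ with $(x_0,\xi_0)\notin\Gamma_1\cup\Gamma_2\cup(\Gamma_1+\Gamma_2)$ — shrinking $\mathrm{supp}\,\chi$ so that $\xi_0$ also avoids $\Sigma_1\cup\Sigma_2\cup(\Sigma_1+\Sigma_2)$ — and split the $\eta$-integration into three regions: (a) $\eta$ in a conic neighbourhood of $\Sigma_1$, where the hypothesis $\Gamma_1+\Gamma_2\not\ni(x,0)$ together with $\xi\in W$ forces $\xi-\eta$ into a cone missing $\Sigma_2$, so $\widehat{v_2}(\xi-\eta)$ decays rapidly while $\widehat{v_1}(\eta)$ is polynomially bounded; (b) $\xi-\eta$ near $\Sigma_2$, symmetric; (c) the remaining region, where both factors decay rapidly. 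In each region I would compare the sizes of $|\eta|$, $|\xi-\eta|$ and $|\xi|$ to get $|\widehat{\chi^2u_1u_2}(\xi)|\le C_N(1+|\xi|)^{-N}$ uniformly on $W$, and with $\xi$ only in a compact set the same bookkeeping yields polynomial bounds, so the integral does define $\chi^2u_1u_2\in\mathcal{D}'(U)$. The rapid decay on $W$ is exactly the statement $(x_0,\xi_0)\notin WF(\chi^2u_1u_2)$, giving \eqref{boundWF}.

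To finish I would check that the construction is independent of $\chi$ — applying the same estimate to $\chi-\chi'$, which vanishes near $x_0$, shows two admissible cutoffs give the same germ at $x_0$ — so the locally defined products glue to $u_1u_2\in\mathcal{D}'(U)$ and, undoing the first localization, to a distribution on $M$; diffeomorphism covariance of the gluing is Theorem \ref{thm_PropertiesWavefront}(3). Property 1 for $u_j\in\mathcal{D}(M)$ is Fourier inversion applied to the defining formula. For joint sequential continuity (property 2) I would note that the constants $C_N$ in the decay estimate depend only on finitely many of the seminorms that control the $u_j^{(k)}$ in $\mathcal{D}'_{\Gamma_j}$; along a convergent sequence these are uniformly bounded, hence $u_1^{(k)}u_2^{(k)}\to u_1u_2$ weakly with uniformly controlled microlocal decay off $\Gamma$, which is convergence in $\mathcal{D}_\Gamma'(M)$.

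The hard part is the decay estimate of the middle paragraph: the careful partition of $\mathbb{R}^n_\eta$, the triangle-inequality bookkeeping relating $|\eta|$, $|\xi-\eta|$ and $|\xi|$, and — above all — invoking the transversality hypothesis $\Gamma_1+\Gamma_2\not\ni(x,0)$ at exactly the point where it is needed, namely to rule out the resonant configuration $\eta\in\Sigma_1$, $\xi-\eta\in\Sigma_2$ with $\xi$ small. Once that estimate is in place, uniqueness, bilinearity, locality, the reduction to the pointwise product, and continuity are all formal consequences.
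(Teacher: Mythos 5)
Your proposal is correct and reproduces the standard argument behind H\"ormander's Theorem 8.2.10, which is exactly what the paper relies on here: the theorem is stated without proof and simply cited to \cite{Hormander}. All the essential points — localization, the convolution formula for $\widehat{\chi^2 u_1u_2}$, the splitting of the $\eta$-integral according to the cones $\Sigma_1,\Sigma_2$, the use of $\Gamma_1+\Gamma_2\not\ni(x,0)$ to exclude the resonant configuration $\eta\in\Sigma_1$, $\xi-\eta\in\Sigma_2$ with $|\xi|$ small, and uniqueness via density of test functions in the H\"ormander pseudotopology — are correctly identified and placed.
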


We present just one last theorem concerning the composition of distributional kernels which is analogous to \cite[Th. 8.2.14]{Hormander}. Let $X,Y$ be generic smooth manifolds. If $K \in {\cal D}'(X \times Y)$, the continuous map  associated to $K$ by the Schwartz kernel theorem  will be denoted  by ${\cal K}: {\cal D}(Y) \to {\cal D}'(X)$. We call
\begin{align}WF(K)_X &  \stackrel {\mbox{\scriptsize  def}} {=} \{(x,k) \:|\: (x,y, k,0) \in WF(K) \quad \mbox{for some $y\in Y$}\}\nonumber\:, \\
WF(K)_Y &  \stackrel {\mbox{\scriptsize  def}} {=} \{(y,k^\prime) \:|\: (x,y,0,k^\prime) \in WF(K) \quad \mbox{for some $x\in X$}\}\nonumber\:, \\
WF'(K) &  \stackrel {\mbox{\scriptsize  def}} {=} \{(x,y,k,k^\prime) \:|\: (x,y,k,-k^\prime) \in WF(K)\}\nonumber\:, \\
WF'(K)_Y &  \stackrel {\mbox{\scriptsize  def}} {=} \{(y,k^\prime) \:|\: (x,y,0,-k^\prime) \in WF(K) \quad \mbox{for some $x\in X$}\}\:.\nonumber 
\end{align}

\begin{theorem}\label{thm:kern}
	Consider three smooth manifolds $X, Y, Z$ and let $K_1 \in {\cal D}'(X\times Y)$, $K_2\in {\cal D}'(Y \times Z)$.
	If $WF'(K_1)_Y \cap WF(K_2)_Y = \emptyset$  and the projection $${\rm supp} K_2 \ni (y,z) \mapsto z \in Z$$ is proper, then the composition ${\cal K}_1 \circ {\cal K}_2$ is well defined, giving rise to $K\in {\cal D}'(X,Z)$, and it reduces to the standard one when the kernel are smooth. It holds also (the symbol $\circ$ denoting the composition of relations)
	\begin{multline} 
	WF'(K) \subset  WF'(K_1) \circ WF'(K_2)
	\cup (WF(K_1)_X \times Z \times \{0\}) \\ 
	{} \cup  (X \times \{0\} \times WF'(K_2)_Z) \:.
	\end{multline} 
\end{theorem}

For our purposes, the above excursus on the notion and on the properties of the wavefront set of a distribution is both necessary and useful in order to report the most important result proven in \cite{Radzikowski,Radzikowski2} which complements the local definition of Hadamard states with a global one:

\begin{theorem}\label{Th:global_to_local}
	Let $\omega:\mathcal{A}^{obs}(M)\to\mathbb{C}$ be a quasi-free state for the algebra of observables of a real, Klein-Gordon field on a four-dimensional globally hyperbolic spacetime $(M,g)$, whose associated two-point function identifies a distribution $\omega_2\in\mathcal{D}^\prime(M\times M)$ ((3) Remark \ref{Rem:obvious}). The following statements are equivalent:
	\begin{enumerate}
		\item $\omega$ is of Hadamard form in the sense of Definition \ref{Def:local_Hadamard_form},
		\item the wavefront set of $\omega_2$ is the following:
		\begin{equation}\label{eq:WF}
		WF(\omega_2)=\left\{(x,y;k_x,k_y)\in T^*(M\times M)\setminus\{0\}\;|\; (x,k_x)\sim(y,-k_y),k_x
		\triangleright 0\right\}, 
		\end{equation}
		where $\{0\}$ indicates the zero-section. The symbol $(x,k_x)\sim(y,k_y)$ stands for the existence of a null geodesic connecting $x$ and $y$ with cotangent vectors
		respectively $k_x$ and $k_y$, whereas $k_x\triangleright 0$ means that $k_x\neq 0$ is  future-directed.		
	\end{enumerate}
\end{theorem}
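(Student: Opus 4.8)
The plan is to prove the equivalence of the local Hadamard condition (Definition \ref{Def:local_Hadamard_form}) and the microlocal condition \eqref{eq:WF} following the classical strategy of Radzikowski \cite{Radzikowski,Radzikowski2}. The structure of the argument is: (i) compute the wavefront set of the Hadamard parametrix $H$, (ii) use this to transfer information between the two formulations, and (iii) close the argument by an application of the propagation of singularities theorem together with the symmetry and positivity constraints on $\omega_2$ recorded in Proposition \ref{propomega2}.

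First I would establish the microlocal structure of the parametrix. The key input is that $H(x,y)$ in \eqref{eq:Hadamard_parametrix} is, up to smooth terms, the unique (local) bisolution of $P_x H = P_y H = 0$ whose singular structure is modelled on the Minkowski vacuum. One shows directly that $WF(H)$ is contained in the right-hand side of \eqref{eq:WF}: the $\sigma_\epsilon$-prescription picks out the future-directed tube, the $1/\sigma_\epsilon$ and $\log\sigma_\epsilon$ terms produce singularities exactly on the null-geodesically-related pairs, and the co-vector that is co-parallel to the connecting null geodesic at $x$ gets parallel-transported to $-k_y$ at $y$. Concretely this can be read off from the explicit Minkowski computation (where $H$ is a universal distribution whose Fourier transform is supported on the future light cone) combined with item 3 of Theorem \ref{thm_PropertiesWavefront} applied to the exponential map, which gives the covariant transformation of the wavefront set under the diffeomorphism realising geodesic normal coordinates. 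The reverse inclusion — that $WF(H)$ is not smaller than the claimed set — follows because $H$ genuinely is singular on every such null-related pair, as one sees again from the leading $U/\sigma_\epsilon$ term.

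Next, for the implication (1) $\Rightarrow$ (2): if $\omega$ is Hadamard then locally $\omega_2 - H$ is smooth, hence $WF(\omega_2) = WF(H)$ by item 1 of Theorem \ref{thm_PropertiesWavefront} (smooth functions have empty wavefront set, and the wavefront set is unaffected by adding a smooth term); gluing over geodesically convex charts gives \eqref{eq:WF}. For (2) $\Rightarrow$ (1): assume $WF(\omega_2)$ equals the set in \eqref{eq:WF}. Then $u \doteq \omega_2 - H$ is a distribution which, locally, satisfies $P_x u, P_y u \in C^\infty$ (since both $\omega_2$ and $H$ solve KG up to smooth remainders) and whose wavefront set is \emph{a priori} contained in $WF(\omega_2) \cup WF(H)$. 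The heart of the matter is to show $WF(u) = \emptyset$. Here one invokes Theorem \ref{tps} (propagation of singularities) in each variable separately: $WF(u)$ must lie in $char(P)$ and be invariant under the bicharacteristic (null geodesic) flow; combining this with the positivity/symmetry relations of Proposition \ref{propomega2} — crucially, that $\omega_2(f,f') - \omega_2(f',f) = iG(f,f')$, so the \emph{antisymmetric} part of $\omega_2$ equals $\tfrac{i}{2}G$ whose wavefront set is known explicitly and coincides with $WF(H) \cup \overline{WF(H)}$ — forces the wavefront set of $u$, and hence of its symmetric part, to be empty. This is precisely Radzikowski's original argument, where the positivity of the state is used to exclude the "lower" half of the possible wavefront set and the commutator identity pins down the difference $\omega_2 - H$.

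The main obstacle, and the step requiring the most care, is the (2) $\Rightarrow$ (1) direction, specifically the passage from "$WF(\omega_2 - H)$ is contained in a flow-invariant subset of $char(P)$" to "$\omega_2 - H$ is smooth". One cannot simply apply propagation of singularities naively, because a bidistribution lives on $M \times M$ and one must control the wavefront set jointly; the resolution is to exploit that $P_x(\omega_2 - H)$ and $P_y(\omega_2 - H)$ are both smooth, run the propagation argument in the $x$-variable to conclude $WF(\omega_2-H)$ projects trivially onto the $y$-factor (or is contained in a set with zero $y$-covector), then do the same in $y$, and finally use the already-known wavefront set of the causal propagator $G$ together with the Cauchy–Schwarz-type bound (item 5 of Proposition \ref{propomega2}) to rule out the remaining possibility. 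The technical subtlety is keeping track of the signs and the future-cone orientation throughout; I would organise this by first proving the statement for the difference of \emph{two} Hadamard-in-the-microlocal-sense states (which is a smooth symmetric bisolution by a cleaner version of the same argument), and then separately checking that the explicit parametrix $H$ itself, restricted to a normal neighbourhood, is a bidistribution with the correct wavefront set — a computation that reduces, via normal coordinates and Theorem \ref{thm_PropertiesWavefront}(3), to the flat case.
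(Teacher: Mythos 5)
The paper does not actually prove Theorem \ref{Th:global_to_local}; it quotes it from Radzikowski's two papers, so your proposal has to be judged as a reconstruction of that argument. Your overall architecture (compute $WF(H)$, compare $\omega_2$ with $H$, close with propagation of singularities and the constraints of Proposition \ref{propomega2}) is the right one, and the computation of $WF(H)$ via normal coordinates and Theorem \ref{thm_PropertiesWavefront}(3) is fine as a sketch.

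The genuine gap is in your direction (1) $\Rightarrow$ (2). You claim that since $\omega_2-H$ is smooth on each geodesically convex neighbourhood, ``gluing over geodesically convex charts'' yields \eqref{eq:WF}. This does not work: the set in \eqref{eq:WF} is a \emph{global} subset of $T^*(M\times M)$, containing points $(x,y;k_x,k_y)$ with $x$ and $y$ joined by arbitrarily long null geodesics and, just as importantly, asserting the \emph{absence} of singularities over pairs $(x,y)$ far apart and not null-related. The local Hadamard form only controls $WF(\omega_2)$ over a neighbourhood of the diagonal of $M\times M$; no partition-of-unity argument extends that control off the diagonal. This is exactly why Radzikowski needed a separate local-to-global singularity theorem. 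To close the step one must (i) apply Theorem \ref{tps} to $\omega_2$ itself, as a bisolution of $P$ in each slot, to propagate the diagonal information along the bicharacteristic flow, and (ii) invoke a microlocal version of the Cauchy--Schwarz inequality for distributions of positive type: if $(x,y;k_x,k_y)\in WF(\omega_2)$ then $(x,x;k_x,-k_x)$ and $(y,y;-k_y,k_y)$ lie in $WF(\omega_2)$ as well, so every off-diagonal singular direction is anchored to diagonal ones, which the local Hadamard form does control. Without (ii) one cannot exclude singular directions whose bicharacteristic strips never reach the diagonal.

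A secondary criticism: in your direction (2) $\Rightarrow$ (1) you have the right ingredients but deploy them in the wrong places. Once one knows $WF(H)=C^+$ (the set in \eqref{eq:WF}) on a convex neighbourhood, the clean closing move is purely the antisymmetry constraint: both $\omega_2$ and $H$ have antisymmetric part $\tfrac{i}{2}G$ modulo $C^\infty$, so $u=\omega_2-H$ is symmetric modulo $C^\infty$; symmetry forces $WF(u)$ to be invariant under the flip $(x,y;k_x,k_y)\mapsto(y,x;k_y,k_x)$, while $WF(u)\subseteq C^+$ and the flip maps $C^+$ into the disjoint set $C^-$ (since $k_y$ is past-directed whenever $(x,y;k_x,k_y)\in C^+$); hence $WF(u)=\emptyset$. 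Neither propagation of singularities nor item 5 of Proposition \ref{propomega2} is needed here --- positivity is the tool for the \emph{other} direction, as described above. As written, your appeal to positivity ``to rule out the remaining possibility'' in this direction is too vague to be checked and is not where that hypothesis actually does its work.
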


We observe that we asked for $(M,g)$ to be four-dimensional. This is only due to the fact that we want to make a connection with \eqref{eq:Hadamard_parametrix} which is valid for $\dim M=4$. Theorem \ref{Th:global_to_local} can be adapted to any value of $\dim M\geq 2$ provided that a suitable counterpart of \eqref{eq:Hadamard_parametrix} is used.

From the general properties of the wave-front set and of the state on $\mathcal{A}^{obs}(M)$, it turns out that two Hadamard states must thus differ only by a smooth term, hence yielding a global counterpart of Definition \ref{Def:local_Hadamard_form}. In addition, although at first sight it might look otherwise, it is much easier to verify \eqref{eq:WF} in concrete cases rather then using \eqref{eq:Hadamard_parametrix}. 

The first question, which can be asked in view of Definition \ref{Def:local_Hadamard_form} and of Theorem \ref{Th:global_to_local} is whether one can guarantee that Hadamard states do exist. A positive answer to this question relies on a deformation argument and it was first presented in \cite{FNW}. This result is unfortunately of limited applicability since it does not offer a concrete mean to construct Hadamard states, nor it guarantees invariance under the action of the background isometries. For many years one of the main difficulties in using the Hadamard condition was indeed the lack of explicit examples, barring the well-known Poincar\'e vacuum and Bunch-Davies state. In the past twenty years several studies have been made to bypass this hurdle, especially when the background metric possessed a large isometry group which allows for a construction of quasi-free states via a mode expansion. Limiting ourselves to a real, Klein-Gordon field a few and non exhaustive list of results of this line of investigation are the following:
\begin{itemize}
	\item If $(M,g)$ is a static spacetime with $\dim M\geq 2$, then the ground as well as the KMS states constructed with respect to the underlying timelike Killing field is of Hadamard form, as proven in \cite{Sahlmann:2000fh},
	\item A further interesting class of Hadamard states in general Friedmann-Robertson-Walker are the {\it states of low energy} constructed in \cite{Olbermann:2007gn}. These states minimise the energy density integrated in time with a compactly supported weight function. This construction has been generalised to encompass almost equilibrium states in \cite{Kusku:2008zz} and expanding spacetimes with less symmetry in \cite{Them:2013uka}. Noteworthy in this context is the thorough analysis of the mode expansion presented in \cite{Avetisyan:2012wq}. 
	\item Still in the context of cosmological spacetimes, another interesting construction of Hadamard states is due to Brum and Fredenhagen \cite{Brum:2013bia}, a work which modifies the so-called SJ states \cite{Afshordi:2012jf} to make them compatible with the Hadamard criterion  \cite{Fewster:2012ew,Fewster:2013lqa}. 
\end{itemize}

\noindent In the following sections we will be discussing a different constructive criterion to identify quasi-free Hadamard states, which relies on the possibility to find an injective map from the algebra of observables for a real, Klein-Gordon field into a suitable counterpart living on a codimension $1$ null submanifold.  This method was further extended in \cite{Gerard:2014hla} where, by solving a suitable characteristic initial value problem, it was shown that the injection mentioned above is actually a bijection once the bulk and the boundary symplectic spaces are chosen carefully. As a by-product of this choice, it descends that the state, that we will be constructing, is pure.
An alternative analysis for the case of asymptotically Minkowski spacetimes can be found in \cite{VasyWrochna}. Therein the construction of Hadamard states from data on a characteristic hypersurfaces in the asymptotically de Sitter scenario is also discussed.

In the context of cosmological spacetimes, the mode expansion methods mentioned above were used by Parker \cite{Parker} to introduce the so called adiabatic states. Such approach has been adapted by Junker \cite{Junker} and by Junker and Schroe \cite{Junker:2001gx} in order to have a state of Hadamard form. Similar ideas have been successively developed in \cite{Gerard:2012wb,Gerard:2016jqj}, see also \cite{Vasy:2016rum}.

\subsection{Natural Gaussian states for the boundary algebra}

Up to now we have established a criterion to select Gaussian states of physical relevance for a real, scalar field theory on a globally hyperbolic spacetime. Our next goal is to provide a concrete mean to construct some  of these states in the models considered in Chapter \ref{Ch:qft-null-surfaces}. The underlying idea is based on the following rather straightforward definition

\begin{definition}\label{Def:induced_state}
Let $\mathcal{A}$ and $\mathcal{A}^\prime$ be two unital $^*$-algebras and let $\iota:\mathcal{A}\to\mathcal{A}^\prime$ be an injective $^*$-homomorphism. For any algebraic state $\omega:\mathcal{A}^\prime\to\mathbb{C}$ we call {\bf induced (or pulled-back) state} the linear map $\iota^*\omega:\mathcal{A}\to\mathbb{C}$ such that, for all $a\in\mathcal{A}$
$$(\iota^*\omega)(a)\doteq\omega(\iota(a)).$$
\end{definition}

\begin{remark} The definition itself entails that positivity and the normalization of $\iota^*\omega$ are traded directly from those of $\omega$.
\end{remark}
From our point of view, Definition \ref{Def:induced_state} is particularly interesting since we have already established the existence of an injective $^*$-isomorphism between suitable algebras of observables, either $\mathcal{A}^{obs}(M)$ or $\mathcal{A}^{obs}_0(M)$ and a counterpart defined at future or past null infinity, respectively $\mathcal{A}_c(\Im)$ and $\mathcal{A}(\Im)$. Our main idea is to start from the latter case trying to identify a distinguished state on both $\mathcal{A}_c(\Im)$ and $\mathcal{A}(\Im)$ by exploiting the huge symmetry group of these algebras. Subsequently we will investigate the properties of the pulled-back state built according to Definition \ref{Def:induced_state}, in particular the Hadamard condition and the invariance under the action of the bulk isometries. As in the preceding chapter, the case when the background is Schwarzschild spacetime will be treated separately.

To start with, our first goal is the identification of a quasi-free state in the sense of Definition \ref{def:quasifree_state} for both $\mathcal{A}(\Im)$ and for $\mathcal{A}_c(\Im)$. These are characterized in Definition \ref{Def:boundary_algebra} as soon as one identifies $\Im^+$ and $\Im^-$ to $\Im \doteq \mathbb{R}\times \mathbb{S}^2$, by fixing a Bondi coordinate frame 
$(u, s) \in \mathbb{R}\times \mathbb{S}^2$ -- where $u$ is  an affine parameter along the null geodesics forming 
$\Im^+$ and $\Im^-$ respectively. As discussed in Remark \ref{remarkIDIM}, different choices of physically admissible Bondi frames lead to isomorphic $^*$-algebras $\mathcal{A}(\Im)$ and $\mathcal{A}_c(\Im)$ respectively. The states we go to define enjoy the same invariance property as we  shall see shortly, so the construction is not affected by arbitrariness.
 We start from a preliminary result:

\begin{proposition}\label{prop:boundary_scalar_product}
	Referring to the real symplectic spaces $(\mathcal{S}(\Im), \sigma)$  and  $(\mathcal{S}_c(\Im), \sigma_c)$
	as in Proposition \ref{propSSc} the following facts hold:\\
{\bf (a)} The map $\mu_\Im:\mathcal{S}(\Im)\times \mathcal{S}(\Im)\to\mathbb{R}$ 
	\begin{equation}\label{eq:boundary_scalar_product}
		\mu_\Im(\psi,\psi^\prime)\doteq \mbox{Re}\left[ \int\limits_{\mathbb{R^+}\times\mathbb{S}^2} 2k\overline{\widehat{\psi}(k,s)}\,\widehat{\psi}^\prime(k,s) \: dk\: d\mu_{\mathbb{S}^2}(s)\right]\quad 
		\forall \psi,\psi^\prime\in\mathcal{S}(\Im)\:, 
	\end{equation}
	 is a real scalar product on $\mathcal{S}(\Im)$, 
	where the symbol $\;\widehat{}\;$ identifies the Fourier-Plancherel transform (\ref{FTeq}) along $u$. Here $s$ is a short cut to indicate the angular coordinates on $\mathbb{S}^2$, {\it i.e.} $s=(\theta,\varphi)$. \\
{\bf (b)}	The same result holds true replacing $\mathcal{S}(\Im)$ with $\mathcal{S}_c(\Im)$ and defining a real scalar product $\mu_{\Im c}$.
\end{proposition}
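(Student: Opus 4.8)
The plan is to prove the two statements together, since part (b) follows by an essentially identical argument (with the appropriate changes of function space and symplectic form), and the authors themselves indicate this. So I focus on part (a) and the map $\mu_\Im$.

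First I would check that $\mu_\Im$ is well-defined as a real-valued map on $\mathcal{S}(\Im)\times\mathcal{S}(\Im)$. Here the point is that, for $\psi\in\mathcal{S}(\Im)$, we have $\psi,\partial_u\psi\in L^2(\Im)$, so by the Plancherel theorem $\widehat\psi\in L^2(\Im)$ and $k\widehat\psi(k,s)=\widehat{-i\partial_u\psi}(k,s)\in L^2(\Im)$ as well. Hence $\sqrt{k}\,\widehat\psi(k,s)\in L^2(\mathbb{R}^+\times\mathbb{S}^2)$ by interpolation (or simply because $|\sqrt k\,\widehat\psi|^2 \le \tfrac12(|\widehat\psi|^2+|k\widehat\psi|^2)$ pointwise), and the integrand $2k\,\overline{\widehat\psi(k,s)}\,\widehat\psi'(k,s)$ is the product of two $L^2$ functions on $\mathbb{R}^+\times\mathbb{S}^2$, hence $L^1$; the Cauchy--Schwarz inequality bounds the integral. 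So $\mu_\Im(\psi,\psi')$ is a finite complex number and taking the real part produces a finite real number. Bilinearity over $\mathbb{R}$ and symmetry $\mu_\Im(\psi,\psi')=\mu_\Im(\psi',\psi)$ are immediate: symmetry because $\mathrm{Re}\,\overline z = \mathrm{Re}\,z$ and swapping $\psi\leftrightarrow\psi'$ conjugates the integrand.

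Next I would establish positivity and non-degeneracy, which together say $\mu_\Im$ is a genuine scalar product. Positivity: $\mu_\Im(\psi,\psi)=\mathrm{Re}\!\int_{\mathbb{R}^+\times\mathbb{S}^2} 2k\,|\widehat\psi(k,s)|^2\,dk\,d\mu_{\mathbb{S}^2}\ge 0$, since the integrand is already real and non-negative. For definiteness, suppose $\mu_\Im(\psi,\psi)=0$. Then $k\,|\widehat\psi(k,s)|^2 = 0$ for almost every $(k,s)\in\mathbb{R}^+\times\mathbb{S}^2$, so $\widehat\psi(k,s)=0$ for a.e.\ $k>0$. Here I would use that $\psi$ is real-valued, so $\widehat\psi(-k,s)=\overline{\widehat\psi(k,s)}$, giving $\widehat\psi=0$ a.e.\ on $\{k>0\}\cup\{k<0\}$, i.e.\ $\widehat\psi=0$ in $L^2(\Im)$, hence $\psi=0$ in $L^2(\Im)$; since $\psi$ is smooth, $\psi\equiv 0$. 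The same reasoning on $\mathcal{S}_c(\Im)$ requires only checking that the defining integrability conditions there ($\widehat\psi\in L^1$, $k\widehat\psi\in L^\infty$, $\partial_u\psi\in L^1$, $\psi\in L^\infty$) still guarantee $\sqrt k\,\widehat\psi\in L^2(\mathbb{R}^+\times\mathbb{S}^2)$ so that the same scheme applies; this is the content of the reference to \cite{Dappiaggi:2008dk}.

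I expect the main subtlety to lie not in positivity but in verifying the finiteness/integrability of the defining integral on the cosmological space $\mathcal{S}_c(\Im)$, where the decay hypotheses are phrased asymmetrically (some in $L^1$, some in $L^\infty$), and in confirming the key inequality $|\sqrt k\,\widehat\psi|^2\in L^1(\mathbb{R}^+\times\mathbb{S}^2)$ from those hypotheses. For $\mathcal{S}(\Im)$ this is routine Plancherel, but for $\mathcal{S}_c(\Im)$ one must interpolate between an $L^1$ bound on $\widehat\psi$ (from $\psi,\partial_u\psi\in L^1$ via the Riemann--Lebesgue–type estimates, or directly from the stated conditions) and an $L^\infty$ bound on $k\widehat\psi$; splitting the $k$-integral at $k=1$ and using $k|\widehat\psi|^2\le |\widehat\psi|\cdot |k\widehat\psi|\le \|k\widehat\psi\|_\infty |\widehat\psi|$ for $k\ge 1$ and $k|\widehat\psi|^2\le |\widehat\psi|\cdot|k\widehat\psi|$ with $|k\widehat\psi|\le\|k\widehat\psi\|_\infty$, bounded by $\|\widehat\psi\|_{L^1}$-type control for $k\le 1$, handles it. I would carry this out cleanly for $\mathcal{S}(\Im)$ and then remark that the $\mathcal{S}_c(\Im)$ case follows by the same splitting argument, referring to \cite{Dappiaggi:2008dk} for the detailed estimates, exactly as the authors do.
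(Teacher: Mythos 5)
Your proof is correct and follows essentially the same route as the paper's: integrability of the integrand from the defining conditions on $\mathcal{S}(\Im)$ (resp. $\mathcal{S}_c(\Im)$), bilinearity and symmetry, positivity, and definiteness via the reality condition $\widehat{\psi}(-k,s)=\overline{\widehat{\psi}(k,s)}$. The only difference is that you spell out the elementary estimates ($k\le\tfrac12(1+k^2)$ for $\mathcal{S}(\Im)$, and $k|\widehat\psi|^2\le\|k\widehat\psi\|_{\infty}|\widehat\psi|$ for $\mathcal{S}_c(\Im)$) that the paper leaves implicit by pointing to the defining conditions and to \cite{Dappiaggi:2008dk}.
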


\begin{proof}
	To prove that $\mu$ is well-defined it suffices to observe that the integrand in \eqref{eq:boundary_scalar_product} is integrable in view of \eqref{eq:boundary_functions} and \eqref{eq:boundary_space_Im}. In addition $\mu$ is bilinear and symmetric.
	$\mu_\Im(\psi,\psi) \geq 0$ by construction and $\mu_\Im(\psi,\psi) =0$ implies $\widehat{\psi}(k,s) =0$ almost everywhere if $k\geq 0$. However, 
	 exploting the fact that, since $\psi$ is real, $\widehat{\psi}(-k,s) =\overline{\widehat{\psi}(k,s)}$,
	 we also have that $\widehat{\psi}(k,s)=0$ almost everywhere so that $\psi=0$.\qed
\end{proof}

\noindent The next step consists of adapting Proposition \ref{Prop:One_particle_Hilbert} to the case in hand. 

\begin{proposition}\label{prop:boundary_one_particle_space}
	Let $\mathcal{H}_\Im=L^2(\mathbb{R}^+\times\mathbb{S}^2,2k dk\,d\mu_{\mathbb{S}^2})$ and let $K_\Im:\mathcal{S}(\Im)\to\mathcal{H}_\Im$ be the $\mathbb{R}$-linear map
	$$ K_\Im :  \mathcal{S}(\Im)\ni\psi\mapsto \widehat{\psi}|_{\mathbb{R}^+\times\mathbb{S}^2} \in \mathcal{H}_\Im\:.$$ Then 
	\begin{enumerate}
		\item $\overline{K_\Im(\mathcal{S}(\Im))}=\mathcal{H}_\Im$,
		\item for every $\psi,\psi^\prime\in\mathcal{S}(\Im)$, 
		\begin{equation}\label{eq:sympl_to_scalar_boundary}
		\sigma_\Im(\psi,\psi^\prime)= 2\mbox{Im}\left[\left\langle K_\Im\psi| K_\Im\psi^\prime\right\rangle_{\mathcal{H}_\Im}\right],
		\end{equation}
		\noindent where $\sigma_\Im$ is the symplectic form \eqref{eq:boundary_symplectic_form}, while $\langle\cdot  |\cdot \rangle_{\mathcal{H}_\Im}$ is the inner product of $\mathcal{H}_\Im$.
	\end{enumerate}
The analogous result holds true if one replaces $\mathcal{S}(\Im)$ with $\mathcal{S}_c(\Im)$, defining $\mathcal{H}_{\Im_c} \doteq \mathcal{H}_\Im$ and
indicating by $K_{\Im_c}$ the map corresponding to $K_\Im$. 
\end{proposition}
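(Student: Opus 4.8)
The statement is an almost immediate application of Proposition \ref{Prop:One_particle_Hilbert} once the bookkeeping is set up, so the plan is to verify the two itemized claims directly and then note that the ``one-particle structure'' abstract machinery does the rest. The main work is item 1 (density of the range of $K_\Im$) and item 2 (the symplectic form is twice the imaginary part of the inner product); item 2 in turn splits into showing that $K_\Im$ genuinely lands in $\mathcal{H}_\Im$ and that the Plancherel identity converts the $u$-integral defining $\sigma_\Im$ into the $k$-integral over $\mathbb{R}^+\times\mathbb{S}^2$.

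\textbf{Step 1: well-definedness of $K_\Im$.} First I would check that for $\psi\in\mathcal{S}(\Im)$ the restriction $\widehat{\psi}|_{\mathbb{R}^+\times\mathbb{S}^2}$ actually lies in $\mathcal{H}_\Im=L^2(\mathbb{R}^+\times\mathbb{S}^2,2k\,dk\,d\mu_{\mathbb{S}^2})$. By definition \eqref{eq:boundary_functions} both $\psi$ and $\partial_u\psi$ are in $L^2(\Im,du\,d\mu_{\mathbb{S}^2})$; Fourier--Plancherel in the $u$-variable (fibrewise over $\mathbb{S}^2$, then integrating over $\mathbb{S}^2$ by Fubini--Tonelli) gives $\widehat{\psi}\in L^2(dk\,d\mu_{\mathbb{S}^2})$ and $\widehat{\partial_u\psi}(k,s)=-ik\,\widehat{\psi}(k,s)\in L^2(dk\,d\mu_{\mathbb{S}^2})$, hence $\int |k|^2|\widehat{\psi}(k,s)|^2\,dk\,d\mu_{\mathbb{S}^2}<\infty$. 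Combining, $\int_{\mathbb{R}\times\mathbb{S}^2}(1+k^2)|\widehat{\psi}|^2<\infty$, which in particular bounds $\int_{\mathbb{R}^+\times\mathbb{S}^2}2k|\widehat{\psi}|^2$ (split the $k$-integral at $k=1$: on $k\le 1$ use $2k\le 2$, on $k\ge 1$ use $2k\le 2k^2$). So $K_\Im$ is well defined, and $\mathbb{R}$-linearity is obvious. The same estimate works verbatim for $\mathcal{S}_c(\Im)$ using the decay hypotheses in \eqref{eq:cosmological_boundary_functions}.

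\textbf{Step 2: density ($\overline{K_\Im(\mathcal{S}(\Im))}=\mathcal{H}_\Im$).} Here I would use that $C^\infty_0(\Im)\subset\mathcal{S}(\Im)$ (remarked right after \eqref{eq:cosmological_boundary_functions}), and that the Fourier transform of $C^\infty_0(\mathbb{R})$ in the $u$-variable is dense in $L^2(\mathbb{R},2k\,dk)$ restricted to $k>0$: indeed suppose $h\in\mathcal{H}_\Im$ is orthogonal to $K_\Im(C^\infty_0(\Im))$; for every $\varphi\in C^\infty_0(\Im)$ we get $\int_{\mathbb{R}^+\times\mathbb{S}^2} 2k\,\overline{h(k,s)}\,\widehat{\varphi}(k,s)\,dk\,d\mu_{\mathbb{S}^2}=0$. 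Extend $2k\,\overline{h}\,\mathbf{1}_{k>0}$ by zero to $k<0$; this is the statement that a certain $L^1_{loc}$ function on $\mathbb{R}\times\mathbb{S}^2$ annihilates all $\widehat{\varphi}$ with $\varphi\in C^\infty_0$, and since $\{\widehat{\varphi}:\varphi\in C^\infty_0(\Im)\}$ is dense in the appropriate test-function sense, one concludes $h=0$ a.e. on $k>0$. (One subtlety to handle: a weight like $2k$ that degenerates at $k=0$; but since we only demand density in $L^2(2k\,dk\,d\mu_{\mathbb{S}^2})$ over $k>0$, the degeneracy at the endpoint is harmless — approximate $h$ by $h\cdot\mathbf{1}_{k>\epsilon}$ first, then approximate that in the unweighted sense.) The same argument applies to $\mathcal{S}_c(\Im)$.

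\textbf{Step 3: the symplectic identity.} For $\psi,\psi'\in\mathcal{S}(\Im)$ write $\sigma_\Im(\psi,\psi')=\int_{\mathbb{R}\times\mathbb{S}^2}(\psi\,\partial_u\psi'-\psi'\,\partial_u\psi)\,du\,d\mu_{\mathbb{S}^2}$. Integrating by parts in $u$ (boundary terms vanish since $\psi,\psi'\to 0$ as $u\to\pm\infty$, as recalled in the proof of Proposition \ref{propSSc}), $\sigma_\Im(\psi,\psi')=2\int_{\mathbb{R}\times\mathbb{S}^2}\psi\,\partial_u\psi'\,du\,d\mu_{\mathbb{S}^2}$. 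Now apply Parseval fibrewise in $u$: $\int_{\mathbb{R}}\psi\,\partial_u\psi'\,du=\int_{\mathbb{R}}\overline{\widehat{\psi}(k,s)}\,\widehat{\partial_u\psi'}(k,s)\,dk$ (here using that $\psi$ is real, so $\overline{\widehat{\psi}(k,s)}=\widehat{\psi}(-k,s)$ and the pairing $\int \psi\,g\,du=\int\widehat{\psi}(-k)\widehat g(k)\,dk=\int\overline{\widehat\psi(k)}\widehat g(k)\,dk$). Since $\widehat{\partial_u\psi'}(k,s)=-ik\,\widehat{\psi'}(k,s)$, this equals $-i\int_{\mathbb{R}}k\,\overline{\widehat{\psi}(k,s)}\,\widehat{\psi'}(k,s)\,dk$. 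Split $\int_{\mathbb{R}}=\int_{k>0}+\int_{k<0}$; in the $k<0$ part substitute $k\mapsto -k$ and use reality ($\widehat{\psi}(-k,s)=\overline{\widehat{\psi}(k,s)}$, likewise for $\psi'$) to get $\int_{k<0}k\,\overline{\widehat\psi(k,s)}\widehat{\psi'}(k,s)\,dk=-\int_{k>0}k\,\widehat{\psi}(k,s)\,\overline{\widehat{\psi'}(k,s)}\,dk=-\overline{\int_{k>0}k\,\overline{\widehat\psi(k,s)}\widehat{\psi'}(k,s)\,dk}$. Hence $\int_{\mathbb{R}}k\,\overline{\widehat\psi}\,\widehat{\psi'}\,dk=2i\,\mathrm{Im}\big[\int_{k>0}k\,\overline{\widehat\psi(k,s)}\,\widehat{\psi'}(k,s)\,dk\big]$, so putting it together and integrating over $\mathbb{S}^2$: $\sigma_\Im(\psi,\psi')=2\cdot(-i)\cdot 2i\,\mathrm{Im}\big[\int_{\mathbb{R}^+\times\mathbb{S}^2}k\,\overline{\widehat\psi}\,\widehat{\psi'}\big]=2\,\mathrm{Im}\big[\int_{\mathbb{R}^+\times\mathbb{S}^2}2k\,\overline{\widehat\psi(k,s)}\,\widehat{\psi'}(k,s)\,dk\,d\mu_{\mathbb{S}^2}\big]=2\,\mathrm{Im}\,\langle K_\Im\psi\,|\,K_\Im\psi'\rangle_{\mathcal{H}_\Im}$, which is \eqref{eq:sympl_to_scalar_boundary}. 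The cosmological case $\mathcal{S}_c(\Im)$ runs identically, the decay rates in \eqref{eq:cosmological_boundary_functions} again guaranteeing all integrals converge and the boundary terms vanish.

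\textbf{Remark on what is automatic.} Note that $\mathrm{Re}\,\langle K_\Im\psi|K_\Im\psi'\rangle_{\mathcal{H}_\Im}=\mu_\Im(\psi,\psi')$ by comparison with \eqref{eq:boundary_scalar_product}, so this proposition is precisely exhibiting the one-particle structure $(K_\Im,\mathcal{H}_\Im)$ attached via Proposition \ref{Prop:One_particle_Hilbert} to the scalar product $\mu_\Im$ of Proposition \ref{prop:boundary_scalar_product}; the inequality \eqref{ineqs} needed to invoke that proposition is immediate from Cauchy--Schwarz, $|\sigma_\Im(\psi,\psi')|=2|\mathrm{Im}\langle K\psi|K\psi'\rangle|\le 2\|K\psi\|\|K\psi'\|=2\sqrt{\mu_\Im(\psi,\psi)\mu_\Im(\psi',\psi')}$. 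I expect the only genuinely delicate point to be the density argument in Step 2 — specifically making the degenerate weight $2k$ near $k=0$ harmless and justifying the duality between $C^\infty_0(\mathbb{R})$-Fourier transforms and the relevant $L^2$ space; everything else is Plancherel bookkeeping plus reality of the fields.
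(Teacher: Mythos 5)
Your Steps 1 and 3 are correct and are essentially the standard argument (the paper itself delegates the proof to \cite{Dappiaggi:2005ci} and \cite[Prop.~2.2]{Dappiaggi:2008dk}, which proceed exactly by Plancherel plus the reality constraint $\widehat{\psi}(-k,s)=\overline{\widehat{\psi}(k,s)}$); the sign bookkeeping in Step 3, including the factor $2\cdot(-i)\cdot 2i=4$, checks out against the convention \eqref{FTeq}.

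There is, however, a genuine gap in Step 2. The set $K_\Im(\mathcal{S}(\Im))$ is only an $\mathbb{R}$-linear subspace of the complex Hilbert space $\mathcal{H}_\Im$, and item 1 asserts that \emph{this real subspace} is dense. Density of a real subspace $V$ is equivalent to: $\mathrm{Re}\langle h|v\rangle=0$ for all $v\in V$ implies $h=0$. Your argument instead assumes the full complex orthogonality $\langle h|K_\Im\varphi\rangle=0$, which is a strictly stronger hypothesis on $h$; killing that smaller annihilator only proves that the \emph{complex} span $K_\Im(\mathcal{S}(\Im))+iK_\Im(\mathcal{S}(\Im))$ is dense (which is what condition (a) of Proposition \ref{Prop:One_particle_Hilbert} needs, but not what item 1 states — compare the analogous statement \eqref{scriaggdens}, where the passage from $\mathbb{C}$ to $\mathbb{R}$ is singled out as a separate claim). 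To close the gap you can either (i) run the duality argument with only $\mathrm{Re}\int_{\mathbb{R}^+\times\mathbb{S}^2}2k\,\overline{h}\,\widehat{\varphi}=0$: setting $g\doteq 2k\overline{h}\,\mathbf{1}_{k>0}$, this says the inverse Fourier transform of $g$ is purely imaginary, hence $g(-k)=-\overline{g(k)}$; since $g$ is supported in $k>0$ this forces $g=0$; or (ii) observe that for $\varphi\in C_0^\infty(\Im)$ the Hilbert transform $H\varphi$ (in $u$) is smooth with $H\varphi,\partial_u H\varphi\in L^2$, so $H\varphi\in\mathcal{S}(\Im)$ and $K_\Im(H\varphi)=-iK_\Im(\varphi)$ up to sign, whence $K_\Im(\mathcal{S}(\Im))$ already contains the complex span of $K_\Im(C_0^\infty(\Im))$ and your complex-orthogonality argument suffices. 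Separately, two minor imprecisions: the $\mathcal{S}_c(\Im)$ case is not ``verbatim'' — there $\widehat{\psi}$ need not be in $L^2$, so Step 1 uses the $L^1$–$L^\infty$ bound $2k|\widehat{\psi}|^2\leq 2\|k\widehat{\psi}\|_\infty|\widehat{\psi}|$ rather than the $(1+k^2)$ interpolation, and Step 3 must invoke the $L^1$ multiplication formula $\int\psi\,g\,du=\int\widehat{\psi}\,\check{g}\,dk$ in place of $L^2$ Plancherel.
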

\noindent The proof of this proposition has been given in \cite[Prop. 2.2]{Dappiaggi:2008dk} for the case of $\mathcal{S}_c(\Im)$ while the one for $\mathcal{S}(\Im)$ can be found in \cite{Dappiaggi:2005ci}. 
\begin{remark}
 $\mu$ and $\sigma_{\Im}$ satisfy (\ref{ineqs}) as trivial consequence of Cauchy-Schwartz inequality 
for  $\langle\cdot  |\cdot \rangle_{\mathcal{H}_\Im}$. Similarly 1. above implies a. in 1. of 
Proposition \ref{Prop:One_particle_Hilbert}
\end{remark}
The last step consists of observing that the construction of a quasi-free state as discussed in Proposition \ref{Prop:characterization_quasifree_states} and following Definition \ref{def:quasifree_state} can be used also in this scenario as suggested in (1) Remark \ref{Rem:obvious}. This justifies the following definition:

\begin{definition}\label{Def:boundary_quasi_free_state}
	Let $\mathcal{A}(\Im)$ be the algebra built out of $\mathcal{S}(\Im)$ as per Definition \ref{Def:boundary_algebra}. Then we call $\omega_\Im:\mathcal{A}(\Im)\to\mathbb{C}$ the quasi-free state whose associated two-point function reads:
	\begin{equation}\label{eq:boundary_2pt_function}
	\omega_{2,\Im}(\psi,\psi^\prime)=\mu_\Im(\psi,\psi^\prime)+\frac{i}{2}\sigma_{\Im}(\psi,\psi^\prime),\quad\forall\psi,\psi^\prime\in\mathcal{S}(\Im)
	\end{equation}
	where $\mu_\Im$ is the inner product \eqref{eq:boundary_scalar_product} while $\sigma_{\Im}$ is the symplectic form \eqref{eq:symplectic_form}. The analogous  definition applies replacing $\mathcal{A}(\Im)$ with $\mathcal{A}_c(\Im)$, the algebra constructed out of $\mathcal{S}_c(\Im)$ and obtaining the quasifree state $\omega_{\Im_c}$.
\end{definition}
A direct calculation using \eqref{eq:boundary_2pt_function} and the explicit expressions \eqref{eq:boundary_scalar_product} and \eqref{eq:symplectic_form} shows the following notable consequence:

\begin{lemma}\label{lem:boundary_integral_kernel}
	Let $\omega_\Im$ be the quasi-free state of $\mathcal{A}(\Im)$ as per Definition \ref{Def:boundary_quasi_free_state}. The associated two-point function $\omega_{2,\Im}$ identifies an element of $\mathcal{D}^\prime(\Im\times\Im)$ whose integral kernel reads:
	\begin{equation}\label{eq:notable_2pt_function}
	\omega_{2,\Im}(x,x^\prime)=\lim_{\epsilon\to 0^+}\frac{\delta(s,s^\prime)}{(u-u^\prime-i\epsilon)^2},
	\end{equation}
	where $x=(u,s)$, $x^\prime=(u^\prime,s^\prime)$, $\delta(s,s')$ is the standard Dirac delta on the unit-radius sphere $\mathbb{S}^2$ referred to the standard rotational-invariant measure (with total area $4\pi$) thereon, and the limit has to be interpreted in a distributional sense.
\end{lemma}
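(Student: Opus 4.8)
The plan is to compute the integral kernel of $\omega_{2,\Im}$ directly from the definition of the two-point function \eqref{eq:boundary_2pt_function}, using the explicit formulas \eqref{eq:boundary_scalar_product} for $\mu_\Im$ and \eqref{eq:boundary_symplectic_form} for $\sigma_\Im$, and then recognising the resulting expression as the distributional Fourier transform, in the $k$-variable, of the characteristic function of the positive half-line times a frequency weight. First I would write, for $\psi,\psi'\in\mathcal{S}(\Im)$,
\begin{equation*}
\omega_{2,\Im}(\psi,\psi')=\int_{\mathbb{R}^+\times\mathbb{S}^2} 2k\,\overline{\widehat{\psi}(k,s)}\,\widehat{\psi}'(k,s)\,dk\,d\mu_{\mathbb{S}^2}(s),
\end{equation*}
after checking that the ``$\mathrm{Re}$'' in \eqref{eq:boundary_scalar_product} plus $\tfrac{i}{2}\sigma_\Im$ reassembles exactly into this single holomorphic-looking integral over $k>0$. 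Here one uses the Plancherel identity and the reality condition $\widehat{\psi}(-k,s)=\overline{\widehat{\psi}(k,s)}$, which turns $\sigma_\Im$ into $2\,\mathrm{Im}\langle K_\Im\psi\mid K_\Im\psi'\rangle$ (Proposition \ref{prop:boundary_one_particle_space}), so that $\mu_\Im+\tfrac{i}{2}\sigma_\Im=\langle K_\Im\psi\mid K_\Im\psi'\rangle_{\mathcal{H}_\Im}$ is precisely the displayed $k$-integral.

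Next I would separate the angular and the null-coordinate integrations. The angular part is immediate: since the weight $2k$ and the measure $dk$ do not couple the two sphere points, writing $\widehat{\psi}(k,s)=\int e^{iku}\psi(u,s)\,du/\sqrt{2\pi}$ gives a factor $\int_{\mathbb{S}^2\times\mathbb{S}^2}\delta(s,s')\,(\cdots)\,d\mu_{\mathbb{S}^2}(s)\,d\mu_{\mathbb{S}^2}(s')$ once one inserts the kernel form. The core computation is therefore the one-dimensional statement that, as a distribution in $(u,u')$,
\begin{equation*}
\frac{1}{2\pi}\int_0^{+\infty} 2k\,e^{-ik(u-u')}\,dk=\lim_{\epsilon\to 0^+}\frac{1}{(u-u'-i\epsilon)^2}.
\end{equation*}
I would establish this by introducing the convergence factor $e^{-\epsilon k}$, computing $\frac{1}{\pi}\int_0^\infty k\,e^{-ik(u-u'-i\epsilon)}\,dk=\frac{1}{\pi}\cdot\frac{1}{(i(u-u'-i\epsilon))^2}\cdot(-1)$ — more carefully, $\int_0^\infty k e^{-k\alpha}dk=\alpha^{-2}$ for $\mathrm{Re}\,\alpha>0$ with $\alpha=\epsilon+i(u-u')$, giving $\frac{1}{\pi}\cdot\frac{1}{(\epsilon+i(u-u'))^2}=\frac{1}{\pi}\cdot\frac{-1}{(u-u'-i\epsilon)^2}$ up to a sign I would track carefully — and then taking the limit $\epsilon\to 0^+$ in $\mathcal{S}'(\mathbb{R}^2)$, which is standard (it is the second derivative, in $u$, of the distributional boundary value $\frac{1}{u-u'-i\epsilon}=-i\int_0^\infty e^{-ik(u-u'-i\epsilon)}dk$, cf.\ the conventions of \cite[Appendix C]{Moretti2}). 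The normalisation constants ($\sqrt{2\pi}$ factors in the Fourier transform, the factor $2$ in $2k$, and the total area $4\pi$ on $\mathbb{S}^2$ versus the measure used for $\delta(s,s')$) must be bookkept so that no stray numerical prefactor survives in \eqref{eq:notable_2pt_function}; I would fix the conventions against \eqref{FTeq}.

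The main obstacle I anticipate is purely one of rigour rather than of ideas: justifying that the bilinear form $\omega_{2,\Im}$ on $\mathcal{S}(\Im)$ is sequentially continuous in the appropriate topologies and hence, by the Schwartz kernel theorem (cf.\ (3) of Remark \ref{Rem:obvious}), extends to a genuine element of $\mathcal{D}'(\Im\times\Im)$ with the claimed kernel — one must check that the right-hand side of \eqref{eq:notable_2pt_function}, when smeared against $\psi\otimes\psi'$ with $\psi,\psi'\in C_0^\infty(\Im)$, reproduces the $k$-integral above. This is where one needs to pass the $\epsilon\to 0$ limit through the $(u,u',s,s')$-integrations, which is legitimate because the convolution kernels $k\,e^{-\epsilon k}\mathbf{1}_{k>0}$ converge in $\mathcal{S}'(\mathbb{R})$ and Fourier transform is continuous on $\mathcal{S}'$; alternatively, one invokes Proposition \ref{prop:boundary_one_particle_space} together with the dominated convergence theorem applied on $\mathcal{H}_\Im$ for the smooth test data, where all integrals are absolutely convergent. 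Everything else is a direct, if slightly fiddly, calculation.
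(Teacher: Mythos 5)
Your proposal is correct and is precisely the ``direct calculation'' the text invokes for this lemma: reassemble $\mu_\Im+\tfrac{i}{2}\sigma_\Im$ into $\langle K_\Im\psi\,|\,K_\Im\psi'\rangle_{\mathcal{H}_\Im}$ via Proposition \ref{prop:boundary_one_particle_space}, insert \eqref{FTeq}, and evaluate $\int_0^\infty k\,e^{-k(\epsilon+i(u-u'))}\,dk=(\epsilon+i(u-u'))^{-2}$ before passing to the distributional limit, with the Schwartz-kernel/continuity point handled exactly as you describe. Your careful bookkeeping in fact yields the kernel $-\tfrac{1}{\pi}\lim_{\epsilon\to 0^+}\delta(s,s')(u-u'-i\epsilon)^{-2}$; the prefactor $-1/\pi$ is suppressed in the displayed formula \eqref{eq:notable_2pt_function} but is the one consistent with the analogous expression \eqref{eq:Hermitian_inner_product_Im} used later for $\Im^-$, so your instinct to track it carefully was right.
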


\noindent An important question which remains to be answered is why one should select as state at the boundary the one of Definition \ref{Def:boundary_quasi_free_state}. The reason lies in the following uniqueness results. Since they are strongly tied to the boundary symmetry group which is different when one considers asymptotically flat spacetimes, hence $\mathcal{A}(\Im)$, or cosmological spacetimes, hence $\mathcal{A}_c(\Im)$, two separate statements are necessary.
 The proof of next theorem can be found 
in \cite[Th.1.1,Th.3.1]{Moretti} and \cite[Th.3.2]{Moretti2}, while that of the subsequent  statement appears in \cite[Prop. 4.1,Th.4.1]{Dappiaggi:2007mx},

\begin{theorem}\label{Th:uniqueness_asymptotically_flat}
	The quasifree state $\omega_\Im:\mathcal{A}(\Im)\to\mathbb{C}$ in Definition \ref{Def:boundary_quasi_free_state} enjoys the following properties.

{\bf (a)} $\omega_\Im$ is invariant under  the representation $\alpha$ of $G_{BMS}$, which is built out of the
$^*$-automorphisms as in Proposition \ref{prop:isomorphism_boundary_isometries}. In particular,  $\alpha$ is unitarily implementable in the GNS representation of $\omega_\Im$.

{\bf (b)} If $\{\alpha_t\}_{t \in \mathbb{R}}$ represents the one-parameter subgroup of $G_{BMS}$ generated by $\partial_u$, for an arbitrary choice of a Bondi frame on $\Im^+$ (also different to that used to initially identify $\Im^+$ to $\Im \doteq \mathbb{R} \times \mathbb{S}^2$), 
the unitary group which implements $\{\alpha_t\}_{t \in \mathbb{R}}$ leaving fixed the cyclic GNS vector of the GNS triple associated to $\omega_\Im$ is strongly continuous with nonnegative generator.

{\bf (c)} Fixing a Bondi frame on $\Im^+$ as in (b), there is a unique quasifree state $\omega :\mathcal{A}(\Im)\to\mathbb{C} $ such that 
the unitary group which implements $\{\alpha_t\}_{t \in \mathbb{R}}$ leaving fixed the cyclic GNS vector of the GNS construction  associated to $\omega$ is strongly continuous with nonnegative generator. It holds $\omega =\omega_\Im$.
\end{theorem}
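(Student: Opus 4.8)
The plan is to use throughout that $\omega_\Im$ is quasi-free, so that it is entirely encoded in its two-point function and, via Proposition \ref{prop:boundary_one_particle_space}, in the one-particle structure $(K_\Im,\mathcal{H}_\Im)$; combining \eqref{eq:boundary_2pt_function} with \eqref{eq:boundary_scalar_product} and \eqref{eq:sympl_to_scalar_boundary} one has the compact identity $\omega_{2,\Im}(\psi,\psi')=\langle K_\Im\psi\mid K_\Im\psi'\rangle_{\mathcal{H}_\Im}$, and $K_\Im$ is injective on $\mathcal{S}(\Im)$ since a real $\psi$ is recovered from $\widehat{\psi}|_{\mathbb{R}^+\times\mathbb{S}^2}$ through $\widehat{\psi}(-k,s)=\overline{\widehat{\psi}(k,s)}$. \textbf{Part (a).} For each $g\in G_{BMS}$ I would define $U_g$ on the dense subspace $K_\Im(\mathcal{S}(\Im))\subset\mathcal{H}_\Im$ by $U_g(K_\Im\psi)\doteq K_\Im(A_g\psi)$, with $A_g$ as in Proposition \ref{prop:isomorphism_boundary_isometries}(a). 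By the identity above, $\omega_\Im\circ\alpha_g=\omega_\Im$ is then equivalent to $U_g$ being isometric, hence extendable to a unitary of $\mathcal{H}_\Im$; and this in turn follows from the explicit form \eqref{u}--\eqref{K}: at fixed (Möbius-rotated) angular variables a BMS element acts on the affine coordinate $u$ by the orientation-preserving affine map $u\mapsto K_\Lambda(u+f)$ with $K_\Lambda>0$, dressed by the conformal prefactor $K_\Lambda^{-1}$ and the angular diffeomorphism. Orientation-preserving affine reparametrisations of $\mathbb{R}$ preserve the positive-frequency half-line, and a change of variables — carried out in \cite[Th. 2.9]{Dappiaggi:2005ci} — shows that, precisely because of the weight $2k\,dk$ defining $\mathcal{H}_\Im$, the induced transformation on $\widehat{\psi}|_{\mathbb{R}^+\times\mathbb{S}^2}$ is unitary. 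From $A_gA_h=A_{g\odot h}$ it follows that $U_gU_h=U_{g\odot h}$ (the $SL(2,\mathbb{C})$ sign being irrelevant), so the second quantisation $g\mapsto\mathcal{U}_g$ on the Bosonic Fock space $\mathcal{H}_{\omega_\Im}$ of Proposition \ref{Prop:characterization_quasifree_states} is a representation; it fixes the vacuum $\Omega_{\omega_\Im}$ and, using $\pi_{\omega_\Im}(\phi(\psi))=a(K_\Im\psi)+a^\dagger(K_\Im\psi)$, satisfies $\mathcal{U}_g\,\pi_{\omega_\Im}(\phi(\psi))\,\mathcal{U}_g^{-1}=\pi_{\omega_\Im}(\phi(A_g\psi))=\pi_{\omega_\Im}(\alpha_g(\phi(\psi)))$, i.e. it implements $\alpha$.

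\textbf{Part (b).} In the chosen Bondi frame the subgroup generated by $\partial_u$ is the translation group $\tau_t:(u,s)\mapsto(u-t,s)$, for which $\widehat{\tau_t\psi}(k,s)=e^{ikt}\widehat{\psi}(k,s)$; hence the unitary $U_t$ of part (a) is multiplication by $e^{ikt}$ on $\mathcal{H}_\Im=L^2(\mathbb{R}^+\times\mathbb{S}^2,2k\,dk\,d\mu_{\mathbb{S}^2})$. This group is strongly continuous by dominated convergence and its generator is multiplication by $k$, with spectrum $[0,+\infty)$; the generator of the implementing group $\mathcal{U}_t$ on Fock space is the second quantisation of this operator, hence non-negative, and $\mathcal{U}_t\Omega_{\omega_\Im}=\Omega_{\omega_\Im}$. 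Thus $\omega_\Im$ is a ground state for $\{\alpha_t\}$. For any other Bondi frame (possibly after a gauge transformation \eqref{gauge}), Proposition \ref{exremark1} and Remark \ref{remarkIDIM} show that the change of frame, and with it the new translation subgroup, differs from the reference one by an element of $G_{BMS}$ that leaves $\omega_\Im$ invariant; by part (a) such an element is implemented by a unitary fixing $\Omega_{\omega_\Im}$, so conjugation by it preserves strong continuity and non-negativity of the generator.

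\textbf{Part (c).} For uniqueness, let $\omega$ be a quasi-free state on $\mathcal{A}(\Im)$ such that, for the fixed Bondi frame, $\{\alpha_t\}$ is implemented in the GNS representation by a strongly continuous unitary group $\{e^{-itH}\}$ with $H\geq 0$ leaving the cyclic vector $\Omega_\omega$ fixed. Invariance of $\Omega_\omega$ forces $\omega\circ\alpha_t=\omega$, so the distributional kernel $\omega_2(u,s;u',s')$ depends on $u,u'$ only through $u-u'$. Since $\pi_\omega(\phi(\tau_t\psi'))=e^{-itH}\pi_\omega(\phi(\psi'))e^{itH}$ and $e^{itH}\Omega_\omega=\Omega_\omega$, one gets $\omega_2(\psi,\tau_t\psi')=\langle\pi_\omega(\phi(\psi))\Omega_\omega\mid e^{-itH}\pi_\omega(\phi(\psi'))\Omega_\omega\rangle=\int_0^\infty e^{-it\lambda}\,d\nu(\lambda)$ for a bounded complex measure $\nu$, a function of $t$ that extends boundedly and holomorphically to a half-plane; hence the Fourier transform of $\omega_2$ in $u-u'$ is supported in the half-line of positive frequencies (with the sign convention of Lemma \ref{lem:boundary_integral_kernel}). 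On the other hand the canonical commutation relations fix the antisymmetric part $\omega_2(\psi,\psi')-\omega_2(\psi',\psi)=i\sigma_\Im(\psi,\psi')$, whose kernel has Fourier transform proportional to $k\,\delta(s,s')$. Together with the reality of $\omega_2$, these two facts force the Fourier transform of $\omega_2$ in $u-u'$ to be a fixed multiple of the distribution equal to $k\,\delta(s,s')$ for $k>0$ and vanishing for $k<0$ — precisely the transform of the kernel $\delta(s,s')/(u-u'-i0)^2$ of Lemma \ref{lem:boundary_integral_kernel}. Hence $\omega_2=\omega_{2,\Im}$, so $\omega=\omega_\Im$; existence has already been settled in (b). The genuinely technical step is the unitarity claim in part (a): checking that the combined $u$-reparametrisation, conformal prefactor $K_\Lambda^{-1}$ and angular Möbius map act unitarily on $L^2(\mathbb{R}^+\times\mathbb{S}^2,2k\,dk\,d\mu_{\mathbb{S}^2})$ requires a careful change of variables, and it is there that the specific measure is forced; in part (c) the delicate point is converting the operator inequality $H\geq 0$ into the clean support statement for the Fourier transform of $\omega_2$ and then extracting uniqueness without invoking microlocal machinery, which is where one leans on the translation-invariant structure of $(\mathcal{S}(\Im),\sigma_\Im)$ and on the detailed arguments of \cite{Moretti,Moretti2}.
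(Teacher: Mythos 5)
First, a point of comparison: the book does not actually prove this theorem --- it defers entirely to \cite[Th.1.1, Th.3.1]{Moretti} and \cite[Th.3.2]{Moretti2} --- so your proposal can only be measured against the strategy of those references. Your parts (a) and (b) follow that strategy correctly. Invariance is reduced to unitarity of the map $K_\Im\psi\mapsto K_\Im(A_g\psi)$ on the one-particle space $\mathcal{H}_\Im=L^2(\mathbb{R}^+\times\mathbb{S}^2,2k\,dk\,d\mu_{\mathbb{S}^2})$, and you correctly isolate the one genuinely computational step: checking that the angle-dependent dilation $u\mapsto K_\Lambda(u+f)$, the conformal weight $K_\Lambda^{-1}$ and the Jacobian of the M\"obius map on $\mathbb{S}^2$ conspire to preserve the measure $2k\,dk\,d\mu_{\mathbb{S}^2}$ (this is \cite[Th.\ 2.9]{Dappiaggi:2005ci}). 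One should add that the extension of the real-linear isometry from $K_\Im(\mathcal{S}(\Im))$ to a complex-linear unitary of $\mathcal{H}_\Im$ uses the density statement of item 1 in Proposition \ref{prop:boundary_one_particle_space}, but this is routine. For (b), the generator is the second quantisation of multiplication by $k\geq 0$, and the frame-independence is even simpler than your conjugation argument: by Proposition \ref{exremark1} the vector field $\partial_u$ is the same in every Bondi frame, so the subgroup itself does not change.

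Part (c) contains a genuine gap at zero frequency. Your two inputs --- the spectral measure $\nu_{\psi,\psi'}$ of $H$ being supported in $[0,+\infty)$, and the CCR fixing the antisymmetric part of $\omega_2$ --- determine the Fourier transform of $\omega_2$ in $u-u'$ only for $k\neq 0$: for $k>0$ the commutator kernel, proportional to $k\,\delta(s,s')$, pins it down, and for $k<0$ the support condition kills it, but nothing in this argument excludes an additional real, symmetric, positive, $u$-translation-invariant contribution to $\mu_\Im$ concentrated at $k=0$ (a ``zero-mode'' term of the schematic form $c\int_{\mathbb{S}^2}\widehat\psi(0,s)\widehat{\psi'}(0,s)\,d\mu_{\mathbb{S}^2}$, which adds nothing to $\sigma_\Im$ and is compatible with a nonnegative one-particle generator having a nontrivial kernel). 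The sentence ``these two facts force the Fourier transform of $\omega_2$ to be \dots'' therefore overstates what has been shown. Ruling out such contributions is precisely the mathematically substantive part of (c): it is Kay's uniqueness theorem recalled in (2) of Remark \ref{Rem:obvious}, whose hypotheses are that the reference one-particle Hamiltonian (multiplication by $k$) has trivial kernel and that $K_\Im(\mathcal{S}(\Im))$ is dense in $\mathcal{H}_\Im$ as a \emph{real} subspace (purity of $\omega_\Im$, item 1 of Proposition \ref{prop:boundary_one_particle_space}) --- a hypothesis your Fourier argument never invokes --- together with the fact that the decay and integrability conditions defining $\mathcal{S}(\Im)$ in \eqref{eq:boundary_functions} leave no room for a well-defined zero-mode functional. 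This is exactly the step carried out in the cited theorems of \cite{Moretti,Moretti2}, so your closing remark that one ``leans on'' those references is not a stylistic concession: it covers the actual content of the uniqueness claim.
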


\begin{theorem}\label{Th:uniqueness_cosmological}
	The quasifree state $\omega_{\Im_c}:\mathcal{A}_c(\Im)\to\mathbb{C}$ in Definition \ref{Def:boundary_quasi_free_state} enjoys the following properties.

{\bf (a)} $\omega_{\Im_c}$ is invariant under  the representation $\widetilde{\alpha}$
of the group  $SG_{\Im^-}$ in terms of
$^*$-automorphisms  as in Proposition \ref{prop:isomorphism_boundary_isometries},
so that, in particular,  $\widetilde{\alpha}$ is unitarily implementable in the GNS representation of $\omega_{\Im_c}$. 

{\bf (b)} If $\{\widetilde{\alpha}_t\}_{t \in \mathbb{R}}$ represents the one-parameter subgroup of $SG_{\Im^-}$ generated by $\partial_u$, for an arbitrary choice of a Bondi frame on $\Im^-$ (also different to that used to initially identify $\Im^+$ to $\Im \doteq \mathbb{R} \times \mathbb{S}^2$ but in agreement with (\ref{quasih})), 
the unitary group which implements $\{\widetilde{\alpha}_t\}_{t \in \mathbb{R}}$ leaving fixed the cyclic GNS vector of the GNS triple associated to $\omega_\Im$ is strongly continuous with nonnegative generator.

{\bf (c)} Fixing a Bondi frame on $\Im^+$ as in (b), there exists a unique quasifree state $\omega :\mathcal{A}(\Im)\to\mathbb{C} $ such that 
the unitary group which implements $\{\widetilde{\alpha}_t\}_{t \in \mathbb{R}}$ leaving fixed the cyclic GNS vector of the GNS construction associated to $\omega$ is strongly continuous with nonnegative generator. It holds $\omega =\omega_{\Im_c}$.
\end{theorem}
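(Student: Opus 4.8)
\textbf{Proof strategy for Theorem \ref{Th:uniqueness_cosmological}.}
The plan is to mirror, in the cosmological-horizon setting, the three-step argument already used for Theorem \ref{Th:uniqueness_asymptotically_flat}. For part (a), I would first recall from Proposition \ref{prop:isomorphism_boundary_isometries}(b),(d) that each $h \in SG_{\Im^-}$ induces a symplectomorphism $\widetilde{A}_h$ of $(\mathcal{S}_c(\Im),\sigma_{\Im_c})$ and a corresponding $^*$-automorphism $\widetilde{\alpha}_h$ of $\mathcal{A}_c(\Im)$. Invariance $\omega_{\Im_c}\circ\widetilde{\alpha}_h = \omega_{\Im_c}$ then reduces, by the quasi-free property and Definition \ref{Def:boundary_quasi_free_state}, to checking that $\mu_{\Im c}$ is preserved by $\widetilde{A}_h$ (since $\sigma_{\Im_c}$ already is). For the one-parameter subgroup generated by $\partial_u$ this is immediate because a $u$-translation acts on the Fourier transform $\widehat{\psi}(k,s)$ by the phase $e^{iak}$, which drops out of $|\widehat\psi|^2$; for the rotations $R\in SO(3)$ it follows from rotational invariance of $d\mu_{\mathbb{S}^2}$; for the dilations/super-translations one invokes the explicit computation in \cite{Dappiaggi:2007mx}. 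Once invariance is established, Proposition \ref{Prop:automorphisms_and_GNS} yields unitary implementability of $\widetilde\alpha$ in the GNS representation of $\omega_{\Im_c}$.

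For part (b), fix a Bondi frame compatible with \eqref{quasih} and let $\{\widetilde\alpha_t\}_{t\in\mathbb{R}}$ be the subgroup generated by $\partial_u$. On the one-particle space $\mathcal{H}_{\Im_c} = L^2(\mathbb{R}^+\times\mathbb{S}^2, 2k\,dk\,d\mu_{\mathbb{S}^2})$ of Proposition \ref{prop:boundary_one_particle_space}, the map $\widetilde A_{\exp\{t\partial_u\}}$ is conjugated by $K_{\Im_c}$ to the multiplication operator $\widehat\psi(k,s)\mapsto e^{ikt}\widehat\psi(k,s)$. This is manifestly a strongly continuous unitary one-parameter group whose self-adjoint generator is multiplication by $k$, which has spectrum $[0,+\infty)$, hence is nonnegative. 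Lifting to the symmetric Fock space $\mathcal{H}_{\omega_{\Im_c}}$ via second quantization preserves strong continuity and nonnegativity of the generator and fixes the Fock vacuum $\Omega_{\omega_{\Im_c}}$; thus $\omega_{\Im_c}$ is a ground state for $\{\widetilde\alpha_t\}$ in the sense of Remark \ref{remagggKMS}. I would also record here the Bondi-frame independence: by Remark \ref{remarkIDIM} two admissible Bondi frames with vanishing $a$ differ by an $SG_{\Im^-}$ transformation, under which $\omega_{\Im_c}$ is invariant by part (a), so the property in (b) holds for any such frame.

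For part (c), uniqueness, the argument is the one attributed to Kay and invoked in (2) of Remark \ref{Rem:obvious}: among quasi-free states invariant under $\{\widetilde\alpha_t\}$ and implemented by a strongly continuous unitary group with nonnegative generator on the GNS space, there is at most one provided the restriction of the generator to the one-particle space has no zero eigenvalue. Here the one-particle generator is multiplication by $k$ on $L^2(\mathbb{R}^+\times\mathbb{S}^2,2k\,dk\,d\mu_{\mathbb{S}^2})$, which has purely continuous spectrum with no point spectrum at $0$; so the Kay uniqueness theorem applies verbatim and forces $\omega = \omega_{\Im_c}$. Concretely one would show that any candidate two-point function $\omega_2$ must, by positivity \eqref{ineqs}, $\widetilde\alpha_t$-invariance, and the spectral condition, have the form $\mu_{\Im c} + \tfrac{i}{2}\sigma_{\Im_c}$ on $\mathcal{S}_c(\Im)$ — the positive-frequency splitting being the only one compatible with a nonnegative generator — and then cite \cite{Dappiaggi:2007mx} for the details.

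\textbf{Main obstacle.} The routine points are the translation/rotation invariances and the spectral computation of the generator; the genuinely delicate step is the last one, namely verifying that the Kay--Wald-type uniqueness machinery truly applies, i.e.\ that the one-parameter group implementing $\{\widetilde\alpha_t\}$ is \emph{strongly continuous} on the GNS space (not merely weakly measurable) and that its one-particle generator has \emph{no} zero modes — a subtlety that must be handled with care because $\mathcal{S}_c(\Im)$ is defined through $L^1$/$L^\infty$ conditions on $\widehat\psi$ rather than $L^2$ ones, so the density statement $\overline{K_{\Im_c}(\mathcal{S}_c(\Im))} = \mathcal{H}_{\Im_c}$ of Proposition \ref{prop:boundary_one_particle_space} and the continuity of $t\mapsto e^{ikt}$ on this dense domain are what make the whole scheme go through. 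I would lean on \cite[Prop. 4.1, Th. 4.1]{Dappiaggi:2007mx} for exactly these estimates rather than redo them.
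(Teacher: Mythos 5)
Your proposal is correct and follows essentially the same route as the paper, which itself defers to \cite[Prop.\ 4.1, Th.\ 4.1]{Dappiaggi:2007mx}: invariance of $\mu_{\Im c}$ under $SG_{\Im^-}$ (the measure $2k\,dk$ being preserved by the angle-dependent dilations of $u$), identification of the one-particle generator as multiplication by $k$ on $L^2(\mathbb{R}^+\times\mathbb{S}^2,2k\,dk\,d\mu_{\mathbb{S}^2})$ with trivial kernel, and Kay's uniqueness theorem for invariant quasifree ground states, exactly as anticipated in (2) of Remark \ref{Rem:obvious}. You also correctly isolate the genuinely delicate points (strong continuity on the GNS space and density of $K_{\Im_c}(\mathcal{S}_c(\Im))$ given the $L^1/L^\infty$-type definition of $\mathcal{S}_c(\Im)$), which is where the cited estimates are indeed needed.
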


\section{The Bulk-To-Boundary Correspondence for States}\label{se:pullback}

In the previous section we have identified a state both for the $^*$-algebra $\mathcal{A}(\Im)$ and $\mathcal{A}_c(\Im)$ which enjoys a uniqueness property with respect to the action of the underlying symmetry group. Our next goal is to make use of Definition \ref{Def:induced_state} to start from this state inducing a counterpart in the bulk, which is at the same time of Hadamard form and invariant under the action of all bulk isometries. As in Section \ref{Sec:bulk-to-boundary}, it is more convenient to split the discussion in two parts, one for asymptotically flat spacetimes and one for the cosmological scenario. A third  last part will concern relevant states on Schwarzschild spacetime where the boundary is more complicated.

\subsection{Asmptotically Flat Spactimes} Let us consider $(M,g)$ to be a four-dimensional globally hyperbolic spacetime, which is asymptotically flat at null infinity and admits future time infinity as per \eqref{Rem:alternative_asymptotically_flat}.  In addition we assume also that there exists an open subset $V$ of the unphysical spacetime $(\widetilde{M},\widetilde{g})$ such that $\overline{J^-(\Im^-;\widetilde{M})\cap M}\subset V$ and $(V,\widetilde{g}|_V)$ is  globally hyperbolic. (Recall that, per convention, we are omitting the symbol of the conformal embedding from $M$ to $\widetilde{M}$.)
On top of $(M,g)$ we consider a real scalar field $\phi:M\to\mathbb{R}$, whose dynamics is ruled by \eqref{eq:conformally_coupled}. To such system we can associate the algebra of observables $\mathcal{A}_0(M)$ built as in Definition \ref{def:algebra_of_observables}. In \eqref{eq:bulk_to_boundary_homomorphism} we have introduced the injective $^*$-homomorphism $\iota_{\Im}$ from $\mathcal{A}^{obs}_0(M)$ to $\mathcal{A}(\Im)$, the latter being the $^*$-algebra from Definition \ref{Def:boundary_algebra} built out of \eqref{eq:boundary_functions}.
 We finally  consider the algebraic, quasi-free state $\omega_\Im$ over $\mathcal{A}(\Im)$ built in Definition \ref{Def:boundary_quasi_free_state} and, by applying Definition \ref{Def:induced_state} we introduce the pulled-back state $\omega_M:\mathcal{A}^{obs}_0(M)\to\mathbb{C}$
\begin{equation}\label{eq:bulk_state_asymptotically_flat}
  \omega_M(a)\doteq \iota_{\Im}^*\omega_\Im\:.
\end{equation}
It is immediate to realize that $\omega_M$ inherits the property of being quasi-free and, starting from $\omega_{\Im}$ as in Definition \ref{Def:boundary_quasi_free_state}, the associated two-point function in the sense of \eqref{eq:correlation_functions} reads
\begin{equation}\label{eq:bulk_two_point_asymptotically_flat}
\omega_{2,M}(f,f^\prime)=\omega_{\Im}([\Gamma_{\Im}([f])]\otimes [\Gamma_{\Im}([f^\prime])]),
\end{equation}
where $\Gamma_{\Im}$ is defined in Theorem \ref{Th:bulk_to_boundary}, while  $f,f^\prime\in C^\infty_0(M)$. The next theorem further characterizes the properties of $\omega_M$:

\begin{theorem}\label{Th:Hadamard_form_Asymptotically_Flat}
	The state $\omega_M:\mathcal{A}^{obs}_0(M)\to\mathbb{C}$  defined in \eqref{eq:bulk_state_asymptotically_flat} is of {\bf Hadamard form} in the sense of Definition \ref{Def:local_Hadamard_form}.
\end{theorem}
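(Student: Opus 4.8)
The plan is to verify the wave front set condition \eqref{eq:WF} for the two-point function $\omega_{2,M}$ defined in \eqref{eq:bulk_two_point_asymptotically_flat}, and then invoke Theorem \ref{Th:global_to_local} to conclude the Hadamard property. The strategy is the standard ``propagation of singularities'' argument pioneered in the asymptotically flat setting in \cite{Moretti, Moretti2}. First I would write $\omega_{2,M}$ as the composition of three pieces: the boundary two-point function $\omega_{2,\Im}$, whose integral kernel is given explicitly in Lemma \ref{lem:boundary_integral_kernel} by $\lim_{\epsilon\to 0^+}\delta(s,s')(u-u'-i\epsilon)^{-2}$ — and whose wave front set is elementary to compute, being essentially that of $(u-u'-i0)^{-2}$ tensored with the conormal bundle of the diagonal of $\mathbb{S}^2$, i.e. only covectors with $k_u > 0$ on the first slot (and $k_u<0$ on the second) survive; and the bulk-to-boundary map $\Gamma_\Im$ of Theorem \ref{Th:bulk_to_boundary}, realized via the rescaled causal propagator $\widetilde G_0(\Omega^{-3}\,\cdot\,)$ restricted to $\Im^+$. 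Concretely, $\omega_{2,M} = (\Gamma_\Im\otimes\Gamma_\Im)^* \omega_{2,\Im}$, so I would apply the composition-of-kernels theorem (Theorem \ref{thm:kern}) together with the wave-front-set bound for restriction to a submanifold (item 5 of Theorem \ref{thm_PropertiesWavefront}).

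The core of the argument is controlling the wave front set of the bulk-to-boundary kernel. Since $\widetilde G_0$ is the causal propagator for the normally hyperbolic operator $\widetilde P_0$ on the globally hyperbolic extension $\widetilde M$ (we assumed $V=\widetilde M$ without loss), the theorem on propagation of singularities (Theorem \ref{tps}, part 2) pins $WF(\widetilde G_0)$ to pairs of covectors related by the lifted null-geodesic flow of $\widetilde g$, each null and co-parallel to the connecting geodesic; the only genuine subtlety is the sign/orientation bookkeeping (the $k_x\triangleright 0$ clause), which is dictated by the antisymmetric, causal structure of $G=G^--G^+$. Restricting the second entry to $\Im^+$ and pairing with the boundary kernel, the key geometric fact — this is where the global hypotheses on $(M,g)$ (asymptotic flatness at future null infinity \emph{with} future time infinity $i^+$, plus the globally hyperbolic neighbourhood $V\supseteq \overline{J^-(\Im^+,\widetilde M)\cap M}$) enter — is that every null geodesic through a bulk point $x\in M$, when extended in $\widetilde M$, reaches $\Im^+$ exactly once, and the unphysical null geodesics generating $\Im^+$ are complete with affine parameter $u$. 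This lets me match the bulk null-cone covectors precisely against the boundary covectors allowed by $WF(\omega_{2,\Im})$.

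Assembling these: the composition theorem gives $WF(\omega_{2,M})\subseteq \{(x,y;k_x,k_y): (x,k_x)\sim(y,-k_y)\text{ via a null }g\text{-geodesic}\}$, and the positivity-of-frequency clause $k_x\triangleright 0$ survives from the one-sided support of $WF(\omega_{2,\Im})$ in the $k_u>0$ half (which in turn traces back to the ``only positive frequencies'' definition of $\mu_\Im$ in \eqref{eq:boundary_scalar_product}). To get equality rather than mere inclusion I would note that $\omega_{2,M}$ is a bisolution of $P_0$ (by construction, since $\Gamma_\Im$ kills $P_0(C_0^\infty(M))$), so its wave front set is contained in $char(P_0)\times char(P_0)$ and is invariant under the bilateral geodesic flow; combined with the canonical commutation relation $\omega_{2,M}(f,f')-\omega_{2,M}(f',f)=iG_0(f,f')$ and the known $WF(G_0)$ (which consists of \emph{both} orientations), the two halves of $WF(\omega_{2,M})$ cannot be empty, forcing equality with \eqref{eq:WF}. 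A cleaner route to equality, which I would mention as an alternative, is that a bisolution satisfying the Hadamard wave-front inclusion automatically satisfies it with equality because $\omega_{2,M}$ and its ``flip'' differ by the smooth-free object $iG_0$ whose wave front set is the full null-related set — this is the standard Radzikowski-type dichotomy. The main obstacle, as in the original references, is not any single estimate but the careful verification that the limit-Cauchy-surface argument on $\Sigma\doteq\Im^+\cup\{i^+\}$ (used already for the symplectomorphism \eqref{eq:symplectomorphism}) is compatible with the microlocal composition — i.e. that no singularities ``escape to $i^+$'' — which is exactly the point where the hypothesis that $i^+\in\widetilde M$ with the prescribed behaviour of $\Omega$ there is indispensable; for this I would lean directly on \cite[Th. 4.1 \& the microlocal analysis]{Moretti} and \cite{Moretti2} rather than reprove it.
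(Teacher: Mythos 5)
Your proposal follows essentially the same route as the paper's own (sketched) proof: both reduce the statement to the microlocal condition of Theorem \ref{Th:global_to_local}, obtain the inclusion $WF(\omega_{2,M})\subseteq$ \eqref{eq:WF} by composing the boundary kernel $\omega_{2,\Im}$ with the bulk-to-boundary map via Theorem \ref{thm:kern} (with the same compactness/partition-of-unity control of where bulk null geodesics meet $\Im^+$, resting on the hypothesis about $i^+$), and obtain the reverse inclusion from the fact that the antisymmetric part of $\omega_{2,M}$ is $\tfrac{i}{2}G_0$, whose wave front set is the full two-sided null-related set. The details you defer to \cite{Moretti,Moretti2} are exactly the ones the paper defers as well.
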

This theorem has been proven in \cite{Moretti2} and, since it is one of the main results, that we report in this work, we shall give an outline of the proof, highlighting the main points. 

\begin{sproof}
In view of the equivalence proven in Theorem \ref{Th:global_to_local}, it is sufficient to check that the wave front set of the two-point function is of the form \eqref{eq:WF}. 
In order to prove the inclusion 
\begin{equation}\label{eq:WFsupset}
WF(\omega_{2,M})\supset \left\{(x,y;k_x,k_y)\in T^*(M\times M)\setminus\{0\}\;|\; (x,k_x)\sim(y,-k_y),k_x
\triangleright 0\right\}
\end{equation}
we notice that the antisymmetric part of $\omega_{2,M}$ is proportional to the causal propagator $G$, namely
\[
G(f,f^\prime) = -i \omega_{2,M}(f,f^\prime) +i\widetilde{\omega}_{2,M}(f,f^\prime),
\] 
where $\widetilde{\omega}_{2,M}(f,f^\prime)=\omega_{2,M}(f^\prime, f)$ for all $f,f^\prime\in C^\infty_0(M)$. Furthermore, on account of the properties of the wave front set, it holds
\begin{equation}\label{eq:WFprove1}
WF(G) \subset  WF(\omega_{2,M}) \cup WF(\widetilde{\omega}_{2,M}).
\end{equation}
Notice that the wave front set of the causal propagator is 
\begin{equation}\label{eq:WFprove2}
WF(G) = \left\{(x,y;k_x,k_y)\in T^*(M\times M)\setminus\{0\}\;|\; (x,k_x)\sim(y,-k_y)\right\}
\end{equation}
and at the same time 
\begin{gather*}
WF(\widetilde{\omega}_{2,M}) = \left\{(x,y;k_x,k_y)\in T^*(M\times M)\setminus\{0\} \;|\; (x^\prime,y^\prime;k_{x^\prime},k_{y^\prime})\in WF(\omega_{2,M})\right.\\
\left.\textrm{with}\;(x^\prime,k_{x^\prime})=(y,k_y)\;\textrm{and}\;(y^\prime,k_{y^\prime})=(x,k_x)\right\}.
\end{gather*}
Hence, if \eqref{eq:WFsupset} does not hold, we get a contradiction between \eqref{eq:WFprove1} and \eqref{eq:WFprove2}. In order to prove the inclusion 
\begin{equation}\label{eq:WFsubset}
WF(\omega_{2,M})\subset \left\{(x,y;k_x,k_y)\in T^*(M\times M)\setminus\{0\}\;|\; (x,k_x)\sim(y,-k_y),\; k_x
\triangleright 0\right\}
\end{equation}
we notice that
\[
\omega_{2,M}(f,f^\prime) = \omega_{2,\Im}(\widetilde{G}(f),\widetilde{G}(f^\prime))
\]
where $\widetilde{G}(f) = \left.G(f)\right|_\Im$. Hence $\omega_{2,M}$ can be read as the distribution obtained composing $\omega_{2,\Im}$ with the map 
$\widetilde{G}\otimes \widetilde{G}$. The desired inclusion descends by using Theorem \ref{thm:kern}.
In order to follow this strategy successfully, one needs to evaluate the wave front set of $\omega_{2,\Im}$ and that of
$\widetilde{G}\otimes \widetilde{G}$ knowing \eqref{eq:WFprove2}. In tackling this problem one can observe in addition that only the singularities of $\omega_{2,\Im}$ present in a compact region on $\Im\times \Im$ can influence the wave front set of $\left.\omega_{2,M}\right|_{O\times O}$ where $O$ is an open set of $M$. This is a consequence of the fact that the null geodesics emanating from $O$ intersect $\Im$ in a compact region $C$. Hence, considering a partition of unity $\chi$ adapted to $C$ it holds that   
\[
WF(\left.\omega_{2,M}\right|_{O\times O})  =  WF( \omega_{2,\Im} \circ \chi \widetilde{G} \otimes \chi\widetilde{G} )
\]
At this stage Theorem \ref{thm:kern} yields that $WF(\left.\omega_{2,M}\right|_{O\times O})$ satisfies \eqref{eq:WFsubset}. Since $O$ is generic the sought result descends. \qed
\end{sproof}

\noindent Before considering the next case, two comments on the relevance of $\omega_M$ are in due course  \cite{Moretti,Moretti2}:
\begin{enumerate}
	\item Consider any complete Killing field $\xi$ of $(M,g)$, whose unique extension to $\Im^+$ as per Proposition \ref{Prop:Extension_Isom} is indicated with $\widetilde{\xi}$. Let $\chi^\xi_t$ and $\widetilde{\chi}^{\widetilde{\xi}}_t$ be respectively the one-parameter group of isometries associated to $\xi$ and the one-parameter subgroup of $G_{BMS}$ associated to $\widetilde{\xi}$, $t\in\mathbb{R}$ and consider the representations of these symmetries in terms of $^*$-automorphisms on the respective algebras as in 
	Proposition \ref{prop:isometries_bulk_to_boundary_homomorphism}. As a direct consequence of Proposition \ref{prop:isometries_bulk_to_boundary_homomorphism}, we can conclude that, for every $a\in\mathcal{A}_0^{obs}(M)$, 
	$$\omega_M(\alpha_{\chi^\xi_t}(a))=\omega_\Im(\iota_\Im(\alpha_{\chi^\xi_t}(a)))=\omega_\Im(\alpha_{\widetilde{\chi}_t^{\widetilde{\xi}}}(\iota_\Im(a)),$$
	where we used \eqref{eq:bulk_state_asymptotically_flat} and, in the last equality, \eqref{eq:interplay_algebras}. Applying Theorem \ref{Th:uniqueness_asymptotically_flat}, it descends, for all $a\in\mathcal{A}_0^{obs}(M)$ 
	\begin{equation}\label{eq:state_invariance_asymptotically_flat}
	\omega_M(\alpha_{\chi^\xi_t}(a))=\omega_\Im(\iota_\Im(a))=\omega_M(a),
	\end{equation}
	which entails that {\em $\omega_M$ is always invariant under the action of all bulk isometries}.
	\item As a further consequence of the last remark, we can apply our construction to Minkowski spacetime.  Hence, in view of \eqref{eq:state_invariance_asymptotically_flat}, the state $\omega_M$ must coincide with the Gaussian {\em Poincar\'e}-invariant  state which is the unique, ground state, invariant under all  orthochronous proper Poincar\'e isometries.
\end{enumerate} 
\subsection{Cosmological Spacetimes} Let us next consider $(M,g_{FRW})$, a four-dimensional, simply connected Friedamann-Robertson-Walker spacetime with flat spatial sections, fulfilling the hypotheses of Theorem \ref{theorem1}. This is globally hyperbolic and it can be extended to a larger, globally hyperbolic spacetime $(\widetilde{M},\widetilde{g})$ so that $\partial M\subset\widetilde{M}$ is the cosmological horizon of $M$ as per Definition \ref{defexp}. On top of $(M,g_{FRW})$ we consider a generic Klein Gordon field $\phi:M\to\mathbb{R}$ fulfilling \eqref{eq:dynamics}. The mass and the coupling to the scalar curvature are constrained in such a way that the hypotheses of Theorem \ref{th:cosmological_bulk_to_boundary_compatibility} are met. 
The associated $^*$-algebra of observables is $\mathcal{A}^{obs}(M)$ as per Definition \ref{def:algebra_of_observables} and there exists an injective $^*$-isomorphism, defined in \eqref{eq:cosmological_bulk_to_boundary_homomorphism} $\iota_c:\mathcal{A}^{obs}(M)\to\mathcal{A}_c(\Im)$. The latter is the $*$-algebra from Definition \ref{Def:boundary_algebra} built out of \eqref{eq:cosmological_boundary_functions}. 
We finally  consider the algebraic, quasi-free state $\omega_\Im$ on $\mathcal{A}_c(\Im^-)$ built in Definition \ref{Def:boundary_quasi_free_state} and, by applying Definition \ref{Def:induced_state} we introduce the pulled-back state $\omega_{c,M}:\mathcal{A}^{obs}(M)\to\mathbb{C}$
\begin{equation}\label{eq:bulk_state_cosmological} \omega_{c,M} \doteq\iota_c^*\omega_\Im\:.
\end{equation}
It is immediate to realize that $\omega_{c,M}$ inherits the property of being quasi-free and, starting from $\omega_{\Im}$ as in Definition \ref{Def:boundary_quasi_free_state}, the associated two-point function in the sense of \eqref{eq:correlation_functions} reads
\begin{equation}\label{eq:bulk_two_point_cosmological}
\omega_{2,c}(f,f^\prime)=\omega_{\Im}([\Gamma_c([f])][\Gamma_c([f^\prime])]),
\end{equation}
where $\Gamma_c$ is defined in \eqref{eq:cosmological_bulk_to_boundary}, while  $f,f^\prime\in C^\infty_0(M)$. The next theorem further characterizes the properties of $\omega_{c,M}$:

\begin{theorem}\label{Th:Hadamard_form_Cosmological}
	The state $\omega_{c,M}:\mathcal{A}^{obs}(M)\to\mathbb{C}$  defined in \eqref{eq:bulk_state_cosmological} is of {\bf Hadamard form} in the sense of Definition \ref{Def:local_Hadamard_form}.
\end{theorem}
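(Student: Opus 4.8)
The plan is to mirror, as closely as possible, the strategy already outlined in the proof sketch of Theorem~\ref{Th:Hadamard_form_Asymptotically_Flat}, transporting it from the asymptotically flat setting with $\Im^+$ to the cosmological setting with the past cosmological horizon $\Im^-$. By Theorem~\ref{Th:global_to_local} it suffices to check that the two-point function $\omega_{2,c}\in\mathcal{D}'(M\times M)$ arising from \eqref{eq:bulk_two_point_cosmological} has wavefront set of the form \eqref{eq:WF}. First I would verify that $\omega_{2,c}$ genuinely defines a distribution on $M\times M$ (via Remark~\ref{Rem:obvious}(3)), which reduces to sequential continuity of $f\mapsto\Gamma_c([f])$ into $\mathcal{S}_c(\Im)$ composed with the continuity of $\omega_{2,\Im}$; the explicit integral kernel \eqref{eq:notable_2pt_function} makes the latter transparent, since $\delta(s,s')(u-u'-i\epsilon)^{-2}$ is a well-defined distribution on $\Im^-\times\Im^-$ whose wavefront set is supported on the diagonal in the sphere variables and contains only positive frequencies $k\triangleright 0$ along the $u$-direction, i.e.\ $WF(\omega_{2,\Im})\subset\{(u,s,u',s;k,0,-k,0)\mid k>0\}$.

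The inclusion $\supseteq$ in \eqref{eq:WF} is obtained exactly as in the asymptotically flat case: the antisymmetric part of $\omega_{2,c}$ is $\tfrac{i}{2}G$ (Theorem~\ref{th:cosmological_bulk_to_boundary_compatibility}(b) together with Proposition~\ref{propomega2}), the wavefront set of $G$ is the full null-geodesic relation \eqref{eq:WFprove2}, and $WF(G)\subseteq WF(\omega_{2,c})\cup WF(\widetilde{\omega}_{2,c})$ with $WF(\widetilde\omega_{2,c})$ obtained from $WF(\omega_{2,c})$ by swapping arguments and reversing covectors; combining these forces $WF(\omega_{2,c})$ to contain at least the positive-frequency half of the null relation. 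For the harder inclusion $\subseteq$ I would write, following \eqref{eq:cosmological_bulk_to_boundary}, $\omega_{2,c}(f,f')=\omega_{2,\Im}\big(\widehat G_c(f),\widehat G_c(f')\big)$ where $\widehat G_c(f)\doteq -H^{-1}G(f)|_{\Im^-}$, so that $\omega_{2,c}$ is the composition of the kernel $\omega_{2,\Im}$ with $\widehat G_c\otimes\widehat G_c$, and apply Theorem~\ref{thm:kern} on composition of kernels. The key geometric input is that the causal propagator $G$ of the Klein--Gordon operator on the globally hyperbolic extension $\widetilde M$, restricted with one leg on $\Im^-$, has wavefront set controlled by $WF(G)$ via Theorem~\ref{thm_PropertiesWavefront}(5) and the propagation of singularities theorem (Theorem~\ref{tps}): singularities of $G(f)$ sit on null geodesics through $\mathrm{supp}(f)$, and these hit $\Im^-$ transversally, so the restriction is well-defined as a distribution on $\Im^-$ with wavefront set lying along the null generators. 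One then checks the transversality/support hypotheses of Theorem~\ref{thm:kern} (the relevant $WF'$ sets do not intersect the zero section in the $\Im^-$ variables, and a properness condition holds after localizing), concluding that $WF(\omega_{2,c}|_{O\times O})$ for any open $O\subset M$ is contained in the composition $WF'(\widehat G_c)\circ WF'(\omega_{2,\Im})\circ WF'(\widehat G_c)$, which by the positive-frequency content of $WF(\omega_{2,\Im})$ collapses onto precisely the right-hand side of \eqref{eq:WFsubset}.

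As in the asymptotically flat argument, a localization step is needed because only finitely many null generators of $\Im^-$ are reached from a compact region of $M$: given $O\subset M$ open with compact closure, the null geodesics emanating from $O$ meet $\Im^-$ in a compact set $C\subset\Im^-$, so choosing a cutoff $\chi\in C_0^\infty(\Im^-)$ with $\chi\equiv 1$ near $C$ we have $\omega_{2,c}|_{O\times O}=\omega_{2,\Im}\circ(\chi\widehat G_c\otimes\chi\widehat G_c)$, and now $\chi\widehat G_c$ is a genuine kernel in $\mathcal{D}'(\Im^-\times M)$ with compactly supported first leg, so Theorem~\ref{thm:kern} applies cleanly. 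Since $O$ is arbitrary, \eqref{eq:WFsubset} follows, and together with the already-established $\supseteq$ this gives \eqref{eq:WF}; Theorem~\ref{Th:global_to_local} then yields the Hadamard property. The main obstacle I anticipate is the rigorous control of $WF(G|_{\Im^-\times M})$ and the verification that the restriction to the conformal boundary is legitimate microlocally: unlike a spacelike Cauchy surface, $\Im^-$ is a null hypersurface in the completion $\widetilde M$, reached only asymptotically from $M$, so one must argue carefully that the conformal rescaling (Proposition~\ref{prop:propagators_under_conformal}) extends $G$ smoothly enough to $\widetilde M\supset\Im^-$ and that no extra singular directions are created tangent to $\Im^-$ — this is precisely where the decay estimates built into $\mathcal{S}_c(\Im)$, the convergence of the mode series under hypotheses (i)/(ii) of Theorem~\ref{th:cosmological_bulk_to_boundary_compatibility}, and the propagation of singularities theorem must be combined, and it is the analytic heart of the proof given in \cite{Dappiaggi:2008dk}.
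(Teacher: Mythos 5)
Your strategy is essentially the one the paper follows: the paper gives no proof of Theorem~\ref{Th:Hadamard_form_Cosmological} beyond deferring to \cite{Dappiaggi:2008dk}, but the sketch it provides for the asymptotically flat analogue (Theorem~\ref{Th:Hadamard_form_Asymptotically_Flat}) is precisely the argument you transport to $\Im^-$ --- reduction to the wavefront set via Theorem~\ref{Th:global_to_local}, the inclusion $\supseteq$ from the commutator together with \eqref{eq:WFprove1}--\eqref{eq:WFprove2}, and the inclusion $\subseteq$ from the kernel-composition Theorem~\ref{thm:kern} after localizing to the compact region of the horizon reached by null geodesics from a given $O\subset M$.

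One concrete inaccuracy: your claimed bound $WF(\omega_{2,\Im})\subset\{(u,s,u',s;k,0,-k,0)\mid k>0\}$ understates the singularities of \eqref{eq:notable_2pt_function}. The factor $\delta(s,s')$ contributes the full conormal bundle of the diagonal in the sphere variables, so by the tensor-product rule (item 4 of Theorem~\ref{thm_PropertiesWavefront}) the wavefront set also contains directions with nonvanishing angular covectors $(k_s,-k_s)$, including a purely angular piece with vanishing $u$-frequency (the set the paper later denotes $B$ in \eqref{eq:WFB}). Showing that these extra directions do not generate spurious singularities in the bulk after composition with $\widehat{G}_c\otimes\widehat{G}_c$ --- essentially because a covector propagating into the bulk along a null generator of $\Im^-$ must be coparallel to that generator, which excludes the purely angular contributions --- is one of the genuinely delicate steps carried out in \cite{Dappiaggi:2008dk} and \cite{Moretti2}, and your sketch silently assumes it away. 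Apart from this, your outline, including the localization step and the identification of where the hard analysis (decay of the modes, legitimacy of the restriction to the null boundary) resides, matches the intended proof.
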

This theorem has been proven in \cite{Dappiaggi:2008dk}, to which we refer for the details of the proof. 

\vskip .2cm

\noindent To conclude this section, two comments  \cite{Dappiaggi:2008dk}  on the relevance of $\omega_{c,M}$ are in due course:
\begin{enumerate}
	\item Consider any complete Killing field $\xi$ of $(M,g)$, preserving $\Im^-$ as per Proposition \ref{togroup}, whose unique extension to $\Im^-$ is indicated with $\widetilde{\xi}$. Let $\chi^\xi_t$ and $\widetilde{\chi}^{\widetilde{\xi}}_t$ be respectively the bulk and $\Im^-$, one-parameter group of isometries associated to $\xi$ and $\widetilde{\xi}$, $t\in\mathbb{R}$ and consider the corresponding 
representations in terms of $^*$-algebras introduced in 	 Proposition \ref{prop:cosmological_isometries_bulk_to_boundary_homomorphism}.
 As a direct consequence we can conclude that, for every $a\in\mathcal{A}^{obs}(M)$, 
	$$\omega_{c,M}(\widetilde{\alpha}_{\chi^\xi_t}(a))=\omega_\Im(\iota_c(\widetilde{\alpha}_{\chi^\xi_t}(a)))=\omega_\Im(\widetilde{\alpha}_{\widetilde{\chi}_t^{\widetilde{\xi}}}(\iota_c(a)),$$
	where we used \eqref{eq:bulk_state_cosmological} and, in the last equality, \eqref{eq:cosmological_interplay_algebras}. Applying Theorem \ref{Th:uniqueness_cosmological}, it descends, for all $a\in\mathcal{A}^{obs}(M)$ 
	\begin{equation}\label{eq:state_invariance_cosmological}
	\omega_{c,M}(\widetilde{\alpha}_{\chi^\xi_t}(a))=\omega_\Im(\iota_c(a))=\omega_{c,M}(a),
	\end{equation}
which entails that $\omega_{c,M}$ is invariant under the action of all bulk isometries, preserving $\Im^-$ in the sense of Proposition \ref{togroup}.
\item As a further consequence of the last remark, we can apply our construction to the cosmological de-Sitter spacetime, that is the scale factor in \eqref{metric} is $a(\tau)=-\frac{1}{H\tau}$, $H>0$ and $\tau\in(-\infty,0)$. In this case all spacetime isometries preserve the cosmological horizon $\Im^-$. Hence, in view of \eqref{eq:state_invariance_cosmological}, the state $\omega_{c,M}$ must coincide with the {\em Bunch-Davies} state \cite{BD} which is the unique quasifree ground state, invariant under all de Sitter isometries.
\end{enumerate}

\subsection{The Unruh state on Schwarzschild spacetime}\label{Sec:Unruh_State}

As next step of our investigation, we consider $\mathcal{M}$, the physical portion of Kruskal spacetime, an asymptotically flat spacetime at spatial infinity which needs to be analysed separately exactly as in Section \ref{Sec:Schwarzschild}. We recall that, on top of \eqref{Schw}, we consider a massless, real scalar field whose dynamics is thus ruled by \eqref{eq:conformally_coupled} and whose associated algebra of observables is $\mathcal{A}^{obs}_0(M)$. The latter is built as per Definition \ref{def:algebra_of_observables}.
On account of the conformal structure of the manifold under consideration, one cannot encode the information of $\mathcal{A}^{obs}_0(M)$ in a single ancillary counterpart living on a codimension $1$ submanifold. As shown in Proposition \ref{prop:Schwarzschild_bulk_to_boundary} and \eqref{eq:Schwarzschild_*_homomorphism}, there exists an injective $*$-homomorphism $\iota_{\mathcal{H},\Im}:\mathcal{A}^{obs}_0(M)\to\mathcal{A}_{\Im^-,\mathcal{H}}$ where the latter is the $^*$-algebra $\mathcal{A}(\Im^-) \otimes \mathcal{A}(\mathcal{H})$ as per  \eqref{eq:Schwarzschil_full_boundary_algebra}, \eqref{eq:boundary_space_H} and \eqref{eq:boundary_space_Im}.

In order to use the procedure of Definition \ref{Def:induced_state} defining a state for $\mathcal{A}^{obs}_0(M)$, we need to assign a suitable counterpart to both $\mathcal{A}(\mathcal{H})$ and $\mathcal{A}(\Im^-)$. We start from the latter and we follow again Definition \ref{Def:boundary_quasi_free_state} and Lemma \ref{lem:boundary_integral_kernel} replacing $\mathcal{A}(\Im)$ with $\mathcal{A}(\Im^-)$ generated by $\mathcal{S}(\Im^-)$ as in \eqref{eq:boundary_space_Im}. 

\vskip .3cm

\noindent\textbf{\em The state on $\Im^-$:} Let us endow $\Im^-$ with the coordinates $(v,\theta,\varphi)$, $v$ being defined in \eqref{two} and let us consider the quasi-free state $\omega_{\Im^-}$ whose two-point function reads as in \eqref{eq:notable_2pt_function} with $u$ replaced by $v$. Before proceeding one should observe that, in view of \eqref{eq:boundary_space_Im}, one cannot apply Proposition \ref{prop:boundary_one_particle_space} slavishly since the decay rate of the elements of $\mathcal{S}(\Im^-)$ are not sufficient to apply the Fourier-Plancherel transform. Hence it is unclear why \eqref{eq:notable_2pt_function} is reliable and, in turn, which is the associated one-particle Hilbert space obtained out of the GNS construction of $\omega_{\Im^-}$. 

To overcome this hurdle, consider an auxiliary coordinate $x$ whose domain is $\mathbb{R}$ and such that $x=\sqrt{v}$ if $v\geq 0$, while $x=-\sqrt{-v}$ if $v\leq 0$. We define the space $H^1(\Im^-)_x$ as the collection of functions $\psi\in L^2(\mathbb{R}\times\mathbb{S}^2,\,dx\,d\mu_{\mathbb{S}^2})$ together with their first distributional derivative along the $x$-direction. The following proposition has been proven in \cite{Dappiaggi:2009fx}:

\begin{proposition}\label{Prop:Aux_Schwarzschild}
	Let $\mathbb{R}^*_-\doteq (-\infty,0)$ and let  $\lambda_{\Im^-}:C^\infty_0(\Im^-;\mathbb{C})\times C^\infty_0(\Im^-;\mathbb{C})\to\mathbb{C}$ be the Hermitian inner product:
	\begin{equation}\label{eq:Hermitian_inner_product_Im}
	\lambda_{\Im^-}(f_1,f_2)=-\frac{1}{\pi}\lim_{\epsilon\to 0^+}\int\limits_{\mathbb{R}\times\mathbb{R}\times\mathbb{S}^2}\frac{f_1(v,\theta,\varphi)\overline{f_2(v^\prime,\theta,\varphi)}}{(v-v^\prime-i\epsilon)^2} dv\,dv^\prime\,d\mu_{\mathbb{S}^2}\:;
	\end{equation}
	where $f_1,f_2\in C^\infty_0(\Im^-;\mathbb{C})$. Then the following holds:
	
	\begin{enumerate}
	\item Every sequence $\{\psi_n\}_{n\in \mathbb{N}} \subset C_0^\infty(\mathbb{R}^*_- \times \mathbb{S}^2; \mathbb{R})$ such 
	$\psi_n \to \psi$ as $n\to +\infty$ in  $H^1(\Im^-)_x$
	is of Cauchy type in $\overline{\left(C_0^\infty(\Im^-; \mathbb{C}), \lambda_{\Im^-} \right)}$, the Hilbert space completion of $C^\infty_0(\Im^-;\mathbb{C})$ with respect to $\lambda_{\Im^-}$.
	
	\item There is $\{\psi_n\}_{n\in \mathbb{N}} \subset C_0^\infty(\mathbb{R}^*_- \times \mathbb{S}^2; \mathbb{R})$ such 
	$\psi_n \to \psi$ as $n\to +\infty$ in $H^1(\Im^-)_x$ and, 
	if $\{\psi'_n\}_{n\in \mathbb{N}} \subset C_0^\infty(\mathbb{R}^*_- \times \mathbb{S}^2; \mathbb{R})$ converges to the same $\psi$
	in  $H^1(\Im^-)_x$, then $\psi'_n-\psi_n\to 0$ in 
	$\overline{\left(C_0^\infty(\Im^-; \mathbb{C}), \lambda_{\Im^-} \right)}$.
	
	\item If $\widehat{\psi}_+(k,\theta,\phi) \doteq \mathcal{F}(\psi)|_{\{k\geq 0, (\theta,\phi) \in 
		\mathbb{S}^2\}}(k,\theta,\phi)$ 
	denotes the $v$-Fourier transform  of
	$\psi \in  C_0^\infty(\Im^-; \mathbb{C})$
	restricted to $k\in  \mathbb{R}_+$, the map 
	$$C_0^\infty(\Im^-; \mathbb{C}) \ni \psi \mapsto \widehat{\psi}_+(k,\theta,\phi) \in 
	L^2(\mathbb{R}_+\times \mathbb{S}^2, 2kdk d\mu_{\mathbb{S}^2})\doteq H_{\Im^-}$$ is isometric, extending uniquely, per continuity, to a Hilbert space isomorphism
	\begin{equation}\label{Fu} 
	F_{(v)}:  \overline{\left(C_0^\infty(\Im^-; \mathbb{C}), \lambda_{\Im^-} \right)} \to
	H_{\Im^-}\:.
	\end{equation}
	
	\item If one replaces $\mathbb{C}$ with $\mathbb{R}$:
	\begin{equation} \label{scriaggdens}
	\overline{F_{(v)}\left(C_0^\infty(\Im^-; \mathbb{R}) \right)} = H_{\Im^-}\:. 
	\end{equation}
	\end{enumerate}

\end{proposition}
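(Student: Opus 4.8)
\textbf{Proof strategy for Proposition \ref{Prop:Aux_Schwarzschild}.}

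The plan is to treat the four items together, since they are different facets of one single underlying fact: the Hermitian form $\lambda_{\Im^-}$ is, up to the change of variable $v\mapsto x$, equivalent to (twice the imaginary part of) a standard one-particle inner product of the type already analysed in Proposition \ref{prop:boundary_one_particle_space}. First I would rewrite $\lambda_{\Im^-}$ via Fourier transform in $v$. Using the distributional identity $\lim_{\epsilon\to 0^+}(v-v'-i\epsilon)^{-2} = \int_{\mathbb{R}} \theta(k)\, k\, e^{-ik(v-v')}\,2\pi\, \frac{dk}{\sqrt{2\pi}\sqrt{2\pi}}$ (the same computation underlying Lemma \ref{lem:boundary_integral_kernel}), one obtains, for $f_1,f_2\in C^\infty_0(\Im^-;\mathbb{C})$,
\begin{equation}\label{eq:lambda_fourier}
\lambda_{\Im^-}(f_1,f_2) = \int_{\mathbb{R}_+\times\mathbb{S}^2} 2k\, \overline{\widehat{f_1}(k,\theta,\varphi)}\, \widehat{f_2}(k,\theta,\varphi)\, dk\, d\mu_{\mathbb{S}^2}\:.
\end{equation}
This makes item 3 almost immediate: the map $\psi\mapsto\widehat\psi_+$ is by \eqref{eq:lambda_fourier} an isometry from $(C^\infty_0(\Im^-;\mathbb{C}),\lambda_{\Im^-})$ into $H_{\Im^-}=L^2(\mathbb{R}_+\times\mathbb{S}^2,2k\,dk\,d\mu_{\mathbb{S}^2})$, and it extends to the completion; surjectivity onto $H_{\Im^-}$ is obtained by a density argument, approximating an arbitrary element of $H_{\Im^-}$ (in particular its restriction to $k>0$) by Fourier transforms of test functions on $\Im^-$, exactly as in the proof of Proposition \ref{prop:boundary_one_particle_space} referenced to \cite{Dappiaggi:2005ci,Dappiaggi:2008dk}.

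Next I would handle items 1 and 2, which compare the $H^1(\Im^-)_x$ topology with the $\lambda_{\Im^-}$ topology on sequences of test functions supported in $\mathbb{R}^*_-\times\mathbb{S}^2$. The key estimate is that there is a constant $C$ with $\lambda_{\Im^-}(\psi,\psi)\le C\,\|\psi\|^2_{H^1(\Im^-)_x}$ for all $\psi\in C^\infty_0(\mathbb{R}^*_-\times\mathbb{S}^2;\mathbb{R})$. This is where the change of variable $x=-\sqrt{-v}$ on $v<0$ is essential: one computes $\partial_v = -\frac{1}{2x}\partial_x$ and $dv = 2|x|\,dx$, and rewrites $\lambda_{\Im^-}(\psi,\psi) = \int_{\mathbb{R}_+\times\mathbb{S}^2} 2k|\widehat\psi_+(k)|^2\,dk\,d\mu_{\mathbb{S}^2}$ as $\int |\partial_v\psi|\cdot|\psi|$-type boundary/Sobolev expression, then bounds the weight $k$ by the $x$-derivative after transforming the Fourier variable. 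Concretely one shows $\int 2k|\widehat\psi_+|^2\,dk \le \|\partial_x\psi\|^2_{L^2(dx)}$ plus lower-order terms, using Plancherel for the $x$-Fourier transform and the fact that the Jacobian $2|x|$ exactly compensates the extra factor $k$ coming from the weight (this is the same mechanism by which $\sqrt{v}$ appears in treatments of null infinity). Once the continuous embedding inequality is established, item 1 (Cauchy sequences map to Cauchy sequences) is automatic, and item 2 (well-definedness of the induced map on $H^1(\Im^-)_x$) follows by applying the inequality to the difference $\psi'_n-\psi_n$, which tends to $0$ in $H^1(\Im^-)_x$ by hypothesis.

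Finally, item 4 — density of $F_{(v)}(C^\infty_0(\Im^-;\mathbb{R}))$ in $H_{\Im^-}$ — is the real-coefficient refinement of the surjectivity in item 3. I would argue it as in the analogous boundary statements: given $h\in H_{\Im^-}$, first approximate it by the Fourier transform (restricted to $k>0$) of a complex test function $\psi$ on $\Im^-$; then exploit that, for real $\psi$, $\widehat\psi(-k)=\overline{\widehat\psi(k)}$, so the map $\psi\mapsto\widehat\psi_+$ on real test functions already has dense range because the positive-frequency part is unconstrained — splitting $\psi$ into real and imaginary parts and noting each contributes an arbitrary positive-frequency profile in the limit. The main obstacle I anticipate is precisely the sharp Sobolev/Hardy-type inequality $\int_{\mathbb{R}_+} 2k|\widehat\psi_+(k)|^2\,dk \le C\,\|\psi\|^2_{H^1(\Im^-)_x}$ in the $x$-variable: getting the weight $k\,dk$ to be controlled requires carefully tracking how the singular change of variable at $v=0$ interacts with the Fourier transform, and ensuring that test functions supported away from $v=0$ (in $\mathbb{R}^*_-$) have enough decay as $v\to 0^-$ for all the integrations by parts to be legitimate — which is exactly why the hypothesis restricts to supports in $\mathbb{R}^*_-\times\mathbb{S}^2$. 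Everything else is bookkeeping with Plancherel's theorem and completions of pre-Hilbert spaces.
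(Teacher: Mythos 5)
The paper does not actually prove this proposition --- it is imported verbatim from \cite{Dappiaggi:2009fx} --- so your argument has to stand on its own. Items 3 and 4 of your outline are essentially correct: the Fourier representation
$\lambda_{\Im^-}(f_1,f_2)=\int_{\mathbb{R}_+\times\mathbb{S}^2}2k\,\overline{\widehat f_1}\,\widehat f_2\,dk\,d\mu_{\mathbb{S}^2}$
does follow from the standard mode expansion of $(v-v'-i\epsilon)^{-2}$ (with positive definiteness on test functions guaranteed by Paley--Wiener), and the density arguments for surjectivity and for the real subspace are the usual ones. The genuine gap is exactly where you locate it, namely the estimate $\lambda_{\Im^-}(\psi,\psi)\leq C\|\psi\|^2_{H^1(\Im^-)_x}$ for real $\psi$ supported in $\mathbb{R}^*_-\times\mathbb{S}^2$, which drives items 1 and 2 --- but the mechanism you sketch for it would fail. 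There is no useful relation between the $v$-Fourier transform and the $x$-Fourier transform, because $v=-x^2$ is nonlinear, so ``Plancherel in $x$'' cannot be brought to bear on the weight $2k\,dk$ directly. The naive route through the $v$-Sobolev norm, $\int_{\mathbb{R}_+}2k|\widehat\psi|^2dk\leq\frac12\|\psi\|^2_{H^1(\Im^-)_v}$, is also a dead end: under the substitution one has $dv=2|x|\,dx$ and $\partial_v=(2x)^{-1}\partial_x$, so the $L^2(dv)$ term is not controlled by $L^2(dx)$ as $|x|\to\infty$ and the $\|\partial_v\psi\|_{L^2(dv)}$ term degenerates as $x\to 0$; since the members of the approximating sequence have varying supports, no support-dependent constant is admissible. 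The Jacobian does \emph{not} ``exactly compensate'' the factor $k$.

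The estimate that actually works is obtained in position space. For real $\psi$ one has $\widehat\psi(-k)=\overline{\widehat\psi(k)}$, hence
\begin{equation*}
\lambda_{\Im^-}(\psi,\psi)=\int_{\mathbb{R}\times\mathbb{S}^2}|k|\,|\widehat\psi(k,s)|^2\,dk\,d\mu_{\mathbb{S}^2}(s)
=\frac{1}{2\pi}\int_{\mathbb{S}^2}\int_{\mathbb{R}^2}\frac{|\psi(v,s)-\psi(v',s)|^2}{(v-v')^2}\,dv\,dv'\,d\mu_{\mathbb{S}^2}(s)\:,
\end{equation*}
the Gagliardo representation of the homogeneous $H^{1/2}$ seminorm. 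Substituting $v=-x^2$, $v'=-x'^2$ with $x,x'<0$ gives $(v-v')^2=(x-x')^2(x+x')^2$ and $dv\,dv'=4xx'\,dx\,dx'$, and since $x,x'$ have the \emph{same sign} one has $4xx'\leq(x+x')^2$, so the integrand is dominated by $|\psi(x,s)-\psi(x',s)|^2(x-x')^{-2}$. Undoing the Gagliardo identity in the $x$ variable and using $|h|\leq\frac12(1+h^2)$ yields $\lambda_{\Im^-}(\psi,\psi)\leq\frac12\|\psi\|^2_{H^1(\Im^-)_x}$. This is precisely why the hypothesis confines the supports to $\mathbb{R}^*_-\times\mathbb{S}^2$: for $x,x'$ of opposite signs the factor $4xx'/(x+x')^2$ is unbounded below by nothing useful and the inequality breaks. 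With this estimate in hand, items 1 and 2 follow as you say by applying it to differences of sequence elements. One further small omission: the existence half of item 2 (that an element of $\mathcal{S}(\Im^-)$ supported in $v<0$ \emph{admits} an approximating sequence in $H^1(\Im^-)_x$) requires cutting off near $x=0$ and at $x\to-\infty$ and invoking the decay rates built into \eqref{eq:boundary_space_Im}; you do not address this, though it is routine once noticed.
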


\noindent The next lemma further elaborates on the consequences of this last proposition:

\begin{lemma}\label{Lem:further}
 Let $\psi \in\mathcal{S}(\Im^-)$ be such that $\textrm{supp}(\psi) \subset \mathbb{R}^*_- \times \mathbb{S}^2$. Then $\psi$ can be naturally identified with a corresponding element of $\overline{\left(C_0^\infty(\Im^-; \mathbb{C}), \lambda_{\Im^-} \right)}$,
	indicated with the same symbol. In addition, 
	\begin{equation}\label{FumF}
	F_{(v)}|_{\mathcal{S}(\Im^-)} = \Theta(h)\cdot \mathcal{F}|_{\mathcal{S}(\Im^-)}\:, 
	\end{equation}
	and, for $\psi,\psi'\in \mathcal{S}(\Im^-)$,
	\begin{equation}
	\lambda_{\Im^-}(\psi,\psi') = \int_{\mathbb{R}_+\times \mathbb{S}^2} \overline{\mathcal{F}(\psi')}\left( \mathcal{F}(\psi)(h,s)+\overline{\mathcal{F}(\psi)(h,s)}\right)\: 2hdh d\mu_{\mathbb{S}^2}(s)\:,\label{antilin}
	\end{equation}
	where, $\Theta(h)=0$ if $h\leq 0$ and $\Theta(h)=1$ otherwise. Here
	$\mathcal{F} : L^2(\mathbb{R} \times \mathbb{S}^2, dx d\mu_{\mathbb{S}^2}) \to L^2(\mathbb{R} \times \mathbb{S}^2, dk d\mu_{\mathbb{S}^2})$
	is the $x$-Fourier-Plancherel transform\footnote{We employ the symbol $h$ for the coordinate conjugate to $x$, to disambiguate it from the one conjugate to $v$, which is indicated with $k$.}.
\end{lemma}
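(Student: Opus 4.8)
The statement connects two a priori different Hilbert-space completions of the boundary function space: the ``intrinsic'' completion $\overline{\left(C_0^\infty(\Im^-;\mathbb{C}),\lambda_{\Im^-}\right)}$ appearing in Proposition \ref{Prop:Aux_Schwarzschild}, and the naive Fourier-transform description one would like to use. The first task is to justify the identification: given $\psi\in\mathcal{S}(\Im^-)$ with support in $\mathbb{R}^*_-\times\mathbb{S}^2$, I would exhibit a sequence $\{\psi_n\}\subset C_0^\infty(\mathbb{R}^*_-\times\mathbb{S}^2;\mathbb{R})$ converging to $\psi$ in $H^1(\Im^-)_x$ --- this is possible because the decay and smoothness conditions defining $\mathcal{S}(\Im^-)$ in \eqref{eq:boundary_space_Im}, together with the change of variable $x=\mathrm{sgn}(v)\sqrt{|v|}$, force $\psi$ and $\partial_x\psi$ to lie in $L^2(\mathbb{R}\times\mathbb{S}^2,dx\,d\mu_{\mathbb{S}^2})$ (one checks $|\partial_x\psi|\sim 2|v|^{1/2}|\partial_v\psi|$ which is $L^2$ in $x$ near $v\to-\infty$, and vanishing near $v=0^-$ by the support assumption). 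By item (1) of Proposition \ref{Prop:Aux_Schwarzschild} such a sequence is $\lambda_{\Im^-}$-Cauchy, and by item (2) the limit in $\overline{\left(C_0^\infty(\Im^-;\mathbb{C}),\lambda_{\Im^-}\right)}$ is independent of the chosen approximating sequence; this defines the claimed embedding $\mathcal{S}(\Im^-)\hookrightarrow \overline{\left(C_0^\infty(\Im^-;\mathbb{C}),\lambda_{\Im^-}\right)}$, and I would verify it is the natural one (i.e. agrees with the inclusion on $C_0^\infty$).

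\textbf{Identifying $F_{(v)}$ on $\mathcal{S}(\Im^-)$.} Next I would compute $F_{(v)}$ on this image. By definition \eqref{Fu}, $F_{(v)}$ is the continuous extension of $\psi\mapsto\widehat{\psi}_+=\mathcal{F}_v(\psi)|_{k\geq 0}$ from $C_0^\infty(\Im^-;\mathbb{C})$. For $\psi\in\mathcal{S}(\Im^-)$ approximated by $\psi_n\in C_0^\infty(\mathbb{R}^*_-\times\mathbb{S}^2;\mathbb{R})$ in $H^1(\Im^-)_x$, one has $F_{(v)}\psi=\lim_n\widehat{(\psi_n)}_+$ in $H_{\Im^-}=L^2(\mathbb{R}_+\times\mathbb{S}^2,2k\,dk\,d\mu_{\mathbb{S}^2})$. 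On the other hand, since $\psi\in\mathcal{S}(\Im^-)$ means $\psi,\partial_u\psi\in L^2$ along $v$, the ordinary $v$-Fourier--Plancherel transform $\mathcal{F}$ of $\psi$ makes sense and $k\widehat\psi(k)\in L^2(dk)$, so $\widehat\psi\in L^2(\mathbb{R}\times\mathbb{S}^2,2|k|\,dk\,d\mu_{\mathbb{S}^2})$; the point is to show $\widehat{(\psi_n)}_+\to\widehat\psi|_{k\geq 0}$ in $H_{\Im^-}$, which follows by a density/continuity argument comparing the $H^1(\Im^-)_x$-norm with the $2|k|\,dk$-weighted norm (the Fourier transform intertwines $x$-derivative on one side and $k$-multiplication on the other, up to the Jacobian of $x\leftrightarrow v$). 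Multiplying by $\Theta(h)$ (here $h$ is the variable conjugate to $x$ --- note the footnote's disambiguation; I would be careful that the statement \eqref{FumF} uses $h$ as the Fourier variable while the weighted measure in $H_{\Im^-}$ also carries $h$, and that the claim equates $F_{(v)}$ restricted to $\mathcal{S}(\Im^-)$ with $\Theta(h)\cdot\mathcal{F}$ evaluated on $\mathcal{S}(\Im^-)$), this gives \eqref{FumF}.

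\textbf{The inner-product formula.} For \eqref{antilin}, I would start from the defining double-integral expression \eqref{eq:Hermitian_inner_product_Im} for $\lambda_{\Im^-}$, valid on $C_0^\infty$, and pass to the limit along approximating sequences. Using the distributional identity $\lim_{\epsilon\to 0^+}(v-v'-i\epsilon)^{-2}$ whose Fourier transform in $(v-v')$ is (a constant times) $k\,\Theta(k)$, the double integral collapses to $\int_{\mathbb{R}_+\times\mathbb{S}^2}\overline{\mathcal{F}(\psi')(k,s)}\,\mathcal{F}(\psi)(k,s)\,2k\,dk\,d\mu_{\mathbb{S}^2}(s)$ for the ``holomorphic'' piece; then, because $\psi,\psi'$ are real, $\widehat\psi(-k)=\overline{\widehat\psi(k)}$, and combining the $k>0$ and $k<0$ contributions one symmetrizes the integrand to produce the factor $\mathcal{F}(\psi)(h,s)+\overline{\mathcal{F}(\psi)(h,s)}$ as written, with the overall integral restricted to $h\in\mathbb{R}_+$. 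The main obstacle I anticipate is the delicate interchange of the $\epsilon\to 0^+$ limit, the $v,v'$ integrations, and the passage from the $C_0^\infty$ sequence to its $\mathcal{S}(\Im^-)$ limit: the elements of $\mathcal{S}(\Im^-)$ decay only like $|v|^{-1/2}$, so the double integral \eqref{eq:Hermitian_inner_product_Im} is only conditionally convergent and one must control it via the $H^1(\Im^-)_x$ topology (equivalently via Plancherel on the $2|k|\,dk$ side) rather than by absolute convergence. Once that functional-analytic bookkeeping is set up carefully --- leaning on items (1)--(3) of Proposition \ref{Prop:Aux_Schwarzschild} and standard properties of the Fourier--Plancherel transform --- the algebraic manipulations are routine.
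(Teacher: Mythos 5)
The first step of your plan is sound and is the intended one: for $\psi\in\mathcal{S}(\Im^-)$ supported in $\mathbb{R}^*_-\times\mathbb{S}^2$ the estimates $|\partial_x\psi|=2\sqrt{|v|}\,|\partial_v\psi|\lesssim |x|^{-1}$ and $|\psi|\lesssim|x|^{-1}$ do place $\psi$ in $H^1(\Im^-)_x$, and items (1)--(2) of Proposition \ref{Prop:Aux_Schwarzschild} then give a well-defined, sequence-independent image in $\overline{\left(C_0^\infty(\Im^-;\mathbb{C}),\lambda_{\Im^-}\right)}$. (The book itself defers the proof to the reference where this lemma originates, so the comparison below is with the argument there.)

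The rest of the proposal has a genuine gap. First, you assert that ``$\psi\in\mathcal{S}(\Im^-)$ means $\psi,\partial_u\psi\in L^2$ along $v$'', so that the $v$-Fourier--Plancherel transform of $\psi$ exists. This is false for the space \eqref{eq:boundary_space_Im}: the defining bound is only $|\psi|\leq C_\psi(1+|v|)^{-1/2}$, whose square is not integrable in $dv$, so $\psi\notin L^2(dv)$ in general. You have imported the hypotheses of $\mathcal{S}(\Im)$ from \eqref{eq:boundary_functions}; the failure of $v$-square-integrability is precisely the reason the auxiliary coordinate $x$ and the space $H^1(\Im^-)_x$ are introduced. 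Second, and more importantly, your derivation of \eqref{antilin} never leaves the $v$-variable: Fourier-transforming the kernel of \eqref{eq:Hermitian_inner_product_Im} in $v-v'$ can only produce $v$-Fourier transforms, whereas \eqref{FumF} and \eqref{antilin} are statements about the $x$-Fourier transform $\mathcal{F}$ in the conjugate variable $h$. The reality identity $\widehat{\psi}(-k)=\overline{\widehat{\psi}(k)}$ does not ``symmetrize'' anything into the extra $\overline{\mathcal{F}(\psi)(h,s)}$ term; in the $v$-picture it merely reproduces the $C_0^\infty$ isometry of item (3) of Proposition \ref{Prop:Aux_Schwarzschild}. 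The actual mechanism is the change of variables $v=-x^2$, $v'=-x'^2$ in the double integral: the Jacobian $4xx'$ combines with $(v-v')^2=(x-x')^2(x+x')^2$ to give the kernel identity
\begin{equation*}
\frac{4xx'}{\left(x^2-x'^2\right)^2}=\frac{1}{(x-x')^2}-\frac{1}{(x+x')^2}\:,
\end{equation*}
whose translation-invariant piece yields the $\overline{\mathcal{F}(\psi')}\,\mathcal{F}(\psi)$ contribution while the reflection piece $x'\mapsto -x'$, after using reality of $\psi$ (i.e.\ $\mathcal{F}(\psi)(-h)=\overline{\mathcal{F}(\psi)(h)}$), yields the second contribution $\overline{\mathcal{F}(\psi')}\,\overline{\mathcal{F}(\psi)}$; the inherited $i\epsilon$ prescription is what restricts the integration to $h>0$ and gives the factor $\Theta(h)$ in \eqref{FumF}. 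Without this step your argument cannot produce either the identification of $F_{(v)}|_{\mathcal{S}(\Im^-)}$ with $\Theta(h)\cdot\mathcal{F}$ or the asymmetric integrand of \eqref{antilin}, so the core of the lemma is not established.
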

\noindent Following closely Proposition \ref{Prop:One_particle_Hilbert}, the next step is the construction of a suitable $\mathbb{R}$-linear  map $K_{\Im^-}:\mathcal{S}(\Im^-)\to H_{\Im^-}$.
Let $\eta\equiv\eta(v)\in C^\infty(\mathbb{R}^*_- \times \mathbb{S}^2)$ be a smooth, non negative function such that 
$\eta=1$ for all $v$ smaller than an arbitrary, but fixed $v_0$. 
Decompose any $\psi\in\mathcal{S}(\Im^-)$ as 
\begin{equation}\label{dec0scri}
\psi = \psi_0 + \psi_-\:, \mbox{where $\psi_0 = (1-\eta)\psi$} 
\end{equation}
It holds $\psi_0 \in C_0^\infty(\Im^-)$ and
$\textrm{supp}(\psi_-) \subset \mathbb{R}^*_-\times \mathbb{S}^2$, where $\mathbb{R}^*_-$ is referred to the coordinate $v$ on $\mathbb{R}$. 
In addition, following \eqref{FumF}, let
\begin{equation}
K_{\Im^-}(\psi) \doteq F_{(v)}(\psi_0) +  F_{(v)}(\psi_-)\:, \quad \forall \psi \in \mathcal{S}(\Im^-)\:,
\label{mapidscri}
\end{equation}
where, $\psi_-\in\overline{\left(C_0^\infty(\Im^-; \mathbb{C}), 
	\lambda_{\Im^-} \right)}$ on account of Proposition \ref{Prop:Aux_Schwarzschild}. 
The $\mathbb{R}$-linear map $K_\Im : \mathcal{S}(\Im^-) \to H_\Im^-$ is continuous when the domain is equipped with the norm 
\begin{equation}\label{normeScri}
\| \psi \|_{\Im^-}^\eta = \|  \psi_- \|_{H^1(\Im^-)_x} + \| \psi_0  \|_{H^1(\Im^-)_v}
\end{equation}
where
$\|  \cdot  \|_{H^1(\Im^-)_x}$ and $\|  \cdot  \|_{H^1(\Im^-)_v}$ are the norms of the Sobolev spaces 
$H^1(\Im^-)_x$ and $H^1(\Im^-)_v$ respectively, the latter hence with respect to the $v$-coordinate. 
Let us remark that different $\eta$ and $\eta'$ produce equivalent norms $\|\cdot\|^\eta_{\Im^-}$ and $\|\cdot\|^{\eta'}_{\Im^-}$. We shall drop, therefore the index $\eta$. 

\begin{proposition}\label{Prop:well_posed_K_Im}
The linear map $K_{\Im^-} : \mathcal{S}(\Im^-) \to H_{\Im^-}$ in (\ref{mapidscri}) enjoys the following properties:
\begin{enumerate}
\item It is independent from the chosen decomposition 
	(\ref{dec0scri}) for a fixed $\psi \in \mathcal{S}(\Im^-)$,
	
\item It reduces to $F_{(v)}$ when restricting to $C_0^\infty(\Im^-; \mathbb{R})$,
	
\item It satisfies: 
	$$\sigma_{\Im}(\psi,\psi') = 2 Im \langle K_{\Im^-}(\psi), K_{\Im^-}(\psi') \rangle_{H_{\Im^-}}\:, 
	\quad \mbox{if $\psi,\psi' \in \mathcal{S}(\Im^-)$,}
	$$
where $\sigma_{\Im}$ is the symplectic form \eqref{eq:boundary_symplectic_form} for $\Im^-$,
	
\item It is injective and it holds $\overline{K_{\Im^-}(\mathcal{S}(\Im^-))} = H_{\Im^-}$
	
\item It is continuous with respect to the norm $\|\cdot\|_{\Im^-}$ defined in \eqref{normeScri} for every choice of the function $\eta$. Consequently, there exists a constant $C>0$ such that:
	$$
	|\langle K_{\Im^-}(\psi), K_{\Im^-}(\psi') \rangle_{H_{\Im^-}}|\leq  C^2 \| \psi \|_{\Im^-} \cdot \| \psi' \|_{\Im^-} \;
	\quad \mbox{if $\psi,\psi' \in \mathcal{S}(\Im^-)$.}
	$$
\end{enumerate} 	
\end{proposition}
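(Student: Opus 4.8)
\textbf{Proof proposal for Proposition \ref{Prop:well_posed_K_Im}.}

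The plan is to prove the five items essentially in the order they are stated, since later items rest on earlier ones. First I would establish item (1), independence of the decomposition. Given two decompositions $\psi = \psi_0 + \psi_- = \psi_0' + \psi_-'$ associated respectively to cut-off functions $\eta$ and $\eta'$, the difference $\psi_0 - \psi_0' = \psi_-' - \psi_-$ is a smooth function which is compactly supported (because both $\psi_0,\psi_0' \in C_0^\infty(\Im^-)$) and whose support lies in $\mathbb{R}^*_-\times\mathbb{S}^2$. For such a function the value of $F_{(v)}$ computed directly on $C_0^\infty(\Im^-;\mathbb{C})$ agrees with the value obtained by the completion procedure of Proposition \ref{Prop:Aux_Schwarzschild}: this is precisely the content of items (2) and (3) of that proposition, which tell us that the restriction of the inner product $\lambda_{\Im^-}$ and of $F_{(v)}$ to functions supported away from $v=0$ is consistent with the Fourier-Plancherel picture on $\mathbb{R}_+\times\mathbb{S}^2$. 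Hence $F_{(v)}(\psi_0) - F_{(v)}(\psi_0') = F_{(v)}(\psi_-') - F_{(v)}(\psi_-)$ as elements of $H_{\Im^-}$, which rearranges to $K_{\Im^-}(\psi)$ being the same for both decompositions. Item (2) is then immediate: if $\psi \in C_0^\infty(\Im^-;\mathbb{R})$ we may take $\eta$ with support so far in the past that $\psi_- = 0$, so $K_{\Im^-}(\psi) = F_{(v)}(\psi_0) = F_{(v)}(\psi)$.

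Next I would turn to item (3), the symplectic identity. I would split $\sigma_\Im(\psi,\psi')$ bilinearly according to $\psi = \psi_0+\psi_-$, $\psi'=\psi_0'+\psi_-'$ into four terms. The $\psi_0$–$\psi_0'$ term is handled by Proposition \ref{prop:boundary_one_particle_space} applied to compactly supported functions (for which the Fourier transform is unproblematic), giving $\sigma_\Im(\psi_0,\psi_0') = 2\,\mathrm{Im}\langle F_{(v)}(\psi_0), F_{(v)}(\psi_0')\rangle_{H_{\Im^-}}$. The $\psi_-$–$\psi_-'$ term is treated using the identity \eqref{antilin} from Lemma \ref{Lem:further} together with the relation \eqref{FumF}, which expresses $F_{(v)}$ on such functions as the positive-frequency part of the ordinary Fourier transform; taking the imaginary part of the resulting expression reproduces $\sigma_\Im(\psi_-,\psi_-')$. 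The mixed terms, involving one compactly supported factor and one factor supported away from $v=0$, are dealt with by approximating $\psi_-$ (resp. $\psi_-'$) by a sequence in $C_0^\infty(\mathbb{R}^*_-\times\mathbb{S}^2;\mathbb{R})$ converging in $H^1(\Im^-)_x$, as guaranteed by Proposition \ref{Prop:Aux_Schwarzschild}(1)–(2), and then passing to the limit using the continuity of $F_{(v)}$ and of $\sigma_\Im$ in the relevant topologies (this is where the boundedness estimate of item (5) is morally needed, so I would actually prove (5) before completing (3), or at least have the pieces of it available).

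For item (5), continuity with respect to the norm $\|\cdot\|_{\Im^-}^\eta$, I would estimate $\|F_{(v)}(\psi_0)\|_{H_{\Im^-}}$ by $\|\psi_0\|_{H^1(\Im^-)_v}$ — this is the standard bound $\int_{\mathbb{R}_+}2k|\widehat{\psi_0}(k)|^2\,dk \le \int_{\mathbb{R}}(1+k^2)|\widehat{\psi_0}(k)|^2\,dk \lesssim \|\psi_0\|_{H^1_v}^2$ — and $\|F_{(v)}(\psi_-)\|_{H_{\Im^-}}$ by $\|\psi_-\|_{H^1(\Im^-)_x}$ using the isometric property in Proposition \ref{Prop:Aux_Schwarzschild}(3) combined with the change of variables $x = \pm\sqrt{|v|}$: the factor $2h\,dh$ in $H_{\Im^-}$ together with the Jacobian of $x\mapsto v$ reproduces (up to constants) the $H^1_x$ norm. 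The Cauchy–Schwarz inequality in $H_{\Im^-}$ then gives the quoted bound with $C$ independent of $\psi,\psi'$. The equivalence of the norms for different $\eta$ follows because $\eta - \eta' \in C_0^\infty$, so $\eta\psi - \eta'\psi$ is a compactly supported smooth perturbation. Finally item (4): injectivity follows from (3), since $K_{\Im^-}(\psi)=0$ forces $\sigma_\Im(\psi,\psi')=0$ for all $\psi'\in\mathcal{S}(\Im^-)$, and $\sigma_\Im$ is weakly non-degenerate by Proposition \ref{propSSc}; density of the range follows from Proposition \ref{Prop:Aux_Schwarzschild}(4), i.e. \eqref{scriaggdens}, which already asserts $\overline{F_{(v)}(C_0^\infty(\Im^-;\mathbb{R}))} = H_{\Im^-}$, and $C_0^\infty(\Im^-;\mathbb{R}) \subset \mathcal{S}(\Im^-)$ with $K_{\Im^-}$ extending $F_{(v)}$ by item (2).

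I expect the main obstacle to be the careful treatment of the mixed terms in item (3) and, more generally, the bookkeeping needed to show that the completion $\overline{(C_0^\infty(\Im^-;\mathbb{C}),\lambda_{\Im^-})}$ and the Sobolev space $H^1(\Im^-)_x$ interact consistently near $v=0$, where the change of variable $x=\pm\sqrt{|v|}$ is singular. All the genuinely hard analytic work — that the positive-frequency Fourier transform is bounded on functions supported away from $v=0$ when measured in the $x$-Sobolev norm, and the Cauchy-sequence statements — has already been isolated in Proposition \ref{Prop:Aux_Schwarzschild} and Lemma \ref{Lem:further}, so the proof here is essentially an assembly of those facts, with the approximation arguments being the only delicate point.
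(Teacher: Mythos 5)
Your assembly is correct and is essentially the intended argument (the one the paper defers to \cite{Dappiaggi:2009fx}, for which Propositions \ref{Prop:Aux_Schwarzschild} and \ref{prop:boundary_one_particle_space} and Lemma \ref{Lem:further} are tailor-made): item (1) indeed reduces to the consistency of the two realizations of $F_{(v)}$ on $C_0^\infty(\mathbb{R}^*_-\times\mathbb{S}^2;\mathbb{R})$, item (3) to the compactly supported case plus density in $H^1(\Im^-)_x$, and items (2), (4), (5) follow as you say. Two cosmetic slips only: in item (5) no Jacobian enters --- the bound $\|F_{(v)}(\psi_-)\|_{H_{\Im^-}}\leq\|\psi_-\|_{H^1(\Im^-)_x}$ comes from \eqref{FumF} together with $2h\leq 1+h^2$ applied to the $x$-Fourier transform --- and the weak non-degeneracy of $\sigma_\Im$ on $\mathcal{S}(\Im^-)$ is the assertion made for \eqref{eq:boundary_space_Im} in Section \ref{Sec:Schwarzschild}, not literally Proposition \ref{propSSc}.
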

\noindent We can revert now our attention to the quasi-free state on the algebra constructed in Definition \ref{Def:boundary_algebra}, starting from \eqref{eq:boundary_space_Im} and showing that it enjoys several notable properties. In the following we report the main ones. The proof of this proposition can be found in \cite[Th. 3.2]{Dappiaggi:2009fx}:

\begin{proposition}\label{Prop:State_on_past_null_infinity}
	Let $\omega_{\Im^-}:\mathcal{A}(\Im^-)\to\mathbb{C}$ be the quasi-free state constructed in Definition \ref{Def:boundary_algebra} starting from \eqref{eq:boundary_space_Im}. It holds that
	\begin{enumerate}
		\item  The pair $(H_{\Im^-},K_{\Im^-})$ is the one-particle structure for $\omega_{\Im^-}$, which is uniquely determined by the requirement that its two-point function coincides with the right-hand side of 
		(\ref{eq:Hermitian_inner_product_Im}) under the restriction to $C_0^\infty(\Im^-; \mathbb{R})$.
		\item $\omega_{\Im^-}$ is invariant under the one-parameter group of $^*$-automorphisms generated by the Killing vectors of $\mathbb{S}^2$ and of $X|_{\Im^-}$, $X$ coinciding with $\partial_t$ in both $\mathcal{W}$ and $\mathcal{B}$.
	\end{enumerate}
\end{proposition}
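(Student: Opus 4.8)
The plan is to establish the two claims of Proposition~\ref{Prop:State_on_past_null_infinity} separately, relying heavily on the structural results already developed for the boundary algebra. For claim (1), the idea is to show that the pair $(H_{\Im^-}, K_{\Im^-})$ built in \eqref{mapidscri} satisfies exactly the axioms of Proposition~\ref{Prop:One_particle_Hilbert}, so that by the uniqueness statement therein it is \emph{the} one-particle structure of a quasi-free state whose two-point function extends $\lambda_{\Im^-}$. Concretely, Proposition~\ref{Prop:well_posed_K_Im} already gives us that $K_{\Im^-}$ is $\mathbb{R}$-linear, that $\overline{K_{\Im^-}(\mathcal{S}(\Im^-))} = H_{\Im^-}$ (hence density of $K_{\Im^-}(\mathcal{S}(\Im^-)) + iK_{\Im^-}(\mathcal{S}(\Im^-))$), and that $\sigma_{\Im}(\psi,\psi') = 2\,\mathrm{Im}\langle K_{\Im^-}(\psi), K_{\Im^-}(\psi')\rangle_{H_{\Im^-}}$. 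So the remaining point is to identify $\mathrm{Re}\langle K_{\Im^-}(\psi), K_{\Im^-}(\psi')\rangle_{H_{\Im^-}}$ with the symmetric scalar product $\mu_{\Im^-}$ and, on the dense subspace $C_0^\infty(\Im^-;\mathbb{R})$, to match the full two-point function $\mu_{\Im^-} + \tfrac{i}{2}\sigma_{\Im}$ with the right-hand side of \eqref{eq:Hermitian_inner_product_Im}. This last identification is precisely the content of formula \eqref{antilin} in Lemma~\ref{Lem:further}: expanding $\langle K_{\Im^-}(\psi), K_{\Im^-}(\psi')\rangle_{H_{\Im^-}} = \int_{\mathbb{R}_+\times\mathbb{S}^2} \overline{\mathcal{F}(\psi)}\,\mathcal{F}(\psi')\,2h\,dh\,d\mu_{\mathbb{S}^2}$ and using that for real $\psi$ the negative-frequency part is the complex conjugate of the positive-frequency part, one recovers exactly the distributional kernel $-\tfrac{1}{\pi}(v-v'-i\epsilon)^{-2}\delta(s,s')$ after performing the $dh$ integral, which is the standard computation behind Lemma~\ref{lem:boundary_integral_kernel}. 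Uniqueness of the associated quasi-free state then follows from Proposition~\ref{Prop:characterization_quasifree_states} (the one-particle structure, hence the Fock representation, hence $\omega_{\Im^-}$, is fixed up to unitary equivalence by $\mu_{\Im^-}$ and $\sigma_{\Im}$).

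For claim (2), the strategy is to show that each relevant Killing symmetry of $\mathcal{M}$ extends to $\Im^-$ as a transformation preserving both $\mu_{\Im^-}$ and $\sigma_{\Im}$, whence by functoriality of the GNS/quasi-free construction the induced $^*$-automorphism leaves $\omega_{\Im^-}$ invariant. The isometries in question are the $SO(3)$ rotations of $\mathbb{S}^2$ and the one-parameter group generated by $X = \partial_t$ restricted to $\Im^-$. On $\Im^-$ with coordinates $(v,\theta,\varphi)$, $X|_{\Im^-}$ coincides (up to the constant normalization fixed by the Bondi form) with $\partial_v$, generating the translations $v \mapsto v + t$; the rotations act trivially on the $v$-coordinate and by $R \in SO(3)$ on the sphere. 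Both families manifestly preserve the measure $dv\,d\mu_{\mathbb{S}^2}$, hence preserve $\sigma_{\Im}$ as given by \eqref{eq:boundary_symplectic_form}; and in the Fourier picture translation acts by the phase $e^{ikt}$ while rotation commutes with the $v$-Fourier transform, so in both cases $\langle K_{\Im^-}(\psi), K_{\Im^-}(\psi')\rangle_{H_{\Im^-}}$ is unchanged, giving invariance of $\mu_{\Im^-}$ as well. One then invokes Proposition~\ref{prop:isomorphism_boundary_isometries} (these maps lift to $^*$-automorphisms of $\mathcal{A}(\Im^-)$) together with Proposition~\ref{Prop:automorphisms_and_GNS} to conclude $\omega_{\Im^-}\circ\alpha = \omega_{\Im^-}$ and that the symmetries are unitarily implemented.

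The main obstacle I expect is not in either invariance computation nor in the formal kernel matching, but in handling the functional-analytic subtleties of $\mathcal{S}(\Im^-)$ flagged in the text: the decay rates $|\psi(v,\omega)| \le C_\psi/\sqrt{1+|v|}$ and $|\partial_v\psi| \le C'_\psi/(1+|v|)$ are \emph{not} strong enough for the naive Fourier--Plancherel transform on $\mathbb{R}$, which is exactly why the auxiliary $x = \pm\sqrt{\pm v}$ coordinate and the completion $\overline{(C_0^\infty(\Im^-;\mathbb{C}),\lambda_{\Im^-})}$ were introduced. So the delicate step is to justify, via Proposition~\ref{Prop:Aux_Schwarzschild} and Lemma~\ref{Lem:further}, that the decomposition $\psi = \psi_0 + \psi_-$ of \eqref{dec0scri} is well-defined independently of the cutoff $\eta$ (item (1) of Proposition~\ref{Prop:well_posed_K_Im}), that $\psi_-$ genuinely defines an element of the $\lambda_{\Im^-}$-completion, and that the bilinear form $\lambda_{\Im^-}$ agrees with $\mu_{\Im^-} + \tfrac{i}{2}\sigma_{\Im}$ on all of $\mathcal{S}(\Im^-)$ and not merely on $C_0^\infty$. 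Once these are in place — and they are supplied by the cited results — the proof reduces to assembling the pieces. Since all the hard analytic work is quoted from \cite{Dappiaggi:2009fx}, the proof here can be presented as a short synthesis: verify the one-particle structure axioms, apply uniqueness, then check invariance of $\mu_{\Im^-}$ and $\sigma_{\Im}$ under rotations and $v$-translations.
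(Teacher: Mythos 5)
Your proposal is correct and follows essentially the same route as the paper (which delegates the details to \cite[Th.~3.2]{Dappiaggi:2009fx} but supplies exactly the scaffolding you assemble): verify the axioms of Proposition~\ref{Prop:One_particle_Hilbert} for $(H_{\Im^-},K_{\Im^-})$ using Proposition~\ref{Prop:Aux_Schwarzschild}, Lemma~\ref{Lem:further} and Proposition~\ref{Prop:well_posed_K_Im}, invoke the uniqueness of one-particle structures, and then observe that $v$-translations (generated by $X|_{\Im^-}=\partial_v$) and rotations preserve both $\sigma_{\Im}$ and the positive-frequency inner product. The only slight imprecision is attributing the recovery of the kernel $(v-v'-i\epsilon)^{-2}$ to the $dh$ integral of the auxiliary $x$-transform rather than to the $k$-integral of the $v$-Fourier transform (Proposition~\ref{Prop:Aux_Schwarzschild}, item 3), but this does not affect the argument.
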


\vskip .3cm

\noindent\textbf{\em The state on the horizon $\mathcal{H}$:} In order to define a quasi-free state on $\mathcal{A}(\mathcal{H})$ the situation is very similar to the preceding case both with respect to the problem that we face and to the techniques that we employ. In addition this case has been already thoroughly investigated in \cite{KW}.

As starting point, let us consider $\mathcal{H}$ endowed with the coordinates $(U,\theta,\varphi)$, $U$ being defined in \eqref{UV}. We endow  $C^\infty_0(\mathcal{H};\mathbb{C})$ with the Hermitian inner product 
\begin{equation}\label{KW-inner_product}
\lambda_{\mathcal{H}}(f,f^\prime)=-\frac{4M^2}{\pi}\lim_{\epsilon\to 0^+} \int\limits_{\mathbb{R}\times\mathbb{R}\times\mathbb{S}^2}\frac{f(U,\theta,\varphi)\overline{f^\prime(U^\prime,\theta,\varphi)}}{(U-U^\prime-i\epsilon)^2}dU\,dU^\prime\,d\mu_{\mathbb{S}^2}(\theta,\varphi)\:,
\end{equation}
where $f,f^\prime\in C^\infty_0(\mathcal{H};\mathbb{C})$, while $M$ is the mass of Schwarzschild black hole. This allows to construct per completion a Hilbert space whose property are listed in the next proposition proven in \cite{Dappiaggi:2009fx}:

\begin{proposition}\label{prop:horizon_Hilbert_space}
	Let $\overline{\left(C_0^\infty(\mathcal{H}; \mathbb{C}), \lambda_{\mathcal{H}} \right)}$ be the Hilbert completion
	of the complex vector space $C_0^\infty(\mathcal{H}; \mathbb{C})$ with respect to \eqref{KW-inner_product}. Let 
	$\widehat{\psi}_+\doteq\mathcal{F}(\psi)|_{\{K\geq 0,(\theta,\phi) \in \mathbb{S}^2\}}$ be the restriction to positive values of $K$ of the $U$-Fourier transform of $\psi \in C_0^\infty(\mathcal{H}; \mathbb{C})$. Then
	
	\begin{enumerate}
		\item The linear map 
	$$C_0^\infty(\mathcal{H}; \mathbb{C}) \ni \psi \mapsto \widehat{\psi}_+(K,\theta,\phi) \in 
	L^2(\mathbb{R}_+\times \mathbb{S}^2, 2KdK r_S^2d\mu_{\mathbb{S}^2}) \doteq H_{\mathcal{H}}$$ is isometric and uniquely 
	extends, by linearity and continuity, to a Hilbert space isomorphism of 
	$$
	F_{(U)}:  \overline{\left(C_0^\infty(\mathcal{H}; \mathbb{C}), \lambda_{KW} \right)} \to
	H_{\mathcal{H}}\:.
	$$
	
	\item If one switches to $\mathbb{R}$ in place of $\mathbb{C}$
	$$
	\overline{F_{(U)}\left(C_0^\infty(\mathcal{H}; \mathbb{R}) \right)} = H_{\mathcal{H}}\:. 
	$$
\end{enumerate}
\end{proposition}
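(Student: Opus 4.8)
\textbf{Proof plan for Proposition \ref{prop:horizon_Hilbert_space}.}

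The plan is to follow the same strategy as for past null infinity (Proposition \ref{Prop:Aux_Schwarzschild}), observing that on $\mathcal{H}$ the coordinate $U$ ranges over the whole real line and that the inner product \eqref{KW-inner_product} has exactly the same analytic structure as \eqref{eq:Hermitian_inner_product_Im} up to the positive multiplicative constant $4M^2/\pi$ and the area element $r_S^2 d\mu_{\mathbb{S}^2}$. First I would establish, for $\psi \in C_0^\infty(\mathcal{H};\mathbb{C})$, the key identity expressing $\lambda_{\mathcal{H}}$ in terms of the $U$-Fourier transform. Using the distributional identity
\[
-\frac{1}{\pi}\lim_{\epsilon\to 0^+}\frac{1}{(U-U'-i\epsilon)^2} = \int_{\mathbb{R}_+} 2K\, e^{-iK(U-U')}\,\frac{dK}{2\pi}\cdot 2\pi
\]
(interpreted as a Fourier–Plancherel statement, see \cite[Appendix C]{Moretti2}), one obtains
\[
\lambda_{\mathcal{H}}(\psi,\psi') = 4M^2\!\!\int_{\mathbb{R}_+\times\mathbb{S}^2} \overline{\widehat{\psi}_+(K,\theta,\phi)}\,\widehat{\psi}'_+(K,\theta,\phi)\, 2K\,dK\,d\mu_{\mathbb{S}^2}(\theta,\phi),
\]
which is precisely $\langle \widehat{\psi}_+, \widehat{\psi}'_+\rangle_{H_{\mathcal{H}}}$ once one absorbs $4M^2 = r_S^2$ into the measure. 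This shows the map $\psi \mapsto \widehat{\psi}_+$ is isometric from $(C_0^\infty(\mathcal{H};\mathbb{C}),\lambda_{\mathcal{H}})$ into $H_{\mathcal{H}}$; in particular $\lambda_{\mathcal{H}}$ is a genuine (positive-definite) Hermitian form, so the completion $\overline{(C_0^\infty(\mathcal{H};\mathbb{C}),\lambda_{\mathcal{H}})}$ is a well-defined Hilbert space and the isometry extends uniquely by continuity to an isometric embedding $F_{(U)}$ of the completion into $H_{\mathcal{H}}$.

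For item 1 it remains to prove surjectivity of $F_{(U)}$, i.e. that the image is dense in $H_{\mathcal{H}} = L^2(\mathbb{R}_+\times\mathbb{S}^2,2K\,dK\,r_S^2 d\mu_{\mathbb{S}^2})$. Here I would argue that $\{\widehat{\psi}_+ : \psi \in C_0^\infty(\mathcal{H};\mathbb{C})\}$ contains, for each fixed spherical harmonic decomposition, all restrictions to $K>0$ of Fourier transforms of compactly supported smooth functions of $U$; since such functions are dense in $L^2(\mathbb{R})$ and their Fourier transforms are dense in $L^2(\mathbb{R},dK)$, a standard weight/approximation argument (the weight $2K$ is locally bounded above and below away from $0$, and one handles the behaviour near $K=0$ by noting that $L^2$ functions supported away from $0$ are dense) gives density in $H_{\mathcal{H}}$. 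Then $F_{(U)}$, being an isometry with dense range between Hilbert spaces, is a Hilbert space isomorphism. For item 2, one restricts to real-valued test functions: the claim $\overline{F_{(U)}(C_0^\infty(\mathcal{H};\mathbb{R}))} = H_{\mathcal{H}}$ follows because for real $\psi$ one has $\widehat{\psi}(-K) = \overline{\widehat{\psi}(K)}$, so the positive-frequency parts $\widehat{\psi}_+$ of real functions already exhaust a dense subspace — concretely, given any $\phi \in C_0^\infty(\mathcal{H};\mathbb{C})$ one writes $\phi = \psi_1 + i\psi_2$ with $\psi_j$ real and checks that $\widehat{(\psi_1)}_+$ and $\widehat{(\psi_2)}_+$ together span the same closure as $\widehat{\phi}_+$, exactly as in the proof of \eqref{scriaggdens}.

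The main obstacle I anticipate is the careful treatment of the low-frequency ($K \to 0$) regime in the surjectivity argument: the inner product \eqref{KW-inner_product} is insensitive to the zero mode, and one must verify that no $L^2$ mass is lost near $K=0$ when approximating — this is the step where the decay properties encoded in \eqref{eq:boundary_space_H}, and the fact that one completes $C_0^\infty$ rather than working with the larger space $\mathcal{S}(\mathcal{H})$ directly, become essential. Apart from that, everything reduces to the Plancherel theorem and routine density arguments, and the bookkeeping is entirely parallel to the $\Im^-$ case already treated in \cite{Dappiaggi:2009fx}, so I would largely refer to that reference for the analytic details rather than reproduce them.
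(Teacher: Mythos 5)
Your overall strategy is the right one and is essentially the one followed in \cite{Dappiaggi:2009fx}, to which the paper itself defers for this statement: expand the kernel $-\pi^{-1}(U-U'-i\epsilon)^{-2}$ as a superposition of positive-frequency exponentials $\int_0^\infty K e^{-iK(U-U')}e^{-\epsilon K}\,dK$ (note your displayed identity carries a spurious factor of $2\pi$, and with the convention \eqref{FTeq} one must also track whether the $-i\epsilon$ prescription selects $\widehat{\psi}(K)$ or $\widehat{\psi}(-K)$ for $K>0$ --- for complex test functions this is not cosmetic, since it decides whether the map is linear or antilinear onto $H_{\mathcal{H}}$), deduce $\lambda_{\mathcal{H}}(\psi,\psi')=\langle\widehat{\psi}_+,\widehat{\psi}'_+\rangle_{H_{\mathcal{H}}}$, and then prove density of the image. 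Two small points you should make explicit: positive \emph{definiteness} of $\lambda_{\mathcal{H}}$ on $C_0^\infty(\mathcal{H};\mathbb{C})$ (needed for the completion to be taken without quotienting) does not follow from the isometry identity alone but from Paley--Wiener, since $\widehat{\psi}$ is entire and cannot vanish on a half-line unless $\psi=0$; and your worry about the $K\to 0$ regime is a non-issue here precisely because one works with $C_0^\infty$, so it should not be presented as the main obstacle.

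The genuine gap is in item 2. Writing $\phi=\psi_1+i\psi_2$ gives $\widehat{\phi}_+=\widehat{(\psi_1)}_++i\,\widehat{(\psi_2)}_+$, which is a \emph{complex}-linear combination of elements of $F_{(U)}\left(C_0^\infty(\mathcal{H};\mathbb{R})\right)$; but that set is only a \emph{real}-linear subspace of $H_{\mathcal{H}}$ (it is not stable under multiplication by $i$, since $i\psi_2$ is not real-valued), so your argument only shows that its complex span is dense --- i.e.\ it reproves item 1, not item 2. The statement $\overline{F_{(U)}\left(C_0^\infty(\mathcal{H};\mathbb{R})\right)}=H_{\mathcal{H}}$ asserts density of the real subspace itself, which is strictly stronger and is exactly the nontrivial content here. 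The correct argument (the one behind \eqref{scriaggdens} as well) runs differently: suppose $g\in H_{\mathcal{H}}$ satisfies $\mathrm{Re}\,\langle g,\widehat{\psi}_+\rangle_{H_{\mathcal{H}}}=0$ for all real $\psi\in C_0^\infty(\mathcal{H})$; using the reality condition $\widehat{\psi}(-K)=\overline{\widehat{\psi}(K)}$ one rewrites this as the vanishing of $\int_{\mathbb{R}}h(K)\widehat{\psi}(K)\,2|K|\,dK\,d\mu_{\mathbb{S}^2}$, where $h$ is built from $g$ on $K>0$ and from its reflection on $K<0$; since $h\cdot 2|K|$ is a tempered distribution annihilating all real (hence all) test functions through its inverse Fourier transform, it vanishes, and therefore $g=0$. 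Without some version of this reflection/orthogonal-complement step, item 2 is not established.
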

\noindent As in the previous case we cannot embed in $H_{\mathcal{H}}$ the space $\mathcal{S}(\mathcal{H})$, defined in \eqref{eq:boundary_space_H} since the decay rate in $U$ is too slow to apply directly the Fourier-Plancherel transform. Hence, let us split $\mathcal{H}$ as $\mathcal{H}^+\cup B\cup\mathcal{H}^-$ where $\mathcal{H}^\pm$ stands for the collection of points on the horizon with $U>0$ (+) and with $U<0$ (-) while $B$ is the bifurcation surface at $U=0$. It is convenient to introduce on $\mathcal{H}^+$ the coordinate $u\doteq 4M\ln(U)$ and on $\mathcal{H}^-$, the coordinate $u\doteq 4M\ln(-U)$ where the same symbol is used since no confusion can arise. Let $H^1(\mathcal{H}^\pm)_u$ denote the space of complex valued functions on $\mathcal{H}^\pm$ which are, together with their first distributional derivative along $u$, square-integrable with respect to the measure $du\,d\mu_{\mathbb{S}^2}(\theta,\varphi)$. In addition let us define:
\begin{gather}
 \mathcal{S}(\mathcal{H}^\pm) \doteq \left\{\psi \in C^\infty(\mathbb{R} \times \mathbb{S}^2; \mathbb{R})\:\left| 
\: \psi(u) = 0 \;\mbox{in a neighbourhood of $B$ and }\right.\right. \notag \\
\exists\; C_\psi, C^\prime_\psi\geq 0\notag
\mbox{  with  }
  |\psi(u,\theta,\phi)| < \frac{C_\psi}{1+|u|} \:,\notag\\
\left. \left|\partial_u\psi(u,\theta,\phi)\right|<
\frac{C'_\psi}{1+|u|}\:,  (u, \theta,\phi) \in \mathbb{R}\times \mathbb{S}^2  \right\}\:,\label{SHpm}
\end{gather}  
where $C_\psi,C^\prime_{\psi}\geq 0$. One can realize by direct inspection that $\mathcal{S}(\mathcal{H}^\pm)\subset H^1(\mathcal{H}^\pm)$, which leads to the following proposition:

\begin{proposition}\label{Prop:properties_horizon}
Let $d\mu(k)\doteq 8M^2\frac{k e^{4\pi M k}}{e^{4\pi M k}-e^{-4\pi M k}}  dk$. The following holds true:
\begin{enumerate}
	\item If $\widetilde{\psi} = (\mathcal{F}(\psi))(k,\theta,\phi) = \widetilde{\psi}(k,\theta,\phi)$ denotes the
$u$-Fourier transform of either $\psi \in C_0^\infty(\mathcal{H}^+; \mathbb{C})$ or  $\psi \in C_0^\infty(\mathcal{H}^-; \mathbb{C})$
the maps 
$$C_0^\infty(\mathcal{H}^\pm; \mathbb{C}) \ni \psi \mapsto \widetilde{\psi} \in 
L^2(\mathbb{R} \times \mathbb{S}^2, d\mu(k) d\mu_{\mathbb{S}^2})$$ are isometric when $C_0^\infty(\mathcal{H}^\pm; \mathbb{C})$
is equipped with the scalar product $\lambda_{\mathcal{H}}$. It uniquely 
extends, per continuity, to the Hilbert space isomorphisms: 
\begin{equation} \label{Fv}
F^{(\pm)}_{(u)} : \overline{C_0^\infty(\mathcal{H}^\pm; \mathbb{C})}\to L^2(\mathbb{R} \times \mathbb{S}^2, d\mu(k) d\mu_{\mathbb{S}^2})\:, 
\end{equation} 
where 
$\overline{C_0^\infty(\mathcal{H}^\pm; \mathbb{C})}$ are viewed as Hilbert subspaces of 
$\overline{\left(C_0^\infty(\mathcal{H}; \mathbb{C}), \lambda_{\mathcal{H}} \right)}$.

\item The spaces $\mathcal{S}(\mathcal{H}^\pm)$  are naturally identified with
real subspaces of $\overline{C_0^\infty(\mathcal{H}; \mathbb{C})}$ as follows:
If either $\{\psi_n\}_{n\in \mathbb{N}}, \{\psi'_n\}_{n\in \mathbb{N}} \subset C_0^\infty(\mathcal{H}^+; \mathbb{R})$
or $\{\psi_n\}_{n\in \mathbb{N}}, \{\psi'_n\}_{n\in \mathbb{N}} \subset C_0^\infty(\mathcal{H}^-; \mathbb{R})$ and, according to the case,
both sequences 
$\{\psi_n\}_{n\in \mathbb{N}}, \{\psi'_n\}_{n\in \mathbb{N}}$
converge to the same $\psi \in \mathcal{S}(\mathcal{H}^\pm)$ in $H^1(\mathcal{H}^\pm)$, 
then  both sequences are of Cauchy type in
$\overline{\left(C_0^\infty(\mathcal{H}; \mathbb{C}), \lambda_{KW} \right)}$ 
and 
$\psi_n-\psi'_n \to 0$ in $\overline{\left(C_0^\infty(\mathcal{H}; \mathbb{C}), \lambda_{KW} \right)}$. \\
The subsequent identification of $\mathcal{S}(\mathcal{H}^\pm)$  with
real subspaces of $\overline{C_0^\infty(\mathcal{H}; \mathbb{C})}$ is such that:
\begin{equation}
F^{(\pm)}_{(u)}|_{\mathcal{S}(\mathcal{H}^\pm)} = \mathcal{F}|_{\mathcal{S}(\mathcal{H}^\pm)}\:,\label{FF}
\end{equation}
where 
$\mathcal{F}:  L^2(\mathbb{R} \times \mathbb{S}^2, du d\mu_{\mathbb{S}^2}) \to L^2(\mathbb{R} \times \mathbb{S}^2, dk d\mu_{\mathbb{S}^2})$
stands for the standard $u$-Fourier-Plancherel transform.
\end{enumerate}
\end{proposition}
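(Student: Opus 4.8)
The plan is to mimic closely the analysis already carried out on $\Im^-$ in Proposition~\ref{Prop:Aux_Schwarzschild} and Lemma~\ref{Lem:further}, transporting every step through the change of coordinates $u = 4M\ln|U|$ on each of the two half-horizons $\mathcal{H}^\pm$ separately. First I would establish item~1. Fix $\psi \in C_0^\infty(\mathcal{H}^+;\mathbb{C})$; since $\mathrm{supp}(\psi)$ is compact and bounded away from $B$ (i.e.\ away from $U=0$), the map $U \mapsto u = 4M\ln U$ is a diffeomorphism of a neighbourhood of $\mathrm{supp}(\psi)$ onto a compact subset of $\mathbb{R}$, and $\psi$ viewed as a function of $u$ again lies in $C_0^\infty$. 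The key computation is to rewrite the kernel $-\tfrac{4M^2}{\pi}(U-U'-i\epsilon)^{-2}\,dU\,dU'$ in the $u$-variable. Using $dU = \tfrac{1}{4M}U\,du = \tfrac{1}{4M}e^{u/4M}du$ and the standard distributional identity $(U-U'-i0)^{-2} = \partial_U\partial_{U'}\ln(U-U'-i0)$, an integration by parts (permissible by compact support) turns the $U$-kernel into the analogous $u$-kernel up to the smooth Jacobian factor, which after the manipulation collapses to the measure $d\mu(k)$ on the Fourier side. Concretely, Plancherel in $u$ together with the elementary Fourier identity $\widehat{(v-v'-i0)^{-2}}\propto k\,\Theta(k)$ and the explicit thermal weight arising from $\tfrac{U}{4M}=e^{u/4M}$ produces exactly $d\mu(k)=8M^2\tfrac{ke^{4\pi Mk}}{e^{4\pi Mk}-e^{-4\pi Mk}}dk$; this is the step where the $2k\,dk\,r_S^2$ measure of $H_\mathcal{H}$ gets deformed into the KMS-type measure and, as noted before the proposition, it is the technical heart of the argument. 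One then checks isometry of $\psi \mapsto \widetilde\psi$ onto its image in $L^2(\mathbb{R}\times\mathbb{S}^2,d\mu(k)\,d\mu_{\mathbb{S}^2})$, notes that the image is dense because $C_0^\infty(\mathcal{H}^\pm;\mathbb{C})$ is dense in its completion and Fourier transform is continuous, and extends by continuity to the claimed Hilbert space isomorphisms $F^{(\pm)}_{(u)}$; the inclusions $\overline{C_0^\infty(\mathcal{H}^\pm;\mathbb{C})}\subset \overline{(C_0^\infty(\mathcal{H};\mathbb{C}),\lambda_{\mathcal{H}})}$ are immediate since $\lambda_{\mathcal{H}}$ restricted to functions supported in $\mathcal{H}^\pm$ is $\lambda_{\mathcal{H}^\pm}$ by the locality of the kernel in the relevant chart.

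\textbf{Item~2.} For the second item the argument is a verbatim adaptation of the proof of the first part of Proposition~\ref{Prop:Aux_Schwarzschild}. Given $\psi\in\mathcal{S}(\mathcal{H}^\pm)$, one first records that $\mathcal{S}(\mathcal{H}^\pm)\subset H^1(\mathcal{H}^\pm)_u$, which follows directly from the decay bounds $|\psi|\leq C_\psi(1+|u|)^{-1}$ and $|\partial_u\psi|\leq C'_\psi(1+|u|)^{-1}$ defining \eqref{SHpm} (the first bound is square-integrable only marginally, but the derivative bound is square-integrable, so $\psi\in H^1$ after the standard care with the constant mode — here the vanishing near $B$ removes any ambiguity). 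Then I would take a sequence $\{\psi_n\}\subset C_0^\infty(\mathcal{H}^\pm;\mathbb{R})$ with $\psi_n\to\psi$ in $H^1(\mathcal{H}^\pm)_u$, and show it is Cauchy in the $\lambda_{\mathcal{H}}$-completion: the crucial estimate is that $\lambda_{\mathcal{H}}(\varphi,\varphi)$ is controlled by $\|\varphi\|_{H^1(\mathcal{H}^\pm)_u}^2$ for $\varphi\in C_0^\infty(\mathcal{H}^\pm;\mathbb{R})$, which by Plancherel amounts to the pointwise bound $d\mu(k)\leq \mathrm{const}\,(1+k^2)\,dk$ — obvious from the explicit form of $d\mu(k)$, since $\tfrac{k}{\tanh(4\pi Mk)}$ is smooth, positive, $\sim 1/(4\pi M)$ near $k=0$ and $\sim |k|$ at infinity. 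Cauchy-ness of $\{\psi_n\}$ in $H^1$ then forces Cauchy-ness in the $\lambda_{\mathcal{H}}$-completion; independence of the limit (the statement that $\psi_n-\psi'_n\to 0$) follows the same way. This defines the identification of $\mathcal{S}(\mathcal{H}^\pm)$ with a real subspace of $\overline{C_0^\infty(\mathcal{H};\mathbb{C})}$, and since $F^{(\pm)}_{(u)}$ is continuous and agrees with $\mathcal{F}$ on the dense subspace $C_0^\infty(\mathcal{H}^\pm;\mathbb{R})$, passing to the limit gives \eqref{FF}.

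\textbf{Main obstacle.} I expect the real difficulty to be purely the bookkeeping of the coordinate change $U\leftrightarrow u$ in the distributional kernel \eqref{KW-inner_product}: one must justify the integration by parts that moves the $U$-derivatives implicit in $(U-U'-i\epsilon)^{-2}$ across the (non-compactly-supported, merely decaying) functions of $\mathcal{S}(\mathcal{H}^\pm)$, and verify that the boundary terms vanish using precisely the decay rates in \eqref{SHpm}, while keeping the $i\epsilon$-prescription under control uniformly before taking $\epsilon\to 0^+$. The appearance of the Boltzmann-type factor $e^{4\pi Mk}/(e^{4\pi Mk}-e^{-4\pi Mk})$ is the fingerprint of the surface gravity $\kappa = 1/(4M)$ entering through $dU/du = e^{\kappa u}/\kappa$, and checking that the algebra produces exactly this weight (with the correct $8M^2$ prefactor) is the one computation I would do carefully rather than wave at. Everything else — isometry, density, continuity, extension by continuity, the subspace identifications — is routine once that normalization is pinned down, and is already modelled on the $\Im^-$ case treated above.
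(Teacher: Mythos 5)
Your strategy is the same as that of the proof the paper defers to (\cite{Dappiaggi:2009fx}): on $\mathcal{H}^\pm$ the substitution $U=\pm e^{u/4M}$ turns the kernel $-\frac{4M^2}{\pi}(U-U'-i\epsilon)^{-2}\,dU\,dU'$ into $-\frac{1}{16\pi}\sinh^{-2}\!\left(\frac{u-u'}{8M}-i\epsilon'\right)du\,du'$, whose $u$-Fourier transform is exactly the thermal density $8M^2\,k\,(1-e^{-8\pi Mk})^{-1}=d\mu(k)/dk$; isometry then follows from Plancherel, and item 2 reduces, as you say, to the bound $d\mu(k)\leq C(1+k^2)\,dk$ applied to approximating sequences. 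Two small slips: the density equals $8M^2\left(\frac{k}{2\tanh(4\pi Mk)}+\frac{k}{2}\right)$, not $k/\tanh(4\pi Mk)$ alone (harmless, since the asymptotic bound you need still holds, with exponential decay as $k\to-\infty$ and linear growth as $k\to+\infty$); and both decay bounds in \eqref{SHpm} give square-integrability outright, since $\int(1+|u|)^{-2}du<\infty$, so no ``care with the constant mode'' is required. Also note that for item 1 your worry about integrating by parts across merely decaying functions is moot: there the functions are compactly supported away from $B$, and the elements of $\mathcal{S}(\mathcal{H}^\pm)$ only ever enter through $C_0^\infty$ approximation.

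The one genuine gap is the surjectivity of $F^{(\pm)}_{(u)}$. Your justification --- the image is dense ``because $C_0^\infty(\mathcal{H}^\pm;\mathbb{C})$ is dense in its completion and the Fourier transform is continuous'' --- only shows that the continuous extension of the isometry has range equal to the \emph{closure} of the image of $C_0^\infty$; it does not show that this closure is all of $L^2(\mathbb{R}\times\mathbb{S}^2,d\mu(k)\,d\mu_{\mathbb{S}^2})$, which is what the claimed Hilbert space isomorphism onto that target requires. What must actually be proved is that the $u$-Fourier transforms of test functions are dense in the weighted space $L^2(w(k)\,dk)$ with $w(k)=k(1-e^{-8\pi Mk})^{-1}$. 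This is true but needs an argument, e.g.: if $g\in L^2(w\,dk)$ is orthogonal to $\widehat{\varphi}$ for every $\varphi\in C_0^\infty$, then $h\doteq\overline{g}\,w=(\overline{g}\sqrt{w})\sqrt{w}$ is a tempered distribution (since $\overline{g}\sqrt{w}\in L^2$ and $\sqrt{w}$ is polynomially bounded), its inverse Fourier transform annihilates all test functions, hence $h=0$ and, $w$ being strictly positive almost everywhere, $g=0$. With that inserted, together with the routine verification that every $\psi\in\mathcal{S}(\mathcal{H}^\pm)$ is in fact an $H^1(\mathcal{H}^\pm)_u$-limit of elements of $C_0^\infty(\mathcal{H}^\pm;\mathbb{R})$ (cut off at large $|u|$ and mollify, using the $(1+|u|)^{-1}$ decay of $\psi$ and $\partial_u\psi$), your proof is complete and follows the cited one.
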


\noindent Following closely Proposition \ref{Prop:One_particle_Hilbert}, the next step is the construction of a suitable map $K_{\mathcal{H}}:\mathcal{S}(\mathcal{H})\to H_{\mathcal{H}}$. Let us thus consider $\chi\in C^\infty_0(\mathcal{H})$, so to decompose every $\psi\in\mathcal{S}(\mathcal{H})$ as
\begin{equation}\label{+_-_0}
\psi=\psi_++\psi_0+\psi_-,
\end{equation}
where $\psi_0\doteq\chi\psi$, while $\psi^\pm=(1-\chi)\psi|_{\mathcal{H}^\pm}$. We can now define the map 
\begin{equation}\label{K_on_horizon}
K_{\mathcal{H}}:\mathcal{S}(\mathcal{H})\mapsto H_{\mathcal{H}},\quad\psi\mapsto K_{\mathcal{H}}\psi\doteq F_{(u)}\psi_+ +F_{(U)}\psi_0+ F_{(u)}\psi_-,
\end{equation}
where the map $F_{(U)}$ is well-defined in view of item $2.$ in Proposition \ref{Prop:properties_horizon}. Furthermore we prove that, $K_{\mathcal{H}} : \mathcal{S}(\mathcal{H}) \to H_{\mathcal{H}}$ is continuous if
we read  $\mathcal{S}(\mathcal{H})$ as a normed space equipped with the norm 
\begin{equation}\label{norme}
\| \psi \|_{\mathcal{H}}^\chi = \| (1-\chi) \psi \|_{H^1(\mathcal{H}^-)_u} + \| \chi \psi  \|_{H^1(\mathcal{H})_U}
+
\|  (1-\chi)\psi \|_{H^1(\mathcal{H}^+)_u}
\end{equation}
where
$\|  \cdot  \|_{H^1(\mathcal{H}^\pm)_u}$ and $\|  \cdot  \|_{H^1(\mathcal{H})_U}$ are the norms of the Sobolev spaces $H^1(\mathcal{H}^\pm)_u$ and
$H^1(\mathcal{H})_U$ respectively. Observe that, $\|\cdot\|^\chi_{\mathcal{H}}$ and $\|\cdot\|^{\chi'}_{\mathcal{H}}$, defined with respect of different 
decompositions generated by $\chi$ and $\chi'$, are equivalent. Due to such  equivalence, we will often write $\| \psi \|_{\mathcal{H}}$ in place of $\| \psi \|_{\mathcal{H}}^\chi$. The following proposition characterizes completely the properties of the map $(\mathcal{S}(\mathcal{H}),H_{\mathcal{H}})$ whose proof is in \cite{Dappiaggi:2009fx}:

\begin{proposition}
	The $\mathbb{R}$-linear map $K_{\mathcal{H}} : \mathcal{S}(\mathcal{H}) \to H_{\mathcal{H}}$ in \eqref{K_on_horizon} enjoys the following properties:
	
	\begin{enumerate}
		\item It is independent from the choice of the function $\chi$ used in the decomposition \eqref{+_-_0} of $\psi \in \mathcal{S}(\mathcal{H})$;
	
	\item It reduces to $F_{(U)}$ when restricting to $C_0^\infty(\mathcal{H}; \mathbb{R})$;
	
	\item It satisfies 
	\begin{equation}\label{8anni_fa_fu_chiamata_ultimabastarda1}
	\sigma_{\mathcal{H}}(\psi,\psi') = 2 Im \langle K_{\mathcal{H}}(\psi), K_{\mathcal{H}}(\psi')
	\rangle_{H_{\mathcal{H}}}\:, \quad \mbox{if $\psi,\psi' \in \mathcal{S}(\mathcal{H})$;}	
	\end{equation}
	
	\item It is injective and it holds $\overline{K_{\mathcal{H}}(\mathcal{S}(\mathcal{H}))} = H_{\mathcal{H}}$;
	
	\item It is continuous with respect to the norm $\|\cdot\|_{\mathcal{H}}$ defined in \eqref{norme} for every 
	choice of the function  $\chi$. Consequently, there exists $C>0$ such that
	$$
	|\langle K_{\mathcal{H}}(\psi), K_{\mathcal{H}}(\psi')
	\rangle_{H_{\mathcal{H}}}| \leq  C^2 \| \psi \|_{\mathcal{H}} \cdot \| \psi' \|_{\mathcal{H}} \;
	\quad \mbox{if $\psi,\psi' \in \mathcal{S}(\mathcal{H})$.}
	$$ 
\end{enumerate}
\end{proposition}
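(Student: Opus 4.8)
The plan is to verify each of the five properties by reducing them to the corresponding statements already established for the pieces $F_{(u)}^{(\pm)}$, $F_{(U)}$, and for the inner product $\lambda_{\mathcal{H}}$ on $C_0^\infty(\mathcal{H};\mathbb{C})$, using the decomposition $\psi = \psi_+ + \psi_0 + \psi_-$ in \eqref{+_-_0}. The main conceptual point is that the three summands live in mutually ``orthogonal'' sectors: $\psi_0 \in C_0^\infty(\mathcal{H};\mathbb{C})$ is compactly supported away from a neighbourhood of the regions where $\psi_\pm$ are supported only after one has controlled the overlap region via $\chi$, and $\psi_\pm$ are supported in $\mathcal{H}^\pm$ respectively. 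First I would prove property (1), independence of the choice of $\chi$: if $\chi,\chi'$ are two admissible cutoffs, then $(\chi-\chi')\psi \in C_0^\infty(\mathcal{H};\mathbb{C})$ and one checks that $F_{(U)}((\chi-\chi')\psi)$ equals the difference of the $F_{(u)}^{(\pm)}$-contributions coming from the $\psi_\pm$ pieces; this is exactly the compatibility built into \eqref{FF} of Proposition \ref{Prop:properties_horizon}, which says $F_{(u)}^{(\pm)}$ restricted to $\mathcal{S}(\mathcal{H}^\pm)$ agrees with the plain Fourier--Plancherel transform, matching $F_{(U)}$ on $C_0^\infty$.

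Property (2) is immediate: if $\psi \in C_0^\infty(\mathcal{H};\mathbb{R})$ one may choose $\chi$ with $\chi\psi = \psi$, so $\psi_\pm = 0$ and $K_{\mathcal{H}}\psi = F_{(U)}\psi$. Property (3), the relation between the symplectic form $\sigma_{\mathcal{H}}$ (as in \eqref{eq:boundary_symplectic_form}) and $2\,\mathrm{Im}\langle\cdot,\cdot\rangle_{H_{\mathcal{H}}}$, I would prove by first checking it on $C_0^\infty(\mathcal{H};\mathbb{R})$, where it follows from a direct computation identifying the distributional kernel $-\frac{4M^2}{\pi}(U-U'-i\epsilon)^{-2}$ of $\lambda_{\mathcal{H}}$ with $\mu_{\mathcal{H}} + \tfrac{i}{2}\sigma_{\mathcal{H}}$ under Fourier transform (the same computation that underlies Lemma \ref{lem:boundary_integral_kernel}), and then extending to general $\psi,\psi' \in \mathcal{S}(\mathcal{H})$ by continuity, using the density statement \eqref{scriaggdens}-type result and the continuity estimate that is property (5). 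Here one must be a little careful: $\sigma_{\mathcal{H}}$ is real-valued and only the imaginary part of the scalar product enters, so the potentially divergent ``real'' parts coming from the slow decay of $\psi_\pm$ at large $|u|$ do not obstruct the argument—the integration-by-parts boundary terms in $\sigma_{\mathcal{H}}$ vanish because of the $(1+|u|)^{-1}$ decay in \eqref{SHpm} and \eqref{eq:boundary_space_H}.

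Property (5), continuity with respect to $\|\cdot\|_{\mathcal{H}}^\chi$ in \eqref{norme}, is the technical heart and the step I expect to be the main obstacle. The point is that $F_{(u)}^{(\pm)}$ and $F_{(U)}$ are isometries only with respect to the norms induced by $\lambda_{\mathcal{H}}$, which are \emph{not} the $H^1$ Sobolev norms; one needs a genuine estimate bounding $\|F_{(u)}^{(\pm)}\psi_\pm\|_{H_{\mathcal{H}}}$ by $\|\psi_\pm\|_{H^1(\mathcal{H}^\pm)_u}$ and similarly for $\psi_0$. Concretely, on $\mathcal{H}^\pm$ the measure $d\mu(k) = 8M^2 \frac{k e^{4\pi Mk}}{e^{4\pi Mk}-e^{-4\pi Mk}}\,dk$ behaves like $|k|\,dk$ for large $|k|$ and like a bounded density near $k=0$, so $\|\widetilde\psi\|^2_{L^2(d\mu(k)\,d\mu_{\mathbb{S}^2})} \lesssim \int (1+|k|)|\widetilde\psi(k)|^2\,dk\,d\mu_{\mathbb{S}^2} \lesssim \|\psi\|^2_{H^1} + \|\psi\|^2_{L^2}$, and the $L^2$ term is in turn controlled because $\mathcal{S}(\mathcal{H}^\pm) \subset H^1(\mathcal{H}^\pm)_u$ with the embedding controlling both $\psi$ and $\partial_u\psi$ in $L^2$. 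The contribution of $\psi_0 \in C_0^\infty$ is estimated the same way using that near $K=0$ the measure $2K\,dK$ is mild and at large $K$ it grows linearly, again giving an $H^1$-type bound. Once property (5) and the polarization estimate $|\langle K_{\mathcal{H}}\psi,K_{\mathcal{H}}\psi'\rangle| \le C^2\|\psi\|_{\mathcal{H}}\|\psi'\|_{\mathcal{H}}$ are in place, property (4)—injectivity and density of the range—follows: density because $C_0^\infty(\mathcal{H};\mathbb{R}) \subset \mathcal{S}(\mathcal{H})$ and $\overline{F_{(U)}(C_0^\infty(\mathcal{H};\mathbb{R}))} = H_{\mathcal{H}}$ by Proposition \ref{prop:horizon_Hilbert_space}, and injectivity because if $K_{\mathcal{H}}\psi = 0$ then, testing against $K_{\mathcal{H}}$ of suitable real test functions and using property (3), one gets $\sigma_{\mathcal{H}}(\psi,\psi') = 0$ for all $\psi'$ in a dense set, whence $\psi = 0$ by weak non-degeneracy of $\sigma_{\mathcal{H}}$ (Proposition \ref{propSSc}).
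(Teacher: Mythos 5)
Your proposal is correct and follows essentially the route of the proof the paper defers to (Dappiaggi--Moretti--Pinamonti, Adv.\ Theor.\ Math.\ Phys.\ 15 (2011)): the key continuity estimate (5) via the bounds $d\mu(k)\lesssim(1+|k|)\,dk$ and $2K\,dK\lesssim 2|K|\,dK$ combined with $\int|k|\,|\widetilde\psi|^2\le\|\psi\|_{L^2}\|\partial_u\psi\|_{L^2}$, the reduction of (1)--(3) to the compatibility \eqref{FF} on $C_0^\infty$ plus density in the $H^1$ norms, and (4) from $\overline{F_{(U)}(C_0^\infty(\mathcal{H};\mathbb{R}))}=H_{\mathcal{H}}$ together with weak non-degeneracy of $\sigma_{\mathcal{H}}$. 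The logical ordering (establishing (5) before using it to extend (3) by continuity) and the observation that only the imaginary part of $\lambda_{\mathcal{H}}$ enters (3) are exactly the points that make the argument go through.
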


\noindent We can revert now our attention to the quasi-free state on the algebra constructed in Definition \ref{Def:boundary_algebra} starting from \eqref{eq:boundary_space_H} showing that it enjoys several notable properties. In the following we report the main ones. The proof of this proposition can be found in \cite[Th. 3.1]{Dappiaggi:2009fx}:

\begin{proposition}\label{Prop:state_on_the_horizon}
Let $\omega_{\mathcal{H}}:\mathcal{A}(\mathcal{H})\to\mathbb{C}$ be the quasi-free state constructed in Definition \ref{Def:boundary_quasi_free_state} starting from \eqref{eq:boundary_space_H}. It holds that:
\begin{enumerate}
	\item The pair $(H_{\mathcal{H}},K_{\mathcal{H}})$ is the one-particle structure for $\omega_{\mathcal{H}}$, which is uniquely individuated by the requirement that its two-point function coincides to the right-hand side of 
(\ref{KW-inner_product}) under restriction to $C_0^\infty(\mathcal{H}; \mathbb{R})$.

\item The state $\omega_{\mathcal{H}}$ is invariant under the natural action of the one-parameter group 
of $*$-automorphisms generated by $X|_{\mathcal{H}}$ and of those generated by the Killing vectors of $\mathbb{S}^2$. Here $X$ coincides with $\partial_t$ in both $\mathcal{W}$ and $\mathcal{B}$.

\item The restriction of $\omega_{\mathcal{H}}$ to $\mathcal{A}(\mathcal{S}(\mathcal{H}^\pm))$, the algebra built out of \eqref{SHpm}, is a quasifree state
$\omega^{\beta_H}_{\mathcal{H}^\pm}$  individuated by the one particle structure
$(H^{\beta_H}_{\mathcal{H}^\pm},K^{\beta_H}_{\mathcal{H}^\pm})$
$$
H^{\beta_H}_{\mathcal{H}^\pm} \doteq L^2(\mathbb{R}\times\mathbb{S}^2, d\mu(k)  d\mu_{\mathbb{S}^2})\quad\textrm{and}\quad \mbox{$K^{\beta_H}_{\mathcal{H}^
		\pm} \doteq \mathcal{F}|_{\mathcal{S}(\mathcal{H}^\pm)} = F^{(\pm)}_{(u)}|_{\mathcal{S}(\mathcal{H}^\pm
		)}$.}
$$

\item The states $\omega^{\beta_H}_{\mathcal{H}^\pm}$ satisfy the KMS condition with respect to the one-parameter group of $^*$-automorphisms generated by, respectively, $\mp X|_{\mathcal{H}}$, with Hawking inverse temperature
$\beta_H = 8\pi M$.
\end{enumerate}
\end{proposition}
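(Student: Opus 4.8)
The plan is to establish the four claims in turn, assembling in each case results already in place.

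\textbf{(1)} I would extract the quasi-free state directly from the map $K_{\mathcal H}:\mathcal S(\mathcal H)\to H_{\mathcal H}$ of \eqref{K_on_horizon}, all of whose needed features ($\mathbb R$-linearity, injectivity, density of $K_{\mathcal H}(\mathcal S(\mathcal H))$ in $H_{\mathcal H}$, continuity in $\|\cdot\|_{\mathcal H}$, and the identity \eqref{8anni_fa_fu_chiamata_ultimabastarda1}) are supplied by the proposition immediately preceding. First set $\mu_{\mathcal H}(\psi,\psi'):=\mathrm{Re}\,\langle K_{\mathcal H}\psi,K_{\mathcal H}\psi'\rangle_{H_{\mathcal H}}$; this is a real scalar product on $\mathcal S(\mathcal H)$ (positive-definiteness is injectivity of $K_{\mathcal H}$), and Cauchy--Schwarz in $H_{\mathcal H}$ together with \eqref{8anni_fa_fu_chiamata_ultimabastarda1} yields the bound \eqref{ineqs} for the pair $(\mu_{\mathcal H},\sigma_{\mathcal H})$. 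Proposition \ref{Prop:characterization_quasifree_states}, via Proposition \ref{Prop:One_particle_Hilbert}, then produces a unique quasi-free state $\omega_{\mathcal H}$ on $\mathcal A(\mathcal H)$ whose two-point function is $\mu_{\mathcal H}+\tfrac i2\sigma_{\mathcal H}=\langle K_{\mathcal H}\,\cdot\,,K_{\mathcal H}\,\cdot\,\rangle_{H_{\mathcal H}}$, and $(H_{\mathcal H},K_{\mathcal H})$ is by construction a one-particle structure for it. To match this with Definition \ref{Def:boundary_quasi_free_state} I would use that $K_{\mathcal H}$ reduces to $F_{(U)}$ on $C_0^\infty(\mathcal H;\mathbb R)$ and that, by Proposition \ref{prop:horizon_Hilbert_space}, $F_{(U)}$ is the Hilbert isomorphism transporting $\lambda_{\mathcal H}$ to $\langle\cdot,\cdot\rangle_{H_{\mathcal H}}$; hence the two-point function of $\omega_{\mathcal H}$ restricted to $C_0^\infty(\mathcal H;\mathbb R)$ is precisely the right-hand side of \eqref{KW-inner_product}. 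Uniqueness follows since $\overline{F_{(U)}(C_0^\infty(\mathcal H;\mathbb R))}=H_{\mathcal H}$ (Proposition \ref{prop:horizon_Hilbert_space}), so any one-particle structure reproducing $\lambda_{\mathcal H}$ on $C_0^\infty(\mathcal H;\mathbb R)$ is unitarily equivalent to $(H_{\mathcal H},F_{(U)})$ by Proposition \ref{Prop:One_particle_Hilbert}, and extends uniquely to $\mathcal S(\mathcal H)$ by the continuity of $K_{\mathcal H}$.

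\textbf{(2)} The relevant one-parameter groups act on $\mathcal H$ by the rotations $R\in SO(3)$ of the $\mathbb S^2$ factor and, since $X$ coincides with $\partial_t$ in $\mathcal W$ and $\mathcal B$, by a group of dilations of the null coordinate $U$ of \eqref{UV}, which fixes $U=0$. Both are isometries of the degenerate metric of $\mathcal H$, so by Proposition \ref{prop:isomorphism_boundary_isometries} they induce $^*$-automorphisms of $\mathcal A(\mathcal H)$, and I would show $\omega_{\mathcal H}$ is invariant by checking separately the invariance of $\sigma_{\mathcal H}$ and of $\mu_{\mathcal H}$. For $\sigma_{\mathcal H}$ this is immediate from \eqref{eq:boundary_symplectic_form}: under $U\mapsto\lambda U$ one has $dU\mapsto\lambda\,dU$, $\partial_U\mapsto\lambda^{-1}\partial_U$ and $\psi\mapsto\psi(\lambda^{-1}\,\cdot\,)$, so $\psi\,\partial_U\psi'\,dU$ is unchanged, and rotations leave $d\mu_{\mathbb S^2}$ invariant. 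For $\mu_{\mathcal H}$ it follows because on the Fourier side the dilation acts on $H_{\mathcal H}=L^2(\mathbb R_+\times\mathbb S^2,2K r_S^2\,dK\,d\mu_{\mathbb S^2})$ by the unitary $\widehat\psi_+(K)\mapsto\lambda\,\widehat\psi_+(\lambda K)$, while rotations act unitarily on the $\mathbb S^2$ factor. Invariance of $\omega_{\mathcal H}$ and unitary implementability of these automorphisms in the GNS representation are then read off from Proposition \ref{Prop:automorphisms_and_GNS}.

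\textbf{(3)} The main input is Proposition \ref{Prop:properties_horizon}. For $\psi\in\mathcal S(\mathcal H^+)$, which vanishes near the bifurcation surface $B$ (the case $\mathcal H^-$ being symmetric), I would choose the cut-off $\chi$ in the decomposition \eqref{+_-_0} so that $\chi\psi\equiv0$, whence \eqref{K_on_horizon} collapses to $K_{\mathcal H}\psi=F^{(+)}_{(u)}\psi=\mathcal F\psi$, the $u$-Fourier transform. By Proposition \ref{Prop:properties_horizon}(1) the map $\mathcal F$ carries $\mathcal S(\mathcal H^+)$ isometrically into $L^2(\mathbb R\times\mathbb S^2,d\mu(k)\,d\mu_{\mathbb S^2})$ when the source is equipped with $\lambda_{\mathcal H}$, while from (1) above $\lambda_{\mathcal H}=\langle K_{\mathcal H}\,\cdot\,,K_{\mathcal H}\,\cdot\,\rangle_{H_{\mathcal H}}$ on this subspace; comparing the two isometric descriptions shows that the two-point function of $\omega_{\mathcal H}$ restricted to $\mathcal A(\mathcal S(\mathcal H^\pm))$ is $\int\overline{\mathcal F\psi}\,\mathcal F\psi'\,d\mu(k)\,d\mu_{\mathbb S^2}$. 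Since $\mathcal F\bigl(C_0^\infty(\mathcal H^\pm;\mathbb R)\bigr)$ generates a dense subspace of $L^2(\mathbb R\times\mathbb S^2,d\mu(k)\,d\mu_{\mathbb S^2})$ by surjectivity of $F^{(\pm)}_{(u)}$, the pair $\bigl(H^{\beta_H}_{\mathcal H^\pm},K^{\beta_H}_{\mathcal H^\pm}\bigr)=\bigl(L^2(\mathbb R\times\mathbb S^2,d\mu(k)\,d\mu_{\mathbb S^2}),\ \mathcal F|_{\mathcal S(\mathcal H^\pm)}\bigr)$ is a one-particle structure for the restricted state, which is therefore quasi-free with the asserted data.

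\textbf{(4) and the main obstacle.} For the KMS property I would first rewrite the weight as $d\mu(k)=8M^2\,\dfrac{k}{1-e^{-8\pi M k}}\,dk=r_S^2\,\dfrac{2k}{1-e^{-\beta_H k}}\,dk$ with $\beta_H=8\pi M$, so that its positive- and negative-frequency parts stand in the detailed-balance ratio $e^{-\beta_H|k|}$. The Killing field $\mp X|_{\mathcal H}$ induces on $\mathcal H^\pm$ the $u$-translation $\tau_t:u\mapsto u+t$, which acts on $K^{\beta_H}_{\mathcal H^\pm}\psi=\mathcal F\psi$ by multiplication with a phase $e^{\mp ikt}$, so $t\mapsto G^+_{\psi,\psi'}(t):=\langle K^{\beta_H}_{\mathcal H^\pm}(\tau_t\psi),K^{\beta_H}_{\mathcal H^\pm}\psi'\rangle=\int e^{-ikt}\,\overline{\mathcal F\psi}\,\mathcal F\psi'\,d\mu(k)\,d\mu_{\mathbb S^2}$. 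I would show that, for $\psi,\psi'\in\mathcal S(\mathcal H^\pm)$, this extends holomorphically to the strip $-\beta_H<\mathrm{Im}\,t<0$: the weight grows only linearly as $k\to+\infty$ (where the holomorphic factor $e^{-ikt}$ supplies decay) and decays like $k\,e^{-\beta_H|k|}$ as $k\to-\infty$, so that strip is exactly the domain of convergence, and a change of variable $k\mapsto-k$ using $\mathcal F\psi(-k)=\overline{\mathcal F\psi(k)}$ identifies the boundary value at $\mathrm{Im}\,t=-\beta_H$ with $G^-_{\psi,\psi'}(t):=\langle K^{\beta_H}_{\mathcal H^\pm}\psi',K^{\beta_H}_{\mathcal H^\pm}(\tau_t\psi)\rangle$; this is the KMS condition at inverse temperature $\beta_H=8\pi M$, and alternatively one may quote the standard characterization of quasi-free KMS states through the frequency split of the two-point function. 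The \textbf{hard part} is not this assembly but the analytic input concealed in Proposition \ref{Prop:properties_horizon}: computing the $u$-Fourier transform of $\lim_{\epsilon\to0^+}(U-U'-i\epsilon)^{-2}$ under the logarithmic change of coordinate $u=4M\ln|U|$, which is what generates the Bose factor $(1-e^{-8\pi M k})^{-1}$ and pins down the Hawking temperature; granting that lemma, together with the collected properties of $K_{\mathcal H}$ and Proposition \ref{Prop:properties_horizon}, the four statements follow by the bookkeeping indicated above.
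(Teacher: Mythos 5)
Your proposal is correct and follows essentially the route the paper intends (and which is carried out in the cited reference): build the one-particle structure from $K_{\mathcal{H}}$ and its reduction to $F_{(U)}$, deduce invariance from the unitary implementation of rotations and $U$-dilations on $H_{\mathcal{H}}$, collapse the decomposition \eqref{+_-_0} on $\mathcal{S}(\mathcal{H}^\pm)$ to identify the restricted one-particle structure with $(L^2(d\mu(k)\,d\mu_{\mathbb{S}^2}),\mathcal{F})$, and read off the KMS property from the detailed-balance form of $d\mu(k)$ via strip analyticity of the two-point function. You also correctly isolate the genuinely hard analytic input as the content of Proposition \ref{Prop:properties_horizon} (the Fourier transform of the kernel under $u=4M\ln|U|$), which both you and the paper take as given.
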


\vskip .3cm

\noindent\textbf{\em The Unruh state:} We can combine together the states defined at $\Im^-$ and at the horizon recalling that the full boundary $*$-algebra of observables is $\mathcal{A}_{\Im^-,\mathcal{H},}$, defined in \eqref{eq:Schwarzschil_full_boundary_algebra}. In other words we set $\omega_{\Im^-,\mathcal{H}}$ as map
\begin{equation}\label{eq:Schwarzschild_boundary_state}
\omega_{\Im^-,\mathcal{H}}: a\otimes b\mapsto\omega_{\mathcal{H}}(a)\omega_{\Im^-}(b),
\end{equation}
where $a\in\mathcal{A}(\mathcal{H})$, $b\in\mathcal{A}(\Im^-)$ while $\omega_{\mathcal{H}}$ and $\omega_{\Im^-}$ are defined respectively in Proposition \ref{Prop:state_on_the_horizon} and in Proposition \ref{Prop:State_on_past_null_infinity}. We can use \eqref{eq:Schwarzschild_boundary_state} in combination with the injection \eqref{eq:Schwarzschild_*_homomorphism} and with Definition \ref{Def:induced_state} to obtain the following:

\begin{definition}\label{Def:Unruh_State}
	Let $\mathcal{A}^{obs}_0(\mathcal{M})$ be the $^*$-algebra of observables for a massless, conformally coupled real scalar field in $\mathcal{M}$, the physical region of the Kruskal extension of Schwarzschild spacetime. We call {\bf Unruh state} $\omega_U$ on $\mathcal{A}^{obs}_0(\mathcal{M})$ the unique  linear extension of 
	$$\omega_U: \mathcal{A}^{obs}_0(\mathcal{M}) \ni a\mapsto \omega_{\Im^-,\mathcal{H}}(\iota_{\Im^-,\mathcal{H}}(a)),$$
	where $\omega_{\Im^-,\mathcal{H}}$ is the state \eqref{eq:Schwarzschild_boundary_state} while $\iota_{\Im^-,\mathcal{H}}$ is the injective $^*$-homomorphism \eqref{eq:Schwarzschild_*_homomorphism}.
\end{definition}

\noindent The Unruh state inherits several properties for the building blocks at the boundary, especially the property of being quasi-free. 
The most important properties of $\omega_U$ are summarized in the next theorem which recollects the main results of \cite{Dappiaggi:2009fx} where all proofs can be found:

\begin{theorem}\label{Th:Properties_Unruh_State}
	Let $\omega_U:\mathcal{A}^{obs}_0(\mathcal{M})\to\mathbb{C}$ be the Unruh state as per Definition \ref{Def:Unruh_State}. It enjoys the following properties:
	\begin{enumerate}
		\item It is invariant under the action on $\mathcal{A}^{obs}_0(\mathcal{M})$ of all spacetime isometries,
		\item $\omega_U$ is a Hadamard state in the sense of Definition \ref{Def:local_Hadamard_form}.
	\end{enumerate}
\end{theorem}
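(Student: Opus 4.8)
The plan is to establish the two claims of Theorem \ref{Th:Properties_Unruh_State} separately, both by combining the boundary-to-bulk machinery already developed. For part (1), the invariance under all spacetime isometries, I would argue exactly as in the two comments following Theorem \ref{Th:Hadamard_form_Asymptotically_Flat} and Theorem \ref{Th:Hadamard_form_Cosmological}. First one recalls that $\omega_U = \iota_{\Im^-,\mathcal{H}}^*\omega_{\Im^-,\mathcal{H}}$, and that, by the remark at the end of Section \ref{Sec:Schwarzschild}, the interplay between bulk isometries and the injection $\Gamma_{\Im^-,\mathcal{H}}$ (hence $\iota_{\Im^-,\mathcal{H}}$) is governed by the obvious generalization of Proposition \ref{prop:isometries_bulk_to_boundary_homomorphism}(b). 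Thus for any complete Killing field $\xi$ of $\mathcal{M}$ generating $\{\chi^\xi_t\}$, one has $\iota_{\Im^-,\mathcal{H}}\circ\alpha_{\chi^\xi_t} = \alpha_{\widetilde{\chi}^{\widetilde{\xi}}_t}\circ\iota_{\Im^-,\mathcal{H}}$, where on the boundary the induced automorphism acts factorwise on $\mathcal{A}(\Im^-)\otimes\mathcal{A}(\mathcal{H})$. The only spacetime isometries of $\mathcal{M}$ are the rotations of $\mathbb{S}^2$ and the flow of the Killing field $X=\partial_t$ (and the time reversal, which must be handled as an antilinear $^*$-automorphism). By Proposition \ref{Prop:State_on_past_null_infinity}(2) and Proposition \ref{Prop:state_on_the_horizon}(2), $\omega_{\Im^-}$ and $\omega_{\mathcal{H}}$ are each invariant under the one-parameter groups generated by the rotations and by $X|_{\Im^-}$, resp. $X|_{\mathcal{H}}$; hence $\omega_{\Im^-,\mathcal{H}}$ is invariant and therefore so is the pulled-back state $\omega_U$, by Definition \ref{Def:induced_state}.

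For part (2), the Hadamard property, the strategy is the same as in the sketch of proof of Theorem \ref{Th:Hadamard_form_Asymptotically_Flat}: by Theorem \ref{Th:global_to_local} it suffices to show $WF(\omega_{2,U})$ equals the set \eqref{eq:WF}. The inclusion $\supseteq$ is obtained exactly as there, using that the antisymmetric part of $\omega_{2,U}$ equals $\frac{i}{2}G_0$ (the causal propagator of $P_0$ on $\mathcal{M}$), that $WF(G_0)$ is \eqref{eq:WFprove2}, and that $WF(G_0)\subseteq WF(\omega_{2,U})\cup WF(\widetilde{\omega}_{2,U})$ together with the symmetry relating the two contributions. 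For the inclusion $\subseteq$ one writes $\omega_{2,U}(f,f') = \omega_{2,\mathcal{H}}(\widetilde{G}_0(f),\widetilde{G}_0(f')) + \omega_{2,\Im^-}(G_{\widetilde{\mathcal{M}}}(\Omega^{-3}f)|_{\Im^-}, G_{\widetilde{\mathcal{M}}}(\Omega^{-3}f')|_{\Im^-})$, i.e. $\omega_{2,U}$ is the sum of two pieces, each the composition of a boundary two-point distribution (whose kernel is of the form \eqref{eq:notable_2pt_function}, so its wavefront set is explicitly computable, being supported on future-pointing covectors along the $U$, resp. $v$, direction) with the kernel of a restriction-of-propagator map. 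One then invokes Theorem \ref{thm:kern} on composition of kernels, after the standard localization argument: fixing an open $O\subset\mathcal{M}$, the null geodesics emanating from $O$ meet $\mathcal{H}\cup\Im^-$ in a compact set, so one may insert a cutoff and reduce to a proper composition. The wavefront-set bound from Theorem \ref{thm:kern}, applied to $WF(\widetilde{G}_0)$ and $WF(G_{\widetilde{\mathcal{M}}}(\Omega^{-3}\cdot)|_{\Im^-})$ (both controlled via propagation of singularities, Theorem \ref{tps}, and the known $WF$ of the causal propagator), then yields precisely \eqref{eq:WFsubset} restricted to $O\times O$; since $O$ is arbitrary, the claim follows.

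The main obstacle, and the reason this case requires separate treatment, is controlling the interplay of the two boundary pieces $\mathcal{H}$ and $\Im^-$ simultaneously near the regions where the extended unphysical spacetime $\widetilde{\mathcal{M}}$ is most delicate — in particular the absence of a genuine future timelike infinity $i^+$ (replaced by the Schwarzschild curvature singularity) means one cannot invoke verbatim the limit-Cauchy-surface argument used for asymptotically flat spacetimes in Theorem \ref{Th:bulk_to_boundary}. The crux is already encapsulated in Proposition \ref{prop:Schwarzschild_bulk_to_boundary}, whose proof rests on the decay estimates of Dafermos and Rodnianski \cite{Dafermos} guaranteeing that the solution data on $\mathcal{M}$ restricted to $\Im^-\cup\mathcal{H}$ actually land in $\mathcal{S}(\Im^-)\oplus\mathcal{S}(\mathcal{H})$ and that no symplectic flux is lost. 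Granting Proposition \ref{prop:Schwarzschild_bulk_to_boundary}, the wavefront-set computation itself is a routine but lengthy bookkeeping: one must carefully track that the singularities of $\omega_{2,\mathcal{H}}$ and $\omega_{2,\Im^-}$ propagated back into the bulk through the (conformally rescaled) Klein--Gordon flow fill out only the positive-frequency half of $WF(G_0)$ and nothing more, in particular that no spurious singularities accumulate at the bifurcation surface $\mathsf{B}$ or along $\mathcal{H}_{ev}$; this is where the explicit splittings \eqref{+_-_0} and \eqref{dec0scri} and the continuity of $K_{\mathcal{H}}$, $K_{\Im^-}$ in the Sobolev norms \eqref{norme}, \eqref{normeScri} do the essential work, and for the full details one refers to \cite{Dappiaggi:2009fx}.
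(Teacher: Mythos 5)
Your proposal follows the same route as the paper, which itself only sketches the argument and defers the full proof to \cite{Dappiaggi:2009fx}: item (1) via the Schwarzschild analogue of Proposition \ref{prop:isometries_bulk_to_boundary_homomorphism} combined with Propositions \ref{Prop:State_on_past_null_infinity} and \ref{Prop:state_on_the_horizon}, and item (2) via the decomposition of $\omega_{2,U}$ into the $\mathcal{H}$- and $\Im^-$-pieces, Theorem \ref{thm:kern}, and the localization/propagation-of-singularities bookkeeping. The only slip is the parenthetical on time reversal: $t\mapsto -t$ sends the black-hole region $\mathcal{B}$ onto the white-hole region and hence is not an isometry of $\mathcal{M}$, so it need not be implemented; the isometry group relevant to item (1) is generated by the rotations and the flow of $X$ alone.
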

\noindent We comment only that the proof of this theorem concerns mainly item {\em 2.} since the first one holds true almost automatically per construction. The only point, which needs to be addressed, is a
  generalization of Proposition \ref{prop:isometries_bulk_to_boundary_homomorphism} to the case in hand. 
From a physical point of view $\omega_U$ represents a state which becomes ground at past 
null infinity, while its restriction to $\mathcal{H}^\pm$ is a thermal state at the Hawking temperature, as per Proposition \ref{Prop:state_on_the_horizon}. These properties justify calling $\omega_U$ the
 Unruh state. It is worth mentioning that $\omega_U$ has manifold applications. On the one hand, 
 together with its smooth perturbations, it is believed to be the natural candidate to be used in the description of the gravitational collapse of a spherically symmetric state.
  On the other hand it represents the key ingredient to apply the construction of Fredenhagen and Haag, which proves the appearance of Hawking radiation at future 
  null infinity $\Im^+$, \cite{Fredenhagen:1989kr}. \\ Another state related with Hawking radiation phenomenon is the thermal equilibrium state called {\em Hartle-Hawking-Israel} 
  state which is defined in the whole Kruskal manifold. Its uniqueness was
   established by Kay and Wald in \cite{KW} assuming it satisfies the Hadamard property. Only recently Sanders proved the existence of such state (and more generally, the analogue for spacetimes with bifurcate Killing horizons assuming some further symmetry hypothesis) showing also that it is of Hadamard type \cite{Ko}. Noteworthy are also the results of Brum and Jor\'as \cite{BrumJoras} who have used a bulk to boundary procedure to construct an Hadamard state in the Schwarzschild-de Sitter spacetime.

\section{Some further applications}

Goal of this last section is to discuss some of the physical properties of the states constructed with the pullback procedure presented in Section \ref{se:pullback}.
In particular, we shall see that, whenever it is available, the mode decomposition of the two-point functions of the states pulled back from the null boundary has a distinctive character.

\subsection{Asymptotic vacua and scattering states}

Starting from the observation that the state $\omega_{M}$ for a massless real scalar field on Minkowski spacetime, constructed in \eqref{eq:bulk_state_asymptotically_flat} coincides with the Poincar\'e vacuum, we elaborate further how, more generally, $\omega_{M}$, as discussed in section \ref{se:pullback} can be understood as an asymptotic vacuum state. 
To this end, let us focus once more on the four dimensional Minkowski spacetime $\mathbb{M}^4\equiv(\mathbb{R}^4,\eta)$. In addition we consider the standard reference frame for which the metric takes its canonical form
 \[
\eta = - dt\otimes dt + \sum_{i=1}^3 dx^i\otimes dx^i\:.
 \]
Let $f\in C^\infty_0(\mathbb{R})$ and let $\phi_f=G_0(f)$ where $G_0$ is the causal propagator for the massless Klein-Gordon operator $P_0$ as per Proposition \ref{prop:causal_propagator}. The associated mode decomposition reads
\begin{equation}\label{eq:mode-mink}
\phi_f(t,\vec{x}) = \frac{1}{(2\pi)^{\frac{3}{2}}} \int\limits_{\mathbb{R}^3} \left( e^{-i |\vec{k}| t + i\vec{k}\cdot\vec{x}} \widetilde{\phi}_+(\vec{k}) +e^{i |\vec{k}| t - i\vec{k}\cdot\vec{x}} \overline{\widetilde{\phi}_+(\vec{k})} \right) \frac{d^3\vec{k} }{\sqrt{2|\vec{k}|}} 
\end{equation}
where $k=|\vec{k}|$. In order to relate $\phi_f$ with its value on future null infinity $\Im^+$, we can make use of the injection map $\Gamma_{\Im}$ introduced in Theorem \ref{Th:bulk_to_boundary}. As an alternative procedure we can take directly a limit towards $\Im^+$ by considering the standard spherical coordinates 
\[
\vec{x} = (r\sin\theta \cos \varphi, r \sin\theta\sin\varphi,r\cos\theta)
\]
and then defining the  null coordinates $u\doteq t-r$, $v\doteq t+r$. It holds that 
$$ 
(\Gamma_{\Im}\phi_f)(u,\theta,\varphi) = \lim_{v\to +\infty} \frac{v}{2}\phi(u,v,\theta,\varphi) \:.
$$
To unveil an interplay between this limit and the mode decomposition in \eqref{eq:mode-mink}, we can use \cite[Lemma 4.1]{Dappiaggi:2005ci}, which ensures that 
\begin{gather*}
(\Gamma_{\Im}\phi_f)(u,\theta,\varphi)=
\frac{1}{\sqrt{2\pi}}
\int_0^\infty \left( e^{-iku} \,  \widetilde{\phi}_+(k,\theta,\varphi) + e^{iku} \,   \overline{\widetilde{\phi}_+(k,\theta,\varphi)} \right) \sqrt{\frac{k}{2}}\, dk\;
\end{gather*}
where $\widetilde{\phi}_+(k,\theta,\varphi)$ is nothing but $\widetilde{\phi}_+$ as in \eqref{eq:mode-mink}, though rewritten in spherical coordinates and with $k=|\vec{k}|$. In an analogous way, the relation \eqref{eq:bulk_two_point_asymptotically_flat} among the bulk and the boundary two-point functions can be rewritten  more explicitly as
\begin{gather*}
\omega_{2,\mathbb{M}^4}(f,f^\prime)=\langle\phi_+|\phi^\prime_+\rangle_{\mathbb{M}^4}=\\
= \int\limits_{\mathbb{R}^+\times \mathbb{S}^2}  \; k^2\: \:
 \overline{\widetilde{\phi_+}(\vec{k})} \widetilde{\phi^\prime_+}(\vec{k}) \:dk  d\mu_{\mathbb{S}^2}
 =\omega_{\Im}([\Gamma_{\Im}([f])][\Gamma_{\Im}([f^\prime])])\:.
\end{gather*}
where both $\widetilde{\phi}$ and $\widetilde{\phi^\prime}$ can be obtained from $\phi_f$ and $\phi_{f^\prime}$ inverting \eqref{eq:mode-mink}. Here $\mu_{\mathbb{S}^2}$ is the standard measure on the unit $2-$sphere.
The first equality in the previous chain of equations is the standard relation between the two-point function and a scalar product defined intrinsically on $\Sigma_0$, here taken to be the hypersurface in $\mathbb{M}^4$ at $t=0$. More precisely $\widetilde{\phi}_+$ can be equivalently read as the spatial Fourier transform of the initial data of $\phi_f$ on $\Sigma_0$.

Observe that, consistently with the idea of considering a ground state, the negative frequencies in \eqref{eq:mode-mink} do not play any role in the previous product. If we now repeat this analysis on an asymptotically flat spacetime $M$ as per Definition \ref{Def:asymptotically_flat}, we notice that in analogy to the Minkowski case, the negative frequency part with respect to the coordinate associated to the vector field $n$ on $\Im^+$ are suppressed in $\omega_{2,\Im}$.
This statement holds true also in a neighbourhood of $\Im^+$ in the unphysical spacetime $\widetilde{M}$, provided that the Bondi coordinates are meaningful. In other words
$$ 
(\Gamma_{\Im}\phi_f)(u,\theta,\varphi) = \lim_{\Omega\to 0} \frac{1}{\Omega}\phi_f(u,\Omega,\theta,\varphi) \:.
$$
Hence, by continuity, we infer that in the bulk two-point function, computed as a scalar product on a Cauchy surfaces $\Sigma$, the negative frequency part 
tends to vanish in the limit where $\Sigma$ tends to $\Im^+$. For this reason the state we are considering can be understood as an asymptotic vacuum and its two-point function $\omega_M(x,x^\prime)$ approximate more and more the Poincar\'e vacuum when both points $x,x^\prime$ tend to $\Im^+$.

\subsection{Applications to cosmology}

Let us here consider instead a four-dimensional,  Friedmann-Robertson-Walker spacetime $(M,g_{FRW})$ with flat simply-connected spatial sections, whose metric is given in \eqref{metric} with $\kappa=0$.
We recall that, also in that case, any solution $\phi$ of the Klein-Gordon equation with compactly supported, smooth, initial data can be decomposed in modes as
\[
\label{bosondecomposition}
\phi(\tau,\vec{x})= \int\limits_{\mathbb{R}^3}
\left( \phi_{\vec{k}}(\tau,\vec{x}) \widetilde{\phi}(\vec{k})      + \overline{\phi_{\vec{k}}(\tau,\vec{x}) \widetilde{\phi}(\vec{k})}
\right)
d^3k,
\]
where $\tau$ is the conformal time, while 
\[
\phi_{\vec{k}}(\tau,\vec{x}) = \frac{ T_{k}(\tau)   e^{i\vec{k}\vec{x}}}{(2\pi)^{3/2}a(\tau)}.
\]
Furthermore, $T_k(\tau)$ is a complex solution of the differential equation
\begin{equation}\label{eq:tkmodes}
(\partial^2_\tau +k^2+a^2m^2)T_{k} = 0\,,
\end{equation}
which abides to the following normalisation condition
\[
T_k\partial_\tau\overline{T_k} - \overline{T_k}\partial_\tau T_k \equiv i\;.
\]
We are also assuming that the functions $k\mapsto T_k(\tau)$ and $k\mapsto \partial_\tau T_k(\tau)$
are both polynomially bounded for large $k$ uniformly in $\tau$ and they are $L^2_{loc}([0,+\infty),kdk)$ for every $\tau \in \mathbb{R}$.
As shown in \cite{Dappiaggi:2007mx, Pinamonti:2011} the modes $T_k$ fulfilling the above assumptions can be concretely constructed in a large class of spacetimes by means of a converging perturbative/Dyson series, built out of the modes for the massless theory, which are known explicitly. Uniqueness is not guaranteed and the available freedom can be used to decompose the two-point function $\omega_2$ of any pure, quasi-free state on $(M,g_{FRW})$ which is invariant under the action of the three-dimensional Euclidean group, classifying the isometries of the metric at any, fixed value of $\tau$. Following the pioneering work of \cite{LuRo}, in which the work of Parker  \cite{Parker} about adiabatic states is made rigorous, 
(see also \cite{Junker, Junker:2001gx, Olbermann:2007gn, Avetisyan:2012wq}), 
it holds that 
\begin{equation}\label{eq:two-pt-luders-roberts}
\omega_2(x,x^\prime):= \frac{1}{(2\pi)^3} \int\limits_{\mathbb{R}^3}   \frac{\overline{T_k}(\tau)}{a(\tau)}\frac{T_k(\tau^\prime)}{a(\tau^\prime)} e^{i{\bf k}\cdot(\vec{x}^\prime-\vec{x})}     d\vec{k} \;,
\end{equation}
for a given function $T_k$, solution of \eqref{eq:tkmodes}. 
Notice that,  $(M,g_{{FRW}})$ being conformally flat, if the limit $\tau\to-\infty$ can be taken in $M$, then the past boundary of $M$ is conformally related to $\Im^-$, part of the conformal boundary of Minkowski spacetime. In this case, 
a particularly notable choice for the modes $T_k$ can be made imposing the following initial conditions 
\begin{equation}\label{eq:initial_conditions_modes}
\lim_{\tau\to-\infty} e^{ik\tau}T_k(\tau) = \frac{1}{\sqrt{2k}} \;, \qquad
\lim_{\tau\to-\infty} e^{ik\tau}\partial_\tau T_k(\tau) = -i\sqrt{\frac{k}{2}} \;.
\end{equation}
Under suitable geometric hypotheses we can make use once more of a bulk-boundary procedure. We obtain via pull-back as per \eqref{eq:bulk_two_point_cosmological} a two-point function which coincides with $\omega_2$ in  \eqref{eq:two-pt-luders-roberts}, built out of modes $T_k$ which satisfy the initial conditions \eqref{eq:initial_conditions_modes}.
This is indeed the case when $(M,g_{FRW})$ is asymptotically de Sitter in the limit $\tau\to-\infty$ as discussed in  \cite{Dappiaggi:2007mx} or when
 $(M,g_{FRW})$ possesses a null initial singularity as $\tau\to-\infty$, see {\em e.g.} \cite{Pinamonti:2011}.  Hence, combining this observation with the results of Theorem \ref{Th:Hadamard_form_Cosmological} for the case of asymptotically de Sitter spacetimes or Th.3.1 in \cite{Pinamonti:2011} for the case of spacetimes with null initial singularities, it holds that 
the quasi-free state constructed with the two-point function \eqref{eq:two-pt-luders-roberts} with the initial conditions \eqref{eq:initial_conditions_modes} for the modes if of Hadamard form. We shall now discuss applications of this kind of states in the literature.

\subsubsection{Quantum Fluctuations}
It is strongly believed that, after the Big Bang, the Universe experienced a phase of rapid expansion, known as inflation. The spacetime during that phase is best modelled by the flat patch of a de Sitter manifold with a very large curvature if compared to the present one. It is furthermore believed that the tiny quantum fluctuations of the matter-gravity coupled system have been inflated to become classical ripples in the spacetime. These perturbations can be detected in the cosmic background radiation. 

A careful analysis of this mechanism has been accounted for by Mukhanov, Feldman and Brandenberger \cite{Mukhanov}, see also the recent analysis \cite{Hack:2014} for the case of a scalar Klein-Gordon field. In the latter, the scalar modes of the matter-gravity fluctuations are described by a single scalar field called Mukhanov-Sasaki variable $\Psi$. Furthermore, at the first order in perturbation theory,  $\Psi$ is decoupled from the tensor modes and it satisfies a free linear wave-like equation over the background. More precisely, it holds
\begin{equation}\label{eq:mukhanov-sasaki}
\left( \Box  -\frac{R}{6}  + \frac{\overline{z}''}{a^2 \overline{z}}\right)\Psi=0, \qquad \overline{z} \doteq \frac{\phi'}{H}
\end{equation}
where $R$ is the scalar curvature of the background Friedmann-Robertson-Walker metric, while $H$ is its Hubble parameter. Furthermore, $\overline{z}''$ is the second derivative of $\overline{z}$ with respect to the conformal time, while $\phi$ is a background solution of the Klein-Gordon equation. Typically, the quantum state in which these fluctuations are originated is assumed to be the Bunch-Davies state \cite{BD}. 

However, it is an over idealization to assume that the background is modelled exactly by the expanding flat patch of the de Sitter spacetime and, at the same time, in \eqref{eq:mukhanov-sasaki}, it is present a time dependent potential. For these reasons, the selection of the state 
used in the analysis is done assuming that its two-point functions tends to the  Bunch Davies one in the limit $\tau\to-\infty$. This procedure, advocated in the literature, see e.g. \cite{Mukhanov} shares several common features with the construction outlined in Section \ref{se:pullback}. To wit one should consider a two-point function $\omega_2$ as the one given in \eqref{eq:two-pt-luders-roberts} with the requirements \eqref{eq:initial_conditions_modes}, aimed at choosing the modes to be a solution of \eqref{eq:tkmodes} with a time dependent square mass equal to $-{\overline{z}''}/{(a^2 \overline{z})}$ descending from  \eqref{eq:mukhanov-sasaki}. The bulk-to-boundary procedure can be adapted to discuss also this situation and, on account of Theorem \ref{Th:Hadamard_form_Cosmological}, the ensuing state is of Hadamard form. In turn this entails that all quantum fluctuations are finite.  

\subsubsection{Semiclassical backreaction}
The back-reaction of a free scalar quantum fields on the classical curvature was thoroughly analysed by Anderson, see in particular \cite{Anderson3, Anderson4}. 
In these papers, the backreaction is taken into account by means of the semiclassical Einstein equations which in natural units are
\begin{equation}\label{eq:semiclassical}
G_{\mu\nu}(x)=8\pi \langle T_{\mu\nu}(x)\rangle_\omega
\end{equation}
where $G_{\mu\nu}$ is the Einstein tensor of the spacetime $(M,g)$ and $\langle T_{\mu\nu}\rangle_\omega$ is the expectation value of
the (renormalized \cite{Moretti:2001qh,Hack:2012qf}) stress-energy tensor operator $T_{\mu\nu}(x)$ associated to the quantum scalar field theory in a state $\omega$.
In the work of Anderson, the state of the scalar field is assumed to be quasi-free and, moreover, its two-point function is taken to be of the from 
\eqref{eq:two-pt-luders-roberts}. The modes $T_k$ are chosen prescribing their asymptotic behavior as \eqref{eq:initial_conditions_modes}. The ensuing state can be considered a bona-fide asymptotic vacuum in the past.  

More recently in \cite{Pinamonti:2011}, see {\it e.g.} \cite{PinamontiSiemssen} it is shown that the problem of existence of solutions of the semiclassical Einstein equations on a cosmological spacetime can be addressed with fixed point methods once a large class of spacetime with a uniform and coherent choice for the states can be selected. Furthermore, the latter need to be such that $\langle T_{\mu\nu}(x)\rangle_\omega$ is meaningful. 
Even on homogeneous and isotropic spacetimes like $(M,g_{FRW})$ such selection is a daunting task, especially if one aims at analysing the properties of the state on a single Cauchy surface in such a way that the Hadamard condition holds ture.   

The bulk to boundary construction of states described in Section \ref{se:pullback} yields a practical tool to circumvent this problem. In particular in \cite{Pinamonti:2011} this method has been applied to assign asymptotic vacuum states to a class of cosmological spacetimes which possess a past null boundary, see also \cite{Hack:2010}. The chosen state is quasi-free state and the two-point function is of the form \eqref{eq:two-pt-luders-roberts}. The modes $T_k$ are chosen once more in such a way that the initial condition at $\tau\to\infty$ are those in \eqref{eq:initial_conditions_modes}. As discussed above, an application of the bulk to boundary methods described so far allows to prove that the ensuing states are of Hadamard form, see e.g. Theorem 3.1 in \cite{Pinamonti:2011}.

\subsubsection{Approximate KMS state in cosmology}

Up to now in this section we have considered only states which are approximate vacuum states at their boundaries. However, the bulk to boundary techniques discussed so far can be employed to construct states which have different asymptotic behaviour. In a few words, the pull-back of boundary states whose two-point function differs from  $\omega_{2,\Im}$ in \eqref{eq:notable_2pt_function} by a smooth two-point function induces states in the bulk which are again of Hadamard form, as one can infer adapting the proof of Theorem \ref{Th:Hadamard_form_Cosmological}. 
Similar ideas have been used in \cite{DHP} to construct states in cosmological spacetimes which can be interpreted as approximated KMS states near their boundaries. On the past null boundary $\Im^-$, equipped with an asymptotic tangent null field $n$, $\omega_{2,\Im}$ in \eqref{eq:notable_2pt_function}
is constructed as the two-point function of the vacuum with respect to the rigid translations generated by $n$.
In other words, the boundary two-point function $\omega_{2,\Im}$ can be obtained out of the scalar product $\mu$ \eqref{eq:boundary_scalar_product}, which we recall here being
\[
%
		\mu(\psi,\psi^\prime)= \mbox{Re} \int\limits_{\mathbb{R}\times\mathbb{S}^2}2k\Theta(k)\overline{\widehat{\psi}(k,\theta,\varphi)}\,\widehat{\psi}^\prime(k,\theta,\varphi) \:dkd\mu_{\mathbb{S}^2}(\theta,\varphi)\:.
\]
In order to opt for a different class of states, for example those appearing  
as thermal with respect to the translations generated by $n$, we have to alter the form of $\mu$. In particular, if we fix the inverse temperature $\beta$, it holfs
\[
\mu_\beta(\psi,\psi^\prime)= \mbox{Re} \int\limits_{\mathbb{R}\times\mathbb{S}^2}\frac{2k}{1-e^{-\beta k}}\;   \overline{{\widehat\psi}(k,\theta,\varphi)} \widehat\psi^\prime(k,\theta,\varphi)   \;  dk\,d\mu_{\mathbb{S}^2}(\theta,\varphi).
\]
The quasi-free state $\omega_\Im^\beta$ for $\mathcal{A}(\Im)$ constructed out of the two-point function defined out of $\mu_\beta$ satisfies the following properties whose validity is proven in \cite[Prop. 4.1]{DHP}:
\begin{itemize}
	\item[a)] $\omega_\Im^\beta$ is invariant under rotations and translations generated by $n$; 
	\item[b)] $\omega_\Im^\beta$ is a KMS (Kubo-Martin-Schwinger) state at inverse temperature $\beta$ with respect to the translations generated by $n$;  
	\item[c)] In the limit $\beta\to\infty$, $\omega_\Im^\beta$ converges weakly to $\omega_\Im$.
\end{itemize}
We recall here that states that satisfy the KMS condition can be seen as describing thermodynamic equilibrium. The analysis of this condition and of its implications can be found in \cite{Bratteli:1979tw,Bratteli:1996xq}.  
The pull-back of this state by means of $\iota_c$ as in \eqref{eq:bulk_state_cosmological} yields
\[
\omega^\beta_{M}:\mathcal{A}^{obs}(M)\to\mathbb{C},\quad \omega^\beta_{M} \doteq\omega^\beta_\Im\circ \iota_c\:.
\]
This state is per construction quasi-free and, at the level of two-point function, in analogy to \eqref{eq:bulk_two_point_cosmological}, it holds that 
\begin{equation}\label{eq:two-point-bulk-thermal}
\omega^\beta_{2,M}(f,f^\prime) = \omega^\beta_\Im([\Gamma_\Im([f])][\Gamma_\Im([f^\prime])]).
\end{equation}
We have the following result which is proven in \cite[Th. 4.1]{DHP}
\begin{theorem}
The state $\omega^M_\beta:\mathcal{A}(M)\to\mathbb{C}$ is of Hadamard form.
\end{theorem}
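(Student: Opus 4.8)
The strategy is to imitate the proof of Theorem~\ref{Th:Hadamard_form_Cosmological} (the sketch given after that statement), exploiting that the difference between the thermal boundary two-point function and the vacuum one is smooth. Concretely, by Theorem~\ref{Th:global_to_local} it suffices to show that $WF(\omega^\beta_{2,M})$ coincides with the set \eqref{eq:WF}. First I would write $\mu_\beta - \mu = \mathrm{Re}\int_{\mathbb{R}\times\mathbb{S}^2} 2k\big(\tfrac{1}{1-e^{-\beta k}}-\Theta(k)\big)\overline{\widehat\psi}\,\widehat\psi'\,dk\,d\mu_{\mathbb{S}^2}$ and observe that the multiplier $m_\beta(k) \doteq 2k\big(\tfrac{1}{1-e^{-\beta k}}-\Theta(k)\big)$ is smooth and rapidly decreasing as $|k|\to\infty$ (it decays exponentially on both half-lines: for $k\to+\infty$ one has $\tfrac{1}{1-e^{-\beta k}}-1 = O(e^{-\beta k})$, and for $k\to-\infty$ one has $\tfrac{1}{1-e^{-\beta k}} = O(e^{\beta k})$). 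Hence the integral kernel of $\omega^\beta_{2,\Im}-\omega_{2,\Im}$ on $\Im\times\Im$ is, after Fourier transforming back in $u$, given by $\delta(s,s')$ times the inverse Fourier transform of a Schwartz function, i.e. a smooth kernel along the null direction times $\delta(s,s')$. Thus $WF(\omega^\beta_{2,\Im}-\omega_{2,\Im})$ is contained in the conormal bundle of the diagonal in the sphere directions only, with the covectors in the $u$-direction vanishing: $WF(\omega^\beta_{2,\Im}-\omega_{2,\Im})\subseteq\{(u,s,u',s';0,k_s,0,-k_s)\}$.

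Next I would show that composing this smooth-in-$u$ correction with the maps $\chi\widetilde G\otimes\chi\widetilde G$ (the restriction-to-$\Im^-$ of the causal propagator, cut off to the compact region $C$ where the null geodesics emanating from an open $O\subset M$ meet $\Im^-$, exactly as in the sketch of Theorem~\ref{Th:Hadamard_form_Cosmological}) produces a smooth contribution to $\omega^\beta_{2,M}|_{O\times O}$. This is a wavefront-set computation using Theorem~\ref{thm:kern}: one needs the transversality condition $WF'(\chi\widetilde G)_{\Im^-}\cap WF(\omega^\beta_{2,\Im}-\omega_{2,\Im})_{\Im^-}=\emptyset$. Since $\widetilde G = G|_{\Im^-}$ has a wavefront set lying over null covectors that are everywhere transverse to $\Im^-$ (they have nonzero component along the null generator $\partial_u$, because $G$ propagates along bulk null geodesics that hit $\Im^-$ transversally), while the correction's wavefront set has vanishing $u$-component, the two are disjoint and Theorem~\ref{thm:kern} applies, giving $WF\big((\omega^\beta_{2,M}-\omega_{c,M})|_{O\times O}\big)=\emptyset$, i.e. $\omega^\beta_{2,M}-\omega_{c,M}$ is smooth. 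Since $\omega_{c,M}$ is already Hadamard by Theorem~\ref{Th:Hadamard_form_Cosmological}, and adding a smooth symmetric kernel preserves both the wavefront set \eqref{eq:WF} and the defining properties of a quasi-free state (Proposition~\ref{propomega2}), the conclusion $WF(\omega^\beta_{2,M})=WF(\omega_{c,M})$ follows, hence $\omega^\beta_M$ is Hadamard.

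Two subsidiary points need checking. One must verify that $\mu_\beta$ genuinely defines a state, i.e. that it is a real scalar product on $\mathcal{S}_c(\Im)$ satisfying the positivity bound \eqref{ineqs} with respect to $\sigma_{\Im_c}$; this is Proposition~4.1(b) of \cite{DHP} which I would invoke, the key fact being $\tfrac{2k}{1-e^{-\beta k}}\ge 0$ together with the same reality argument $\widehat\psi(-k,s)=\overline{\widehat\psi(k,s)}$ used in Proposition~\ref{prop:boundary_scalar_product}. One also must confirm that the correction kernel, although only known a priori on $C_0^\infty$, extends to the relevant space $\mathcal{S}_c(\Im)$ and that the composition with $\widetilde G$ makes sense on the images $\Gamma_c([f])$; this follows because $\Gamma_c(\mathcal{E}^{obs}(M))\subseteq\mathcal{S}_c(\Im)$ by Theorem~\ref{th:cosmological_bulk_to_boundary_compatibility}(a) and the correction is continuous in the relevant Sobolev-type topology.

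The main obstacle is the careful bookkeeping of the wavefront set of $\widetilde G = G|_{\Im^-}$ and the verification that the composition hypothesis of Theorem~\ref{thm:kern} holds after inserting the cut-offs $\chi$ adapted to the compact intersection region $C\subset\Im^-$; this is the technically delicate step, but it is precisely the argument already carried out in \cite{Dappiaggi:2008dk} for $\omega_{c,M}$ and reused in the sketch of Theorem~\ref{Th:Hadamard_form_Cosmological}, so here it only has to be supplemented with the trivial observation that a term which is smooth along the null direction contributes nothing to the wavefront set of the composed distribution.
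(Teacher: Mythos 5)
Your proposal is correct and follows essentially the same route the paper takes: the paper reduces the claim to the observation that $\omega^\beta_{2,\Im}$ differs from the vacuum boundary two-point function $\omega_{2,\Im}$ by a kernel that is smooth along the null direction, and then reuses the wavefront-set/composition argument of Theorem~\ref{Th:Hadamard_form_Cosmological} (deferring the details to \cite[Th.~4.1]{DHP}), exactly as you do. The only slip is your claim that the multiplier $m_\beta(k)=2k\bigl(\tfrac{1}{1-e^{-\beta k}}-\Theta(k)\bigr)$ is smooth — it is merely continuous at $k=0$ because of the $\Theta(k)$ term — but since the exponential decay of $k^n m_\beta(k)$ for every $n$ already guarantees that its inverse Fourier transform is a smooth function of $u-u'$, the argument is unaffected.
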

We notice here that the two-point function \eqref{eq:two-point-bulk-thermal} admits the following mode decomposition 
\begin{equation}\label{eq:pullbacktwo-point-function}
\omega_{2,\beta}(\tau_x,\vec{x},\tau_y,\vec{y})
=
\frac{1}{(2\pi)^3 a(\tau_x)a(\tau_y)}
\int\limits_{\mathbb{R}^3}
\left(  \frac{\overline{T_k}(\tau_x) T_k(\tau_y)}{1-e^{-\beta k}}   +
\frac{{T_k}(\tau_x) \overline{T_k}(\tau_y)}{e^{\beta k}-1}              \right)  e^{-i\vec{k}{(\vec{x}-\vec{y})}}   \;d^3k\;,
\end{equation}
where $T_k(\tau)$ are the modes which satisfy \eqref{eq:tkmodes} and the initial conditions \eqref{eq:initial_conditions_modes}.
In the case of a conformally coupled massless theory the state $\omega^M_\beta$ constructed with the bulk-to-boundary procedure discussed so far coincides with a conformal KMS state, namely the one which satisfies the KMS condition with respect to the one parameter group of $^*-$automorphism generated by the conformal Killing vector field
\[
K = a(t)\frac{\partial}{\partial t}.
\] 
In the case of a massive scalar field theory, the state obtained by pull-back under the bulk-to-boundary map is no longer of equilibrium with respect to a given time evolution. Despite of this fact the spectrum of the two-point function, as seen in \eqref{eq:pullbacktwo-point-function} is very close to the one of an equilibrium state in flat spacetime. For this reason they are called approximate equilibrium states. They can hence be used to describe massive particles in an almost thermal equilibrium.

\subsection{Hadamard states and Hawking radiation}

In \cite{Fredenhagen:1989kr} Fredenhagen and Haag have shown that the renown Hawking radiation can be obtained under mild assumptions on the state which describes a scalar quantum field in a spacetime which experiences a spherically symmetric collapse to a Schwarzschild black hole. 
In particular, one of these assumptions requires that the state is Hadamard in the neighbourhood of the future horizon in region $\mathcal{B}\cup\mathcal{W}$ in Figure \ref{fig2}. In addition, towards $\Im^-$, the state must be an asymptotic vacuum. 
In the late eighties, no example of such kind of states was given. Nowadays we know that the Unruh state satisfies all these properties and in particular it is an Hadamard state as seen in Theorem \ref{Th:Properties_Unruh_State}.

More recently in \cite{MP12}, the key ingredients for obtaining a state which describes the appearance of the Hawking radiation have been obtained analysing only certain local properties of the state near the future horizon. In particular, the thermal nature in the exact counterpart at the level of the correlation functions for pairs of points of the manifold which are separated by a Killing horizon. Furthermore, these correlations allow to estimate the tunneling probability of particles across the horizon. This idea was originally devised by Parikh and Wilczek in \cite{ParikhWilczek}. 
The analysis in \cite{MP12} is valid in a region $\mathcal{O}$ which contains a portion of a bifurcated Killing horizon $\mathcal{H}$.
Furthermore, the two-point function of the quantum state is assumed to satisfy a condition weaker than the Hadamard one
\[
\omega_2(x,x') = \lim_{\epsilon\to0} \frac{U(x,x')}{\sigma_\epsilon(x,x')}+ w_\epsilon(x,x')\:, \label{ultra}
\]
where $U$ is proportional to the van Vleck-Morette determinant, see {\it e.g.} \eqref{eq:Hadamard_parametrix}, $\sigma_\epsilon$ is the regularized halved, squared geodesic distance between the points $x$ and $x'$, while $w_\epsilon$ is a distribution which is less singular then $\sigma_\epsilon^{-1}$.
If $\mathcal{O}$ is sufficiently small, it is possible to parametrize its points with a coordinate patch $(V,U,x_3,x_4)$ adapted to the horizon in the following way: The horizon $\mathcal{H}$ can be seen as a congruence of null-geodesics, $U$ is an affine parameter along the null geodesics forming the horizon, $(x_3,x_4)$ are spatial coordinates on $\mathcal{H}$, while $U$ is fixed in such a way that the metric $g$ takes the following form 
\begin{equation}\label{eq:metric-near-horizon}
\left.g\right|_{{\cal H}} = - \frac{1}{2}dU \otimes dV -  \frac{1}{2}dV \otimes dU + \sum_{i,j=3}^4 h_{ij}(x_3, x_4) dx^i\otimes  dx^j 
\end{equation}
where the metric $h$ is the one induced by $g$ on the spatial sections of $\mathcal{H}$. This coordinate system is exactly the Bondi frame introduced in Section \ref{Sec:Example} for the case of the horizons of Schwarzschild spacetime. 
We can now analyse the scaling properties of the two-point function $\omega_2(x,x')$ when both points $x,x'$ tend to the horizon $\mathcal{H}$. 
In particular, we evaluate
\[
\lim_{\lambda\to 0^+} \omega_2\left(f_\lambda, f'_\lambda\right)  
\]
where, for $\lambda >0$,
\[
f_\lambda(V,U,x_3,x_4) \doteq \frac{1}{\lambda}f \left(\frac{V}{\lambda},U,x_3,x_4\right) \;, \quad 
f'_\lambda(V,U,x_3,x_4) \doteq \frac{1}{\lambda}f' \left(\frac{V}{\lambda},U,x_3,x_4\right)
\]
where the compactly supported functions $f,f'$ are centered around $x,x'$ and do not posses zero modes. In other words
\[
f = \frac{\partial F}{\partial V}\;, \quad  f' = \frac{\partial F'}{\partial V}\:, \quad \mbox{for given $F,F' \in C_0^\infty(\mathcal{O})$.}\label{f2}
\] 
As proven in \cite[Th. 3.1]{MP12} it holds that 
\begin{equation}
\lim_{\lambda\to 0^+} \omega_2\left(f_\lambda, f'_\lambda\right)  = \lim_{\epsilon\to0^+}  -\frac{1}{16\pi} \int\limits_{\mathbb{R}^4\times {\cal B}}  \frac{F(V,U,s) F'(V',U',s)}{(V- V'-i\epsilon)^2}  dU dV dU' dV' d\mu(s)\:.
\label{correlations}
\end{equation}
where $s$ is a compact notation for $(x_3,x_4)$ and $\mu(s)$ is the measure induced by $h$ on $\mathcal{H}$. The limit obtained so far is exactly the two-point function of the state $\omega_{\Im}$ introduced in Definition \ref{Def:boundary_quasi_free_state}.
We outline a few physical properties of this limit in two notable cases. 

When the supports of $f_\lambda$ and $f'_\lambda$ lie in $J^-{(\mathcal{H})}\cap \mathcal{O}$ it holds that 
\[
\lim_{\lambda\to 0 }  \omega(\Phi(f_\lambda)\Phi(f'_\lambda)) = \frac{1}{32} \int_{\mathbb{R}^2\times {\cal B}} \left(\int_{-\infty}^\infty 
 \frac{\overline{\widehat{F}(E,U,x)}  \widehat{F'}(E,U',x) }{1-e^{-\beta_H E}} E dE \right) dU dU' 
d\mu(x) \;,
\]
where $\beta_H={2\pi}/{\kappa}$ is the {\em inverse Hawking temperature}, while $\widehat{F}$ is the Fourier Plancherel transform along $\tau=v =\log(V)/\kappa$ where $\kappa$ is the surface gravity. The thermal nature of the resulting limit can be inferred form the Bose factor $({1-e^{-\beta_H E}})^{-1}$.

When the support of $f_\lambda$ lies in $J^+{(\mathcal{H})}\cap \mathcal{O}$ while that of $f'_\lambda$ lies in $J^-{(\mathcal{H})}\cap \mathcal{O}$ we are considering the correlations between elements in the inner and in the outer regions. In this case, up to a normalization, $|\omega(\Phi(f_\lambda)\Phi(f'_\lambda))|^2$ can now be interpreted as a tunneling probability through the horizon. It holds that 
\[
\lim_{\lambda\to 0 }   \omega_2(f_\lambda,f'_\lambda)) = \frac{1}{16} \int_{\mathbb{R}^2\times {\cal B}} \left(\int_{-\infty}^\infty 
 \frac{\overline{\hat{F}(E,U,s)}  \hat{F'}(E,U',s)}{\sinh(\beta_H E/2)}  EdE \right) dU dU' 
d\mu(s) \:.\label{tunnel} 
\]
Considering wave packets concentrated around a high value of the energy $E_0$ we estimate the tunneling probability as
$$
\lim_{\lambda\to 0}|\omega(\Phi(f_\lambda)\Phi(f'_\lambda))|^2 \sim \mbox{const.}\:  E_0^2 \: e^{-\beta_H E_0}\:,
$$
in agreement with the ideas in \cite{ParikhWilczek}. 
We see in this procedure that these properties can be obtained in any state which is close to the Unruh state provided that it is of Hadamard form.

%
%
%
\chapter[Wick Polynomials]{Wick Polynomials and Extended Observables Algebras}
\label{Ch:wicks} 

Goal of this chapter is to discuss the interplays between the bulk-to-boundary correspondence outlined in the previous chapters and the extension of the algebra of observables to include also the Wick polynomials. 
In order to keep at bay both the length of the discussion and the intrinsic complication of the notation, we focus on the case of a non empty double cone $\mathcal{D}=\mathcal{D}(q,p)=J^{+}(p)\cap J^{-}(q)$, where $p,q$ are two arbitrary points of a globally hyperbolic spacetime $M$. Observe that  $\mathcal{C}=\{ J^{+}(p) \setminus I^+(p)\} \cap \mathcal{D}$ forms a null surface which can be interpreted as the past boundary of $\mathcal{D}$. In this framework $\mathcal{D}$ plays the role of the bulk spacetime, on top which we consider observables to be related with a suitable counterpart living intrinsically on $\mathcal{C}$. Notice that, in the case of asymptotically flat spacetimes endowed with $\Im^-$, past null infinity, then $M$ can be identified with the unphysical spacetime, $\mathcal{C}$ with $\Im^-$ itself, while $\mathcal{D}$ is nothing but the physical spacetime, up to a conformal transformation.

\section{Extended algebra of fields}

As a preliminary step we discuss how one can extend the algebra of observables $\mathcal{A}^{obs}(M)$ for a generic, real, massive scalar field on a globally hyperbolic spacetime, to include local, nonlinear fields. We shall follow a construction which is suited to work on a generic curved background; for further details refer to \cite{Hollands:2001nf, Hollands:2001fb} and \cite{BFK}. 

Recall that, in view of Definition \ref{def:algebra_of_observables},  $\mathcal{A}^{obs}(M)$ is realized as the quotient of the universal tensor algebra \eqref{eq:universal_tensor_algebra} with respect to the ideal generated by the commutation relations. 
Hence, a generic element $[F]\in\mathcal{A}^{obs}(M)$ is an equivalence class whose representative $F$ can be expressed as a direct sum 
\[
F = \bigoplus_n{F_n}, \qquad F_n \in \mathbb{C}\otimes(\mathcal{E}^{obs}(M))^{\otimes_s n} 
\]
where $\otimes_s$ indicates the symmetric tensor product. Observe that, since $\mathcal{E}^{obs}(M)=\frac{C^\infty_{0}(M)}{P[C^\infty_{0}(M)]}$, each element $F_n$ represents in turn a (tensor product of) equivalence classes. For later convenience and with a slight abuse of notation, we will indicate with $F_n$ the choice of a representative in the tensor products of equivalence classes. In other words $F_n\in (C^\infty_0(M;\mathbb{C}))^{\otimes_s n}$.

Our goal is to extend the algebra of observables so to include also the Wick polynomials. These are described in terms of a pointwise product of fields, which, in term of algebra elements, entails the necessity to account for elements $F = \bigoplus_n{F_n}$ whose components $F_n$ are distributions supported on the diagonal 
$D_n=\{(x,\dots, x)\in M^n, x\in M\}$, $M$ being the underlying spacetime. In general the extension of the algebra product of $\mathcal{A}^{obs}(M)$ to two of such elements gives rise to divergences. The way out of this quandary consists of finding a different representation for $\mathcal{A}^{obs}(M)$ in which such problem is not observed 

In order to better display the necessary steps to reach this goal, we represent the elements of $\mathcal{A}^{obs}(M)$ as functionals over real smooth field configurations as in \cite{Brunetti:2009qc}; see also \cite {Brunetti:2009pn, Rejzner}. 
More precisely, for a given $F=\{F_n\}_n\in\mathcal{A}^{obs}(M)$ and for every $\varphi\in C^\infty(M)$, we set
\begin{equation}
  \label{eq:functional} 
  F ( \varphi ) \doteq \sum_{n = 0}^{\infty} \frac{1}{n!} \langle
  F_{n} , \varphi^{\otimes n} \rangle_n \text{.}
\end{equation}
We stress that we are adopting the same symbol $F$ to indicate different, albeit closely related mathematical objects. In \eqref{eq:functional} $F$ stands for a functional whose building blocks, the coefficients $F_n$ can be constructed via the functional (Gateaux) derivatives evaluated in $\varphi=0$. This entails also that we are considering only smooth functionals in the sense that all functional derivatives exist and each $F$ possesses only a finite number of such derivatives. In other words it holds
\begin{equation*}
  F^{(n)} ( \varphi ) (h^{\otimes n}) =
  \left. \frac{d^{n}}{d\lambda^{n}} F ( \varphi + \lambda h )
  \right\vert_{\lambda = 0} \text{,} \quad \forall h \in C^{\infty} (
  M ) \text{,}
\end{equation*}
and in particular
\begin{equation}\label{eq:functional-derivatives-components}
F^{(n)} ( 0 ) (h^{\otimes n}) = \langle F_n, h^{\otimes n} \rangle_n
\end{equation}
where $\langle \cdot,\cdot \rangle_n$ indicates the standard pairing in $M^n$ between smooth and smooth, compactly supported functions.

In this picture the product in $\mathcal{A}^{obs}$ is given by the following $\star-$product
\begin{equation}
  \label{eq:star-product}
  ( F \star F^\prime ) ( \varphi ) = \sum_{n = 0}^{\infty} \frac{i^{n}}{2^{n}
    n!} \big\langle F^{(n)} ( \varphi ) , G^{\otimes n} {F^\prime}^{(n)} (
  \varphi ) \big\rangle \text{,} \qquad \forall F \text{,\thinspace} F^\prime
  \in \mathcal{A}^{obs} ( M ) \text{,}
\end{equation}
where $G\in \mathcal{D}'(M\times M)$ is the causal propagator introduced in Proposition \ref{prop:causal_propagator}. 
Consider now $F_1,F_2$ so that
\[
F_i=\int_M d\mu_g\, f_i \varphi^,\qquad f_i\in C^\infty_0(M),
\]
where $d\mu_g$ stands for the metric induced measure. 
If we apply \eqref{eq:star-product}, we obtain a term $\langle f_1,G^2 f_2\rangle$ which diverges because the pointwise product of $G(x,x^\prime)$ with itself is ill-defined. To bypass this hurdle we deform the $\star$-product so to replace $\frac{i}{2}G$ in \eqref{eq:star-product} with the Hadamard bi-distribution $H$ \eqref{eq:Hadamard_parametrix}. The product obtained in this way is denoted with $\star_H$ and the procedure used is actually a formal deformation of the algebra realized by the $*-$isomorphism $\alpha_{H-iG/2}:(\mathcal{A}^{obs},\star)\to(\mathcal{A}^{obs},\star_{H})$ whose action on $F\in\mathcal{A}^{obs}$ reads
\begin{equation}\label{eq:homeomorphisms}
    \alpha_{H-iG/2} ( F )  \doteq \sum_{n = 0}^{\infty} \frac{1}{n!}
    \left\langle \left(H-\frac{i}{2}G\right)^{\otimes n} , F^{(2n)} \right\rangle \text{.}
\end{equation}
The elements $(\mathcal{A}^{obs},\star_{H})$ can be understood as normal ordered fields. Furthermore, since 
only the symmetric part of the product is changed, the commutator of two linear fields is left untouched by $\alpha_{H}$.

\bigskip
Wick ordered fields can now be accounted for in $(\mathcal{A}^{obs},\star_{H})$ without introducing unwanted divergences:

\begin{definition}
Let $\mathcal{A}_e^{obs}$ be the set of functionals which are smooth, compactly supported and microcausal, that is
\begin{equation}
  \label{eq:inters}
  \text{WF} ( F_{n} ) \cap \big\{ \big( M \times
  \overline{V}^{\thinspace +} \big)^{n} \cup \big( M \times
  \overline{V}^{\thinspace -} \big)^{n} \big\}  = \emptyset \text{,}
 \end{equation}
where $\{p\}\times \overline{V}^{+/-}$ is the set of all future/past pointing causal contangent vectors in $T^*_pM$. Then $\mathcal{A}_e^{obs}$ forms a $*-$algebra if equipped with $\star_H$ as product while the involution is nothing but complex conjugation. 
\end{definition}

Since the choice of the Hadamard parametrix is not unique, it is important to stress that different choices for $\star_H$ yield algebras which are $*-$isomorphic. Given two parametrices $H,H^\prime$, the $*-$isomorphism is realized by 
\begin{equation}
  \label{eq:isomorphism}
  \begin{split}
    \mathfrak{i}_{H^{\prime} , H} &= \alpha_{H^{\prime}} \circ
    \alpha_{H}^{-1} \text{,} \\
    \alpha_{H} ( F ) & \doteq \sum_{n = 0}^{\infty} \frac{1}{n!}
    \big\langle H^{\otimes n} , F^{(2n)} \big\rangle \text{.}
  \end{split}
\end{equation}
We conclude the section by focusing shortly on ordinary Wick powers. To start with, we observe that Wick polynomials as
\[
F = \int f \phi^n d\mu_g 
\]
are contained in this extended algebra. In addition the requirement for the functionals to be microcausal is necessary to ensure that the $\star_H$-product among such polynomials is well defined. A detailed proof of this fact which makes extensive use of microlocal techniques can be given along the lines of Theorem 2.1 in \cite{Hollands:2001nf}. 

Here we recall that the proof according to which the product $F\star_H F^\prime$ is well-posed can be obtained applying notable results of H\"ormander. In particular Theorem 8.2.9 in \cite{Hormander} is used to estimate the wave-front set of $F^{(n)}\otimes {F^\prime}^{(n)}$ and of $H^{\otimes n}$ while Theorem 8.2.10 in \cite{Hormander} is used to multiply $F^{(n)}\otimes {F^\prime}^{(n)}$ with $H^{\otimes n}$.

\section{Extension of the algebra on the boundary}

In this section we outline the construction of the extended algebra of observables on the boundary. To this end we consider a double cone $\mathcal{D}(q,p)=J^{+}(p)\cap J^{-}(q)$, chosen to be such that its past null boundary $\mathcal{C}$ can be parametrized with the following coordinates
\[
  \mathcal{C} = \big\{ ( V , \theta , \varphi ) \in \mathbb{R}
  \times \mathbb{S}^{2} \medspace \vert \medspace V \in I \subset
  \mathbb{R} \text{,} \thickspace ( \theta , \varphi ) \in
  \mathbb{S}^{2} \big\} \text{,}
\] 
where $V$ is an affine parameter along the null geodesics forming $\mathcal{C}$ while $\theta , \varphi$ are transverse coordinates.
Furthermore, the open interval $I$ is chosen in such a way that the tip of the cone is not contained in $\mathcal{C}$. The metric thereon induced by that of $M$ turns out to be conformally equivalent to 
\begin{equation}
\label{eq:conformal-metric-cone}
\left.\tilde{g}^2\right|_\mathcal{H}=0\otimes dV+d\mathbb{S}^2(\theta,\varphi),
\end{equation}
where $d\mathbb{S}^2(\theta,\varphi)$ is the standard metric of the unit $2$-sphere. The construction of $\mathcal{A}(\mathcal{C})$ the algebra of observables on $\mathcal{C}$, can be performed along the lines of the one on null infinity discussed in section \ref{sec:observables-on-null-infinity}. 
As next step we need to analyse the singular structure of the two-point function of the state $\omega_\Im$ introduced in Lemma \ref{lem:boundary_integral_kernel}. This bi-distribution and its interplay with the boundary commutator function will take the role played in the previous section by the Hadamard bidistribution. In particular, the extension, we are looking for, needs to be compatible with the Hadamard regularization and furthermore, it must include the Wick polynomials constructed in the bulk and projected on the boundary by an application of time-slice axiom see {\it e.g.} \cite{ChilianFredenhagen}. 

To start with we recall the form of the wave front set of the boundary state, {\it i.e.}, as discussed and proven in \cite{Moretti}, its two-point function $\omega_{2,\mathcal{C}}$ has an integral kernel of the same form of \eqref{eq:notable_2pt_function}. Accordingly
\begin{equation}
  \label{eq:WFomega}
  \text{WF} (\omega_{2,\mathcal{C}}) \subseteq A \cup B \text{,}
\end{equation}
where
\begin{multline}
  \label{eq:WFA}
  A = \big\{ \big( ( V , \theta , \varphi , \zeta_{V} ,
  \zeta_{\theta}, \zeta_{\varphi} ) , ( V^{\prime} , \theta^{\prime} ,
  \varphi^{\prime} , \zeta_{V^{\prime}} , \zeta_{\theta^{\prime}} ,
  \zeta_{\varphi^{\prime}} ) \big) \in ( T^{\ast} \mathscr{C} )^{2}
  \setminus \{0\} \thickspace \vert \\
  V = V^{\prime} \text{,~} \theta = \theta^{\prime} \text{,~} \varphi
  = \varphi^{\prime} \text{,~} 0 < \zeta_{V} = -\zeta_{V^{\prime}}
  \text{,~} \zeta_{\theta} = -\zeta_{\theta^{\prime}} \text{,~}
  \zeta_{\varphi} = -\zeta_{\varphi^{\prime}} \big\}
\end{multline}
and
\begin{multline}
  \label{eq:WFB}
  B = \big\{ \big( ( V , \theta ,\varphi , \zeta_{V} , \zeta_{\theta}
  , \zeta_{\varphi} ) , ( V^{\prime} , \theta^{\prime} ,
  \varphi^{\prime} , \zeta_{V^{\prime}} , \zeta_{\theta^{\prime}} ,
  \zeta_{\varphi^{\prime}} ) \big) \in ( T^{\ast} \mathscr{C} )^{2}
  \setminus \{0\} \thickspace \vert \\
  \theta = \theta^{\prime} \text{,~} \varphi = \varphi^{\prime}
  \text{,~} \zeta_{V} = \zeta_{V^{\prime}} = 0 \text{,~}
  \zeta_{\theta} = -\zeta_{\theta^{\prime}} \text{,~} \zeta_{\varphi}
  = -\zeta_{\varphi^{\prime}} \big\} \text{.}
\end{multline}
To introduce the extension of $\mathcal{A}(\mathcal{C})$, we start by discussing the regularity for the components of the algebra of observables decomposed as in \eqref{eq:functional}. Following \cite{DPP},
\begin{definition}
  \label{Def:testdistributions}
  We call $\mathcal{F}^{n}$ the set of elements $F^{\prime}_{n} \in
  \mathcal{D}^{\prime} ( \mathscr{C}^{n};\mathbb{C})$ that fulfil the
  following properties:
  \begin{enumerate}
  \item \textbf{Compactness}: Each $F^{\prime}_{n}$ is compact towards
    the future, \textit{i.e.}, the support of $F^{\prime}_{n}$ is
    contained in a compact subset of $\mathscr{C}^{n} \sim (
    \mathbb{R} \times \mathbb{S}^{2} )^{n}$.
  \item \textbf{Causal non-monotonic singular directions}: The wavefront set of $F^{\prime}_{n}$ contains only causal non-monotonic directions:
    \begin{equation}
      \label{eq:WFS} 
      \text{WF} ( F^{\prime}_{n} ) \subseteq W_{n} \doteq \big\{ ( x , \zeta
      ) \in ( T^{\ast} \mathscr{C})^{n} \setminus \{0\} \medspace
      \vert \medspace ( x , \zeta ) \not\in \overline{V}_{n}^{+} \cup
      \overline{V}_{n}^{-} \text{,~} ( x , \zeta ) \not\in S_{n}
      \big\} \text{,}
    \end{equation}
    where $( x , \zeta ) \equiv ( x_{1} , \dotsc , x_{n} , \zeta_{1} ,
    \dotsc , \zeta_{n} ) \in \overline{V}^{+}_{n}$ if, employing the
    standard coordinates on $\mathscr{C}$, for all $i = 1$, \dots,
    $n$, $( \zeta_{i} )_{V} > 0$ or $\zeta_{i}$ vanishes.  The
    subscript $V$ here refers to the component along the $V$-direction
    on $\mathscr{C}$.  Analogously, we say $( x , \zeta ) \in
    \overline{V}^{-}_{n}$ if every $( \zeta_{i} )_{V} < 0$ or
    $\zeta_{i}$ vanishes.  Furthermore, $( x , \zeta ) \in S_{n}$ if
    there exists an index $i$ such that, simultaneously, $\zeta_{i}
    \neq 0$ and $( \zeta_{i} )_{V} = 0$.
  \item \textbf{Smoothness Condition}: The distribution
    $F^{\prime}_{n}$ can be factorised into the tensor product of a
    smooth function and an element of $\mathcal{F}^{n-1}$ when
    localised in a neighbourhood of $V = 0$. In other words there
    exists a compact set $\mathcal{O} \subset \mathscr{C}$ such
    that, if $\Theta \in C^{\infty}_{0} ( \mathscr{C} )$ so that
    it is equal to $1$ on $\mathcal{O}$ and $\Theta^{\prime} \doteq 1
    - \Theta$, then for every multi-index $P$ in $\{ 1 , \dotsc , n
    \}$ and for every $i \leqslant n$,
    \begin{equation}
      \label{eq:factorisation}
      f \doteq \tilde{F}^{\prime}_{n} ( u_{x_{P_{i+1}} , \dotsc ,
        x_{P_{n}}} ) \thinspace \Theta^{\prime}_{x_{P_{1}}}
      \negthinspace \dotsm \thinspace \Theta^{\prime}_{x_{P_{i}}} \in
      C^{\infty} ( \mathscr{C}^{i};\mathbb{C} ) \text{,}
    \end{equation}
    where $\tilde{F}^{\prime}_{n} : C^{\infty}_{0} (
    \mathscr{C}^{n-i-1}) \to \mathcal{D}^{\prime} (
    \mathscr{C}^{i} )$ is the unique map from $C^{\infty}_{0} (
    \mathscr{C}^{n-i-1})$ to $\mathcal{D}^{\prime} (
    \mathscr{C}^{i} )$ determined by $F_{n}^{\prime}$ using the
    Schwartz kernel theorem.  Furthermore, $u_{x_{P_{i+1}} , \dotsc ,
      x_{P_{n}}} \in C^{\infty}_{0} ( \mathscr{C}^{n-i} )$, and we
    have specified the integrated variables $x_{P_{i+1}}$, \dots,
    $x_{P_{n}}$.  For every $j \leqslant i$, $\partial_{V_{1}}
    \dotsm \partial_{V_{j}} f$ lies in $C^{\infty} (
    \mathscr{C}^{i};\mathbb{C} ) \cap L^{2} ( \mathscr{C}^{i} ,
    dV_{P_{1}} d\mathbb{S}_{P_{1}}^{2} \dotsm dV_{P_{i}}
    d\mathbb{S}_{P_{i}}^{2} ) \cap L^{\infty} ( \mathscr{C}^{i}
    )$, while the limit of $f$ as $V_{j}$ tends uniformly to $0$ vanishes uniformly in the other coordinates.
  \end{enumerate}
\end{definition}

\noindent Before proceeding in our investigation, we collect here a few remarks aimed at clarifying the three above conditions: 
\begin{remark} 
{\bf (1)} The compactness condition is not a restrictive one. Notice that, for any $\mathcal{O} \subset \mathcal{D} $
we have that $J^-(\mathcal{O}) \cap \mathcal{C}$ is compact towards the future. Furthermore, since causality holds in $\mathcal{D}$, observables supported in $ \mathcal{D}$  are not influenced by points of $\mathcal{C}$ which lie outside  $J^-(\mathcal{O})$. The bulk-to-boundary map, that we shall construct below, is an application of the time-slice axiom which respects causality. Hence we expect that the image of any observable supported in $\mathcal{O}$ will be mapped to a counterpart whose support lies in $J^-(\mathcal{O}) \cap \mathcal{C}$.\\
{\bf (2)} Notice that, local functionals on $\mathcal{C}$ have functional derivatives whose wavefront set does not respect the inclusion \eqref{eq:WFS}. This is due to the fact that purely spatial directions are present in that wave front set. On the one hand such directions are not compatible with the delta distribution along the angular coordinates present both in the symplectic form on the boundary and in \eqref{eq:notable_2pt_function}. On the other hand, we shall see that the image of local fields under the bulk-to-boundary map is not local on $\mathcal{C}$. These non-localities will make the functional derivatives compatible with \eqref{eq:notable_2pt_function}.\\
{\bf (3)} Huygens principle does not hold for a generic Klein-Gordon equation on curved spacetimes. For this reason we expect that the bulk-to-boundary map applied to an observable $F$ localised in $O\subset \mathcal{D}$ will be supported in $J^-(\mathcal{O}) \cap \mathcal{C}$. Moreover the is no reason why this observable should be supported outside the tip of the cone in $\mathcal{C}$. 
Despite of this fact, thanks to the propagation of singularity theorem the singularities present in the components $F_n$ of $F$ can never reach the tip of the cone. This is due to the fact that null curves stemming from the tip of the cone can never get to $O$ because the latter lies in its interior. In other words, the smoothness condition entails that the singularity present in the tip of the cone is harmless. 

\end{remark}

\noindent We are now ready to introduce the extension of the boundary algebra:

\begin{definition}
Let $\mathcal{F}^{n}_{s}$ be the subset of the totally symmetric
elements in $\mathcal{F}^{n}$ introduced in
Definition~\ref{Def:testdistributions}. The {\bf extended algebra of boundary observables} is defined as 
\begin{equation*}
  \mathscr{A}_{e} ( \mathscr{C} ) = \bigoplus_{n \geqslant 0}
  \mathcal{F}_{s}^{n} \text{,}
\end{equation*}
and its product is given by 
\begin{equation}
  \label{eq:starproductbound}
  \begin{split}
    & \star_{\omega_{\mathcal{C}}} : \mathscr{A}_{e} ( \mathscr{C}_{p}) \times
    \mathscr{A}_{e} ( \mathscr{C}_{p} ) \to \mathscr{A}_{e} (
    \mathscr{C}_{p} ) \text{,} \\
    & ( F \star_{\omega_{\mathcal{C}}} F^{\prime} ) ( \Phi ) =
    \sum_{n=0}^{\infty} \frac{1}{n!} \big\langle F^{ (n)} ( \Phi
    ) , {\omega_{2,\mathcal{C}}}^{\otimes n}\left( F^{\prime (n)} ( \Phi )\right) \big\rangle \text{,}
  \end{split}
\end{equation}
for all $F$, $F^{\prime} \in \mathscr{A}_{e} (
\mathscr{C} )$ and for all $\Phi \in C^{\infty} ( \mathscr{C}
)$. Here $\omega_{2,\mathcal{C}}$ coincides with \eqref{eq:notable_2pt_function}.
\end{definition}
The product $\star_\omega$ is well posed as established in \cite[Prop. 3.5]{DPP}. Once more the crucial point in this proof consists of showing that $\big\langle F^{ (n)} ( \Phi
    ) , {\omega_{2,\mathcal{C}}}^{n} F^{\prime (n)} ( \Phi ) \big\rangle$ is well defined. 
As for algebra of observables in the bulk, this can be obtained making use of microlocal techniques. In particular Theorem 8.2.9 in \cite{Hormander} is used to estimate the wave-front set of $F^{(n)}\otimes {F}^{\prime (n)}$ and of $\omega_{2,\mathcal{C}}^{\otimes n}$ while Theorem 8.2.10 in \cite{Hormander} is used to evaluate the pointwise product of $F^{(n)}\otimes {F^\prime}^{(n)}$ with $\omega_{2,\mathcal{C}}^{\otimes n}$.

As last point, we observe that the subalgebra of $\mathscr{A}_{e}(\mathcal{C})$ formed by those elements $F$ whose components $F_n$ are smooth functions is $*-$homeomorphic to $\mathcal{A}(\mathcal{C})$ defined in Section \ref{sec:observables-on-null-infinity}. This $*-$homomorphism can be understood as a deformation of the symmetric part of the product, in analogy to the counterpart which intertwines $(\mathcal{A}(M),\star)$ with $(\mathcal{A}(M),\star_H)$ as per \eqref{eq:homeomorphisms}. 
Its explicit realization is the following: 
\begin{equation}\label{eq:homeomorphismsb}
    \alpha_{\omega_{2,\mathcal{C}}-iB/2} ( F )  \doteq \sum_{n = 0}^{\infty} \frac{1}{n!}
    \left\langle \left(-\frac{i}{2}B\right)^{\otimes n} , F^{(2n)} \right\rangle \text{.}
\end{equation}
which acts on any $F\in \mathscr{A}_{e}(\mathcal{C})$. Furthermore, $B$ is the integral kernel of the symplectic form $\sigma_\Im$ which is intrinsically defined on $\mathcal{C}$ in \eqref{eq:boundary_symplectic_form}. For completeness we recall that its integral kernel is proportional to the antisymmetric part of $\omega_{2,\mathcal{C}}$ and it reads
\begin{equation}\label{eq:integralkernel}
  B \big( ( V , \theta , \varphi ) , (
  V^{\prime} , \theta^{\prime} , \varphi^{\prime} ) \big) \doteq -
  \frac{\partial^{2}}{\partial V \partial V^{\prime}} \text{sign} ( V -
  V^{\prime}) \medspace \delta ( \theta , \theta^{\prime} ) \text{,}
\end{equation}
where $\text{sign}(x)$ is the sign function.

\section{Interplay Between Algebras and States on
  $\mathscr{D}$ and on $\mathscr{C}$}

Goal of this section is to show that, in full analogy with \eqref{eq:bulk_to_boundary_homomorphism} , there exists an injective $*-$homomorphism between $\mathcal{A}_e^{obs}(\mathcal{D})$ and $\mathcal{A}_e(\mathcal{C})$.
This problem is tackled in \cite[Th. 3.6]{DPP} of which we summarize the content:
\begin{theorem}\label{The:Pi1}
  Consider a double cone $\mathscr{D}$ in $M$ and its past boundary $\mathcal{C}$. Call $\Pi :
  \mathscr{A}_{e}^{obs} ( \mathscr{D} ) \to \mathscr{A}_{e} (
  \mathscr{C} )$ be the linear map whose action on the components $\{F_n\}_n$ of $F\in \mathscr{A}_{e}^{obs} ( \mathscr{D} )$ is
  \begin{equation}
    \label{eq:restriction}
    \Pi_{n} ( F_{n} ) \doteq \sqrt[4]{|{g_{AB}}|_{1}} \dotsm
    \sqrt[4]{|g_{AB}|}_{n} \medspace G^{\otimes n} ( F_{n} )
    \big\vert_{\mathscr{C}^{n}} \text{,}
  \end{equation}
  where $G$ is the causal propagator introduced in Proposition \ref{prop:causal_propagator}, while
  $\vert_{\mathscr{C}}$ denotes the restriction on
  $\mathscr{C}$ and 
  $\sqrt[4]{|g_{AB}|}_{i}$ the square root of the determinant of the metric induced on the spatial sections of the $i-$th cone $\mathcal{C}_i$ by the metric of $\mathcal{D}$. Let $\widehat{\Pi}_{n}$ be the integral kernel of $\Pi_{n}$. Then, the following properties hold true:
  \renewcommand{\labelenumi}{\arabic{enumi})}
  \begin{enumerate}
  \item $\widehat{\Pi}_{n}=\otimes^{n} \widehat{\Pi}_{1}$ and is an element of
    $\mathcal{D}^{\prime} \big( ( \mathscr{C} \times
    \mathscr{D} )^{n} \big)$.  The wave front set of $\widehat{\Pi}_{n}$
    satisfies
    \begin{equation}
      \label{WFPin}
      \text{WF} ( \widehat{\Pi}_{n} ) \subset ( \text{WF} ( \widehat{\Pi}_{1} ) \cup \{ 0
      \} )^{n} \setminus \{ 0 \} \text{.}
    \end{equation}
    Furthermore, if $( x , \zeta_{x} ; y , \zeta_{y} ) \in \text{WF} (
    \hat{\Pi}_{1} )$, then:
  {\renewcommand{\labelenumi}{(\alph{enumi})}
    \begin{enumerate}
    \item neither $\zeta_{x}$ nor $\zeta_{y}$ vanish;
    \item $( \zeta_{x} )_{r} \neq 0$
    \item $( \zeta_{x} )_{r} \geqslant 0$ if and only if $-\zeta_{y}$
      is future directed.
    \end{enumerate}}
  \item $\Pi[\mathcal{A}_{e}^{obs} ( \mathcal{D} )]\subset \mathcal{A}_{e} ( \mathscr{C})$.
  \end{enumerate}
\end{theorem}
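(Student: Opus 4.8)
The theorem has two parts: a microlocal analysis of the integral kernel $\widehat{\Pi}_1$ of the bulk-to-boundary map $\Pi_1$, and the statement that $\Pi$ maps the extended bulk algebra $\mathscr{A}_e^{obs}(\mathscr{D})$ into the extended boundary algebra $\mathscr{A}_e(\mathscr{C})$. The strategy is to first nail down $\mathrm{WF}(\widehat{\Pi}_1)$ and then propagate this information through tensor powers and through the defining conditions of $\mathcal{F}^n$ from Definition~\ref{Def:testdistributions}.

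\emph{Step 1: the kernel $\widehat{\Pi}_1$ and its wave front set.} The map $\Pi_1$ sends $f\in C_0^\infty(\mathscr{D})$ to $\sqrt[4]{|g_{AB}|}\,G(f)|_{\mathscr{C}}$, so $\widehat{\Pi}_1$ is, up to the smooth density factor, the restriction of the causal propagator's kernel $G(x,y)$ to $x\in\mathscr{C}$, $y\in\mathscr{D}$. First I would observe that $\widehat{\Pi}_1$ is well defined as a distribution: restricting $G$ to $\mathscr{C}\times\mathscr{D}$ is legitimate because the conormal bundle $N^*(\mathscr{C}\times\mathscr{D})$ does not meet $\mathrm{WF}(G)$ — indeed the conormal to $\mathscr{C}$ consists of covectors annihilating $T\mathscr{C}$, i.e.\ (in the Bondi-type coordinates) covectors with vanishing $V$-component, whereas the wave front set of $G$, namely $\{(x,y;k_x,k_y):(x,k_x)\sim(y,-k_y)\}$, forces $k_x$ to be a nonzero null covector, which on $\mathscr{C}$ cannot be purely transverse since $\mathscr{C}$ is a null hypersurface generated by those very null geodesics. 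Hörmander's restriction theorem then gives $\widehat{\Pi}_1$ and the bound $\mathrm{WF}(\widehat{\Pi}_1)\subseteq\{(x,k_x;y,k_y):(x,k_x)\sim(y,-k_y),\ k_x\in T^*_x\mathscr{C}\setminus\{0\}\}$. Properties (a), (b), (c) then follow by direct geometric inspection: (a) if one of $k_x,k_y$ vanished the other would too by the $\sim$ relation contradicting nonzero; (b) $(k_x)_r\neq 0$ because a covector on $\mathscr{C}$ that is conormal to a generator and has vanishing $r$-component would be purely angular, hence spacelike, contradicting that it is the restriction of a null covector co-parallel to the generator; (c) the sign condition on $(k_x)_r$ versus the time-orientation of $-k_y$ encodes the fact that the null geodesic from $y$ reaching $\mathscr{C}$ is past-directed, so future/past orientation of $k_y$ correlates with the sign of the ingoing $r$-component at the boundary. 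The tensor-power statement $\widehat{\Pi}_n=\otimes^n\widehat{\Pi}_1$ is immediate from \eqref{eq:restriction}, and the wave front set inclusion \eqref{WFPin} is the standard bound for wave front sets of tensor products (Theorem~\ref{thm_PropertiesWavefront}, item 4), together with the observation that the ``mixed'' pieces with some factor $0$ are absorbed into $(\mathrm{WF}(\widehat\Pi_1)\cup\{0\})^n\setminus\{0\}$.

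\emph{Step 2: $\Pi$ lands in $\mathscr{A}_e(\mathscr{C})$.} Given $F=\{F_n\}\in\mathscr{A}_e^{obs}(\mathscr{D})$, I must check that each $\Pi_n(F_n)$ satisfies the three conditions of Definition~\ref{Def:testdistributions}. \textbf{Compactness}: since $F_n$ has compact support in $\mathscr{D}^n$ and $\widehat\Pi_n$ is built from the causal propagator, the support of $\Pi_n(F_n)$ lies in $(J^-(\mathrm{supp}\,F_n)\cap\mathscr{C})^{\otimes}$, which is compact towards the future because for any compact subset of $\mathscr{D}$ its causal past meets $\mathscr{C}$ in a compact set (global hyperbolicity; Remark~\ref{rem:propJ}). \textbf{Causal non-monotonic singular directions}: this is the crucial microlocal estimate. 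By the composition-of-kernels theorem (Theorem~\ref{thm:kern}) applied to $\mathcal{K}=\Pi_n$ acting on $F_n$, one has $\mathrm{WF}(\Pi_n(F_n))\subseteq\{(x,k_x):\exists\,y,\ (x,k_x;y,0)\in\mathrm{WF}(\widehat\Pi_n)\ \text{or}\ (x,k_x;y,k_y)\in\mathrm{WF}(\widehat\Pi_n),\ (y,-k_y)\in\mathrm{WF}(F_n)\}$. Using property (b) from Step 1 — every covector in $\mathrm{WF}(\widehat\Pi_1)_{\mathscr{C}}$ has nonvanishing $V$-component — together with the microcausality of $F$ (which rules out the all-future and all-past directed wave front configurations via the $\sim$ relation and property (c)), one concludes that $\mathrm{WF}(\Pi_n(F_n))$ avoids $\overline V_n^+\cup\overline V_n^-$ and avoids $S_n$ (the latter precisely because of property (b)). The check that purely spatial/angular directions do not appear is exactly where (b) is used decisively. \textbf{Smoothness condition}: localizing near $V=0$, I would invoke the propagation-of-singularities theorem (Theorem~\ref{tps}) for the normally hyperbolic operator $P$: any singularity of $G(f)$ propagates along null geodesics, and the null geodesics reaching a neighbourhood of $V=0$ on $\mathscr{C}$ emanate from the tip of the cone, which lies outside $\mathrm{supp}\,f\subset\mathscr{D}^\circ$; hence $\Pi_n(F_n)$ is smooth there and factorizes as in \eqref{eq:factorisation}, with the required $L^2\cap L^\infty$ decay of $V$-derivatives coming from the $G$-evolution of compactly supported data (analogous to the decay built into $\mathcal{S}(\mathscr{C})$ and to Theorem~\ref{Th:bulk_to_boundary}(a)). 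Finally, total symmetry of $\Pi_n(F_n)$ is inherited from that of $F_n$ and of $\widehat\Pi_n=\otimes^n\widehat\Pi_1$.

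\emph{Main obstacle.} The delicate point is the second condition: establishing that $\mathrm{WF}(\Pi_n(F_n))$ lies in $W_n$, i.e.\ contains neither monotone causal directions nor directions with a vanishing-but-present $V$-component. This requires the precise refinement (b) of the wave front set of $\widehat\Pi_1$ — that the $r$-component (equivalently $V$-component in the relevant chart) of any singular covector on $\mathscr{C}$ is strictly nonzero — which in turn rests on the fact that $\mathscr{C}$ is a \emph{null} hypersurface ruled by exactly the null geodesics along which $G$ is singular, so no singular covector can be purely transverse. Getting this geometric fact cleanly, and then combining it correctly with the microcausality hypothesis on $F$ through the composition theorem so that both $\overline V_n^{\pm}$ and $S_n$ are excluded, is the heart of the argument; the rest is careful bookkeeping with Hörmander's calculus and the propagation-of-singularities theorem.
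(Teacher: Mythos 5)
The paper itself does not prove this theorem: it only summarizes \cite[Th.\ 3.6]{DPP} and refers the reader there. Measured against the strategy of that reference, your outline is essentially the right one: realizing $\widehat{\Pi}_1$ as the restriction of the kernel of $G$ to $\mathscr{C}\times\mathscr{D}$ (legitimate because $\mathrm{WF}(G)$ misses the conormal of $\mathscr{C}\times\mathscr{D}$), extracting (a)--(c) from the geometric fact that the null geodesics connecting $\mathrm{supp}(F_n)\subset\mathscr{D}^{\circ}$ to $\mathscr{C}$ are transverse to the generators of $\mathscr{C}$ (so the pulled-back covector has non-vanishing $V$-component), and then feeding (b) and (c) together with the microcausality of $F_n$ into Theorem \ref{thm:kern} to exclude both $S_n$ and $\overline{V}_n^{\pm}$. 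One small slip: in your justification of (b) the singular covector at $x\in\mathscr{C}$ is co-parallel to the \emph{connecting} null geodesic, not to the generator; the correct statement is that two null (co)vectors are orthogonal iff they are parallel, so a vanishing $V$-component would force the connecting geodesic to be a generator, which cannot reach the interior.

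The genuine gap is in your verification of condition 3 of Definition \ref{Def:testdistributions}. Propagation of singularities only tells you that no singularity of $G^{\otimes n}(F_n)$ reaches the tip (since the null geodesics through the tip are the generators of $\partial J^+(p)$, which never enter $I^+(p)$), hence that the restriction is \emph{smooth} in a punctured neighbourhood of $V=0$. It does not deliver the quantitative part of the condition: that the localized factor $f$ in \eqref{eq:factorisation} and its $V$-derivatives lie in $C^\infty\cap L^2\cap L^\infty$ and that $f$ vanishes uniformly as $V\to 0$. Those are genuine decay estimates on the solution $G(f)$ near the vertex of the cone (in \cite{DPP} they are obtained by explicit control of the restricted propagator in adapted null coordinates, in the spirit of the estimates entering Theorem \ref{Th:bulk_to_boundary} and the definition of $\mathcal{S}(\Im)$), and they cannot be waved through by citing the propagation-of-singularities theorem. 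A second, minor, imprecision: the support of $\Pi_n(F_n)$ is only compact \emph{towards the future}; $J^-(\mathrm{supp}\,F_n)\cap\mathscr{C}$ accumulates at the tip, which is excluded from $\mathscr{C}$, so it is not a compact subset of $\mathscr{C}^n$ in the literal sense -- this is precisely why the smoothness condition near $V=0$ is part of the definition and must be checked rather than subsumed under compactness.
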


Notice that the map $\Pi$ introduced above is nothing but the extension to $\mathcal{F}^n_s$ of $\Gamma_\Im$ defined in Theorem \ref{Th:bulk_to_boundary}. Furthermore, the factors $\sqrt[4]{|g_{AB}|}$ in \eqref{eq:restriction} are the conformal factors of the conformal transformation which maps the induced metric on $\mathcal{C}$ to $\tilde{g}$ as in \eqref{eq:conformal-metric-cone}.

Before analysing the action of the map $\Pi$ on $\mathcal{A}_{e}^{obs} ( \mathcal{D} )$, we show how $\Pi$ can be used to compute the pull-back
of the quasi-free boundary state $\omega_\mathcal{C}$, built out of $\omega_{2,\mathcal{C}}$, whose integral kernel is \eqref{eq:notable_2pt_function}.  The next proposition characterizes the singular structure of the bulk state built out of $\omega_{\mathcal{C}}$:

\begin{proposition} \label{Pro:Hadamard}
  Under the assumptions of Theorem~\ref{The:Pi1}
  \begin{equation}
    \label{eq:pull-back}
    H_{\omega} \doteq \Pi^{\ast} \omega_{2,\mathcal{C}}
  \end{equation}
  is a Hadamard bi-distribution constructed as the pull-back under \eqref{eq:restriction} of $\omega_{2,\mathcal{C}}$ as in \eqref{eq:notable_2pt_function}.
\end{proposition}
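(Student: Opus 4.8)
\textbf{Proof strategy for Proposition~\ref{Pro:Hadamard}.}

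The plan is to show that $H_\omega \doteq \Pi^*\omega_{2,\mathcal{C}}$ is a well-defined bi-distribution on $\mathcal{D}\times\mathcal{D}$, that it differs from any fixed Hadamard parametrix by a smooth term (equivalently, in view of Theorem~\ref{Th:global_to_local}, that it has the Hadamard wave front set), and that it solves the Klein--Gordon equation in each entry up to smooth terms. The central technical device is the composition-of-kernels theorem (Theorem~\ref{thm:kern}) together with H\"ormander's propagation of singularities (Theorem~\ref{tps}). First I would write $H_\omega$ as the composition of the kernel $\widehat\Pi_1$ (from $\mathcal{C}$ to $\mathcal{D}$, i.e.\ the restriction-to-$\mathcal{C}$ of $G$ composed with the conformal factor), its transpose, and $\omega_{2,\mathcal{C}}$; schematically $H_\omega = \widehat\Pi_1^t \circ \omega_{2,\mathcal{C}}\circ \widehat\Pi_1$ with appropriate identification of variables. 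To legitimize this composition I would invoke Theorem~\ref{The:Pi1}, which controls $\mathrm{WF}(\widehat\Pi_1)$: every covector in it is non-zero in both entries, the $\mathcal{C}$-component is non-zero along the affine direction, and the sign of that component is tied to the time-orientation of the dual covector in $\mathcal{D}$. Combined with the explicit $\mathrm{WF}(\omega_{2,\mathcal{C}})\subseteq A\cup B$ from \eqref{eq:WFomega}--\eqref{eq:WFB}, the transversality hypotheses of Theorem~\ref{thm:kern} ($WF'(K_1)_Y \cap WF(K_2)_Y = \emptyset$) are met because the $B$-part of $\mathrm{WF}(\omega_{2,\mathcal{C}})$ has vanishing $V$-component and hence cannot be paired against $\mathrm{WF}(\widehat\Pi_1)$, whose $\mathcal{C}$-components never vanish along $V$; the $A$-part then produces exactly the geodesic-coincidence relation characteristic of the Hadamard form.

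Second, I would compute $\mathrm{WF}(H_\omega)$ from the bound in Theorem~\ref{thm:kern}. The composition $WF'(\widehat\Pi_1)\circ A \circ WF'(\widehat\Pi_1^t)$ relates $(x,k_x)$ and $(y,-k_y)$ precisely when $x$ and $y$ are joined, through a point of $\mathcal{C}$, by broken null geodesics that are in fact unbroken null geodesics of $M$ --- here one uses that the restriction of $G$ to $\mathcal{C}$ carries singularities along null generators, and that a null geodesic through $\mathcal{D}$ meeting $\mathcal{C}$ does so transversally at a single point, so the composed relation is the lift of genuine null geodesics. The sign condition (c) in Theorem~\ref{The:Pi1} forces $k_x\triangleright 0$, giving the upper inclusion $\mathrm{WF}(H_\omega)\subseteq \{(x,y;k_x,k_y) : (x,k_x)\sim(y,-k_y),\ k_x\triangleright 0\}$. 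For the reverse inclusion I would argue exactly as in the sketch of Theorem~\ref{Th:Hadamard_form_Asymptotically_Flat}: the antisymmetric part of $H_\omega$ equals (up to the factor $i/2$) the causal propagator $G$ on $\mathcal{D}$ --- this follows because $\omega_{2,\mathcal{C}}$ has antisymmetric part $\tfrac{i}{2}B$ with $B$ the boundary symplectic kernel \eqref{eq:integralkernel}, and $\Pi$ intertwines the boundary and bulk symplectic forms by Theorem~\ref{Th:bulk_to_boundary}(b) (whose extension to $\mathcal{C}$ is the relevant content here) --- so $\mathrm{WF}(G)\subseteq \mathrm{WF}(H_\omega)\cup\mathrm{WF}(H_\omega^t)$, and since $\mathrm{WF}(G)$ contains both signs of $k_x$ while each of $\mathrm{WF}(H_\omega)$, $\mathrm{WF}(H_\omega^t)$ is sign-definite, neither can be smaller than the claimed set. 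Finally, $P H_\omega$ is smooth in the first entry because $PG=0$ and $P$ commutes with the restriction only up to the conformal and Green-operator identities already recorded (Propositions~\ref{prop:conformal_rescaling}, \ref{prop:propagators_under_conformal}); by the propagation of singularities theorem (Theorem~\ref{tps}) a bi-solution with the correct wave front set differs from the Hadamard parametrix by a smooth term, which is Definition~\ref{Def:local_Hadamard_form}.

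The main obstacle I expect is the careful bookkeeping at the composition step: one must check that the projection properness hypothesis of Theorem~\ref{thm:kern} holds (here the compactness-towards-the-future of supports on $\mathcal{C}$ from Definition~\ref{Def:testdistributions}, item~1, and the support properties $\mathrm{supp}\,G(f)\subseteq J^+(\mathrm{supp}f)\cup J^-(\mathrm{supp}f)$ are what make the relevant intersections with $\mathcal{C}$ compact), and that the geodesic connecting a bulk point to $\mathcal{C}$ does not develop conjugate points or fail to be transversal --- this is handled by localizing, as in Theorem~\ref{Th:Hadamard_form_Asymptotically_Flat}, to an open set $O\subset\mathcal{D}$ whose null geodesics meet $\mathcal{C}$ in a compact region and inserting a partition of unity adapted to it, so that $\mathrm{WF}(H_\omega|_{O\times O})$ can be read off from a genuinely compactly-supported composition. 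The remaining verification, that the conformal rescaling factors $\sqrt[4]{|g_{AB}|}$ in \eqref{eq:restriction} do not alter the wave front set (they are smooth and non-vanishing) and that they are exactly those needed to match $\widetilde g|_{\mathcal{C}}$ of \eqref{eq:conformal-metric-cone}, is routine. A subtlety worth flagging: near the tip of the cone $\widehat\Pi_1$ itself is singular, but the smoothness condition (Definition~\ref{Def:testdistributions}, item~3) together with the propagation of singularities for geodesics emanating from the tip --- which cannot reach the interior point where a bulk observable is localized --- ensures these extra singularities never contaminate $H_\omega$ on $O\times O$.
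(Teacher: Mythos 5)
Your proposal is correct and takes essentially the same route as the paper: the book defers the detailed argument to \cite{DPP}, but the strategy it sketches for the analogous Theorem~\ref{Th:Hadamard_form_Asymptotically_Flat} — writing the pulled-back two-point function as a composition of kernels legitimized by Theorem~\ref{thm:kern} and the wave front set bounds of Theorem~\ref{The:Pi1}, obtaining the upper inclusion on $\mathrm{WF}(H_\omega)$ from the sign condition, recovering the lower inclusion from the antisymmetric part being the causal propagator, and localizing away from the tip of the cone — is precisely what you reproduce. No gaps to flag.
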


The next theorem summarizes the content \cite[Th.3.11]{DPP}, characterizing in particular the action of $\Pi$ both on the algebraic structures and on the boundary state. 

\begin{theorem}
  \label{The:Pi2}
  Under the assumptions of Theorem~\ref{The:Pi1}, it holds that 
  $\Pi$ induces an injective unit-preserving $^{\ast}$-homomorphism between
    the algebras $\big( \mathcal{A}_{e}^{obs} ( \mathcal{D} ) ,
    \star_{H_{\omega}} \big)$ and $\big( \mathscr{A}_{e} (\mathscr{C}) , \star_{\omega_{\mathcal{C}}}\big)$.
\end{theorem}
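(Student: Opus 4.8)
The plan is to show that $\Pi$, already known from Theorem~\ref{The:Pi1} to be a linear map sending $\mathcal{A}_e^{obs}(\mathcal{D})$ into $\mathscr{A}_e(\mathscr{C})$, becomes a unit-preserving injective $^{\ast}$-homomorphism once the source algebra is equipped with the deformed product $\star_{H_\omega}$ and the target with $\star_{\omega_{\mathcal{C}}}$. The strategy is essentially to reduce everything to the statement that $\Pi$ intertwines the relevant $\star$-products, and then to propagate injectivity and unitality from the linear level. First I would observe that $\Pi(\mathbb{I})=\mathbb{I}$ is immediate from \eqref{eq:restriction}, since the degree-zero component is untouched, so the unit-preserving property costs nothing. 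For injectivity, I would argue as follows: the lowest nonvanishing component $F_n$ of an element $F$ is mapped by $\Pi_n = \otimes^n\Pi_1$, and $\Pi_1$ agrees (up to the harmless positive conformal factors $\sqrt[4]{|g_{AB}|}$) with $\Gamma_\Im$ restricted to the generators; by Theorem~\ref{Th:bulk_to_boundary}(b) $\Gamma_\Im$ is a symplectomorphism from $(\mathcal{E}^{obs}_0,\sigma_0)$ into $(\mathcal{S}(\Im),\sigma_\Im)$, hence injective on each factor, and therefore $\Pi_n$ is injective on the symmetric tensor powers. Thus $\Pi(F)=0$ forces $F_n=0$ for the lowest $n$, and by induction $F=0$.

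The core of the proof is the homomorphism property $\Pi(F\star_{H_\omega}F') = \Pi(F)\star_{\omega_{\mathcal{C}}}\Pi(F')$. The key observation, and the natural way to organize the calculation, is that $H_\omega = \Pi^{\ast}\omega_{2,\mathcal{C}}$ by Proposition~\ref{Pro:Hadamard}, i.e. $\langle H_\omega, h\otimes h'\rangle = \langle \omega_{2,\mathcal{C}}, \Pi_1 h\otimes \Pi_1 h'\rangle$ in the distributional sense. Writing out the definition \eqref{eq:starproductbound} of $\star_{\omega_{\mathcal{C}}}$ applied to $\Pi(F)$ and $\Pi(F')$, one must commute the functional derivatives past $\Pi$. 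Since $\Pi$ acts componentwise by the fixed kernel $\widehat{\Pi}_n=\otimes^n\widehat{\Pi}_1$, one has $(\Pi F)^{(n)}(\Phi) = \Pi_1^{\otimes n}\big(F^{(n)}(\text{(pullback of }\Phi))\big)$ in an appropriate sense; pairing the $n$-th derivatives of $\Pi F$ and $\Pi F'$ against $\omega_{2,\mathcal{C}}^{\otimes n}$ and using $H_\omega = \Pi^\ast\omega_{2,\mathcal{C}}$ then collapses exactly to pairing the $n$-th derivatives of $F$ and $F'$ against $H_\omega^{\otimes n}$, which is the definition of $\star_{H_\omega}$. The $\ast$-compatibility follows because both products are defined through complex-conjugation-symmetric pairings and $\widehat{\Pi}_1$ is a real kernel, so $\overline{\Pi(F)} = \Pi(\overline{F})$.

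The main obstacle, and the point requiring genuine microlocal care rather than formal manipulation, is making the interchange of $\Pi$ with the $\star$-products rigorous at the level of distributions: one must verify that at each stage the pairings $\langle (\Pi F)^{(n)}, \omega_{2,\mathcal{C}}^{\otimes n}(\Pi F')^{(n)}\rangle$ and $\langle F^{(n)}, H_\omega^{\otimes n} F'^{(n)}\rangle$ are individually well defined and that the wave-front-set constraints are preserved under composition with $\Pi_1$. Here one invokes Theorems~8.2.9 and 8.2.10 of \cite{Hormander} together with the explicit $\text{WF}(\widehat{\Pi}_1)$ estimate of Theorem~\ref{The:Pi1}(1): the conditions (a)--(c) there — covectors nonvanishing, $(\zeta_x)_r\neq 0$, and the sign-correlation with future/past direction — are precisely what is needed so that $\text{WF}(H_\omega)$ lands in the Hadamard wave-front set (Proposition~\ref{Pro:Hadamard}) and so that $\omega_{2,\mathcal{C}}^{\otimes n}$ composed with $\Pi_1^{\otimes n}$ never produces the forbidden directions in $W_n$ of Definition~\ref{Def:testdistributions}. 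Once the wave-front bookkeeping is in place — essentially a repetition of the estimates already used to establish that $\star_{\omega_{\mathcal{C}}}$ and $\star_{H_\omega}$ are well posed — the identity $\Pi(F\star_{H_\omega}F')=\Pi(F)\star_{\omega_{\mathcal{C}}}\Pi(F')$ follows by the joint sequential continuity of the product in the Hörmander pseudotopology (Theorem~\ref{teoprod}) applied to approximating $F,F'$ by functionals with smooth components, for which the identity is an elementary consequence of $H_\omega=\Pi^\ast\omega_{2,\mathcal{C}}$ and the chain rule for functional derivatives.
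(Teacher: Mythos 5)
Your argument is correct and follows essentially the same route as the paper's (which defers the details to \cite[Th. 3.11]{DPP}): the homomorphism property is precisely the identity $H_{\omega}=\Pi^{\ast}\omega_{2,\mathcal{C}}$ of Proposition~\ref{Pro:Hadamard} unwound through the definitions \eqref{eq:star-product} and \eqref{eq:starproductbound}, with the wave-front-set estimates of Theorem~\ref{The:Pi1} together with Theorems 8.2.9--8.2.10 of \cite{Hormander} guaranteeing that every pairing involved is well defined. The one step you treat a bit lightly is injectivity of $\Pi_{n}$ on \emph{distributional} symmetric coefficients, which requires extending the symplectomorphism property of $\Gamma_{\Im}$ beyond test functions, but this is handled by the same density/continuity argument in the H\"ormander pseudotopology that you already invoke for the product.
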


\begin{remark}
We notice that due to the form of $\Pi$ in \eqref{eq:restriction}, the $^{\ast}$-homomorphism discussed above does not preserve the locality of the elements. In other words when we apply $\Pi$ to a local field, namely to a functional $F$ supported on $\mathcal{D}$ whose functional derivatives are supported on the diagonal, we do not obtain a local field on $\mathcal{C}$. This is not at all surprising since the very same behaviour is present in the $*-$homomorphisms which relates extended algebras localised in sufficiently small neighbourhoods of two different surfaces via the time-slice axiom, see e.g. \cite{ChilianFredenhagen}. 
At the same time, we notice that local fields on $\mathcal{C}$ are not contained in $\mathcal{A}_2^{obs}(\mathcal{C})$ due to the form of their wavefront set. 
\end{remark}

In order to enlighten this features, we outline an example inspired by the discussion of \cite[Sec. 4.1]{DPP}. Let us consider a double cone $\mathcal{D}(p,q)$ in Minkowski spacetime $\mathbb{M}^4\equiv(\mathbb{R}^4,\eta)$. We equip $\mathbb{M}^4$ with the standard Minkowskian coordinated $(t,\vec{x})$ centred around $p$ so that $p$ corresponds to $(0,0)$ while $q$ to $(T,0)$ with $T>0$. Consider a massless, minimally coupled scalar field theory on $\mathbb{M}^4$ and consider the following local functional on $\mathcal{A}_2^{obs}(M)$
\[
\Phi^2(f)\doteq\int_0^T f(t)\phi^2(t,0) dt, \qquad f\in C^\infty_0((0,T))\;
\] 
which represents the Wick ordered squared scalar field, smeared along a timelike geodesic running through $p$ and $q$. Notice that the only non vanishing component of $\Phi^2(f)$ obtained as in \eqref{eq:functional-derivatives-components} is 
\[
(\Phi^2(f))^{(2)}(y,y^\prime) =  2 f(t(y))\delta_3(\vec{x}(y))\delta (y-y^\prime) 
\]
where $(t,\vec{x})$ are the coordinate functions of the chart introduced above, $\delta_3$ is the Dirac delta on $\mathbb{R}^3$ while  $\delta$ is the one supported on the diagonal of $M\times M$. 

We can characterize the action on $\Phi^2(f)$ of the map $\Pi$ given in Theorem \ref{The:Pi1}. To this end we recall that $\Pi$ is given in terms of the causal propagator $G$, which, in Minkowski spacetime, for a massless minimally coupled real scalar field, reads (see \cite{Friedlander:2010eqa} or
\cite{Poisson})
\begin{equation}
  \label{Minkp}
  G ( x , x^{\prime} ) \doteq - \frac{\delta ( t - t^{\prime}
    - |{\bf x} - {\bf x}^{\prime}| )}{4 \pi |{\bf x}
      - {\bf x}^{\prime}|} + \frac{\delta ( t - t^{\prime} +
    |{\bf x} - {\bf x}^{\prime}| )}{4 \pi |{\bf x} -
      {\bf x}^{\prime}|} \text{,}
\end{equation}
where $t$ is the time coordinate, while ${\bf x}$ is the three-dimensional spatial vector in Euclidean coordinates. A direct computation shows that, 
\begin{gather*}
\Pi(\Phi^2(f))(\phi)= \\
\int f(V-t)\delta(V'-V)\phi(V,\theta,\varphi)\phi(V',\theta',\varphi') V V' 
dVd\theta d\varphi dVd\theta' d\varphi'dt
\end{gather*}
where we have represented $\Pi(\Phi^2(f))$ as a functional over the field configuration $\phi$ on $\mathcal{C}$. Notice the manifest non local nature of $\Pi(\Phi^2(f))$. 

\backmatter
%
%
%

\printindex


\end{document}